\titleformat{\section}[block]{\filcenter\normalfont\bfseries\large}{\thesection.}{.5em}{}\titlespacing*{\section}{0pt}{2\baselineskip}{1\baselineskip}
\titleformat{\subsection}[runin]{\normalfont\bfseries}{\thesubsection.}{.4em}{}[.]\titlespacing{\subsection}{0pt}{2ex plus .1ex minus .2ex}{.8em}
\titleformat{\subsubsection}[runin]{\normalfont\itshape}{\thesubsubsection.}{.3em}{}[.]\titlespacing{\subsubsection}{0pt}{1ex plus .1ex minus .2ex}{.5em}
\titleformat{\paragraph}[runin]{\normalfont\itshape}{\theparagraph.}{.3em}{}[.]\titlespacing{\paragraph}{0pt}{1ex plus .1ex minus .2ex}{.5em}
\definecolor{darkred}{rgb}{0.9,0,0.3}
\definecolor{darkblue}{rgb}{0,0.3,0.9}
\definecolor{darkgreen}{rgb}{0,0.8,0.2}
\definecolor{vdarkred}{rgb}{0.6,0,0.2}
\definecolor{vdarkblue}{rgb}{0,0.2,0.6}
\numberwithin{equation}{section}
\numberwithin{figure}{section}
\theoremstyle{plain} %plain, definition, remark
\newtheorem{theorem}{Theorem}[section]
\newtheorem*{theorem*}{Theorem}
\newtheorem{lemma}[theorem]{Lemma}
\newtheorem*{lemma*}{Lemma}
\newtheorem{corollary}[theorem]{Corollary}
\newtheorem*{corollary*}{Corollary}
\newtheorem{proposition}[theorem]{Proposition}
\newtheorem*{proposition*}{Proposition}
\newtheorem*{conjecture*}{Conjecture}
\theoremstyle{definition} %plain, definition, remark
\newtheorem{definition}[theorem]{Definition}
\newtheorem*{definition*}{Definition}
\newtheorem*{example*}{Example}
\newtheorem{remark}[theorem]{Remark}
\newtheorem*{remark*}{Remark}
\newtheorem*{assumption*}{Assumption}
\newcommand{\f}[1]{\boldsymbol{\mathrm{#1}}} %bold
\renewcommand{\r}{\mathrm}  %upright
\newcommand{\bb}{\mathbb} %blackboard bold
\renewcommand{\cal}{\mathcal} 
\newcommand{\fra}{\mathfrak} 
\newcommand{\wt}{\widetilde}
\newcommand{\txt}[1]{\text{\rm{#1}}}
\renewcommand{\P}{\mathbb{P}}
\newcommand{\E}{\mathbb{E}}
\newcommand{\R}{\mathbb{R}}
\newcommand{\C}{\mathbb{C}}
\newcommand{\N}{\mathbb{N}}
\newcommand{\Z}{\mathbb{Z}}
\newcommand{\ee}{\mathrm{e}}
\newcommand{\ii}{\mathrm{i}}
\newcommand{\dd}{\mathrm{d}}
\newcommand{\col}{\vcentcolon}
\newcommand*{\deq}{\mathrel{\vcenter{\baselineskip0.65ex \lineskiplimit0pt \hbox{.}\hbox{.}}}=}
\newcommand*{\eqd}{=\mathrel{\vcenter{\baselineskip0.65ex \lineskiplimit0pt \hbox{.}\hbox{.}}}}
\renewcommand{\leq}{\leqslant}
\renewcommand{\geq}{\geqslant}
\renewcommand{\epsilon}{\varepsilon}
\newcommand{\Res}{\mathrm{Res}}
\newcommand{\ind}[1]{\f 1 (#1)}
\newcommand{\indbb}[1]{\f 1 \pbb{#1}}
\newcommand{\p}[1]{({#1})}
\newcommand{\pb}[1]{\bigl({#1}\bigr)}
\newcommand{\pB}[1]{\Bigl({#1}\Bigr)}
\newcommand{\pbb}[1]{\biggl({#1}\biggr)}
\newcommand{\pBB}[1]{\Biggl({#1}\Biggr)}
\newcommand{\qb}[1]{\bigl[{#1}\bigr]}
\newcommand{\qbb}[1]{\biggl[{#1}\biggr]}
\newcommand{\qBB}[1]{\Biggl[{#1}\Biggr]}
\newcommand{\hb}[1]{\bigl\{{#1}\bigr\}}
\newcommand{\abs}[1]{\lvert #1 \rvert}
\newcommand{\absb}[1]{\bigl\lvert #1 \bigr\rvert}
\newcommand{\absBB}[1]{\Biggl\lvert #1 \Biggr\rvert}
\newcommand{\norm}[1]{\lVert #1 \rVert}
\newcommand{\normb}[1]{\bigl\lVert #1 \bigr\rVert}
\newcommand{\normB}[1]{\Bigl\lVert #1 \Bigr\rVert}
\newcommand{\normbb}[1]{\biggl\lVert #1 \biggr\rVert}
\newcommand{\scalar}[2]{\langle{#1} \mspace{2mu}, {#2}\rangle}
\newcommand{\scalarb}[2]{\bigl\langle{#1} \mspace{2mu}, {#2}\bigr\rangle}
\newcommand{\scalarbb}[2]{\biggl\langle{#1} \,\mspace{2mu},\, {#2}\biggr\rangle}
\DeclareMathOperator{\tr}{Tr}
\DeclareMathOperator{\re}{Re}
\DeclareMathOperator{\im}{Im}
\DeclareMathOperator{\dist}{dist}
\DeclareMathOperator{\spec}{spec}
\DeclareMathOperator{\conn}{conn}
\newcommand{\va}{a}
\newcommand{\vb}{b}
\newcommand{\sign}{\delta}
\newcommand{\rad}{r}
\newcommand{\CP}{C_0}
\newcommand{\CA}{C_1}
\newcommand{\CB}{C_2}
\title{Gibbs measures of nonlinear Schr\"odinger equations as limits of many-body quantum states in dimensions $d \leq 3$}
\author{	
J\"urg Fr\"ohlich\footnote{ETH Z\"urich, Institute for Theoretical Physics, {\tt juerg@phys.ethz.ch}.}
\and Antti Knowles\footnote{University of Geneva, Section of Mathematics, {\tt antti.knowles@unige.ch}.}
\and Benjamin Schlein\footnote{University of Z\"urich, Institute of Mathematics, {\tt benjamin.schlein@math.uzh.ch}.}
\and Vedran Sohinger\footnote{University of Warwick, Mathematics Institute, {\tt V.Sohinger@warwick.ac.uk}.}
}
\begin{document}

\maketitle

\begin{abstract}

We prove that Gibbs measures of nonlinear Schr\"odinger equations arise as high-temperature limits of thermal states in many-body quantum mechanics. Our results hold for defocusing interactions in dimensions $d =1,2,3$. The many-body quantum thermal states that we consider are the grand canonical ensemble for $d = 1$ and an appropriate modification of the grand canonical ensemble for $d =2,3$. In dimensions $d =2,3$, the Gibbs measures are supported on singular distributions, and a renormalization of the chemical potential is necessary. On the many-body quantum side, the need for renormalization is manifested by a rapid growth of the number of particles. We relate the original many-body quantum problem to a renormalized version obtained by solving a counterterm problem. Our proof is based on ideas from field theory, using a perturbative expansion in the interaction, organized by using a diagrammatic representation, and on Borel resummation of the resulting series.

\end{abstract}

\newpage
\tableofcontents
\pagebreak

\section{Introduction}

\subsection{Overview}

An invariant Gibbs measure $\P$ of a nonlinear Schr\"odinger equation (NLS) is, at least formally, defined as a probability measure on the space of fields $\phi$ that takes the form
\begin{equation} \label{def_P}
\P(\dd \phi) \;=\; \frac{1}{Z} \ee^{- H(\phi)} \, \dd \phi\,,
\end{equation}
where $Z$ is a normalization constant, $H$ is the Hamilton function, and $\dd \phi$ is the (nonexistent) Lebesgue measure on the space of fields.
We consider fields $\phi \col \Lambda \to \C$ defined on physical space $\Lambda$, which we take to be either the Euclidean space $\R^d$ or the torus $\bb T^d = \R^d / \Z^d \simeq [0,1)^d$ with $d = 1,2,3$, each endowed with their natural operations of addition and subtraction. We are interested in Hamilton functions of the form
\begin{equation} \label{def_classical_H}
H(\phi) \;\deq\; \int \dd x \, \pb{\abs{\nabla \phi(x)}^2 + V(x) \abs{\phi(x)}^2} + \frac{1}{2} \int \dd x \, \dd y \, \abs{\phi(x)}^2 \, w(x - y) \, \abs{\phi(y)}^2\,,
\end{equation}
where $V \geq 0$ is a one-body potential and $w$ is an interaction potential. We always assume $w$ to be repulsive or defocusing, meaning that $w$ or its Fourier transform is nonnegative. Typical examples for $w$ include bounded continuous functions and the delta function. 

Formally, the space of fields $\phi \col \Lambda \to \C$ generates a Poisson algebra with Poisson bracket defined by
\begin{equation*}
\{\phi(x),\bar \phi(y)\} \;=\; \ii \delta(x - y) \,, \qquad \{\phi(x), \phi(y)\} \;=\; \{\bar \phi(x),\bar \phi(y)\} \;=\; 0\,.
\end{equation*}
The Hamiltonian equation of motion associated with the Hamilton function \eqref{def_classical_H} is then given by the time-dependent NLS\footnote{If $w$ is a bounded interaction potential, this equation is often called \emph{Hartree equation}.}
\begin{equation} \label{nonlinear_Schr}
\ii \partial_t \phi(x) \;=\; -\Delta \phi(x) + V(x) \phi(x) + \int \dd y \, \abs{\phi(y)}^2 \, w(x - y)\, \phi(x)\,.
\end{equation}
At least formally, we find that the Gibbs measure \eqref{def_P} is invariant under the flow generated by the NLS \eqref{nonlinear_Schr}. Such invariant Gibbs measures have been extensively studied as tools to construct global solutions of time-dependent NLS with rough initial data
\cite{B,B1,B2,B3,BourgainBulut,BourgainBulut2,BourgainBulut4,BurqThomannTzvetkov,Cacciafesta_deSuzzoni1,Cacciafesta_deSuzzoni2,Deng,LRS,Tz1,Tz,OQ,Zhidkov}.

More precisely, the existence of global solutions is shown for almost all initial data belonging to the support of \eqref{def_P}. The measure \eqref{def_P} is typically supported on a set of distributions of low regularity. In this framework, the invariance of \eqref{def_P} serves as a substitute for a conservation law at low regularities.

In this paper, we are interested in deriving invariant Gibbs measures as high-temperature limits of grand canonical thermal states of many-body quantum mechanics. 
In the grand canonical high-temperature limit, the number of particles grows with the temperature. In order to obtain a nontrivial limit, we need to rescale the strength of the many-body interaction potential with the temperature. This gives rise to a \emph{mean-field} limiting regime for the many-body system, which may be also regarded as a 
classical limit. Here, the classical system is the (formal) Hamiltonian system defined by the Hamilton function $H(\phi)$ from \eqref{def_classical_H}.

The study of the classical limit of quantum mechanics is almost as old as quantum mechanics itself, going back at least to the works of Schr\"{o}dinger \cite{Schr1,Schr2} and Ehrenfest \cite{E}. The first rigorous treatment of the classical limit for systems with infinitely many degrees of freedom is due to Hepp, who recognized in \cite{H} that the time-dependent NLS \eqref{nonlinear_Schr} arises as the Hamiltonian equation of motion approximating the many-body quantum time evolution of coherent states. In his work, Hepp also proved that the time evolution of the fluctuations around the classical dynamics is governed by a time-dependent quadratic Hamiltonian on the Fock space (quadratic in the creation and annihilation operators). The results of Hepp were later extended by Ginibre and Velo \cite{GV} to singular interactions.

A many-body quantum system of $n$ particles has a Hamiltonian of the form 
\begin{equation} \label{eq:ham-first}
H^{(n)} \;\deq\; \sum_{i = 1}^n \pb{-\Delta_{x_i} + V(x_i)} + \lambda \sum_{1 \leq i < j \leq n} w(x_i - x_j)\,,
\end{equation}
acting on the bosonic Hilbert space consisting of wave functions in $L^2(\Lambda^n)$ that are symmetric in their arguments $x_1, \dots, x_n \in \Lambda$. In \eqref{eq:ham-first}, $\lambda > 0$ is the interaction strength. In order to obtain a nontrivial limit as $n \to \infty$, we require both terms of \eqref{eq:ham-first} to be of comparable size, which leads to the mean-field scaling $\lambda = n^{-1}$. The dynamical problem for factorized initial data, where one analyses the convergence of the many-body dynamics $\ee^{-\ii t H^{(n)}}$ generated by \eqref{eq:ham-first} to that generated by \eqref{nonlinear_Schr}, has been extensively studied since the work of Hepp \cite{H} mentioned above; see 
\cite{AFP,AN,CLS,CP,CH, CH2, ESY1,ESY2,ESY3,ESY4,ESY5,EESY,ES,EY,FKP,FKS,HS,KSS,KP,RS,S2,S}. Moreover, quantum fluctuations around the classical dynamics have been considered in \cite{BKS,BSS,XC,GMM1,GMM2,LNS}. 
More recently, fluctuations around the dynamics generated by the NLS with a local interaction (arising from an interaction potential converging to the delta function) have been analysed in \cite{BCS,GM,NN}.

In this paper we focus on the equilibrium state of the many-body quantum system at some given temperature. At zero temperature, the many-body quantum system is in the ground state of the Hamiltonian \eqref{eq:ham-first}. The convergence of the ground state energy of (\ref{eq:ham-first}) towards the minimum of (\ref{def_classical_H}) has been proved, for different choices of the interaction $w$, in  \cite{Bach,BL,K,S,LY,FSV,RW,LSY} and in a more general setting in \cite{LNR0} and in \cite{S,GS,LNSS}, where also the excitation spectrum of (\ref{eq:ham-first}) has been analysed. Under general assumptions on $V,w$ it is also possible to prove (see \cite{LNR0} and also \cite{LS}, for the more subtle Gross-Pitaevskii regime) that the ground state of \eqref{eq:ham-first} exhibits complete condensation, meaning that all particles, up to a fraction vanishing in the limit of large $n$, occupy the orbital $\phi_0 \in L^2 (\R^d)$ minimizing (\ref{def_classical_H}) (provided the minimizer is unique). 
Results on Bose-Einstein condensation of the canonical ensemble for fixed temperature $\tau>0$ were proved in \cite{LNR0,LNSS}.

To obtain a nontrivial limiting measure $\P (\dd \phi)$, one has to increase the temperature $\tau$ of the system in tandem with the particle number $n$.
In \cite{Lewin_Nam_Rougerie}, Lewin, Nam, and Rougerie considered the thermal states associated with (\ref{eq:ham-first}) in the grand canonical ensemble at temperature $\tau$ and with a chemical potential fixing the expected number of particles to be $\tau$. They compared the correlation functions of the grand canonical ensemble at temperature $\tau$ with the correlation functions of the Gibbs measure \eqref{def_P}, in the limit $\tau \to \infty$.
In dimension $d=1$, they proved the convergence of the (relative) partition function (i.e.\ the ratio between the free and the interacting partition functions)
and of all correlation functions. In \cite{Lewin_Nam_Rougerie2}, they further extended their results to sub-harmonic trapping. Moreover, in dimension $d=2,3$ they considered a many-body quantum model with a smooth, non-translation-invariant interaction $w: L^2 (\R^d) \otimes L^2 (\R^d) \to L^2 (\R^d) \otimes L^2 (\R^d)$ satisfying $0 \leq w \leq h^{1-p} \otimes h^{1-p}$ (for $p > 1$ if $d = 2$ and for $p > 3/2$ if $d = 3$). Here $h \deq -\Delta + V$ is the one-body Hamiltonian. Finite-rank operators are a typical example of such smooth interactions. For these models, they established convergence of the (relative) partition function and of the one-point correlation function towards the corresponding classical limits, where the interaction term of \eqref{def_classical_H} is replaced by the smooth interaction $\frac{1}{2} \scalar{\phi^{\otimes 2}}{w \phi^{\otimes 2}}$. 

The main result of our work is the derivation in dimensions $d = 2,3$ of the Gibbs measure \eqref{def_P}, \eqref{def_classical_H} as the high-temperature limit of a modified grand canonical ensemble for a many-body quantum system. In contrast to the work of Lewin, Nam, and Rougerie \cite{Lewin_Nam_Rougerie}, we consider interactions defined by a translation-invariant two-body potential $w(x - y)$ that we choose to be bounded and of positive type. For technical reasons, the starting point of our derivation is an appropriate modification of the standard grand canonical many-body quantum thermal states. This modification enables us to control the remainder term in the perturbative expansion of the many-body quantum state.
It is well known that, unlike in dimension $d = 1$, in dimensions $d = 2,3$ the free Gibbs measure \eqref{def_P} corresponding to $w = 0$ is supported on distributions of negative regularity, i.e.\ $\phi$ is $\P$-almost surely not a function but a singular distribution. Thus, some care is already needed to define the classical probability measure (\ref{def_P}). Indeed, in order to define the Gibbs measure \eqref{def_P} with a nonzero interaction $w$, one has to renormalize the interaction term $\frac{1}{2} \int \dd x \, \dd y \, \abs{\phi(x)}^2 \, w(x - y) \, \abs{\phi(y)}^2$. This renormalization may be performed by a \emph{Wick ordering} of the interaction, whereby it is replaced with a formal expression of the form $\frac{1}{2} \int \dd x \, \dd y \, (\abs{\phi(x)}^2 - \infty)\, w(x - y) \, (\abs{\phi(y)}^2 - \infty)$. Here the infinities are carefully defined by introducing an ultraviolet truncation parameter in the classical field $\phi$, and subtracing from $\abs{\phi}^2$ a function that diverges as the truncation parameter is sent to infinity.

Our methods also apply to the simpler case $d = 1$ where no renormalization is necessary, and provide an alternative approach to the one developed in \cite{Lewin_Nam_Rougerie}. Unlike in dimensions $d = 2,3$, in dimension $d = 1$ we do not need to exploit delicate cancellations arising from the renormalization, and a simple argument using the Feynman-Kac formula may be used to estimate the remainder term of the perturbative expansion of the quantum state. As a consequence, for $d = 1$ we do not need the modification mentioned above and we can consider the usual grand canonical thermal states.

In the PDE literature, the existence and the invariance under \eqref{nonlinear_Schr} of the Gibbs measures (\ref{def_P}) was first shown by Bourgain in \cite{B,B1,B2} (their existence was previously essentially shown by Lebowitz, Rose, and Speer in \cite{LRS}, for one-dimensional systems with local and focusing interaction, whereas the defocusing problem was previously considered in the constructive quantum field theory literature \cite{Glimm_Jaffe,Simon74}). 
As noted above, the invariance of the measure \eqref{def_P} provides a method to construct global solutions of \eqref{nonlinear_Schr} for almost all initial data in the support of \eqref{def_P}.
An alternative method to study local solutions of nonlinear wave equations with rough random initial data was developed in \cite{BT} and was applied to construct global solutions  of the NLS with rough random initial data in \cite{CO}, based on the low-high decomposition from \cite{Bourgain1998}.
Related ideas have been applied in various other dispersive models. We refer the reader to \cite{Bourgain_ZS,BT2,deS,DengTzvetkovVisciglia,GLV,NORBS,NRBSS,TTz} and the references therein for further results in this direction.

In \cite{BrydgesSlade} it was observed that, in dimensions $d > 1$, for a focusing local interaction (where $w = -\delta$ in \eqref{def_classical_H}) one cannot construct measures \eqref{def_P} even if one adds additional truncation assumptions. For an appropriate focusing nonlocal interaction, it was shown in \cite{B2} that for $d=2,3$ it is possible to construct the measure \eqref{def_P} provided that one truncates the Wick-ordered (square) $L^2$-norm.
In addition, in \cite{B2} invariance of this measure was shown.

Unlike the classical system \eqref{def_P}, \eqref{def_classical_H}, the many-body quantum system carries an intrinsic ultraviolet (i.e.\ high-frequency) cutoff, which is proportional to the temperature $\tau$. Indeed, our results may be interpreted as a construction of the Gibbs state of the NLS as a limit of regularized states, with the temperature $\tau$ playing the role of the regularization parameter. This construction is very physical, starting from many-body quantum states, unlike the much simpler but less physical construction by truncation given in Sections \ref{sec:classical_state} and \ref{sec:Wick} below.

In the limit $\tau \to \infty$, the expected number of particles in the many-body grand canonical ensemble grows much faster for $d = 2,3$ than for $d = 1$; this is the manifestation, on the many-body quantum level, of the singularity of the limiting classical field. As for the classical system, we also have to renormalize the chemical potential of the quantum many-body problem. Our main result, Theorem \ref{thm:main} below, is that a modification of the renormalized many-body quantum grand canonical ensemble converges, as $\tau \to \infty$, to the (renormalized) Gibbs measure \eqref{def_P}. Here the convergence is in the sense of the relative partition function and all correlation functions. We also establish the relationship between the original and renormalized many-body quantum problems, which is governed by the so-called counterterm problem.

Our proof uses a different approach from that Lewin, Nam, and Rougerie \cite{Lewin_Nam_Rougerie}. While \cite{Lewin_Nam_Rougerie} makes use of Gibbs' variational principle and the quantum de Finetti theorem, our approach is based on ideas from field theory, using a perturbative expansion in the interaction (for both classical and quantum problems) which is organized using a diagrammatic representation, and on Borel resummation of the resulting series; we use a version of Borel resummation going back to Sokal \cite{Sokal}. We refer to Section \ref{sec:proof_outline} below for a more detailed overview of our proof.

\subsubsection*{Conventions}
We use $C$ to denote a constant that may depend on fixed quantities (such as $w$). If a constant depends on some parameter $\alpha$ then we write it as $C_\alpha$. We use the notation $\N = \{0,1,2,\dots\}$. For a separable Hilbert space $\cal H$ and $q \in [1,\infty]$, the Schatten space $\fra S^q(\cal H)$ is the set of bounded operators $\cal A$ on $\cal H$ satisfying $\|\cal A\|_{\fra S^q(\cal H)} < \infty$, where
\begin{equation*}
\|\cal A\|_{\fra S^q(\cal H)} \;\deq\;
\begin{cases}
( \tr \, \abs{\cal A}^q)^{1/q}  &\mbox{if }q<\infty\\
\sup \spec \, \abs{\cal A} &\mbox{if } q=\infty\,,
\end{cases}
\end{equation*}
and $\abs{\cal A} \deq \sqrt{\cal A^* \cal A}$. If there is no risk of confusion, we sometimes omit the argument $\cal H$ in these norms. We denote by $\dd x$ the Lebesgue measure on $\Lambda$, and we often abbreviate $\int_\Lambda \dd x \equiv \int \dd x$.

\subsubsection*{Acknowledgements}
We are grateful to Nicolas Rougerie for helpful discussions. We gratefully acknowledge support from the NCCR SwissMAP of the Swiss National Foundation of Science (SNF). A.\ Knowles acknowledges SNF support through the grant ``Spectral and eigenvector statistics of large random matrices''. B.\ Schlein also acknowledges SNF support through the Grant ``Dynamical and energetic properties of Bose-Einstein condensates''. V.\ Sohinger acknowledges support of the National Science Foundation Grant No.\ DMS-1440140 while he was in residence at the Mathematical Sciences Research Institute in Berkeley, California, during part of the Fall 2015 semester.

\subsection{The one-body Hamiltonian} \label{sec:h}
We define the \emph{one-particle space} $\fra H \deq L^2(\Lambda; \C)$,
whose scalar product and norm we denote by $\scalar{\cdot}{\cdot}$ and $\norm{\cdot}$ respectively.
We always use the convention that scalar products are linear in the second argument. For $p \in \N$, we define the \emph{$p$-particle space} $\fra H^{(p)}$ as the symmetric subspace of the tensor product $\fra H^{\otimes p}$, i.e.\ the space of functions in $L^2(\Lambda^p;\C)$ that are symmetric under permutation of their arguments.

It is often convenient to identify a closed operator $\xi$ on $\fra H^{(p)}$ with its Schwartz integral kernel, which we denote by $\xi(x_1, \dots, x_p;y_1, \dots, y_p)$. The latter is in general a tempered distribution (see e.g.\ \cite[Corollary V.4.4]{RS1}), and we shall always integrate over the variables $x_1, \dots, x_p, y_1, \dots, y_p$ with respect to a sufficiently regular test function.
Similarly, for a Schwartz distribution $T \col f \mapsto T(f)$, we sometimes use the notation $T(x) \equiv T(\delta_x)$ for the integral kernel of $T$ in expressions of the form $T(f)  = \int \dd x \, T(x) f(x)$.

Let $\kappa > 0$ be a \emph{chemical potential} and $v \col \Lambda \to [0,\infty)$ be a \emph{one-body potential}.
We define the \emph{one-body Hamiltonian}
\begin{equation} \label{def_h}
h \;\deq\; -\Delta + \kappa + v\,,
\end{equation}
a densely defined positive operator on $\fra H$. We assume that $h$ has a compact resolvent and that
\begin{equation} \label{tr_h_assump}
\tr h^{s - 1} \;<\; \infty
\end{equation}
for some $s < 1$. We shall mainly focus on the case $s = -1$, which is relevant in dimensions $d = 2,3$, but the case $s = 0$ is also of interest for $d = 1$. 

We note that the assumption \eqref{tr_h_assump} with $s=-1$ is satisfied when 
\begin{equation*}
\begin{cases}
\Lambda\;=\;\mathbb{T}^d \quad &\text{and} \quad v\;=\;0 \\
\Lambda\;=\;\mathbb{R}^d \quad &\text{and} \quad v\;=\; |x|^r \quad \text{for} \quad r\;>\;\frac{2d}{4-d}\,.
\end{cases}
\end{equation*}
The first claim follows immediately since $d \leq 3$. The second claim is a consequence of the Lieb-Thirring inequality in \cite[Theorem 1]{Dolbeault_Felmer_Loss_Paturel}; see also \cite[Example 3.2]{Lewin_Nam_Rougerie}. In particular, on $\mathbb{R}^2$ the potential is infinitesimally more confining than the harmonic oscillator. 

More generally, we have the following result.

\begin{lemma}
\label{v_assump}
When $\Lambda=\mathbb{T}^d$ and $v=0$ \eqref{tr_h_assump} holds whenever $s<1-\frac{d}{2}$. Furthermore, when $\Lambda = \mathbb{R}^d$ \eqref{tr_h_assump} holds whenever $s<1-\frac{d}{2}$ and the potential $v \in C^{\infty}(\mathbb{R}^d)$ is chosen such that 
\begin{equation*}
\kappa+v \;\geq \; 0\,, \qquad
(\kappa + v)^{\frac{d}{2}-1+s} \in L^1(\mathbb{R}^d)\,. 
\end{equation*}
\end{lemma}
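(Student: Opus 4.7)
I would treat the torus and Euclidean cases separately, as they differ substantially in character.

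For the torus $\Lambda = \mathbb{T}^d$ with $v = 0$, the operator $h = -\Delta + \kappa$ has pure point spectrum, with eigenvalues $4 \pi^2 \abs{k}^2 + \kappa$ indexed by $k \in \mathbb{Z}^d$. The trace $\tr h^{s-1}$ therefore reduces to the explicit sum $\sum_{k \in \mathbb{Z}^d} (4 \pi^2 \abs{k}^2 + \kappa)^{s-1}$, which by comparison with $\int_{\mathbb{R}^d} (\abs{\xi}^2 + \kappa)^{s-1} \dd \xi$ converges if and only if $2(1 - s) > d$, i.e.\ $s < 1 - d/2$. This disposes of the first claim.

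The Euclidean case is where the real work lies. Writing $\sigma \deq 1 - s > d/2$ and $W \deq \kappa + v \geq 0$, the goal is to bound $\tr h^{-\sigma}$ in terms of $\int W^{d/2 - \sigma} \dd x$. The main idea is to represent the negative power $\lambda_j^{-\sigma}$ of each eigenvalue $\lambda_j$ of $h$ by a Mellin-type integral against a positive power of $(\lambda - \lambda_j)_+$, which brings Lieb-Thirring inequalities into play. Concretely, for any $\gamma > 0$ and $\sigma > 0$ the Beta-function identity
\begin{equation*}
\lambda_j^{-\sigma} \;=\; \frac{1}{B(\gamma+1,\sigma)} \int_0^\infty (\lambda - \lambda_j)_+^{\gamma} \, \lambda^{-\sigma-\gamma-1} \, \dd \lambda
\end{equation*}
holds, and summing over $j$ gives
\begin{equation*}
\tr h^{s-1} \;=\; \frac{1}{B(\gamma+1,\sigma)} \int_0^\infty \tr (\lambda - h)_+^{\gamma} \, \lambda^{-\sigma-\gamma-1} \, \dd \lambda.
\end{equation*}
Choosing $\gamma$ large enough for the standard Lieb-Thirring bound $\tr (\lambda - h)_+^\gamma \leq L_{\gamma,d} \int (\lambda - W(x))_+^{\gamma + d/2} \dd x$ to hold (any $\gamma \geq 1$ works in all dimensions $d \leq 3$), interchanging the order of integration, and changing variables $\lambda = W(x)(1 + t)$, the inner $\dd \lambda$-integral becomes $W(x)^{d/2 - \sigma}$ times $\int_0^\infty (1+t)^{-\sigma - \gamma - 1} t^{\gamma + d/2} \dd t$. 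The latter Beta integral is finite precisely because $\sigma > d/2$, and one arrives at
\begin{equation*}
\tr h^{s-1} \;\leq\; C_{\gamma,d,s} \int_{\mathbb{R}^d} (\kappa + v(x))^{d/2 - 1 + s} \, \dd x \;<\; \infty
\end{equation*}
by the $L^1$ hypothesis.

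The main obstacle is to verify that the Lieb-Thirring inequality applies in the required form with $W \geq 0$ (rather than in the traditional setup $-\Delta - V$ with negative part $V$), but this is merely a sign convention: $(\lambda - h)_+ = (-(-\Delta + W - \lambda))_+$ is the negative part of $-\Delta - (\lambda - W)$, and $(\lambda - W)_+$ plays the role of the attractive potential. A minor technical point is smoothness of $v$ and the compact-resolvent assumption needed to ensure $h$ actually has a discrete spectrum so that the trace makes sense as a sum; both are already part of the hypotheses of the lemma and the framework of Section~\ref{sec:h}.
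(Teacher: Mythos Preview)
Your argument is correct. For the torus you do exactly what the paper does. For $\Lambda=\mathbb{R}^d$ the paper gives a one-line proof by invoking \cite[Theorem~1]{Dolbeault_Felmer_Loss_Paturel}, whereas you essentially rederive that theorem: the Mellin/Beta representation of $\lambda_j^{-\sigma}$ followed by the classical Lieb--Thirring inequality for $\tr(\lambda-h)_+^\gamma$ is precisely the mechanism behind the Dolbeault--Felmer--Loss--Paturel bound $\tr h^{s-1}\leq C\int(\kappa+v)^{d/2-1+s}$. So the two proofs are the same in substance; yours is self-contained while the paper's is a citation. Your route has the minor advantage of making transparent where the threshold $s<1-d/2$ enters (convergence of the Beta integral $\int_0^\infty t^{\gamma+d/2}(1+t)^{-\sigma-\gamma-1}\,\dd t$), at the cost of reproving a known inequality.
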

\begin{proof}
The claim for $\Lambda=\mathbb{T}^d$ follows immediately since $\frac{1}{(\kappa+|n|^2)^{d/2+}} \in \ell^2 (\mathbb{Z}^d)$ and the claim for $\Lambda=\mathbb{R}^d$ follows from \cite[Theorem 1]{Dolbeault_Felmer_Loss_Paturel}. \qedhere
\end{proof}
Throughout the following, we regard $\kappa$ and $v$ as fixed, and do not track the dependence of our estimates on them.

\subsection{The classical system and Gibbs measures} \label{sec:classical_state}

For $r \in \R$ denote by $\fra H_r$ the Hilbert space of complex-valued Schwartz distributions on $\Lambda$ with inner product $\scalar{f}{g}_{\fra H_r} \deq \scalar{f}{h^r g}$. In particular, $\fra H_0 = \fra H$.
We define the \emph{classical free field} as the abstract Gaussian process on the Hilbert space $\fra H_{-1}$. For completeness and later use, we give an explicit construction.

Consider an infinite sequence of independent standard complex Gaussians. More precisely, we introduce the probability space $(\C^\N, \cal G, \mu)$, where $\cal G$ is the product sigma-algebra and $\mu \deq \bigotimes_{k \in \N} \mu_k$ with $\mu_k(\dd z) \deq \pi^{-1} \ee^{- \abs{z}^2} \dd z$, where $\dd z$ denotes Lebesgue measure on $\C$. We denote points of the probability space $\C^\N$ by $\omega = (\omega_k)_{k \in \N}$. We use the notation
\begin{equation} \label{h_spectrum}
h \;=\; \sum_{k \in \N} \lambda_k u_k u_k^*
\end{equation}
for the eigenvalues $\lambda_k > 0$ and associated normalized eigenfunctions $u_k \in \fra H$ of $h$. For any $K \in \N$ we define the \emph{truncated classical free field}
\begin{equation} \label{truncated classical free field}
\phi_{[K]} \;\deq\; \sum_{k = 0}^K \frac{\omega_k}{\sqrt{\lambda_k}} u_k\,,
\end{equation}
which is a random element in $\fra H$. We immediately find, for every $f \in \fra H_{-1}$, that $\scalar{f}{\phi_{[K]}}$ converges in $L^2(\mu)$ to a random variable denoted by $\phi(f)$, which is antilinear in $f$. Under $\mu$, the process $(\phi(f))_{f \in \fra H_{-1}}$ is the Gaussian free field with covariance $h^{-1}$:
\begin{equation} \label{classical_covariance}
\int \dd \mu \, \bar \phi(g) \, \phi(f) \;=\; \scalar{f}{h^{-1} g} \,, \qquad \int \dd \mu \, \phi(g) \, \phi(f) \;=\; \int \dd \mu \, \bar \phi(g) \, \bar \phi(f) \;=\; 0\,.
\end{equation}
Moreover, from \eqref{tr_h_assump} we easily find that $\phi_{[K]}$ converges in $L^2(\mu; \fra H_s)$ to
\begin{equation} \label{phi_series}
\phi \;=\; \sum_{k \in \N} \frac{\omega_k}{\sqrt{\lambda_k}} u_k\,,
\end{equation}
so that for $f \in \fra H_{-s} \subset \fra H_{-1}$ we may interpret $\phi(f)$ as the dual pairing of $f$ with a random element $\phi \in \fra H_s$, the \emph{classical free field}:
\begin{equation} \label{classical free field}
\phi(f) \;=\; \scalar{f}{\phi} \,, \qquad \bar \phi(f) \;=\; \scalar{\phi}{f}\,.
\end{equation}
(In fact, an application of Wick's theorem shows that all of the above convergences in $L^2(\mu)$ hold in $L^m(\mu)$ for any $m < \infty$.) 

Next, we define the interaction. Let $w$ be an even function on $\Lambda$ and define the
\emph{classical interaction}
\begin{equation} \label{def_W}
W \;\deq\; \frac{1}{2} \int \dd x \, \dd y \, \abs{\phi(x)}^2 \, w(x - y) \, \abs{\phi(y)}^2\,.
\end{equation}
Note that $W \geq 0$ if $w$ is pointwise nonnegative or if $w$ is of \emph{positive type}, meaning its Fourier transform is a positive measure. We shall always make one of these two assumptions. Physically, they correspond to a repulsive or defocusing interaction. Moreover, a sufficient condition that $W < \infty$ $\mu$-almost surely is that $w \in L^\infty(\Lambda)$ and \eqref{tr_h_assump} holds with $s = 0$, since in that case $\abs{\phi}^2 \in L^1(\Lambda)$ $\mu$-almost surely. We make these assumptions for now, later relaxing them in Sections \ref{sec:Wick}--\ref{sec:results}.

Next, we define the \emph{classical state} $\rho(\cdot)$ associated with the one-body Hamiltonian $h$ and interaction potential $w$ as the expectation with respect to the normalized probability measure $\frac{1}{Z} \ee^{-W} \dd \mu$. Explicitly, for a random variable $X$ we set
\begin{equation} \label{classical state}
\rho(X) \;\deq\; \frac{\int X \, \ee^{-W} \, \dd \mu}{\int \ee^{-W} \, \dd \mu}\,.
\end{equation}

We characterize the classical state $\rho(\cdot)$ through its moments. To that end, for $p \in \N$, we define the \emph{classical $p$-particle correlation function} $\gamma_p$, an operator on $\fra H^{(p)}$, through its kernel
\begin{equation} \label{classical p-particle dm}
\gamma_p(x_1, \dots, x_p;y_1, \dots, y_p) \;\deq\; \rho \pb{ \bar \phi(y_1) \cdots \bar \phi(y_p) \phi(x_1) \cdots \phi(x_p)}\,.
\end{equation}

\begin{remark} \label{moments_classical}
The family of correlation functions $(\gamma_p)_{p \in \N}$ determines the moments of the classical state $\rho(\cdot)$. Indeed, for $f_1, \dots, f_p, g_1, \dots, g_q \in \fra H_{-1}$, the joint moment is
\begin{equation*}
\rho \pb{\bar \phi(g_1) \cdots \bar \phi(g_q) \phi(f_1) \cdots \phi(f_p)} \;=\;
\begin{cases}
\scalarb{f_1 \otimes \cdots \otimes f_p}{\gamma_p \, g_1 \otimes \cdots \otimes g_q} & \text{if }p = q
\\
0 & \text{if } p \neq q\,.
\end{cases}
\end{equation*}
That the left-hand side vanishes for $p \neq q$ is a consequence of the gauge invariance of the expectation $\rho(\cdot)$: the measure $\ee^{-W} \dd \mu$ on $(\C^N, \cal G)$ is invariant under the rotation $\omega \mapsto \ee^{\ii t} \omega$ for any $t \in \R$, as follows immediately from the definitions of $\mu$ and $W$.
\end{remark}

\begin{remark}
Instead of the randomized eigenfunction expansion representation \eqref{phi_series}, we can also characterize the classical field $\phi$ directly by a measure $\P$ on $\fra H_s$. Indeed, as shown above there is an event $\Omega \in \cal G$ such that $\mu(\C^\N \setminus \Omega) = 0$ and $\phi(\Omega) \subset \fra H_s$. We define $\P \deq (\phi \vert_\Omega)_*\pb{\frac{1}{Z} \ee^{-W} \dd \mu}$ as the pushforward of the measure $\frac{1}{Z} \ee^{-W} \dd \mu$ on $\Omega$ under $\phi$. Thus, we may for instance rewrite \eqref{classical p-particle dm} as
\begin{equation*}
\gamma_p(x_1, \dots, x_p;y_1, \dots, y_p) \;=\; \int_{\fra H_s} \P(\dd \phi) \, \bar \phi(y_1) \cdots \bar \phi(y_p) \phi(x_1) \cdots \phi(x_p)\,.
\end{equation*}
This provides a rigorous construction of the formal measure \eqref{def_P}.
For our purposes, however, it is more convenient to use the representation as a randomized eigenfunction expansion from \eqref{phi_series}.
\end{remark}

\subsection{The quantum system} \label{sec:intro_quantum}
We now define the many-body quantum Hamiltonian. On the $n$-particle space $\fra H^{(n)}$, the many-body Hamiltonian reads
\begin{equation} \label{def_H^n}
H^{(n)} \;\deq\; \sum_{i = 1}^n h_i + \lambda \sum_{1 \leq i < j \leq n} w(x_i - x_j)\,,
\end{equation}
where $h_i$ denotes the operator $h$ acting in the variable $x_i$ and $\lambda > 0$ is the interaction strength. Under the assumptions on $h$ and $w$ from Sections \ref{sec:h}--\ref{sec:classical_state}, $H^{(n)}$ is a densely defined positive self-adjoint operator on $\fra H^{(n)}$. We shall show that the classical state \eqref{classical state} arises as the high-temperature limit of the thermal state associated with the Hamiltonian $H^{(n)}$. Thus, we introduce the fundamental large parameter of our work, $\tau$, which has the interpretation of the temperature. We shall always be interested in the limit $\tau \to \infty$. The canonical ensemble associated with the Hamiltonian \eqref{def_H^n} is defined by the density operator $P_\tau^{(n)} \deq \ee^{-H^{(n)} / \tau}$.

As we shall see below (see Section \ref{sec:high_t}), in the high-temperature limit $\tau \to \infty$, the rescaled number of particles $n$ corresponds to
the \emph{square $L^2$-norm}
\begin{equation} \label{mass}
\cal N \;\deq\; \int \dd x \, \abs{\phi(x)}^2
\end{equation}
of the classical field $\phi$; see \eqref{exp_N_1d} below. Since the latter is not fixed in $\rho(\cdot)$, we need to replace the \emph{canonical} ensemble, defined for a single $n$, with a \emph{grand canonical} ensemble, which admits a fluctuating particle number. To that end, we introduce the bosonic Fock space
\begin{equation*}
\cal F \;\equiv\; \cal F(\fra H) \;\deq\; \bigoplus_{n \in \N} \fra H^{(n)}\,.
\end{equation*}
We then consider the grand canonical density operator $P_\tau \deq \bigoplus_{n \in \N} P_\tau^{(n)}$. This gives rise to the \emph{quantum state} $\rho_\tau(\cdot)$, defined by
\begin{equation} \label{def_rho_tau}
\rho_\tau(\cal A) \;\deq\; \frac{\tr (\cal A P_\tau)}{\tr (P_\tau)}\,
\end{equation}
where $\cal A$ is a closed operator on $\cal F$. We remark that the growth of the number of particles proportionally to the temperature is closely related to the symmetry of $\fra H^{(n)}$ imposed by the bosonic statistics; this relationship is discussed in more detail in \cite{LNR16}, where the very different behaviour for distinguishable Boltzmann statistics is also analysed. 

On Fock space we introduce annihilation and creation operators, whose definitions we now review.
We denote vectors of $\cal F$ by $\Psi = (\Psi^{(n)})_{n \in \N}$. For $f \in \fra H$ we define the bosonic annihilation and creation operators $b(f)$ and $b^*(f)$ on $\cal F$ through
\begin{align}
\label{def_b2}
\pb{b(f) \Psi}^{(n)}(x_1, \dots, x_n) &\;=\; \sqrt{n+1} \int \dd x \, \bar f(x) \, \Psi^{(n+1)} (x,x_1, \dots, x_n)\,,
\\
\label{def_b1}
\pb{b^*(f) \Psi}^{(n)}(x_1, \dots, x_n) &\;=\; \frac{1}{\sqrt{n}} \sum_{i = 1}^n f(x_i) \Psi^{(n - 1)}(x_1, \dots, x_{i - 1}, x_{i+1}, \dots, x_n)
\,.
\end{align}
The operators $b(f)$ and $b^*(f)$ are unbounded closed operators on $\cal F$, and are each other's adjoints. They satisfy the canonical commutation relations
\begin{equation} \label{CCR_b}
[b(f), b^*(g)] \;=\; \scalar{f}{g} \,, \qquad [b(f), b(g)] \;=\; [b^*(f), b^*(g)] \;=\;0\,,
\end{equation}
where $[A,B] \deq AB - BA$ denotes the commutator.

For $f \in \fra H$, we define the rescaled annihilation and creation operators
\begin{equation} \label{def_phitau}
\phi_\tau(f) \;\deq\; \tau^{-1/2} b(f) \,, \qquad \phi_\tau^*(f) \;\deq\; \tau^{-1/2} b^*(f)\,.
\end{equation}
In analogy to \eqref{classical free field}, we call $\phi_\tau$ the \emph{quantum field}.
We regard $\phi_\tau$ and $\phi_\tau^*$ as operator-valued distributions and use the notations
\begin{equation} \label{phi_tau f}
\phi_\tau(f) \;=\; \scalar{f}{\phi_\tau} \;=\; \int \dd x \, \bar f(x) \, \phi_\tau(x)\,, \qquad
\phi^*_\tau(f) \;=\; \scalar{\phi_\tau}{f} \;=\; \int \dd x \, f(x) \, \phi_\tau^*(x)\,.
\end{equation}
The distribution kernels $\phi^*_\tau(x)$ and $\phi_\tau(x)$ satisfy the canonical commutation relations
\begin{equation} \label{CCR}
[\phi_\tau(x),\phi_\tau^*(y)] \;=\; \frac{1}{\tau} \delta(x - y) \,, \qquad
[\phi_\tau(x),\phi_\tau(y)] \;=\; [\phi_\tau^*(x),\phi_\tau^*(y)] \;=\; 0\,.
\end{equation}

Next, we define the rescaled particle number operator
\begin{equation} \label{N_tau_sc}
\cal N_\tau \;\deq\; \frac{1}{\tau}\bigoplus_{n \in \N} \cal N^{(n)} \;=\; \int \dd x \, \phi_\tau^*(x) \, \phi_\tau(x)\,.
\end{equation}
Moreover, we can write the Hamiltonian \eqref{def_H^n}, rescaled by $\frac{1}{\tau}$ and extended to Fock space, as
\begin{equation} \label{H_tau_1d}
H_\tau \;\deq\; \frac{1}{\tau} \bigoplus_{n \in \N} H^{(n)} \;=\; \int \dd x \, \dd y \, \phi^*_\tau(x) \, h(x;y) \, \phi_\tau(y)
+ 
\frac{\lambda \tau}{2} \int \dd x \, \dd y \, \phi^*_\tau(x) \phi^*_\tau(y)  \, w(x - y) \, \phi_\tau(x) \phi_\tau(y)\,.
\end{equation}
With these notations, we may write the grand canonical density operator simply as
\begin{equation*}
P_\tau \;=\; \bigoplus_{n \in \N} P_\tau^{(n)} \;=\; \bigoplus_{n \in \N} \ee^{-H^{(n)} / \tau} \;=\; \ee^{-H_\tau}\,.
\end{equation*}
We remark that integrals of operator-valued distribution such as \eqref{H_tau_1d} are carefully defined in the weak sense as densely defined quadratic forms, with domain consisting of $\Psi \in \cal F$ such that there exists $n_0 \in \N$ such that $\Psi^{(n)}$ is a Schwartz function for $n \leq n_0$ and $\Psi^{(n)} = 0$ for $n > n_0$.

Analogously to the classical case, we characterize the quantum state $\rho_\tau(\cdot)$ using correlation functions. For $p \in \N$, we define the \emph{quantum $p$-particle correlation function} $\gamma_{\tau,p}$, an operator on $\fra H^{(p)}$, through its kernel
\begin{equation} \label{def_gamma_p0}
\gamma_{\tau,p}(x_1, \dots, x_p;y_1, \dots, y_p) \;\deq\; \rho_\tau \pb{\phi_\tau^*(y_1) \cdots \phi_\tau^*(y_p) \phi_\tau(x_1) \cdots \phi_\tau(x_p)}\,.
\end{equation}

\begin{remark} \label{moments_quantum}
As in the classical case (see Remark \ref{moments_classical}) the quantum state $\rho_\tau(\cdot)$ is determined by its correlation functions $(\gamma_{\tau,p})_{p \in \N}$. Indeed, for $f_1, \dots, f_p, g_1, \dots, g_q \in \fra H$, we have
\begin{equation} \label{quantum_moment}
\rho_\tau \pb{\phi_\tau^*(g_1) \cdots \phi_\tau^*(g_q) \phi_\tau(f_1) \cdots \phi_\tau(f_p)} \;=\;
\begin{cases}
\scalarb{f_1 \otimes \cdots \otimes f_p}{\gamma_{\tau,p} \, g_1 \otimes \cdots \otimes g_q} & \text{if }p = q
\\
0 & \text{if } p \neq q\,.
\end{cases}
\end{equation}
That the left-hand side vanishes for $p \neq q$ is a consequence of the gauge invariance of the state $\rho_\tau(\cdot)$. To see this, we use the rescaled number of particles operator $\cal N_\tau$ from \eqref{N_tau_sc}, and find $\ee^{\ii t \cal N_\tau} \phi_\tau(x) \ee^{-\ii t \cal N_\tau} = \ee^{-\ii t/\tau} \phi_\tau(x)$ for all $t \in \R$, by differentiation and using \eqref{CCR}. Introducing the identity $I = \ee^{-\ii t \cal N_\tau} \ee^{\ii t \cal N_\tau}$ into $\rho_\tau(\cdot)$ on the left-hand side of \eqref{quantum_moment} and using that $\rho_\tau(\ee^{-\ii t \cal N_\tau} \cal A) = \rho_\tau(\cal A \ee^{-\ii t \cal N_\tau})$ by definition of $H_\tau$, we find that the left-hand side of \eqref{quantum_moment} vanishes unless $p = q$.
\end{remark}

\subsection{The high-temperature limit} \label{sec:high_t}

In order to understand the limit $\tau \to \infty$ of the quantum state $\rho_\tau(\cdot)$, it is instructive to consider first the simple free case. For $f,g \in \fra H$ and $w = 0$ we find from \eqref{h_spectrum}
\begin{equation} \label{rho_tau_1}
\rho_\tau \pb{\phi^*_\tau(g) \phi_\tau(f)} \;=\; \scalarbb{f}{\frac{1}{\tau(\ee^{h/\tau} - 1)}\, g} \;=\; \sum_{k \in \N} \frac{\scalar{f}{u_k} \scalar{u_k}{g}}{\tau(\ee^{\lambda_k / \tau} - 1)}\,.
\end{equation}
By dominated convergence, we find
\begin{equation} \label{rho_tau_2}
\lim_{\tau \to \infty} \rho_\tau \pb{\phi^*_\tau(g) \phi_\tau(f)} \;=\; \sum_{k \in \N} \frac{\scalar{f}{u_k} \scalar{u_k}{g}}{\lambda_k} \;=\; \scalar{f}{h^{-1} g}\,,
\end{equation}
so that, in the high-temperature limit $\tau \to \infty$, the quantum two-point function converges to the classical two-point function from \eqref{classical_covariance}. A similar statement holds for arbitrary polynomials in the rescaled annihilation and creation operators $\phi_\tau$ and $\phi^*_\tau$. In particular, we find that for the second term of \eqref{H_tau_1d} to have a nontrivial limit, we require $\lambda \tau$ to be of order one. Hence, from now on we set
\begin{equation} \label{lambda_tau}
\lambda \;\deq\; \frac{1}{\tau}\,.
\end{equation}

The equations \eqref{rho_tau_1}--\eqref{rho_tau_2} hold for general $h$ and in particular explain the choice \eqref{lambda_tau} of $\lambda$ for $d = 1,2,3$. We may also use them to analyse the expected number of particles.
Indeed, writing $\cal N_\tau = \sum_{k \in \N} \phi^*_\tau(u_k) \phi_\tau(u_k)$, we find, again for $w = 0$, that
\begin{equation} \label{rho_tau_N_tau}
\rho_\tau(\cal N_\tau)
\;=\; \sum_{k \in \N} \frac{1}{\tau(\ee^{\lambda_k / \tau} - 1)}\,.
\end{equation}
If $\tr h^{-1} < \infty$, which corresponds to $d = 1$, we find
\begin{equation} \label{exp_N_1d}
\lim_{\tau \to \infty} \rho_\tau(\cal N_\tau) \;=\; \tr h^{-1} \;=\; \rho(\cal N)\,,
\end{equation}
where $\cal N$ was defined in \eqref{mass}.
A slightly more involved computation using Lemma \ref{Quantum Wick theorem} shows that all moments converge: $\lim_{\tau \to \infty} \rho_\tau(\cal N_\tau^\ell) = \rho(\cal N^\ell)$ for all $\ell \in \N$.

Thus, that the particle number grows like $\tau$ is closely linked to the fact that the classical field has a finite $L^2$-norm. This is true if and only if \eqref{tr_h_assump} holds for $s = 0$, which is generally true for $d = 1$.

In the higher-dimensional case, where \eqref{tr_h_assump} only holds for some $s < 0$, we have $\rho(\cal N) = \infty$ and the left-hand side of \eqref{exp_N_1d} diverges. On the classical side, the field $\phi$ has $\mu$-almost surely negative regularity (i.e.\ it is not a function). This singularity of the classical field is manifested on the quantum side by a diverging rescaled particle number $\cal N_\tau$. Suppose that $\lambda_k \sim k^{1/\alpha}$ for some $\alpha > 0$. For example, on the $d$-dimensional torus $\Lambda = \bb T^d$ with $v = 0$, we have $\alpha = d/2$. Then an elementary analysis of the right-hand side of \eqref{rho_tau_N_tau} shows that as $\tau \to \infty$
\begin{equation} \label{N_tau_div}
\rho_\tau(\cal N_\tau) \;\sim\;
\begin{cases}
1 & \txt{if } \alpha < 1
\\
\log \tau & \txt{if } \alpha = 1
\\
\tau^{\alpha - 1} & \txt{if } \alpha > 1\,,
\end{cases}
\end{equation}
which corresponds to the dimensions $d = 1,2,3$ respectively. Similarly, all moments of $\cal N_\tau$ are finite in $\rho_\tau(\cdot)$, but they diverge as $\tau \to \infty$ unless $\tr h^{-1}$ is finite. This may be interpreted as the quantum state $\rho_\tau(\cdot)$ having a natural subexponential cutoff for eigenvalues $\lambda_k$ larger than $\tau$, as is apparent from the right-hand side of \eqref{rho_tau_N_tau}.

In the one-dimensional case $\tr h^{-1} < \infty$, the high-temperature limit of $\rho_\tau(\cdot)$ was understood in \cite{Lewin_Nam_Rougerie}, where the authors prove that
\begin{equation*}
\lim_{\tau \to \infty} \norm{\gamma_{\tau,p} - \gamma_p}_{\fra S^1(\fra H^{(p)})} \;=\; 0\,,
\end{equation*}
under very general assumptions on $w$, which in particular admit $w \in L^\infty(\Lambda)$ as well as the delta function $w = \delta$ yielding the local quartic interaction in \eqref{def_W}.

\subsection{Higher dimensions and renormalization} \label{sec:Wick}
We now move on to the case where $\tr h^{-1} = \infty$ and $\tr h^{-2} < \infty$, which arises when $d = 2,3$ and is the main focus of our work. Throughout this subsection we assume that $w \in L^\infty(\Lambda)$ is a bounded even function of positive type, and that \eqref{tr_h_assump} holds with $s = -1$.

In this case the classical field $\phi$ from \eqref{phi_series} is not in $\fra H$, so that the interaction \eqref{def_W} is ill-defined. Such divergences are well known in quantum field theory and the theory of stochastic partial differential equations. It is also well known that they can be dealt with by a \emph{Wick ordering} procedure, which involves a truncation and a subtraction of appropriately chosen terms that diverge as the truncation parameter is removed. Thus, recalling the truncated classical field $\phi_{[K]} \in \fra H$ with $K \in \N$ from \eqref{truncated classical free field}, we define the \emph{truncated Wick-ordered classical interaction} as
\begin{equation} \label{Classical interaction}
W_{[K]} \;\deq\; \frac{1}{2} \int \dd x \, \dd y \, \pb{\abs{\phi_{[K]}(x)}^2 - \varrho_{[K]}(x)} \, w(x-y) \, \pb{\abs{\phi_{[K]}(y)}^2 - \varrho_{[K]}(y)}\,,
\end{equation}
where we defined the \emph{classical density at $x$} as
\begin{equation*}
\varrho_{[K]}(x) \;\deq\; \int \dd \mu \, \abs{\phi_{[K]}(x)}^2\,.
\end{equation*}
Note that $\int \dd x \, \varrho_{[K]}(x) = \sum_{k = 0}^K \lambda_k^{-1}$ diverges as $K \to \infty$.
Moreover, by assumption on $w$ we have $W_{[K]} \geq 0$.

The \emph{Wick-ordered classical interaction} is then defined as the limit of the truncated $W_{[K]}$ as $K \to \infty$. Its construction is the content of the following result,
whose proof is an application of Wick's theorem and is given in Section \ref{sec:classical} below.

\begin{lemma}[Definition of $W$] \label{lem:def_W}
Suppose that \eqref{tr_h_assump} holds with $s \geq -1$, and that $w \in L^\infty(\Lambda)$ is an even function of positive type. Then the sequence $(W_{[K]})_{K \in \N}$ is a Cauchy sequence in $\bigcap_{m \geq 1} L^m(\mu)$. We denote its limit by $W$.
\end{lemma}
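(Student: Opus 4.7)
The plan is to show that $(W_{[K]})_{K \in \N}$ is Cauchy in $L^2(\mu)$, and then upgrade this to every $L^m(\mu)$ with $m \geq 2$ by Gaussian hypercontractivity. First I would observe that since $\phi_{[K]}$ is a finite linear combination of the independent complex Gaussians $\omega_0, \dots, \omega_K$, the Wick-ordered density $\psi_{[K]}(x) \deq \absb{\phi_{[K]}(x)}^2 - \varrho_{[K]}(x)$ is a Wick-ordered quadratic polynomial in these Gaussians, and hence $W_{[K]}$, as well as every difference $W_{[L]} - W_{[K]}$, lies in the real Wiener chaos of fixed degree at most four. Nelson's hypercontractivity then yields $\norm{W_{[L]} - W_{[K]}}_{L^m(\mu)} \leq C_m \norm{W_{[L]} - W_{[K]}}_{L^2(\mu)}$ for each $m \geq 2$, with $C_m$ independent of $K, L$. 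This reduces the task to proving the Cauchy property in $L^2(\mu)$.

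For the $L^2$ estimate I would use Wick's theorem for complex Gaussians. Writing $G_{[K]}(x,y) \deq \sum_{k = 0}^K \lambda_k^{-1} u_k(x) \overline{u_k(y)}$ for the kernel of the truncated covariance $P_{[K]} h^{-1}$, an elementary pairing computation gives
\[
\E_\mu \qb{\psi_{[K]}(x) \, \psi_{[K]}(y)} \;=\; \absb{G_{[K]}(x,y)}^2 \,.
\]
I would then expand $\E_\mu[(W_{[L]} - W_{[K]})^2]$ in the same way: the four-point function of Wick-ordered $\psi$'s is a sum over matchings of the eight underlying fields that contain no self-pairings (this cancellation is exactly what Wick ordering buys), each pairing producing a product of four covariance kernels. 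Decomposing $\psi_{[L]} = \psi_{[K]} + (\psi_{[L]} - \psi_{[K]})$ and using that $\phi_{[L]} - \phi_{[K]}$ is Gaussian independent of $\phi_{[K]}$ and built only from modes $k > K$, each contributing pairing in the difference carries at least one factor of $\Delta_{K,L}(x,y) \deq G_{[L]}(x,y) - G_{[K]}(x,y)$. Bounding $w$ in $L^\infty$ and applying Cauchy--Schwarz on the remaining kernel integrals yields an estimate of the form
\[
\E_\mu\qb{(W_{[L]} - W_{[K]})^2} \;\leq\; C \, \norm{w}_\infty^2 \, \norm{\Delta_{K,L}}_{\fra S^2(\fra H)}^2 \, \pB{1 + \norm{G_{[L]}}_{\fra S^2(\fra H)}^2 + \norm{G_{[K]}}_{\fra S^2(\fra H)}^2} \,.
\]
Finally, $\norm{G_{[K]}}_{\fra S^2}^2 = \sum_{k \leq K} \lambda_k^{-2}$ and $\norm{\Delta_{K,L}}_{\fra S^2}^2 = \sum_{K < k \leq L} \lambda_k^{-2}$; the hypothesis \eqref{tr_h_assump} with $s \geq -1$ together with $\lambda_k \geq \kappa > 0$ gives $\lambda_k^{-2} \leq \kappa^{-s-1} \lambda_k^{s-1}$, so $\tr h^{-2} < \infty$ and $\norm{\Delta_{K,L}}_{\fra S^2}^2 \to 0$ as $K, L \to \infty$.

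The main obstacle is the combinatorial organization of Step~2: one must enumerate the non-self-contracting Wick pairings of the four-point function and, after expanding each $\psi_{[L]}$ as $\psi_{[K]} + (\psi_{[L]} - \psi_{[K]})$, verify that every surviving pairing carries at least one factor of the small kernel $\Delta_{K,L}$, so that the uniform bound $\tr h^{-2} < \infty$ on the remaining factors yields smallness. Once that bookkeeping is done, combining the $L^2$ smallness with Step~1 gives Cauchyness in every $L^m(\mu)$, and the limit $W$ is the desired Wick-ordered classical interaction.
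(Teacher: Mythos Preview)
Your proposal is correct and follows a route that differs from the paper's. The paper does not use hypercontractivity; instead it fixes an even $m$, considers mixed moments $\int W_{[K_1]}\cdots W_{[K_m]}\,\dd\mu$ for arbitrary $K_i \geq K$, applies Wick's theorem to expand each such moment as a sum over admissible pairings $\Pi\in\fra P$, and then uses the path decomposition of the associated multigraph $(\cal V,\cal E)$ to show that every pairing value $\cal I_{\f K,\Pi}$ converges to a limit $\cal I_\Pi$ independent of the sequences $K_i$. Finiteness of $\cal I_\Pi$ is obtained from $\tr G^k<\infty$ for $k\geq 2$ along each closed path, and convergence from a telescoping argument together with $\|G_{[K]}-G\|_{\fra S^2}\to 0$. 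The Cauchy property in $L^m$ is then read off from a binomial expansion of $\|W_{[\tilde K]}-W_{[K]}\|_{L^m}^m$.

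Your approach trades this graph bookkeeping for general $m$ against an appeal to Nelson's hypercontractivity, reducing everything to $m=2$. This is legitimate since $W_{[K]}$ lives in the fourth Wiener chaos, and it makes the remaining Wick computation much lighter. The paper's route is more self-contained (no external hypercontractivity input) and has the side benefit of exercising exactly the graph structure $(\cal V,\cal E,\fra P)$ that is reused throughout Sections~\ref{sec:quantum}--\ref{sec:classical}; your route is shorter but imports a tool the paper otherwise does not need. One small remark: your displayed $L^2$ bound with a single factor $\|\Delta_{K,L}\|_{\fra S^2}^2$ and a prefactor $(1+\|G_{[K]}\|_{\fra S^2}^2+\|G_{[L]}\|_{\fra S^2}^2)$ is schematic rather than exact---the full pairing sum produces several structurally distinct terms (two $2$-cycles, one $4$-cycle, etc.)---but each surviving term indeed carries at least one $\Delta_{K,L}$ in Hilbert--Schmidt norm after Cauchy--Schwarz, so the conclusion stands.
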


We now define the classical state $\rho(\cdot)$ as in \eqref{classical state}, with $W$ defined in Lemma \ref{lem:def_W}.

Next, we discuss the quantum problem. Let $V \col \Lambda \to [0,\infty)$ be a one-body potential and $\nu \in \R$ a chemical potential. Our starting point is the many-body Hamiltonian analogous to \eqref{H_tau_1d}, defined as
\begin{equation}\label{eq:standard} 
\wt{H}_\tau \;\deq\; \int \dd x \, \dd y \,  \phi^*_\tau(x) \pb{-\Delta + \nu + V}(x;y) \phi_\tau(y)
+ \frac{1}{2} \int \dd x\, \dd y \, \phi_\tau^* (x) \phi^*_\tau (y)\, w(x-y) \, \phi_\tau (x) \phi_\tau (y) \,.
\end{equation}
As it turns out, in order to obtain a nontrivial high-temperature limit as $\tau \to \infty$, the chemical potential $\nu = \nu(\tau)$ will have to tend to $-\infty$. Heuristically, this can be understood from the fact that the rescaled quantum density diverges as $\tau \to \infty$, as explained in \eqref{N_tau_div}. To understand how to choose the chemical potential $\nu$ as a function of $\tau$, it is convenient to rewrite $\wt{H}_\tau$ in terms of a \emph{renormalized} Hamiltonian, called $H_\tau$, where the interaction only depends on the fluctuations of the density around its mean.
Let $\kappa > 0$ and $v_\tau \col \Lambda \to [0,\infty)$ be a possibly $\tau$-dependent \emph{bare one-body potential}, and define, in analogy to \eqref{def_h}, the corresponding one-body Hamiltonian
\begin{equation} \label{def_h_tau}
h_\tau \;\deq\; -\Delta + \kappa + v_\tau\,.
\end{equation}
Then we define the free Hamiltonian as
\begin{equation} \label{H_{tau,0}}
H_{\tau,0} \;\deq\; \int \dd x \, \dd y \, \phi^*_\tau(x) \, h_\tau(x;y) \, \phi_\tau(y)\,.
\end{equation}
The associated \emph{free quantum state} $\rho_{\tau,0}(\cdot)$ is defined as
\begin{equation}
\label{rho_{tau,0}}
\rho_{\tau, 0}(\cal A) \;\deq\; \frac{\tr(\cal A \, \ee^{- H_{\tau,0}})}{\tr(\ee^{- H_{\tau, 0}})}\,,
\end{equation}
and the \emph{quantum density at $x$} is defined as
\begin{equation} \label{def_varrho_tau}
\varrho_\tau(x) \;\deq\; \rho_{\tau,0}\pb{\phi^*_\tau(x) \phi_\tau(x)}\,.
\end{equation}
By definition, the \emph{renormalized many-body Hamiltonian} is
\begin{equation}
\label{eq:counter-hm} 
H_\tau \;\deq\; H_{\tau,0} + W_\tau\,,
\end{equation}
where we defined the \emph{renormalized quantum interaction}
\begin{equation} \label{Quantum W}
W_\tau \;\deq\; \frac{1}{2} \int \dd x \, \dd y \, \pb{\phi_\tau^* (x) \phi_\tau (x) - \varrho_{\tau} (x)} \, w(x-y) \, \pb{ \phi^*_\tau (y) \phi_\tau (y) - \varrho_{\tau} (y) } \,.
\end{equation}

Having defined the physical many-body Hamiltonian $\tilde H_\tau$ and its renormalized version $H_\tau$, we now explain how they are related. It turns out that this relationship is in general nontrivial, and is related to the so-called \emph{counterterm problem}. Before discussing the counterterm problem, however, we focus on the simple case where $\Lambda = \bb T^d$, $V = 0$, and $\nu > 0$, in which case the counterterm problem can be solved by elementary means.

\begin{itemize}
\item[(i)]
\emph{The case $\Lambda = \bb T^d$, $V = 0$.}
By translation invariance, we find that $\varrho_\tau(x) = \varrho_\tau(0)$ for all $x \in \Lambda$. Using \eqref{CCR}, we therefore find
\begin{equation} \label{H_tau_H_tau}
\tilde H_\tau \;=\; H_\tau + \qbb{\varrho_\tau(0) \hat w(0) - \frac{1}{2 \tau} w(0) + \nu - \kappa} \cal N_\tau - \frac{\varrho_\tau(0)^2}{2} \hat w(0)\,,
\end{equation}
with $v_\tau = 0$ and $\hat w(0) \deq \int \dd x \, w(x)$. Hence, up to an irrelevant (diverging) additive constant, $\tilde H_\tau$ and $H_\tau$ differ by a chemical potential multiplying $\cal N_\tau$.
In order to obtain the renormalized Hamiltonian \eqref{eq:counter-hm} with some fixed $\kappa$, we therefore have to ensure that the expression in square brackets in \eqref{H_tau_H_tau} vanishes, so that we have to choose the original chemical potential $\nu \equiv \nu(\tau)$ in \eqref{eq:standard} to be sufficiently negative, tending to $-\infty$ as $\tau \to \infty$. Indeed, $\hat w(0) \geq 0$ because $w$ is of positive type, and $\varrho_\tau(0) = \int \dd x \varrho_\tau(x) = \rho_{\tau,0}(\cal N_\tau)$, which diverges as $\tau \to \infty$ according to \eqref{N_tau_div}. The physical interpretation is that, in order to obtain a well-defined high-temperature limit, we have to let the chemical potential $\nu \equiv \nu(\tau)$ in \eqref{eq:standard} tend to $-\infty$ to compensate the large repulsive interaction energy arising from the large number of particles. 
\item[(ii)]
\emph{The general case.} In general, $\tilde H_\tau$ and $H_\tau$ are not related by a simple shift in the chemical potential: we also have to change the bare one-body potential $v_\tau$ in a $\tau$-dependent fashion to recover the original one-body potential $V$ in \eqref{eq:standard}. This is because $\varrho_\tau(x)$ is no longer independent of $x$, although it still diverges as $\tau \to \infty$. However, it turns out that there exists a constant $\bar \varrho_\tau$ such that $\bar \varrho_\tau$ diverges as $\tau \to \infty$ and $\varrho_\tau(x) - \bar \varrho_\tau$ converges for all $x \in \Lambda$. We rewrite
\begin{equation} \label{tilde_H_H_tau_gen}
\wt{H}_\tau \;=\; H_\tau  + \left[\bar{\varrho}_{\tau} \hat w(0) - \frac{1}{2 \tau} w(0) + \nu - \kappa \right] \cal N_\tau - \frac{1}{2} \int \dd x \, \dd y \, \varrho_\tau(x) w(x - y) \varrho_\tau(y)\,,
\end{equation}
where we imposed the condition
\begin{equation} \label{counter_intro}
v_{\tau}  \;=\; V + w * (\varrho_{\tau} - \bar{\varrho}_{\tau})\,,
\end{equation}
and $*$ denotes convolution.
The condition \eqref{counter_intro} is a self-consistent equation for $v_\tau$, since $\varrho_\tau$ by definition also depends on $v_\tau$ (see \eqref{def_h_tau}--\eqref{def_varrho_tau}).
Hence, as in (i) above, up to an irrelevant (diverging) constant, $\tilde H_\tau$ and $H_\tau$ differ by a chemical potential, equal to the expression in square brackets in \eqref{tilde_H_H_tau_gen}, multiplying $\cal N_\tau$. As in (i), we shall see that in order for $\varrho_\tau - \bar \varrho_\tau$ to remain bounded, we have to choose $\bar \varrho_\tau$ to diverge as $\tau \to \infty$, which means that the original chemical potential $\nu \equiv \nu(\tau)$ has to be chosen to tend to $-\infty$ as $\tau \to \infty$ so that the expression in square brackets in \eqref{tilde_H_H_tau_gen} vanishes.

Note that now the original one-body potential $V$ and the bare one-body potential $v_\tau$ are different. Finding the matching bare potential $v_\tau$ associated with a given $V$, such that \eqref{counter_intro} holds, is referred to as the \emph{counterterm problem}. In general, the counterterm problem is a nonlinear integral equation for $v_\tau$, and as such its solution requires a nontrivial analysis. The general solution of the counterterm problem is given in Section \ref{sec:counterterm} below, where we prove, under some technical assumptions on $V$, that there exists a $\bar \varrho_\tau$ (explicitly given in \eqref{def_rho_bar} below) such that \eqref{counter_intro} has a unique solution $v_\tau$, which depends on $\tau$, and that $v_\tau$ converges to some limiting potential $v$ in the sense that
\begin{equation} \label{conv_h_tau_intro}
\lim_{\tau \to \infty} \norm{h_\tau^{-1} - h^{-1}}_{\fra S^2(\fra H)} \;=\; 0\,.
\end{equation}
(Recall \eqref{def_h}.) See Theorem \ref{thm:counter} below for the precise statement.
The one-body potential $v$ thus constructed is the limiting bare one-body potential, and choosing it in the definition \eqref{def_h} of the classical one-body potential $h$ yields a classical state \eqref{classical state}, with $W$ defined in Lemma \ref{lem:def_W}, that is the correct rigorous version of the Gibbs measure formally defined in \eqref{def_P}.
\end{itemize}

Next, we define the thermal states associated with the one-body Hamiltonian $h_\tau$ and interaction potential $w$. Instead of considering the usual grand canonical density operator $P_\tau = \ee^{-H_{\tau,0} - W_\tau}$, as in Section \ref{sec:intro_quantum}, we introduce a family of modified grand canonical density operators
\begin{equation} \label{reg par 1}
P_{\tau}^\eta \;\deq\;  \ee^{-\eta H_{\tau,0}} \ee^{-(1 - 2\eta) H_{\tau,0}  - W_\tau} \ee^{-\eta H_{\tau,0}}
\end{equation}
parametrized by $\eta \in [0,1/4]$. Note that in the free case $w = 0$, $P_\tau^\eta$ does not depend on $\eta$. For technical reasons explained in Section \ref{sec:proof_outline} below, it is sometimes more convenient to consider $P_\tau^\eta$ with $\eta > 0$ rather than $P_\tau = P_\tau^0$.

We define the quantum state $\rho_{\tau}^\eta(\cdot)$ associated with $P_\tau^\eta$ through
\begin{equation} \label{reg par 2}
\rho_{\tau}^\eta(\cal A) \;\deq\; \frac{\tr (\cal A P_{\tau}^\eta)}{\tr (P_{\tau}^\eta)}\,.
\end{equation}
Analogously to \eqref{def_gamma_p0}, for $p \in \N$ we define the quantum $p$-particle correlation function $\gamma_{\tau,p}^\eta$ by
\begin{equation}
\label{quantum p-particle dm}
\gamma_{\tau,p}^\eta(x_1, \dots, x_p;y_1, \dots, y_p) \;\deq\; \rho_\tau^\eta \pb{\phi_\tau^*(y_1) \cdots \phi_\tau^*(y_p) \phi_\tau(x_1) \cdots \phi_\tau(x_p)}\,.
\end{equation}
The classical $p$-particle correlation function $\gamma_p$ is defined as in \eqref{classical p-particle dm}.

Unlike in the one-dimensional case $\tr h^{-1} < \infty$, in the higher-dimensional case $\tr h^{-1} = \infty$ the $p$-point correlation function $\gamma_{p}$ does not lie in $\fra S^1(\fra H^{(p)})$. For simplicity, for the following discussion we set $p = 1$ and consider the free case $w = 0$, where $\gamma_1 = h^{-1}$. Thus, $\tr \gamma_1 = \infty$, and the convergence of correlation functions cannot hold in the trace class $\norm{\cdot}_{\fra S^1(\fra H^{(p)})}$. Instead, under the assumption $\tr h^{-2} < \infty$ we find $\norm{\gamma_1}_{\fra S^2(\fra H)}^2 = \tr \gamma_1^2 = \tr h^{-2} < \infty$. We conclude that for higher dimensions we need to replace the notion of convergence in trace norm $\norm{\cdot}_{\fra S^1(\fra H^{(p)})}$ with convergence in the Hilbert-Schmidt norm $\norm{\cdot}_{\fra S^2(\fra H^{(p)})}$.

\subsection{Results} \label{sec:results}

We may now state our main result in the higher-dimensional setting.

\begin{theorem}[Convergence for $d = 2,3$] \label{thm:main}
Let $\kappa > 0$ and $v \col \Lambda \to [0,\infty)$, and define $h$ as in \eqref{def_h}. Suppose that $\tr h^{-2} < \infty$. Moreover, suppose that $h_\tau$ satisfies
\begin{equation} \label{h_tau_conv}
\lim_{\tau \to \infty} \norm{h_\tau^{-1} - h^{-1}}_{\fra S^2(\fra H)} \;=\; 0\,.
\end{equation}
Let $w \in L^\infty(\Lambda)$ be an even function of positive type. Let the classical interaction $W$ be defined as in Lemma \ref{lem:def_W} and the classical $p$-particle correlation function $\gamma_p$ be defined as in \eqref{classical p-particle dm} and \eqref{classical state}. Moreover, let the quantum $p$-particle correlation function $\gamma_{\tau,p}^\eta$ be defined as in \eqref{quantum p-particle dm}, \eqref{reg par 2}, \eqref{reg par 1}, \eqref{H_{tau,0}}, and \eqref{Quantum W}.

Then for every $\eta \in (0,1/4]$ and $p \in \N$ we have
\begin{equation}
\lim_{\tau \to \infty} \norm{\gamma_{\tau,p}^\eta - \gamma_p}_{\fra S^2(\fra H^{(p)})} \;=\; 0\,.
\end{equation}
\end{theorem}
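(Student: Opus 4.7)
The plan is to prove the theorem by expanding both the quantum state $\rho^\eta_{\tau}(\cdot)$ and the classical state $\rho(\cdot)$ perturbatively in the interaction, organising the resulting series diagrammatically, and upgrading termwise convergence to convergence of the full series via a Sokal-type Borel resummation. The key observation is that both expansions generate the same set of Feynman diagrams: only the edge labels differ, with quantum edges carrying the free quantum two-point function $(\tau(\ee^{h_\tau/\tau}-1))^{-1}$ and classical edges carrying the classical covariance $h^{-1}$.

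First, I would apply a Duhamel expansion to the interacting factor $\ee^{-(1-2\eta)(H_{\tau,0}+W_\tau)}$ in powers of $W_\tau$, substitute into numerator and denominator of $\gamma^\eta_{\tau,p}$, and simplify. This produces a formal power series whose coefficients are free quasi-free expectations $\rho_{\tau,0}(\cdot)$ of products of $W_\tau$'s together with the external field insertions $\phi^*_\tau(y_j)\phi_\tau(x_j)$. A quantum Wick theorem for $\rho_{\tau,0}$ expresses each coefficient as a finite sum over pairings. Crucially, the subtraction of the divergent density $\varrho_\tau$ inside $W_\tau$ exactly cancels the tadpole contractions that would individually diverge when $d=2,3$. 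An analogous expansion of $\ee^{-W}$ in $\rho(\cdot)$, using the classical Wick theorem and the $W$ constructed in Lemma \ref{lem:def_W}, yields exactly the same Feynman diagrams with $h^{-1}$ propagators and classical tadpoles removed by the $\varrho_{[K]}$ subtractions in $W_{[K]}$.

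Second, I would establish termwise convergence of each fixed diagram. The key single-edge input is the calculation in \eqref{rho_tau_1}--\eqref{rho_tau_2}, which combined with the assumption $\tr h^{-2}<\infty$ and the hypothesis \eqref{h_tau_conv} gives $(\tau(\ee^{h_\tau/\tau}-1))^{-1}\to h^{-1}$ in $\fra S^2(\fra H)$. A routine induction on the number of vertices, using that $w$ is bounded and that each propagator is Hilbert-Schmidt (with matching tadpole subtractions ensuring finiteness in the limit), then extends the convergence to any fixed diagram in $\fra S^2(\fra H^{(p)})$.

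The main obstacle, which is the third step, is to bound the tail of the series uniformly in $\tau$. This is precisely the role of the modification parameter $\eta>0$: the flanking factors $\ee^{-\eta H_{\tau,0}}$ in $P^\eta_\tau$ translate, by cyclicity of the trace, into $\ee^{-\eta h_\tau}$ smoothing on the external legs of every correlation function, producing uniform Hilbert-Schmidt control that is unavailable for the bare state $P^0_\tau$ in $d=2,3$. Combined with the lower bound $W_\tau\geq -C$ coming from the positive-type assumption on $w$ (obtained by an explicit Wick recontraction), the goal is a uniform bound $C^m(m!)^{\alpha}$ on the $\fra S^2(\fra H^{(p)})$ norm of the $m$-th coefficient, with $\alpha<1$ and $C$ independent of $\tau$. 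Verifying such a bound is the hard part: one has to retain the Wick cancellations through modulus estimates and avoid spoiling them by the ultraviolet singularity of the field in $d=2,3$. Once obtained, Sokal's theorem \cite{Sokal}, applied to both series viewed as functions of a fictitious coupling constant in front of the interaction, yields analyticity in a common disk tangent to the imaginary axis at the origin, uniformly in $\tau$; combined with termwise convergence of the coefficients from the second step, this upgrades to $\gamma^\eta_{\tau,p}\to\gamma_p$ in $\fra S^2(\fra H^{(p)})$, completing the proof.
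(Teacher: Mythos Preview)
Your overall architecture (Duhamel expansion, quantum Wick theorem with tadpole cancellation from the renormalization of $W_\tau$, diagrammatic organisation, Sokal--Borel resummation) matches the paper's. However, your third step contains two genuine errors that would cause the argument to fail.

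First, the target bound $C^m (m!)^\alpha$ with $\alpha<1$ is both unachievable and unnecessary. The coefficients in fact grow like $C^m m!$; were the growth $(m!)^\alpha$ with $\alpha<1$, the series would have a positive radius of convergence and Borel summation would be superfluous. Sokal's theorem is designed precisely for the Gevrey-1 case $|a_m|\leq \nu\sigma^m m!$ together with a matching remainder bound $|R_M(z)|\leq \nu\sigma^M M!\,|z|^M$. The paper obtains exactly these bounds (Corollary~\ref{cor:bound_am} and Proposition~\ref{prop:remainder}); you should aim for the same. Relatedly, your ``routine induction'' for the termwise bounds underestimates the difficulty: after Duhamel, the Wick contractions produce \emph{time-evolved} Green functions $G_{\tau,t}=\ee^{-th_\tau/\tau}/(\tau(\ee^{h_\tau/\tau}-1))$ with $t\in(-1,1)$ and factors $S_{\tau,t}=\ee^{-th_\tau/\tau}$, which are \emph{not} uniformly Hilbert-Schmidt in $t$ (indeed $\|G_{\tau,t}\|_{\fra S^2}$ blows up as $t\uparrow -1$ and $S_{\tau,0}=I\notin\fra S^2$). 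The paper's key device is to decompose each pairing into paths, exploit the positivity of all kernels (via Feynman--Kac) to pass between $L^p$ and $\fra S^p$ estimates, and then combine the time labels along each path so that they cancel; see Lemmas~\ref{Closed path}--\ref{Open path} and Proposition~\ref{Product of subgraphs}. Without this, a naive edge-by-edge Hilbert-Schmidt estimate does not close.

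Second, you misidentify the role of $\eta$. The explicit-term bounds in the paper are in fact \emph{uniform in $\eta$} (Corollary~\ref{cor:bound_am}); the flanking factors do not provide ``$\ee^{-\eta h_\tau}$ smoothing on external legs'' in any essential way for those terms. The parameter $\eta>0$ is needed solely for the \emph{remainder} $R_{\tau,M}^\xi(z)$, which you do not discuss at all. The last factor in the Duhamel remainder is the full propagator $\ee^{-t_M(H_{\tau,0}+zW_\tau)}$, to which Wick's theorem does not apply; moreover, unlike the classical case, the noncommutativity of $W_\tau$ with $H_{\tau,0}$ prevents simply bounding the remainder by the explicit term. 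The paper handles this (Section~\ref{sec:remainder}) by a H\"older splitting in Schatten norms, controlling the full propagator via Trotter--Kato and $W_\tau\geq 0$ (Lemma~\ref{lem_W_tau_pos}, not merely $W_\tau\geq -C$), and controlling the remaining pieces by a careful time-splitting algorithm that forces the Schatten exponents to lie in $2\N$ so that Wick's theorem applies again. The requirement that the full propagator carry total time bounded away from $1$ is precisely what forces $\eta>0$.
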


\begin{remark}
By a diagonal sequence argument, we find that under the assumptions of Theorem \ref{thm:main} there exists a family $(\eta_\tau)_{\tau \geq 1}$ such that $\lim_{\tau \to \infty} \eta_\tau = 0$ and 
$
\lim_{\tau \to \infty} \norm{\gamma_{\tau,p}^{\eta_\tau} - \gamma_p}_{\fra S^2(\fra H^{(p)})} = 0
$
for all $p \in \N$.
\end{remark}

In Theorem \ref{thm:main}, one can of course take $h_\tau = h$, in which case the assumption \eqref{h_tau_conv} is trivial. More generally, starting from a physical (non-renormalized) Hamiltonian of the form \eqref{eq:standard}, the assumption \eqref{h_tau_conv} is satisfied by solving the counterterm problem, as explained in \eqref{conv_h_tau_intro}; see Theorem \ref{thm:counter} below.

Our methods can also be applied to the easier case $d = 1$, where no renormalization is necessary. We illustrate this in the following two theorems. The following result was previously proved in \cite{Lewin_Nam_Rougerie}.

\begin{theorem}[Convergence for $d = 1$] \label{thm:1D}
Let $\kappa > 0$ and $v \col \Lambda \to [0,\infty)$, and define $h$ as in \eqref{def_h}. Suppose that $\tr h^{-1} < \infty$. Let $w \in L^{\infty}(\Lambda)$ be an even nonnegative function.
Let the classical interaction $W$ be defined as in \eqref{def_W} and the classical $p$-particle correlation function $\gamma_p$ be defined as in \eqref{classical p-particle dm} and \eqref{classical state}. Moreover, let the quantum $p$-particle correlation function $\gamma_{\tau,p}$ be defined as in \eqref{def_gamma_p0}, \eqref{def_rho_tau}, and \eqref{H_tau_1d}.

Then for all $p \in \N$ we have
\begin{equation} 
\label{1D convergence}
\lim_{\tau \to \infty} \norm{\gamma_{\tau,p} - \gamma_p}_{\fra S^1(\fra H^{(p)})} \;=\; 0\,.
\end{equation}
\end{theorem}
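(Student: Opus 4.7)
The plan is to adapt the diagrammatic and Borel-summation strategy announced for Theorem \ref{thm:main} to the one-dimensional setting, where the assumption $\tr h^{-1} < \infty$ makes $W$ directly integrable (Lemma \ref{lem:def_W} with $s = 0$), no Wick renormalisation is required, and the perturbative expansion of the state can be controlled by elementary means. Concretely, I would split the Hamiltonian \eqref{H_tau_1d} as $H_\tau = H_0 + \cal W$, where $H_0 \deq \int \dd x \, \dd y \, \phi_\tau^*(x) h(x;y) \phi_\tau(y)$ is the free part and $\cal W \deq \frac{1}{2} \int \dd x \, \dd y \, \phi_\tau^*(x) \phi_\tau^*(y) w(x-y) \phi_\tau(x) \phi_\tau(y)$ is the interaction (using $\lambda \tau = 1$), then expand both $\tr(\cal A \, \ee^{-H_\tau})$ and $\tr(\ee^{-H_\tau})$ by Duhamel into a formal power series indexed by the number $m$ of insertions of $\cal W$. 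The classical state $\rho(\cdot)$ admits the analogous expansion obtained from $\ee^{-W} = \sum_m (-1)^m W^m/m!$.

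At each fixed order $m$, the quantum term is a free thermal expectation of a product of time-translated copies of $\cal W$, which the quantum Wick theorem evaluates as a sum over pairings (Feynman diagrams), each propagator contributing a factor $\scalar{f}{(\tau(\ee^{h/\tau}-1))^{-1} g}$. The classical term is the analogous Gaussian integral, evaluated by the classical Wick theorem as the same sum of diagrams, with each propagator now contributing $\scalar{f}{h^{-1} g}$. By \eqref{rho_tau_1}--\eqref{rho_tau_2} and a uniform trace-class bound coming from $\tr h^{-1} < \infty$, every individual diagram converges, so the $m$-th order coefficient of the quantum series converges to its classical counterpart.

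To upgrade term-by-term convergence to convergence of the full states, I would invoke Sokal's Borel-resummation criterion, which requires uniform bounds of the form $C^m m!$ on the $m$-th Duhamel coefficient of both expansions. The classical estimate follows from $w \in L^\infty$, $W \geq 0$, and Gaussian hypercontractivity; the quantum estimate is the step specific to $d = 1$, obtained by representing $\ee^{-H^{(n)}/\tau}$ on each $n$-particle sector as a Feynman--Kac integral over Brownian bridges interacting via $w / \tau$ and using $w \geq 0$ to bound the $m$-th Duhamel remainder uniformly in $n$ and $\tau$. This gives $\rho_\tau(\cal A) \to \rho(\cal A)$ for every polynomial $\cal A$ in $\phi_\tau$ and $\phi_\tau^*$, which by Remarks \ref{moments_classical} and \ref{moments_quantum} is weak convergence of $\gamma_{\tau,p}$ to $\gamma_p$. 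Since both operators are positive, Gr\"umm's convergence theorem reduces $\fra S^1$ convergence to the scalar statement $\tr \gamma_{\tau,p} \to \tr \gamma_p$, which is a linear combination of factorial moments $\rho_\tau(\cal N_\tau^k)$ and is therefore handled by applying the same expansion to $\cal A = \cal N_\tau^k$. I expect the main obstacle to be the uniform Feynman--Kac bound on the quantum Duhamel remainder --- the positivity of $w$ must be exploited to control contributions from sectors of large particle number --- after which Borel resummation and the trace-norm upgrade follow from essentially standard principles.
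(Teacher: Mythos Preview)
Your proposal is correct and follows essentially the same route as the paper. The paper splits the proof into (i) convergence $\|\gamma_{\tau,p}-\gamma_p\|_{\fra S^2}\to 0$ via the Duhamel expansion, the quantum Wick theorem with its time-evolved propagators, term-by-term convergence, a Feynman--Kac estimate on the remainder (your anticipated ``main obstacle''; this is the paper's Proposition~\ref{Remainder term bound 1D}), and Borel summation; (ii) $\tr\gamma_{\tau,p}\to\tr\gamma_p$ by running the same machinery with the identity observable; and (iii) a lemma (Lemma~\ref{1D lemma 2}) playing the role of your Gr\"umm-type argument. The only notable cosmetic differences are that the paper obtains $\fra S^2$ convergence via uniform duality over $\xi\in\fra B_p$ rather than stopping at weak convergence, and bounds the classical coefficients by direct Wick combinatorics rather than hypercontractivity.
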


We remark that an analogous result also holds if the density operator $P_\tau = P_\tau^0$ is replaced with $P_\tau^\eta$ from \eqref{reg par 1}. Moreover, we can also admit the quantum one-body Hamiltonian to depend on $\tau$, replacing $h$ on the right-hand side of \eqref{H_tau_1d} with $h_\tau$ satisfying $\lim_{\tau \to \infty} \norm{h_\tau^{-1} - h^{-1}}_{\fra S^1(\fra H)} = 0$. However, for the one-dimensional case there is no natural reason to do so, unlike in the higher-dimensional case where $h_\tau$ was obtained by solving the counterterm problem.

Up to now, we focused on a nonlocal interaction of the form \eqref{def_W}. For $d = 1$, our methods may also be used to derive Gibbs measures with a \emph{local} interaction. We illustrate this in the case $\Lambda = \bb T^1$ and $w = \alpha \delta$ for some constant $\alpha > 0$, in which case the classical interaction \eqref{def_W} becomes
\begin{equation} \label{def_W_local}
W \;=\; \frac{\alpha}{2} \int \dd x \, \abs{\phi(x)}^4\,.
\end{equation}
A physically natural way to obtain a local interaction is to modify the range of the two-body interaction potential. Recall from Section \ref{sec:high_t} that the typical number of particles in the quantum system is of order $\tau$, and the scaling $\lambda = 1/\tau$ in front of the interaction \eqref{def_H^n} ensures the interaction potential per particle is of order one. This corresponds to a mean-field scenario, where each particle interacts with an order $\tau$ other particles and the interaction strength is of order $\tau^{-1}$.

Instead, we may consider short-range interactions, where $w$ in \eqref{H_tau_1d} is replaced with $w_\tau(x) = \tau w(\tau x)$,
where $w$ is an even nonnegative function\footnote{More precisely, $w$ is an even nonnegative function in $L^1(\R)$ and $w_\tau \col \Lambda \to \R$ is defined by $w_\tau(x) \deq \tau w(\tau [x])$, where $[x]$ is the unique representative in the set $(x + \Z) \cap [-1/2,1/2)$.} with integral $\alpha$. Physically, using the interaction $w_\tau$ means that each particle interacts with an order $1$ other particles, and the interaction strength is of order $1$. Indeed, because we have an order $\tau$ particles in $\bb T^1$, the typical separation between neighbouring particles is of order $\tau^{-1}$.

\begin{theorem}[Convergence for $d = 1$ with local interaction]\label{thm:1D local}

Let $\Lambda = \bb T^1$, $\kappa > 0$, $v  = 0$, and define $h$ as in \eqref{def_h}. Let $w_\tau$ be an even nonnegative function on $\Lambda$ satisfying
\begin{equation}
\label{w_tau bound}
\|w_{\tau}\|_{L^1}\;\leq\; C \,,\qquad
\|w_{\tau}\|_{L^{\infty}} \;\leq\; C\tau\,,
\end{equation}
and suppose that $w_\tau$ converges weakly (with respect to bounded continuous functions) to $\alpha > 0$ times the delta function at $0$.
Let the classical interaction $W$ be defined as in \eqref{def_W_local} and the classical $p$-particle correlation function $\gamma_p$ be defined as in \eqref{classical p-particle dm} and \eqref{classical state}. Moreover, let the quantum $p$-particle correlation function $\gamma_{\tau,p}$ be defined as in \eqref{def_gamma_p0} and \eqref{def_rho_tau}, where the many-body quantum Hamiltonian \eqref{H_tau_1d} has interaction potential $w_\tau$ instead of $w$.

Then for all $p \in \N$ we have \eqref{1D convergence}.
\end{theorem}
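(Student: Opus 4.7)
The plan is to compare the quantum state $\gamma_{\tau,p}$ with interaction $w_\tau$ to the target classical local Gibbs state $\gamma_p$ through an intermediate object: the classical $p$-particle correlation function $\gamma_p^{\mathrm{cl},\tau}$ defined as in \eqref{classical p-particle dm} and \eqref{classical state}, but with the classical interaction replaced by the \emph{nonlocal} one
\begin{equation*}
W^\tau \;\deq\; \frac{1}{2} \int \dd x \, \dd y \, \abs{\phi(x)}^2 \, w_\tau(x-y) \, \abs{\phi(y)}^2.
\end{equation*}
By the triangle inequality it suffices to show, separately,
\begin{equation*}
\lim_{\tau\to\infty} \normb{\gamma_{\tau,p} - \gamma_p^{\mathrm{cl},\tau}}_{\fra S^1(\fra H^{(p)})} \;=\; 0
\qquad\text{and}\qquad
\lim_{\tau\to\infty} \normb{\gamma_p^{\mathrm{cl},\tau} - \gamma_p}_{\fra S^1(\fra H^{(p)})} \;=\; 0.
\end{equation*}

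For the classical-to-classical step, in $d=1$ the free field $\phi$ is $\mu$-a.s.\ in $L^p(\Lambda)$ for every $p<\infty$: by \eqref{phi_series} and $\lambda_k \sim k^2$, we have $\phi \in \fra H_{1/2-\epsilon}$ a.s., and Sobolev embedding on $\bb T^1$ gives the claim. Writing $W^\tau = \tfrac{1}{2} \int \dd x \, \abs{\phi(x)}^2 \, (w_\tau * \abs{\phi}^2)(x)$, and using that $w_\tau$ is by hypothesis a nonnegative approximate identity of mass $\alpha$, we get $w_\tau * \abs{\phi}^2 \to \alpha\,\abs{\phi}^2$ in $L^2(\Lambda)$ $\mu$-a.s., hence $W^\tau \to \tfrac{\alpha}{2}\int \abs{\phi}^4 = W$ pointwise. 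Young's inequality gives the uniform bound $0 \leq W^\tau \leq \tfrac{1}{2}\norm{w_\tau}_{L^1}\,\norm{\phi}_{L^4}^4 \leq \tfrac{C}{2}\norm{\phi}_{L^4}^4$, and Gaussian hypercontractivity yields $\E \norm{\phi}_{L^4}^{4m} < \infty$ for every $m$. Dominated convergence then gives $\ee^{-W^\tau} \to \ee^{-W}$ in $L^m(\mu)$ for all $m<\infty$, from which convergence of the relative partition functions and of the $p$-particle correlation functions in trace norm follows by standard estimates.

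The core of the proof is the uniform-in-$\tau$ quantum-to-classical comparison. I would reuse the strategy of Theorem \ref{thm:1D}: perform a Duhamel expansion of $\ee^{-H_\tau}$ in the interaction $W_\tau$, organize the explicit contributions by the diagrammatic/Borel-resummation machinery of the paper, and estimate the remainder via the Feynman-Kac formula. The crucial observation is that both the diagrammatic bounds and the Feynman-Kac remainder estimate depend on the interaction only through $\norm{w_\tau}_{L^1}$, together with Gaussian moments that are $\tau$-independent. Indeed, an individual diagram contributes integrals of the form $\int w_\tau(x-y) K(x,y)\,\dd x\,\dd y \leq \norm{w_\tau}_{L^1} \sup K$ where $K$ is a kernel built from $(1-\ee^{-h/\tau})^{-1}$, while the Feynman-Kac remainder term reduces to expectations of $\int_0^\beta w_\tau(B_s - \tilde B_s)\,\dd s$ along a pair of Brownian bridges, which in $d=1$ are controlled by $\norm{w_\tau}_{L^1}$ times an expected Brownian local time at the origin, a quantity bounded uniformly in $\tau$.

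The principal obstacle will be making this uniform-in-$\tau$ analysis rigorous: the bound $\norm{w_\tau}_{L^\infty} \leq C\tau$ is only used qualitatively, to put the problem within the framework of Theorem \ref{thm:1D} for each fixed $\tau$, while all quantitative control must come from $\norm{w_\tau}_{L^1} \leq C$ via $L^1$-based diagrammatic estimates and the Brownian local-time bound for the Feynman-Kac remainder. Concretely, one must verify that the Borel resummation used in Theorem \ref{thm:1D} has radius of convergence and error bounds expressible solely through $\norm{w_\tau}_{L^1}$ and $\tr h^{-1}$; once this is in place, combining with the classical-to-classical limit closes the argument.
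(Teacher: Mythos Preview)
Your overall architecture (Duhamel expansion, graph estimates, Feynman--Kac reduction of the remainder, Borel resummation) matches the paper, and the triangulation through the classical state with nonlocal $w_\tau$ is a reasonable reorganization. But the core claim that the diagrammatic bounds can be run using only $\norm{w_\tau}_{L^1}$ is wrong, and this is precisely the main difficulty the paper isolates.

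The integrand of a diagram is not of the form $w_\tau(x-y)K(x,y)$ with $K$ bounded uniformly in~$\tau$. The factors $\cal J_{\tau,e}$ coming from the quantum Wick theorem contain, besides the Green function $G_{\tau,t}$, the heat-kernel contribution $\tfrac{1}{\tau}S_{\tau,t}=\tfrac{1}{\tau}\ee^{-th/\tau}$; for small $t$ this kernel is essentially the heat kernel at time $t/\tau$ on $\bb T^1$, whose $L^\infty$ norm diverges like $\sqrt{\tau/t}$. Hence ``$\sup K$'' is not uniformly bounded, and your inequality $\int w_\tau K \leq \norm{w_\tau}_{L^1}\sup K$ fails to give a $\tau$-independent estimate. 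The paper notes this explicitly (see the paragraph after \eqref{Product of subgraphs local 1}): decoupling the paths as in the nonlocal proof would cost $\norm{w_\tau}_{L^\infty}^m \leq (C\tau)^m$, which is unaffordable. The actual argument (Proposition~\ref{Product of subgraphs local}, Lemma~\ref{lem:kernel bounds}) uses the splitting $Q_{\tau,t}=Q^{(1)}_{\tau,t}+\tfrac{1}{\tau}Q^{(2)}_{\tau,t}$ and an induction that integrates out one vertex at a time; when a $Q^{(1)}$ factor is present one uses $\norm{Q^{(1)}}_{L^\infty}\leq C$ together with $\norm{w_\tau}_{L^1}\leq C$, but when only $Q^{(2)}$ factors occur one uses $\norm{w_\tau}_{L^\infty}\leq C\tau$ to cancel the prefactor $\tau^{-1}$ and then $\int Q^{(2)}(x;y)\,\dd y\leq 1$. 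Both assumptions in \eqref{w_tau bound} are used quantitatively, not just the $L^1$ bound. The Feynman--Kac step in the paper then reduces the remainder to the explicit-term bound (Proposition~\ref{Remainder term bound 1D}), so no separate local-time estimate is carried out; your proposed local-time heuristic would in any case have to control a product of $M$ concentrating interactions, which again cannot be done with $\norm{w_\tau}_{L^1}$ alone.
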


Note that the assumptions on $w_\tau$ in Theorem \ref{thm:1D local} are satisfied if we take $w_\tau(x) = \tau^\beta w(\tau^\beta x)$ for any $\beta \in (0,1]$ and even nonnegative $w$ with integral $\alpha$.

\subsection{Strategy of proof} \label{sec:proof_outline}
Our basic approach is a perturbative expansion in the interaction, motivated by ideas from field theory. Let us first focus on the classical problem. For simplicity, we explain our strategy for the \emph{partition function} $\int \ee^{-W} \dd \mu$.

Since $W \geq 0$, it is easy to see that the function $z \mapsto \int \ee^{-z W} \dd \mu$ is analytic in the right half-plane $\re z > 0$. However, it is ill-defined for $\re z < 0$, and therefore its radius of convergence at $z = 0$ is zero. Hence, a power series representation in terms of moments $\int W^m \, \dd \mu$ of $W$ with respect to $\mu$ is a hopeless task. This well-known phenomenon is best understood in the toy example $A(z) = (2 \pi)^{-1/2} \int \ee^{- z x^4} \, \ee^{-x^2/2} \, \dd x$, which is obviously analytic for $\re z > 0$. Moreover, using Taylor's theorem and Wick's theorem for the moments of a Gaussian measure, we find the asymptotic expansion
\begin{equation} \label{A_intro}
A(z) \;=\; \sum_{m = 0}^{M - 1} a_m z^m + R_M(z)\,, \qquad a_m \;\deq\; \frac{(-1)^m (4m - 1)!!}{m!}\,, \qquad \abs{R_M(z)} \;\leq\; \frac{(4M - 1)!!}{M!} \abs{z}^M\,,
\end{equation}
for any $M \in \N$ and $\re z \geq 0$. The estimate on the remainder term behaves like $(C M \abs{z})^M$ for large $M$, and in particular diverges as $M \to \infty$ unless $z = 0$. Although the power series for $A(z)$ is not convergent around $z = 0$, one can neverthless reconstruct $A$ effectively from its coefficients $a_m$ by \emph{Borel summation}. This was understood in \cite{Watson} and \cite{Sokal}, building on the previous work \cite{Hardy,Nevanlinna}.

Recall that the \emph{Borel transform} $B(z)$ of a formal power series $A(z) = \sum_{m \geq 0} a_m z^m$ is defined as $B(z) \deq \sum_{m \geq 0} \frac{a_m}{m!} z^m$. Then, as formal power series, we may recover $A$ from from its Borel transform $B$ by
\begin{equation} \label{A_B_rel}
A(z) \;=\; \int_0^\infty \dd t \, \ee^{-t} B(tz)\,.
\end{equation}
For $a_m$ as in \eqref{A_intro}, the formal power series $A(z)$ has a vanishing radius of convergence, but its Borel transform $B(z)$ has a positive radius of convergence, and hence defines an analytic function in a ball around the origin. If we can show that $B$ extends to an analytic function of sufficiently slow increase defined in some neighbourhood of the positive real axis, we can therefore recover the function $A(z)$ on the positive real axis from its coefficients $a_m$ via the identity \eqref{A_B_rel}. This procedure allows us to obtain control of functions $A(z)$ via the coefficients of their asymptotic expansions \eqref{A_intro}, although these expansions may have zero radius of convergence. The precise statement is given in Theorem \ref{Borel summation convergence} below, which shows that two analytic functions $A_\tau(z)$ and $A(z)$ are close provided that the coefficients of their asymptotic expansions are close. This requires bounds on the coefficients $a_{\tau,m}, a_m$ and on the remainder terms $R_{\tau,M}(z), R_M(z)$ of the asymptotic expansions of the form $\abs{a_{\tau,m}} + \abs{a_m} \leq C^m m!$ and $R_{\tau,M}(z) + R_M(z) \leq C^M M! \abs{z}^M$, which are much weaker than bounds needed to obtain a positive radius of convergence for $A$.

Thus, thanks to Borel summation, we may control the classical partition function $A(z) = \int \ee^{-z W} \, \dd \mu$ in terms of the coefficients $a_m = \frac{(-1)^m}{m!} \int W^m \, \dd \mu$. In order to identify these coefficients as the limits of the corresponding quantum coefficients, they have to be computed explicitly using Wick's theorem for the Gaussian measure $\mu$. To that end, we use the Definition of $W$ from Lemma \ref{lem:def_W} to write $a_m = \lim_{K \to \infty} \frac{(-1)^m}{m!} \int W_{[K]}^m \, \dd \mu$, and apply Wick's theorem to the resulting expression. Thus we get an integral over $2m$ integration variables, with an integrand that is a product of expressions of the form $w(x_i - x_j)$ and $G(x_i;x_j)$, where we defined the \emph{classical Green function} $G(x;y) \deq \int \dd \mu \, \bar \phi(y) \phi(x)$, which is nothing but the covariance $G = h^{-1}$ of the classical free field. A crucial observation is that, thanks to the Wick ordering in \eqref{Classical interaction}, the arguments $x_i$ and $x_j$ are always different integration variables from the list $(x_i)$, which allows us to obtain the needed estimates for the coefficients $a_m$. Without the Wick ordering in \eqref{Classical interaction}, we would get terms of the form $G(x_i;x_i)$, which yield, after integration over $x_i$, $\tr G = \tr h^{-1} = \infty$. The remainder term $R_M(z)$ of the asymptotic expansion of $A(z)$, which arises from the remainder of Taylor's theorem applied to $\ee^{-z W}$, may be estimated easily in terms of $a_m$ because $W \geq 0$.

We remark that the above analysis of the asymptotic expansion of the classical partition function $A(z) = \int \ee^{-z W} \, \dd \mu$  proceeds under much stronger assumptions on the interaction $W$ than has been considered in the field theory literature. Indeed, our main focus and the key difficulties of the proof are in the analysis of the quantum problem.

In analogy to the classical partition function $A(z) = \int \ee^{-z W} \, \dd \mu$, the (relative) quantum partition function for the modified grand canonical density operator $P_\tau^\eta$ from \eqref{reg par 1} is defined as
\begin{equation} \label{pfunct_sketch}
A_\tau(z) \;=\; \frac{\tr (P_\tau^\eta(z))}{\tr (\ee^{-H_{\tau,0}})}\,, \qquad P_{\tau}^\eta(z) \;\deq\;  \ee^{-\eta H_{\tau,0}} \ee^{-(1 - 2\eta) H_{\tau,0}  - z W_\tau} \ee^{-\eta H_{\tau,0}}\,.
\end{equation}
Since $W_\tau$ is a positive operator, it is not too hard to show that $A_\tau(z)$ is analytic for $\re z > 0$. As in the classical case, we aim to find an asymptotic expansion of the form $A_\tau(z) \;=\; \sum_{m = 0}^{M - 1} a_{\tau,m} z^m + R_{\tau,M}(z)$. The main work in the proof is
\begin{enumerate}
\item
obtain bounds of the form $\abs{a_{\tau,m}} \leq C^m m!$ and $\abs{R_{\tau,M}(z)} \leq C^M M! \abs{z}^M$ on the coefficients and the remainder term of the expansion of $A_\tau(z)$, which are strong enough to control $A_\tau(z)$ using its Borel transform;
\item
prove the convergence of the quantum coefficient $a_{\tau,m}$ to the classical coefficient $a_m$ as $\tau \to \infty$.
\end{enumerate}

Part (i) is in fact the hardest part of the proof; we now outline how these estimates are derived. The Taylor expansion of $A_\tau(z)$ is obtained from a Duhamel expansion, which arises from a repeated application of Duhamel's formula $\ee^{X + z Y} = \ee^{X} + z \int_0^1 \dd t \, \ee^{X(1 - t)} Y \ee^{t (X + z Y)}$ to the expression $\ee^{-(1 - 2\eta) H_{\tau,0}  - z W_\tau}$. This gives an explicit expression for $a_{\tau,m}$ in terms of an $m$-fold time integral $\int \dd \f t$ over an $m$-dimensional simplex (see \eqref{Explicit term a} below), whose integrand consists of the trace of a product of interaction terms $W_\tau$ and free propagators $\ee^{-(t_{i-1} - t_i) H_{\tau,0}}$ depending on the times $\f t$. We compute the trace using the quantum Wick theorem (see Appendix \ref{sec:Wick_quantum}), which says that an expression of the form
\begin{equation} \label{exp_sketch}
\frac{1}{\tr(\ee^{-H_{\tau,0}})} \tr \pB{\phi_\tau^*(x_1) \cdots \phi_\tau^*(x_m) \phi_\tau(y_1) \cdots \phi_\tau(y_m) \, \ee^{-H_{\tau,0}}}
\end{equation}
is given by a sum over all pairings of the labels $x_1, \dots, x_m, y_1, \dots, y_m$, where each pairing contributes a product over pairs of two-point functions of the form
\begin{equation*}
\frac{1}{\tr(\ee^{-H_{\tau,0}})} \tr \pB{\phi_\tau^*(x) \phi_\tau(y) \, \ee^{-H_{\tau,0}}} \;=\; G_\tau(x;y)\,,
\end{equation*}
where $G_\tau$ is the quantum Green function. It may be easily computed (see Appendix \ref{sec:Wick_quantum}) to equal $G_\tau = \frac{1}{\tau (\ee^{h_\tau/\tau} - 1)}$. Thus, for energies less than the temperature $\tau$, the quantum Green function $G_\tau$ behaves like the classical Green function $G = h^{-1}$, whereas for energies above the temperature $\tau$ it exhibits a subexponential decay.

A major hurdle in the application of the quantum Wick theorem is that the expressions obtained from the Duhamel expansion are not of the simple form \eqref{exp_sketch}, but intertwine factors of $\phi_\tau$ and $\phi_\tau^*$ with free propagators $\ee^{-(t_{i-1} - t_i) H_{\tau,0}}$. The quantum Wick theorem remains applicable in this case, except that the resulting quantum Green functions are time-dependent, and take on the form $G_{\tau,t} \deq \frac{\ee^{-t h_\tau / \tau}}{\tau (\ee^{h_\tau / \tau} - 1)}$ or $S_{\tau,t} \deq \ee^{-t h_\tau/\tau}$, depending on the structure of the pair in the pairing. We get factors of the form $G_{\tau,t}$ with $t \geq -1$ and $S_{\tau,t}$ for $t \geq 0$. Thus, these time-dependent operators are no longer uniformly Hilbert-Schmidt. In fact, the only norm under which they are uniformly bounded in time is the operator norm $\norm{\cdot}_{\fra S^\infty}$, which is far too weak for our estimates. (Note that, here and throughout this paper, \emph{time} does not refer to physical time, which is the argument of a unitary one-parameter group, but rather to imaginary physical time, which is the argument of a contraction semigroup of the form $t \mapsto \ee^{-t h_\tau/\tau}$.)

It is therefore crucial to exploit the detailed structure of the terms arising from the quantum Wick theorem and the precise time-dependence of the Green functions. We do this by introducing a graphical notation, whereby individual terms arising from the quantum Wick theorem are encoded by graphs. The value of a graph is estimated by using an interplay of $\fra S^p$-norms of the associated operators and the $L^p$-norms of their operator kernels. In order to go back and forth between these two pictures, a crucial observation that we make use of is that the operator kernels of all time-evolved Green functions $G_{\tau,t}$ and $S_{\tau,t}$ are nonnegative. This allows us to rewrite integrals of absolute values of the operator kernels, obtained after using $L^p$-estimates, as kernels of products of operators. This interplay lies at the heart of our estimates. Viewed from a slightly different angle, we work both in \emph{physical space} $x \in \Lambda$, where operators are controlled using $L^p$-norms of their kernels, and in the \emph{eigenfunction} or \emph{Fourier space} $k \in \N$, where operators are controlled using their Schatten norms. The latter space is also often referred to as \emph{frequency space} (by analogy to the case $\Lambda = \bb T^d$ and $h = - \Delta + \kappa$, where the eigenfunctions are parametrized by frequencies). 
In fact, most of our analysis takes place in physical space, in contrast to most previous work on Gibbs measures of NLS (e.g.\ \cite{B,B1,B2}) that operates in eigenfunction space. 

More concretely, we estimate a pairing by decomposing its associated graph into disjoint paths, where a path consists of a string of consecutive Green functions. Paths are connected by interaction potentials $w(x_i - x_j)$ whose two arguments may belong to different paths. Using $\norm{w}_{L^\infty} < \infty$, we may estimate the contribution of a pairing by simply dropping the interaction potentials $w$, at the cost of a multiplicative constant $\norm{w}_{L^\infty}^m$, to decouple the paths, resulting in a product of integrals of paths. Using the positivity of the remaining integral kernels, we may rewrite the contribution of each path as a trace of an operator product of Green functions. This allows us to combine the time indices $t$ of all Green functions within a path, which always sum to zero for any path, and ultimately conclude the estimate. This combination of the times within a path is essential for our estimates, and estimating the $L^p$- or $\fra S^p$-norms of the individual Green functions directly leads to bounds that are not good enough to conclude the proof.

The upper bound that we obtain for each explicit coefficient $a_{\tau,m}$ of the quantum expansion is uniform in the times $\f t$ and the parameter $\eta$. By dominated convergence, the convergence of $a_{\tau,m}$ to the classical coefficient $a_m$ may be proved using rather soft arguments, for each fixed $\f t$ in the interior of the simplex, without worrying about uniformity in $\f t$. In that case, all time-evolved Green functions are Hilbert-Schmidt (albeit with norms that may blow up as $\f t$ approaches the boundary of the simplex), and the convergence of $a_{\tau,m}$ may be relatively easily established using the convergence of $G_{\tau,t}$ to the classical Green function $G$ for all $t > -1$ and $S_{\tau,t}$ to the identity $I$ for $t > 0$. This shows that $a_{\tau,m}$, expressed as a sum over graphs, converges to an explicit expression, where the contribution of each graph is given as an integral of a monomial in factors of $w$ and $G$. It then remains to simply verify that this expression is precisely the one obtained from the classical theory using Wick's theorem, which was outlined above.

As explained above, the classical remainder term $R_M(z)$ may be trivially estimated in terms of the $M$-th explicit term $a_M \abs{z}^M$. This is not the case for the quantum problem, where the estimate of $R_{\tau,M}(z)$ is a major source of technical difficulties, and also the reason why we require $\eta > 0$ in Theorem \ref{thm:main}. The structure of the remainder term $R_{\tau,M}(z)$ is analogous to that of $a_{\tau,m}$ outlined above, except that the last free propagator $\ee^{-t_M H_{\tau,0}}$ is replaced with the full propagator $\ee^{-t_M (H_{\tau,0} + zW)}$. Hence, the quantum Wick theorem is not applicable. Moreover, estimating the remainder term in terms of an explicit term (i.e.\ setting $z = 0$ to obtain an upper bound), like in the classical case, does not work owing to the noncommutativity of the interactions $W_\tau$ and the quantum propagators. On the other hand, owing to the delicate cancellations inherent in the renormalized interaction $W_\tau$, our only means of estimating factors of $W_\tau$ effectively is the quantum Wick theorem for the quasi-free state $\rho_{\tau,0}(\cdot)$ from \eqref{rho_{tau,0}}.

Our approach is to estimate the remainder term by a careful splitting using H\"older's inequality for the Schatten spaces. To that end, we have to ensure that the total time $t_M$ carried by the full propagator $\ee^{-t_M (H_{\tau,0} + zW)}$ is bounded away from one, in order to ensure that the remaining free propagators have a total time bounded away from zero. By an appropriate choice of the splitting, we may therefore obtain a product of Schatten norms containing either (a) only the full propagator or (b) a product of free propagators and interaction potentials $W_\tau$. The former may be estimated using the Trotter-Kato product formula, using that $\re z \geq 0$ and $W_\tau$ is positive. The latter may be again estimated using Wick's theorem. A further technical complication in the use of Wick's theorem is that, in order to use Wick's theorem to estimate the Schatten norm in a factor of type (b), the total times in factors of type (b) have to be inverses of even integers. The existence of such a splitting is guaranteed by a carefully constructed splitting algorithm. We remark that it is the requirement that the full propagator have a total time separated away from one that leads to the requirement $\eta > 0$ in Theorem \ref{thm:main}.

In the one-dimensional case, where the interaction potential is not renormalized, the remainder term may be easily estimated in terms of the explicit term by an application of the Feynman-Kac formula for the full propagator. (See Proposition \ref{Remainder term bound 1D} below.) Hence, in that case we can also set $\eta = 0$. This argument requires taking the absolute value of the integral kernel of the interaction $W$, and hence destroys the delicate cancellations arising from the renormalization needed in the higher-dimensional case.

Putting everything together, we have obtained the necessary estimates to deduce from the Borel summation result in Theorem \ref{Borel summation convergence} that the relative quantum partition function $A_\tau(1)$ from \eqref{pfunct_sketch} converges to the classical partition function $A(1) = \int \ee^{-W} \, \dd \mu$.

Finally, instead of partition functions, we can analyse correlation functions by computing the expectation of observables, i.e.\ polynomials in the classical or quantum fields. This leads to a minor generalization of the approach outlined above, whose essence however remains the same. The convergence of the correlation function then follows from the convergence of observables by a simple duality argument.

\subsection{Organization of the paper}
We conclude the introduction with an outline of the remainder of the paper. In Section \ref{sec:quantum} we set up and analyse the perturbative expansion for the quantum problem. The analogous task for the classical problem is performed in Section \ref{sec:classical}, where we also identify the classical expansion as the high-temperature limit of the quantum expansion. In Section \ref{sec:one-d}, we explain the modifications required to apply our results to the one-dimensional problem without renormalization. In Section \ref{sec:counterterm} we formulate the counterterm problem precisely and solve it. Finally, in the appendices we collect various technical tools needed in our proofs. In Appendix \ref{sec:borel} we state and prove a general result about resummation of asymptotic expansions using Borel summation. In Appendix \ref{sec:Wick_quantum}, we collect some standard facts about bosonic quasi-free states. The final Appendix \ref{sec:GF_estimates} collects some basic estimates on the quantum Green function.

\section{The quantum problem} \label{sec:quantum}

Sections \ref{sec:quantum}--\ref{sec:classical} are devoted to the proof of our main result, Theorem \ref{thm:main}.

\subsection{Preparations}
We use the notation $h_\tau = \sum_{k \in \N} \lambda_{\tau,k} u_{\tau,k} u_{\tau,k}^*$ for the eigenvalues $\lambda_{\tau,k} > 0$ and eigenfunctions $u_{\tau,k} \in \fra H$ of $h_\tau$. We abbreviate $\phi_{\tau, k} \deq \phi_\tau(u_{\tau,k})$. From \eqref{CCR} we deduce that
\begin{equation}
\label{CCR k}
[\phi_{\tau, k}, \phi_{\tau,l}^*] \;=\; \frac{\delta_{kl}}{\tau} \,, \qquad [\phi_{\tau, k}, \phi_{\tau,l}] \;=\; [\phi_{\tau, k}^*, \phi_{\tau,l}^*]\;=\; 0\,.
\end{equation}
Moreover, we have the eigenfunction expansion
\begin{equation}
\label{phi_tau}
\phi_\tau(x) \;=\; \sum_{k \in \N} u_{\tau,k}(x) \phi_{\tau,k}\,,
\end{equation}
and we can write $H_{\tau,0}= \sum_{k \in \N} \lambda_{\tau,k} \, \phi_{\tau,k} \phi_{\tau,k}^*$.

Next, we define the \emph{quantum Green function}, $G_\tau$, as the one-particle correlation function of the free state \eqref{rho_{tau,0}}:
\begin{equation}
\label{quantum correlation}
\scalar{f}{G_\tau g} \;\deq\; \rho_{\tau,0}\pb{\phi^*_\tau(g) \phi_\tau(f)}\,.
\end{equation}
In particular, the quantum density $\varrho_\tau(x) = G_\tau(x;x)$ is given by the diagonal of $G_\tau$.
From Lemma \ref{Quantum Wick theorem} (i) we find
\begin{equation} \label{quantum_G}
G_\tau \;=\; \frac{1}{\tau (\ee^{h_\tau / \tau} - 1)} \;=\; \frac{1}{\tau} \frac{\ee^{- h_\tau/\tau}}{1 - \ee^{- h_\tau / \tau}}\,.
\end{equation}
Analogously, we define the \emph{classical Green function}, $G$, as the covariance of the field $\phi$ under the Gaussian measure $\mu$:
\begin{equation} \label{classical_G}
\scalar{f}{G g} \;\deq\; \int \dd \mu \, \scalar{f}{\phi} \, \scalar{\phi}{g} \;=\; \scalar{f}{h^{-1} g}\,,
\end{equation}
i.e.\ $G = h^{-1}$.

Next, we construct an observable using a self-adjoint operator $\xi$ on $\fra H^{(p)}$. More precisely, for $p \in \N$ we denote by
\begin{equation*}
\fra B_p \;\deq\; \hb{\xi \in \fra S^2(\fra H^{(p)}) \col \norm{\xi}_{\fra S^2(\fra H^{(p)})} \leq 1}
\end{equation*}
the unit ball of $S^2(\fra H^{(p)})$. We define the \emph{lift of $\xi \in \fra B_p$ to $\cal F$} through
\begin{equation}
\label{theta_tau}
\Theta_\tau(\xi) \;\deq\; \int \dd x_1 \cdots \dd x_p \, \dd y_1 \cdots \dd y_p \, \xi(x_1, \dots, x_p; y_1, \dots, y_p) \phi_\tau^*(x_1) \cdots \phi_\tau^*(x_p) \phi_\tau(y_1) \cdots \phi_\tau(y_p)\,.
\end{equation}
The main work in this section is to compute the expectation of $\Theta_\tau(\xi)$ in the state $\rho_\tau^\eta(\cdot)$, for arbitrary self-adjoint $\xi \in \fra B_p$. In addition, for the Borel summation argument we have to introduce a complex parameter in front of the interaction potential $W_\tau$. To that end, we write
\begin{equation} \label{rho_tau_frac}
\rho^\eta_{\tau}(\Theta_\tau(\xi)) \;=\; \frac{\tr \pb{\Theta_\tau(\xi) \, P_\tau^\eta}}{\tr (P_\tau^\eta)}
\;=\; \frac{\tilde \rho^\eta_{\tau,1}(\Theta_\tau(\xi))}{\tilde \rho^\eta_{\tau,1}(I)}\,,
\end{equation}
where we defined
\begin{equation} \label{def_rho_tau_tilde}
\tilde \rho^\eta_{\tau,z}(\cal A) \;\deq\; \frac{\tr \pb{\cal A \, \ee^{-\eta H_{\tau,0}} \ee^{-(1 - 2\eta) H_{\tau,0}  - z W_\tau} \ee^{-\eta H_{\tau,0}}}}{\tr (\ee^{-H_{\tau,0}})}\,.
\end{equation}
Here $z$ is a free complex parameter with nonnegative real part.

\subsection{Duhamel expansion}
\label{Duhamel expansion}

Define the function
\begin{equation} \label{def_A_tau}
A_{\tau}^{\xi}(z) \;\deq\; \tilde \rho^\eta_{\tau,z}(\Theta_\tau(\xi))\,.
\end{equation}
We now perform a Taylor expansion up to order $M \in \N$ of $A_{\tau}^{\xi}(z)$ in the parameter $z$. This is done using a Duhamel expansion. The resulting coefficients are
\begin{multline}
a_{\tau,m}^{\xi} \;\deq\; 
\tr \Bigg((-1)^m  \frac{1}{(1-2\eta)^m} \int_{\eta}^{1-\eta} \dd t_1 \int_{\eta}^{t_1} \dd t_2 \cdots \int_{\eta}^{t_{m-1}} \,\dd t_m \\
\times \Theta_\tau(\xi)\, \ee^{-(1-t_1) H_{\tau,0}} \,W_\tau \,\ee^{-(t_1-t_2)H_{\tau,0}} \,W_{\tau} \cdots \ee^{-(t_{m-1}-t_m)H_{\tau,0}} \, W_{\tau} \, \ee^{-t_m H_{\tau,0}} \Bigg) \bigg/ \tr \big(\ee^{-H_{\tau,0}}\big)\,, \label{Explicit term a}
\end{multline}
and the remainder term is
\begin{multline}
R_{\tau,M}^{\xi}(z) \;\deq\; 
\tr \Bigg((-1)^M \frac{z^M}{(1-2\eta)^{M}}  \int_{0}^{1-2\eta} \,\dd t_1 \int_{0}^{t_1} \,\dd t_2  \cdots \int_{0}^{t_{M-1}} \,\dd t_M \,
\Theta_\tau(\xi)\,\ee^{-(1-\eta-t_1) H_{\tau,0}} \,W_\tau \,\\
\times 
\ee^{-(t_1-t_2)H_{\tau,0}} \,W_{\tau} \ee^{-(t_2 - t_3) H_{\tau,0}} \cdots 
\, W_{\tau} \, \ee^{-t_M (H_{\tau,0}+\frac{z}{1-2\eta}W_{\tau})} \ee^{-\eta H_{\tau,0}}
\Bigg) \bigg/ \tr \big(\ee^{-H_{\tau,0}}\big) \,. 
\label{Remainder term R}
\end{multline}

\begin{lemma} \label{lem_Dyson_exp}
For any $M \in \mathbb{N}$ we have $A_{\tau}^{\xi}(z)=\sum_{m=0}^{M-1}a_{\tau,m}^{\xi} z^m + R_{\tau,M}^{\xi}(z)$.
\end{lemma}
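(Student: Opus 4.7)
The plan is to derive the identity by iterating Duhamel's formula $M$ times on the central factor $\ee^{-(1-2\eta)H_{\tau,0} - zW_\tau}$ in \eqref{def_rho_tau_tilde}, leaving the outer conjugating factors $\ee^{-\eta H_{\tau,0}}$ untouched. Abbreviating $A \deq -(1-2\eta)H_{\tau,0}$ and $B \deq -zW_\tau$, I start from
\begin{equation*}
\ee^{A+B} \;=\; \ee^A + \int_0^1 \ee^{(1-s)A} B\, \ee^{s(A+B)} \, \dd s\,,
\end{equation*}
and iterate this by applying the rescaled identity $\ee^{s(A+B)} = \ee^{sA} + \int_0^s \ee^{(s-s')A} B\, \ee^{s'(A+B)} \, \dd s'$ to the tail at each step. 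A straightforward induction yields the operator identity $\ee^{A+B} = \sum_{m=0}^{M-1} T_m + S_M$, where $T_m$ is the $m$-fold simplex integral over $0 \leq s_m \leq \cdots \leq s_1 \leq 1$ of $\ee^{(1-s_1)A} B \, \ee^{(s_1-s_2)A} B \cdots B \, \ee^{s_m A}$, and $S_M$ has the same structure except that the final free propagator is replaced by the interacting one $\ee^{s_M(A+B)}$.

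Next, I substitute back $A, B$, extract the power $(-z)^m$ (respectively $(-z)^M$) from the insertions of $B$, and rescale the integration variables: in the explicit term $T_m$ I use $t_i = (1-2\eta)s_i + \eta$, which produces the Jacobian $(1-2\eta)^{-m}$ and transforms the propagator exponents to $1 - \eta - t_1$, $t_i - t_{i+1}$ and $t_m - \eta$ with range $\eta \leq t_m \leq \cdots \leq t_1 \leq 1-\eta$; after conjugation by the outer $\ee^{-\eta H_{\tau,0}}$ the $\pm\eta$ shifts at the endpoints are absorbed into the first and last propagators, producing $\ee^{-(1-t_1)H_{\tau,0}}$ and $\ee^{-t_m H_{\tau,0}}$, matching \eqref{Explicit term a}. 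In the remainder $S_M$ I instead use the unshifted substitution $t_i = (1-2\eta) s_i$ with range $0 \leq t_M \leq \cdots \leq t_1 \leq 1-2\eta$ and Jacobian $(1-2\eta)^{-M}$; the left outer factor $\ee^{-\eta H_{\tau,0}}$ is absorbed into the leading propagator to give $\ee^{-(1-\eta - t_1)H_{\tau,0}}$, the identity $s_M(A+B) = -t_M(H_{\tau,0} + \tfrac{z}{1-2\eta}W_\tau)$ converts the trailing factor into $\ee^{-t_M(H_{\tau,0} + \tfrac{z}{1-2\eta}W_\tau)}$, and the right outer factor $\ee^{-\eta H_{\tau,0}}$ remains as written in \eqref{Remainder term R}. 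Multiplying by $\Theta_\tau(\xi)$, taking the trace, and dividing by $\tr(\ee^{-H_{\tau,0}})$ then turns the operator identity into the claimed scalar expansion $A_\tau^\xi(z) = \sum_{m=0}^{M-1} a_{\tau,m}^\xi z^m + R_{\tau,M}^\xi(z)$.

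The point requiring care, and the main obstacle, is the justification of Duhamel's formula and of the trace/integral interchange for the unbounded operators $H_{\tau,0}$ and $W_\tau$ on Fock space. Since $\re z \geq 0$, $H_{\tau,0} \geq 0$ and $W_\tau \geq 0$ as a quadratic form (the latter thanks to $w$ being of positive type), the Trotter-Kato product formula gives meaning to $\ee^{-tH_{\tau,0} - sW_\tau}$ as a positivity-preserving contraction semigroup for $t, \re s \geq 0$, and the presence of the trace-class factor $\ee^{-\eta H_{\tau,0}}$ controls every operator product appearing in $T_m$ and $S_M$ uniformly on the simplex in the relevant Schatten norm. Standard Hölder and dominated convergence arguments then validate differentiation under the integral sign and swapping the trace with the simplex integrals, closing the induction and completing the proof.
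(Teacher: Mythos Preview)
Your proposal is correct and follows essentially the same approach as the paper: iterate Duhamel's formula on the central factor $\ee^{-(1-2\eta)H_{\tau,0}-zW_\tau}$, then change variables by dilation and translation $t_j=(1-2\eta)s_j+\eta$ for the explicit terms and by dilation only $t_j=(1-2\eta)s_j$ for the remainder, absorbing the outer $\ee^{-\eta H_{\tau,0}}$ factors into the endpoint propagators. The paper's proof is terser and does not spell out the justification you sketch in your final paragraph; that justification is handled elsewhere in the paper (in the analyticity lemma) via truncation to finite particle number rather than directly through Trotter--Kato and Schatten bounds, but your version is also fine.
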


\begin{proof}
By performing a Duhamel expansion of $A_{\tau}^{\xi}(z)$ in the parameter $z$ obtained by iterating the Duhamel formula $M$ times, we find that, for $m=0,1,\ldots, M-1$, the $m$-th coefficient equals
\begin{align*}
&\tr \Bigg((-1)^m \int_{0}^{1} \dd s_1 \int_{0}^{s_1} \dd s_2 \cdots \int_{0}^{s_{m-1}} \,\dd s_m 
\Theta_\tau(\xi)\,\ee^{-\eta H_{\tau,0}} \, \ee^{-(1 -s_1) H_{\tau,0}^{\eta}} \,W_\tau \,\ee^{-(s_1-s_2)H_{\tau,0}^{\eta}} \,W_{\tau}\\ 
&\qquad \qquad \cdots \ee^{-(s_{m-1}-s_m)H_{\tau,0}^{\eta}} \, W_{\tau} \, \ee^{-s_m H_{\tau,0}^{\eta}} \,\ee^{-\eta H_{\tau,0}} \Bigg) \bigg/ \tr \Big(\ee^{-H_{\tau,0}}\Big) 
\end{align*}
where $H_{\tau,0}^{\eta}\deq(1 - 2\eta) H_{\tau,0}$. We first change variables by dilation as $s'_j \deq (1-2\eta)s_j$ and then by translation as $t_j \deq s'_j+\eta$ in order to deduce \eqref{Explicit term a}. The proof of
the formula for the remainder term is analogous, except that we only perform the first change of variables.
\end{proof}

\subsection{The explicit terms I: time-evolved Green function representation} \label{sec:time_evolved_G}

We shall need time-evolved versions of the creation and annihilation operators with respect to the Hamiltonian $h_{\tau}/\tau$. 
We give a precise definition by expanding in the eigenfunctions of $h_\tau$. 
\begin{definition}
\label{time evolved operators}
Given $t \in \mathbb{R}$, we define the operator-valued distributions $\ee^{t h_\tau / \tau} \phi_\tau$ and $\ee^{t h_\tau / \tau} \phi_\tau^*$ respectively as 
\begin{equation} \label{efunction_exp}
\big(\ee^{t h_\tau / \tau} \phi_\tau\big)(x) \;\deq\; \sum_{k \in \N} \ee^{t \lambda_{\tau,k} / \tau}u_{\tau,k}(x) \phi_{\tau,k}\,, \quad
\big(\ee^{t h_\tau / \tau} \phi_\tau^*\big)(x) \;\deq\; \sum_{k \in \N} \ee^{t \lambda_{\tau,k} / \tau}\bar{u}_{\tau,k}(x) \phi_{\tau,k}^*\,.
\end{equation}
\end{definition}

These objects arise when we conjugate the creation and annihilation operators by $\ee^{t H_{\tau,0}}$.

\begin{lemma}
\label{Pullthrough formula lemma}
For $t \in \mathbb{R}$ we have
\begin{equation}
\label{Conjugation identity}
\ee^{t H_{\tau,0}} \,\phi_{\tau}^*(x) \, \ee^{-t H_{\tau,0}}\;=\; \pb{\ee^{th_{\tau}/\tau}\phi_{\tau}^*}(x)\,,
\qquad \ee^{t H_{\tau,0}} \,\phi_{\tau}(x) \, \ee^{-t H_{\tau,0}}\;=\; \pb{\ee^{-th_{\tau}/\tau}\phi_{\tau}}(x)\,.
\end{equation}
\end{lemma}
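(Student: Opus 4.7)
The plan is to reduce to the eigenmode level, compute one commutator with $H_{\tau,0}$, solve a trivial Heisenberg-type ODE, and reassemble via the eigenfunction expansion. I would first observe that the second identity in \eqref{Conjugation identity} follows from the first by taking formal adjoints (since $H_{\tau,0}$ is self-adjoint and $\phi_\tau, \phi_\tau^*$ are mutual adjoints in the distributional sense), so I focus on the identity for $\phi_\tau^*$.

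Using the representation $H_{\tau,0} = \sum_{l} \lambda_{\tau,l}\, \phi_{\tau,l}\phi^*_{\tau,l}$ given just after \eqref{phi_tau}, together with the CCR \eqref{CCR k}, the Leibniz rule for commutators yields
\begin{equation*}
\comb{H_{\tau,0}}{\phi^*_{\tau,k}}
\;=\; \sum_{l} \lambda_{\tau,l} \pB{\phi_{\tau,l}\, \comb{\phi^*_{\tau,l}}{\phi^*_{\tau,k}} + \comb{\phi_{\tau,l}}{\phi^*_{\tau,k}}\, \phi^*_{\tau,l}}
\;=\; \frac{\lambda_{\tau,k}}{\tau}\, \phi^*_{\tau,k}\,.
\end{equation*}
Therefore $Y_k(t) \deq \ee^{t H_{\tau,0}} \phi^*_{\tau,k} \ee^{-t H_{\tau,0}}$ satisfies $\frac{\dd}{\dd t} Y_k(t) = \frac{\lambda_{\tau,k}}{\tau} Y_k(t)$ with initial datum $Y_k(0) = \phi^*_{\tau,k}$, whose unique solution is $Y_k(t) = \ee^{t \lambda_{\tau,k}/\tau} \phi^*_{\tau,k}$. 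Multiplying by $\bar{u}_{\tau,k}(x)$, summing over $k$, and invoking the adjoint of the eigenfunction expansion \eqref{phi_tau} together with Definition \ref{time evolved operators} then produces \eqref{Conjugation identity}.

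The only real subtlety to address is the rigorous interpretation of these manipulations for unbounded operator-valued distributions, and I expect this to be the main (minor) obstacle. The route I would take is to work on the dense domain $\cal D \subset \cal F$ described at the end of Section \ref{sec:intro_quantum}: vectors $\Psi$ with $\Psi^{(n)}$ Schwartz for $n \leq n_0$ and $\Psi^{(n)} = 0$ for $n > n_0$. After smearing $x$ against a Schwartz test function, both sides of \eqref{Conjugation identity} define operators acting on each fixed particle sector; the eigenmode series \eqref{efunction_exp} converges strongly there, the Leibniz commutator manipulation reduces to a finite computation per sector, and the Heisenberg ODE admits a unique strong solution. Thus \eqref{Conjugation identity} holds as an identity of quadratic forms on $\cal D \times \cal D$, which is the sense in which the operator-valued distributional identities of this section are to be interpreted.
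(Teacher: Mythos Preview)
Your proof is correct and follows essentially the same approach as the paper's: both argue by differentiating the conjugated operator, reduce the derivative to a scalar multiple of the original via the CCR \eqref{CCR k}, and solve the resulting linear ODE. The only cosmetic difference is that the paper first factors out the single-mode contribution (conjugating by $\ee^{t\,\phi^*_{\tau,k}\phi_{\tau,k}}$ after observing that the other modes commute with $\phi^*_{\tau,k}$) before differentiating, whereas you compute the commutator $[H_{\tau,0},\phi^*_{\tau,k}]$ with the full Hamiltonian directly; the content is identical.
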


\begin{proof}
Observe first that the identities from \eqref{Conjugation identity} are equivalent by duality.
By using \eqref{phi_tau}, \eqref{H_{tau,0}}, and linearity, the first identity of \eqref{Conjugation identity} follows from
\begin{equation}
\label{Pullthrough formula}
F_\tau(t) \;\deq\; \ee^{\,t \,\phi^{*}_{\tau,k} \phi_{\tau,k}} \,\phi^{*}_{\tau,k} \,\ee^{\,-t\,\phi^{*}_{\tau,k} \phi_{\tau,k}} \;=\; \ee^{\,t/\tau} \phi^{*}_{\tau,k} \,.
\end{equation}
In order to show \eqref{Pullthrough formula}, we remark that $F_\tau(t)$ is an operator-valued distribution which depends smoothly on the parameter $t$. We compute
\begin{equation*}
F_{\tau}'(t) \;=\; \ee^{\,t \,\phi^{*}_{\tau,k} \phi_{\tau,k}} \, \Big(\phi^{*}_{\tau,k}\,\phi_{\tau,k}\,\phi^{*}_{\tau,k}-\phi^{*}_{\tau,k}\,\phi^{*}_{\tau,k}\,\phi_{\tau,k} \Big) \, \ee^{\,-t \,\phi^{*}_{\tau,k} \phi_{\tau,k}} \,
\end{equation*}
as a densely defined quadratic form.
By \eqref{CCR k}, we therefore get
\begin{equation*}
F_{\tau}'(t) \;=\; \tau^{-1} \, \ee^{\,t \,\phi^{*}_{\tau,k} \phi_{\tau,k}} \,\phi_{\tau,k}^{*} \, \ee^{\,-t \,\phi^{*}_{\tau,k} \phi_{\tau,k}}\;=\;\tau^{-1} \,F_{\tau}(t)\,.
\end{equation*}
Since $F_{\tau}(0)=\phi_{\tau,k}^{*}$, it follows that $F_{\tau}(t) =\ee^{\,t/\tau}\,\phi_{\tau,k}^{*}$, as claimed.
\end{proof}
Lemma \ref{Pullthrough formula lemma} now directly implies the following result for the conjugation of the interaction $W_\tau$ by $\ee^{t H_{\tau,0}}$.
\begin{corollary}
\label{cor:Conjugation identity}
For $t \in \mathbb{R}$ we have
\begin{align}
&\ee^{t H_{\tau,0}} \,W_\tau \,\ee^{-t H_{\tau,0}} \;=\; \frac{1}{2} \int \dd x \, \dd y \, \Big( \big(\ee^{th_{\tau}/\tau} \phi^*_\tau\big)(x) \, \big(\ee^{-th_{\tau}/\tau}\phi_\tau\big)(x) - \varrho_\tau(x)\Big) \,\notag \\
&\qquad  \qquad \qquad \quad \quad \times \,w(x - y)  \, \Big( \big(\ee^{th_{\tau}/\tau}\phi^*_\tau\big)(y) \, \big(\ee^{-th_{\tau}/\tau}\phi_\tau\big)(y) - \varrho_\tau(y)\Big)\,.
\end{align}
\end{corollary}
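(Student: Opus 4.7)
The corollary is essentially a direct packaging of Lemma~\ref{Pullthrough formula lemma} at the level of the composite operator $W_\tau$. The plan is to start from the definition \eqref{Quantum W} of $W_\tau$ as a densely-defined quadratic form built from four field operators $\phi_\tau^{\#}(x), \phi_\tau^{\#}(y)$ and the $c$-number functions $\varrho_\tau(x), \varrho_\tau(y), w(x-y)$, and then conjugate each field operator individually using Lemma~\ref{Pullthrough formula lemma}. To make this precise, I would insert resolutions of the identity $I = \ee^{-tH_{\tau,0}}\ee^{tH_{\tau,0}}$ between each pair of consecutive field operators inside the integrand, so that conjugation by $\ee^{tH_{\tau,0}}$ distributes across the product as
\[
\ee^{tH_{\tau,0}}\,\phi_\tau^*(x)\phi_\tau(x)\,\ee^{-tH_{\tau,0}} \;=\; \bigl(\ee^{tH_{\tau,0}}\phi_\tau^*(x)\ee^{-tH_{\tau,0}}\bigr)\bigl(\ee^{tH_{\tau,0}}\phi_\tau(x)\ee^{-tH_{\tau,0}}\bigr),
\]
and analogously for the $y$-pair. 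Applying Lemma~\ref{Pullthrough formula lemma} to each of the four factors then converts $\phi_\tau^*(x)$, $\phi_\tau(x)$, $\phi_\tau^*(y)$, $\phi_\tau(y)$ respectively into $(\ee^{th_\tau/\tau}\phi_\tau^*)(x)$, $(\ee^{-th_\tau/\tau}\phi_\tau)(x)$ and their $y$-counterparts.

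For the $\varrho_\tau$ terms, the observation is that $\varrho_\tau(x)$ from \eqref{def_varrho_tau} is a scalar function of the spatial variable $x$, hence commutes with $\ee^{tH_{\tau,0}}$; the same is true for the kernel $w(x-y)$. Therefore the two ``constant'' contributions $-\varrho_\tau(x)$ and $-\varrho_\tau(y)$ are unaffected by the conjugation. Collecting these four cases and substituting into the integrand produces exactly the right-hand side of the claim. Since the integration in $x, y$ against $w(x-y)$ is a classical (non-operator) operation, it commutes with the conjugation, so the claim follows on the level of integrands.

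\paragraph{Main subtlety.} There is no real obstacle here; the only point requiring care is making these manipulations rigorous at the level of operator-valued distributions, rather than bounded operators. Following the convention recalled after \eqref{H_tau_1d}, I would interpret both sides as densely-defined quadratic forms on the domain of Fock vectors $\Psi$ with $\Psi^{(n)}$ Schwartz for $n \leq n_0$ and vanishing for $n > n_0$. This subspace is invariant under $\ee^{-tH_{\tau,0}}$ for $t \geq 0$, and matrix elements between such vectors of monomials in $\phi_\tau, \phi_\tau^*$ are finite, so all insertions and multiplications performed above are well-defined. No further analysis beyond tracking the four factors is required.
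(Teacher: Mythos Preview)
Your proposal is correct and matches the paper's approach exactly: the paper simply states that the corollary follows directly from Lemma~\ref{Pullthrough formula lemma}, and your argument spells out precisely how that direct implication works, by conjugating each of the four field factors and noting that the scalar pieces $\varrho_\tau$ and $w$ are unaffected. The extra care you take about the quadratic-form interpretation is in the spirit of the remark after \eqref{H_tau_1d} and does not diverge from the paper's treatment.
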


We denote the renormalized product of two operators $\cal A$, $\cal B$ that are each linear in $\phi_\tau, \phi_\tau^*$ by 
\begin{equation} \label{Wick ordering quadratic}
: \cal A \, \cal B : \;=\; \cal A \, \cal B - \rho_{\tau,0} ( \cal A \, \cal B ) \,.
\end{equation}
We now substitute the result of Corollary \ref{cor:Conjugation identity} into \eqref{Explicit term a} and use cyclicity of the trace to obtain
\begin{equation} \label{def_a_tau}
a_{\tau,m}^\xi \;=\; (-1)^m  \frac{1}{(1-2\eta)^m \, 2^m} \int_{\eta}^{1-\eta} \dd t_1 \int_{\eta}^{t_1} \dd t_2 \cdots \int_{\eta}^{t_{m-1}} \,\dd t_m \, f_{\tau,m}^\xi(t_1, \dots, t_m)\,,
\end{equation}
where
\begin{align}
&f_{\tau,m}^\xi(t_1, \dots, t_m) \;\deq \; \int \dd x_1 \,\ldots\,\dd x_{m+p} \, \dd y_1 \,\ldots\, \dd y_{m+p}  \notag \\
&\bigg(\prod_{i=1}^{m} \,w(x_i-y_i) \bigg) \cdot \, \xi(x_{m+1}, \dots, x_{m+p}; y_{m+1}, \dots, y_{m+p}) \notag \\ 
& \rho_{\tau,0} \Bigg(\,\prod_{i=1}^{m} \bigg( \Big[:\big(\ee^{t_ih_{\tau}/\tau} \phi^*_\tau \big)(x_i) \, \big(\ee^{-t_ih_{\tau}/\tau}\phi_\tau\big)(x_i): \Big] \,\Big[:\big(\ee^{t_ih_{\tau}/\tau} \phi^*_\tau \big)(y_i) \, \big(\ee^{-t_ih_{\tau}/\tau}\phi_\tau\big)(y_i): \Big] \, \bigg) 
\notag
\\
&\qquad  \times \prod_{i=1}^{p} \phi_\tau^*(x_{m+i})  \prod_{i=1}^{p} \phi_\tau(y_{m+i}) \Bigg)  \label{Definition of f_{tau,m}^{xi}}\,. 
\end{align}
Here we recall the definition of $\rho_{\tau,0}$ from \eqref{rho_{tau,0}}. In the above formula, and in what follows, when we write $\prod$, we always take the product of the operators in fixed order of increasing indices from left to right.

We define an abstract vertex set $\cal X$ of $4m+2p$ elements, which encode the $\phi_\tau$ and $\phi_\tau^*$ in \eqref{Definition of f_{tau,m}^{xi}}.
\begin{definition}
\label{cal X}
Let $m,p \in \N$ be given. 
\begin{enumerate}
\item
The vertex set $\cal X \equiv \cal X(m,p)$ consists of triples of the form $(i,r,\sign)$, where $i=1, \dots, m + 1$. For $i=1,\ldots,m$ we have $r=1,2$, and for $i=m+1$ we have $r=1,\ldots,p$. Finally $\sign=\pm 1$. 
In what follows, we also denote each such triple $(i,r,\sign)$ by $\alpha$. Given $\alpha=(i,r,\sign) \in \cal X$, we write its components $i,r,\sign$ as $i_{\alpha}, r_{\alpha}, \sign_{\alpha}$ respectively. 
\item On $\cal X$, we impose a linear order $\leq$ by ordering the elements of $\alpha = (i,r,\sign) \in \cal X$ in increasing order as
\begin{multline}
(1,1,+1), (1,1,-1), (1,2,+1), (1,2,-1), \dots,
(m,1,+1), (m,1,-1), (m,2,+1), (m,2,-1),
\\
(m+1,1,+1), \dots, (m+1, p,+1), (m+1,1,-1), \dots, (m+1, p, -1)\,.
\label{linear order}
\end{multline}
Moreover, given $\alpha,\beta \in \cal X$, we say that $\alpha<\beta$ if $\alpha \leq \beta$ and $\alpha \neq \beta$.
\end{enumerate}
\end{definition}

Before we proceed, let us explain the motivation for this definition. The first component $i$
indexes the operator ($w$ or $\xi$) occurring outside of $\rho_{\tau,0} $ in the integrand in \eqref{Definition of f_{tau,m}^{xi}}. Furthermore, for $i = 1, \dots, m$, the index
$r = 1,2$ tells us whether we are looking at the factor $\phi_{\tau}^\sharp(x_i)$ or $\phi_{\tau}^\sharp(y_i)$, whereas for $i=m+1$,  $r = 1, \dots, p$ indicates the choice of the factor $\phi_{\tau}^*(x_{m+r})$ or $\phi_\tau(y_{m+r})$. Finally,  $\sign = \pm1 $ is taken to be $+1$ for a $\phi_{\tau}^*$ and $-1$ for a $\phi_{\tau}$. In this way $(\cal X, \leq)$ encodes the occurrences of $\phi_\tau$ and $\phi_\tau^*$ in \eqref{Definition of f_{tau,m}^{xi}} in the appropriate order. A graphical interpretation of this encoding is given in Figure \ref{fig:graph1} below.

Next, to each vertex $\alpha=(i,r,\sign) \in \cal X$ we assign an integration label $x_\alpha = x_{i,r,\sign}$. Moreover, to each $i = 1, \dots, m$ we assign an index $t_i$, and we also use $t_{m+1} \deq 0$.
For $\alpha = (i,r,\sign)$, we interchangeably use the notations $x_{\alpha} \equiv x_{i,r,\sign}$ and $t_{\alpha} \equiv t_i$, depending on which form is more convenient.
We also abbreviate
\begin{equation} \label{x_t_labels}
\f x = (x_{\alpha})_{\alpha \in \cal X} \in \Lambda^{\cal X}\,, \qquad \f t = (t_{\alpha})_{\alpha \in \cal X} \in \R^{\cal X}\,.
\end{equation}
We always consider $(t_1, \dots, t_m)$ in the support of the integral in \eqref{def_a_tau}, in which case we have $\f t \in \fra A$, where we defined the simplex $\fra A \;\equiv\; \fra A (m)$ as
\begin{equation} \label{def_fraA}
\fra A \;\deq\; \hb{\f t \in \R^{\cal X} \col t_{i,r,\sign} = t_i \text{ with } 0 = t_{m+1} \leq \eta < t_m < t_{m - 1} < \cdots < t_2 < t_1 < 1 - \eta}\,.
\end{equation}
Note that
\begin{equation} \label{t_ordering}
\alpha < \beta \quad \Longrightarrow \quad 0 \leq t_\alpha - t_\beta < 1\,.
\end{equation}

Define a family of operator-valued distributions 
$(\cal B_{\alpha}(\f x,\f t))_\alpha$ through
\begin{align*}
\cal B_{\alpha}(\f x,\f t) \;\deq\;
\begin{cases}
\big(\ee^{t_\alpha h_{\tau}/\tau} \phi^*_\tau\big)(x_\alpha) &\mbox{if }\sign_\alpha=1\\
\big(\ee^{-t_\alpha h_{\tau}/\tau} \phi_\tau\big)(x_\alpha) &\mbox{if }\sign_\alpha=-1\,.
\end{cases}
\end{align*}

\begin{definition} \label{def_pairing}
Let $\Pi$ be a pairing of $\cal X$, i.e.\ a one-regular graph on $\cal X$. We regard its edges as ordered pairs $(\alpha, \beta)$ satisfying $\alpha < \beta$.
Let $\fra P \equiv \fra P(m,p)$ denote the set of pairings of $\cal X$ such that 
\begin{enumerate}
\item
for each $i = 1, \dots, m$ and $r = 1,2$ we have $((i,r,+1), (i,r,-1)) \notin \Pi$;
\item
for each $(\alpha, \beta) \in \Pi$ we have $\sign_{\alpha} \sign_{\beta} = -1$.
\end{enumerate}
\end{definition}

\begin{definition} \label{def:I_tau_PI}
We define the value of $\Pi \in \fra P$ through
\begin{multline} \label{def_calI}
\cal I_{\tau,\Pi}^{\xi} (\f t) \;\deq\; \int_{\Lambda^{\cal X}} \dd \f x \,\prod_{i=1}^{m} \pbb{ w(x_{i,1,1}-x_{i,2,1}) \prod_{r=1}^{2} \delta(x_{i,r,1}-x_{i,r,-1})}
\\
\times \xi(x_{m+1,1,1}, \dots, x_{m+1,p,1}; x_{m+1,1,-1}, \dots, x_{m+1,p,-1}) \prod_{(\alpha,\beta) \in \Pi} \rho_{\tau,0} \big(\cal B_{\alpha} (\f x, \f t) \, \cal B_{\beta} (\f x, \f t)
\big)\,.
\end{multline}
\end{definition}

\begin{lemma}
\label{Wick_application lemma}
For each $m,p \in \N$  we have $f_{\tau,m}^\xi(\f t) = \sum_{\Pi \in \fra P} \cal I_{\tau,\Pi}^{\xi}(\f t)$.

\end{lemma}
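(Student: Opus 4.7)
The plan is to apply the quantum Wick theorem for the quasi-free state $\rho_{\tau,0}$ (Lemma \ref{Quantum Wick theorem}) to the expectation appearing in \eqref{Definition of f_{tau,m}^{xi}}, combined with a careful expansion of the Wick ordering $::$ defined in \eqref{Wick ordering quadratic}. The argument is purely algebraic, requiring no analytic estimates; the main task is matching the bookkeeping of the vertex set $\cal X$ with the order of operators inside $\rho_{\tau,0}$.

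First, I would expand each Wick-ordered block as $:\cal A \cal B: = \cal A \cal B - \rho_{\tau,0}(\cal A \cal B)$ and apply the standard quantum Wick theorem to the resulting linear combination of products of creation/annihilation operators. A direct inclusion-exclusion (the bosonic ``Wick theorem for normal-ordered products'') then shows that the expectation of a product of Wick-ordered blocks equals the sum over pairings of all $4m + 2p$ constituent operators in which no pair is internal to a single Wick-ordered block. The linear order \eqref{linear order} on $\cal X$ is arranged so that the $\alpha$-th operator in the product inside $\rho_{\tau,0}$ is precisely $\cal B_\alpha(\f x, \f t)$; pairings of operators therefore correspond bijectively to pairings of $\cal X$, and the exclusion of internal contractions becomes condition (i) of Definition \ref{def_pairing}. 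Condition (ii) follows from gauge invariance of $\rho_{\tau,0}$, which (as in Remark \ref{moments_quantum}) forces $\rho_{\tau,0}(\cal B_\alpha \cal B_\beta) = 0$ whenever $\sign_\alpha = \sign_\beta$; alternatively, this is a direct computation using \eqref{CCR k} and the explicit form \eqref{quantum_G} of $G_\tau$.

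Finally, to match the integration in \eqref{def_calI}, I would insert delta functions to convert the original $2m + 2p$ spatial integration variables $x_i, y_i$ of \eqref{Definition of f_{tau,m}^{xi}} into the $4m + 2p$ variables $x_\alpha$ indexed by $\cal X$: each $x_i$ is split into $x_{i,1,+1}$ and $x_{i,1,-1}$ joined by $\delta(x_{i,1,+1} - x_{i,1,-1})$, and similarly $y_i$ is split via the vertices $(i,2,\pm 1)$. The factor $w(x_i - y_i)$ then becomes $w(x_{i,1,+1} - x_{i,2,+1})$, the kernel $\xi$ relabels accordingly, and summing over $\Pi \in \fra P$ yields $f_{\tau,m}^\xi(\f t) = \sum_{\Pi \in \fra P} \cal I_{\tau,\Pi}^\xi(\f t)$. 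The only obstacle is the bookkeeping, and since the linear order on $\cal X$ has been tailored precisely to this translation, no further work is needed.
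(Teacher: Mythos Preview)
Your proposal is correct and follows essentially the same approach as the paper. The only cosmetic difference is that the paper passes explicitly through the eigenfunction expansion \eqref{efunction_exp B}, rewriting each $\cal B_\alpha$ in terms of the mode operators $\phi_{\tau,k}^\sharp$ before applying Lemma \ref{Quantum Wick theorem}, whereas you apply the Wick theorem directly to the $\cal B_\alpha$ by linearity; the inclusion-exclusion for the Wick-ordered blocks and the identification of conditions (i) and (ii) of Definition \ref{def_pairing} are handled identically.
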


\begin{proof}

Analogously to the labels $\f x = (x_\alpha)_\alpha$, we assign to each vertex $\alpha=(i,r,\sign) \in \cal X$ a spectral index $k_{\alpha}=k_{i,r,\sign}$, and abbreviate $\f k = (k_\alpha)_\alpha$.

Given $k \geq 0$, we define
\begin{align*}
u^{\alpha}_{\tau,k} \;\deq\;
\begin{cases}
\bar{u}_{\tau,k} &\mbox{if }\sign_\alpha=+1\\
u_{\tau,k} &\mbox{if }\sign_\alpha=-1\,.
\end{cases}
\end{align*}
Using \eqref{efunction_exp} we write
\begin{equation}
\label{efunction_exp B}
\cal B_{\alpha} (\f x, \f t) \;=\;  \mathop{\sum}_{k_{\alpha} \in \N} \ee^{\,\sign_\alpha t_\alpha \lambda_{\tau,k_{\alpha}} / \tau} u^{\alpha}_{\tau,k_{\alpha}}(x_{\alpha}) \cal A_\alpha(\f k)
\end{equation}
where the family of operators $(\cal A_{\alpha}(\f k))_\alpha$ is defined by
\begin{equation} \label{def_cal_A}
\cal A_{\alpha}(\f k) \;\deq\;
\begin{cases}
\phi^*_{\tau,k_{\alpha}} &\mbox{if }\sign_\alpha=+1\\
\phi_{\tau,k_{\alpha}} &\mbox{if }\sign_\alpha=-1\,.
\end{cases}
\end{equation}
As before, we also use the abbreviation $\cal A_{\alpha} \equiv \cal A_{i,r,\sign}$ when $\alpha=(i,r,\sign)$.

Substituting \eqref{efunction_exp B} into \eqref{Definition of f_{tau,m}^{xi}}, it follows that 
\begin{multline}
f_{\tau,m}^\xi(\f t) \;=\; \int_{\Lambda^{\cal X}} \dd \f x \,\prod_{i=1}^{m} \pbb{ w(x_{i,1,+1}-x_{i,2,+1}) \prod_{r=1}^{2} \delta(x_{i,r,+1}-x_{i,r,-1})} 
\\
\times \xi(x_{m+1,1,+1}, \dots, x_{m+1,p,+1}; x_{m+1,1,-1}, \dots, x_{m+1,p,-1})
\\
\times \sum_{\f k} \prod_{\alpha} \ee^{\sign_\alpha t_{\alpha} \lambda_{\tau,k_\alpha} / \tau} u^{\alpha}_{\tau,k_\alpha}(x_\alpha)
\,
\rho_{\tau,0} \Bigg(\,\prod_{i=1}^{m} \prod_{r=1}^{2}  \col \qBB{\prod_{\sign = \pm 1} \cal A_{i,r,\sign}(\f k)} \col 
\prod_{\sign = \pm 1} \prod_{r = 1}^p \cal A_{m+1,r,\sign} (\f k) \,\bigg)\,
\Bigg)\,.
\label{Formula for f_{tau,m}^{xi}}
\end{multline}
In the above expression, we adopt our previous convention for $\prod$ to $\cal X$ in the sense that all the products are taken in increasing order of the vertices in $\cal X$. 
We henceforth use this convention.

Next, we claim that Lemma \ref{Quantum Wick theorem} implies
\begin{equation}
\label{Quantum Wick theorem application}
\rho_{\tau,0} \Bigg(\,\prod_{i=1}^{m} \prod_{r=1}^{2}  \col \qBB{\prod_{\sign = \pm 1} \cal A_{i,r,\sign}(\f k)} \col
\prod_{\sign = \pm 1} \prod_{r = 1}^p  \cal A_{m+1,r,\sign} (\f k) \,\bigg)\,
\Bigg) 
\;=\; 
\sum_{\Pi \in \fra P}\,\prod_{(\alpha,\beta) \in \Pi} \rho_{\tau,0} \big(\cal A_{\alpha} (\f k) \, \cal A_{\beta} (\f k) \big)\,.
\end{equation}
Indeed, we can multiply out all $2m$ factors of the form
\begin{equation*}
\col \qBB{\prod_{\sign = \pm 1} \cal A_{i,r,\sign}(\f k)} \col \;=\; \prod_{\sign = \pm 1} \cal A_{i,r,\sign}(\f k) - \rho_{\tau,0} \pBB{\prod_{\sign = \pm 1} \cal A_{i,r,\sign}(\f k)}\,,
\end{equation*}
and apply Lemma \ref{Quantum Wick theorem} (ii) to each resulting term. The result is a sum of the form \eqref{Quantum Wick theorem application} over all pairings satisfying property (i) of Definition \ref{def_pairing}. Note that it is precisely the renormalization $\col[\,\cdot\,]\col$ that gets rid of the pairings violating property (i) of Definition \ref{def_pairing}. Moreover, by Lemma \ref{Quantum Wick theorem} (i) we find that if $\Pi$ contains an edge $(\alpha, \beta)$ satisfying $\sign_{\alpha} \sign_{\beta} = 1$ then its contribution vanishes. Hence, any pairing with a nonzero contribution to \eqref{Quantum Wick theorem application} satisfies property (ii) of Definition \ref{def_pairing}. This proves \eqref{Quantum Wick theorem application}.

Substituting \eqref{Quantum Wick theorem application} into \eqref{Formula for f_{tau,m}^{xi}} and using \eqref{efunction_exp B}, we deduce the claim.
\end{proof}

Next, in order to compute the factors $\rho_{\tau,0}(\cdot)$ in \eqref{def_calI}, for $t \in \R$ we introduce the bounded operators
\begin{align}
\label{def_St}
S_{\tau,t} &\;\deq\; \ee^{-t h_\tau/\tau} \qquad (t \geq 0)\,,
\\ \label{def_Gt}
G_{\tau,t} &\;\deq\; \frac{\ee^{-t h_\tau / \tau}}{\tau (\ee^{h_\tau / \tau} - 1)} \qquad (t \geq -1)\,.
\end{align}
In particular, $G_{\tau,0} = G_\tau$.
As usual, we denote the Schwartz operator kernels of $S_{\tau,t}$ and $G_{\tau,t}$ by $S_{\tau,t}(x;y)$ and $G_{\tau,t}(x;y)$ respectively. In general, the integral kernels $S_{\tau,t}(x;y)$ and $G_{\tau,t}(x;y)$ are measures on $\Lambda^2$. 

We have the following simple but crucial positivity result for the operator kernels.

\begin{lemma}
\label{Positivity lemma}
For all $t \geq 0$ and $x,y \in \Lambda$ we have $S_{\tau,t}(x;y) = S_{\tau,t}(y;x) \geq 0$. Moreover, for all $t > -1$ and $x,y \in \Lambda$ we have $G_{\tau,t}(x;y) = G_{\tau,t}(y;x) \geq 0$.
\end{lemma}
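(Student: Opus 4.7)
The plan is to reduce everything to the positivity of the heat semigroup $S_{\tau,t}$ for $t\geq 0$, and then handle $G_{\tau,t}$ by expanding the resolvent-like factor as a geometric series of heat semigroups.

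First I would treat $S_{\tau,t}$. Symmetry $S_{\tau,t}(x;y) = S_{\tau,t}(y;x)$ follows immediately from the self-adjointness of $h_\tau$ on $\fra H$ (so that $S_{\tau,t} = \ee^{-t h_\tau/\tau}$ is self-adjoint) together with the fact that $h_\tau = -\Delta + \kappa + v_\tau$ has real coefficients, so its heat kernel is real-valued. For pointwise positivity I would split $h_\tau = h_\tau^{(0)} + v_\tau$ with $h_\tau^{(0)} \deq -\Delta + \kappa$ and apply the Trotter product formula
\begin{equation*}
\ee^{-t h_\tau / \tau} \;=\; \lim_{n \to \infty} \pB{\ee^{-(t/n\tau) h_\tau^{(0)}} \ee^{-(t/n\tau) v_\tau}}^n\,,
\end{equation*}
where convergence holds in the strong operator topology on $\fra H$. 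The kernel of $\ee^{-s h_\tau^{(0)}}$ is a (shifted) Gaussian heat kernel, hence strictly positive and symmetric; the kernel of $\ee^{-s v_\tau}$ is multiplication by the positive function $\ee^{-s v_\tau(x)}$, hence a nonnegative measure concentrated on the diagonal. The product of two kernels with nonnegative Schwartz kernels again has a nonnegative kernel, so each factor in the Trotter product has a nonnegative kernel, and the limit inherits pointwise nonnegativity in the weak sense of tempered distributions. Extracting enough regularity to make sense of $S_{\tau,t}(x;y)$ as a (nonnegative) measure uses the standard fact that $\ee^{-t h_\tau/\tau}$ has a bounded, in fact smooth, kernel for $t>0$ because $h_\tau$ has compact resolvent and $v_\tau \geq 0$.

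Next I would handle $G_{\tau,t}$ for $t \geq -1$ via the geometric series identity
\begin{equation*}
G_{\tau,t} \;=\; \frac{\ee^{-t h_\tau/\tau}}{\tau(\ee^{h_\tau/\tau} - 1)} \;=\; \frac{1}{\tau} \sum_{k=1}^{\infty} \ee^{-(t+k) h_\tau/\tau} \;=\; \frac{1}{\tau} \sum_{k=1}^{\infty} S_{\tau, t+k}\,,
\end{equation*}
which is valid because $h_\tau \geq \kappa > 0$, making $\ee^{-h_\tau/\tau}$ a contraction with norm $\leq \ee^{-\kappa/\tau} < 1$ so the series converges in operator norm. For every $k \geq 1$ and $t \geq -1$ we have $t + k \geq 0$, so each term $S_{\tau,t+k}$ has a symmetric and nonnegative kernel by the first part. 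Taking the sum kernel-wise (which is legitimate since all terms are nonnegative measures, so monotone convergence applies), we obtain that $G_{\tau,t}(x;y) = \frac{1}{\tau}\sum_{k=1}^\infty S_{\tau,t+k}(x;y)$ is a symmetric and nonnegative measure on $\Lambda^2$.

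The only real obstacle I anticipate is bookkeeping about in what sense the kernels are objects one can compare pointwise. For $t>0$ the heat kernel $S_{\tau,t}(x;y)$ is an honest continuous nonnegative function, so the issue is only at $t=0$, where $S_{\tau,0}(x;y) = \delta(x-y)$ and at $t=-1$ for $G_{\tau,t}$, where the $k=1$ term in the expansion of $G_{\tau,t}$ contributes the boundary case $S_{\tau,0}$; in both cases the statement is interpreted as saying that $S_{\tau,t}$ (resp.\ $G_{\tau,t}$) is a symmetric nonnegative measure, which is preserved under monotone summation. Once this interpretation is fixed, both assertions follow directly from the two ingredients above.
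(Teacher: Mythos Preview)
Your proof is correct and follows the same overall architecture as the paper: reduce $G_{\tau,t}$ to $S_{\tau,t}$ via the Neumann series $G_{\tau,t} = \tau^{-1}\sum_{k\geq 1} S_{\tau,t+k}$, and then establish positivity of $S_{\tau,t}$ directly. The only substantive difference is the tool used for the positivity of the heat semigroup: the paper invokes the Feynman--Kac formula
\[
S_{\tau,t}(x;y) \;=\; \int W_x(\dd b)\, \exp\!\Big(-\int_0^{t/\tau}\!\dd s\,\big[\kappa + v_\tau(b(s))\big]\Big)\,\delta\big(b(t/\tau)-y\big),
\]
which is manifestly nonnegative, whereas you use the Trotter product formula together with the positivity of the free heat kernel and of multiplication by $\ee^{-s v_\tau}$. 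Both are standard; your argument avoids introducing Wiener measure, while the paper's gives an explicit pointwise representation that is reused later (e.g.\ in the one-dimensional remainder estimate). One cosmetic remark: on $\Lambda = \bb T^d$ the free heat kernel is the periodised Gaussian (a theta function), not literally a Gaussian, but it is still strictly positive, so your argument goes through unchanged. Also, the lemma only asserts the claim for $t>-1$, so the boundary case $t=-1$ you worry about in the last paragraph need not be addressed.
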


\begin{proof}
Since $S_{\tau,t}$ and $G_{\tau,t}$ are self-adjoint, the symmetry of their kernels follows from their pointwise nonnegativity.
Using the convergent Neumann series
\begin{equation} \label{neumann_series}
G_{\tau,t} \;=\; \frac{1}{\tau} \sum_{n \geq 1} \ee^{-(t + n) h_\tau /\tau}
\end{equation}
and continuity,
we find that it suffices to prove that $S_{\tau,t}(x;y) \geq 0$ for all $t > 0$ and $x,y \in \Lambda$. We do this using the Feynman-Kac formula. For $x \in \Lambda$, denote by $W_x$ the Wiener measure on continuous paths $b \in C([0,\infty);\Lambda)$ satisfying $b(0) = x$, i.e.\ the law of the $\Lambda$-valued Brownian motion starting at $x$ with variance $\int W_x(\dd b) \, (b(t) - x)^2 = 2t$ for $t > 0$. Then we have
\begin{equation*}
S_{\tau,t}(x;y) \;=\; \int W_x(\dd b) \, \exp \pbb{-\int_0^{t/\tau} \dd s \, \qb{\kappa + v_\tau(b(s))}} \, \delta\pb{b(t/\tau) - y}\,,
\end{equation*}
which is manifestly nonnegative.
\end{proof}

\begin{lemma}
\label{Correlation functions}
Let $\alpha,\beta \in \cal X$ satisfy $\alpha < \beta$. 
\begin{enumerate}
\item
If $\sign_{\alpha} = +1$, $\sign_{\beta} = -1$, and $t_{\alpha}-t_{\beta}<1$ then
\begin{equation*}
\rho_{\tau,0} \big(\cal B_{\alpha} (\f x, \f t) \, \cal B_{\beta} (\f x, \f t)\big) \;=\; G_{\tau, -(t_{\alpha} - t_{\beta})}(x_{\alpha}; x_{\beta})\,.
\end{equation*}
\item
If $\sign_{\alpha} = -1$, $\sign_{\beta} = +1$, and $t_{\alpha}-t_{\beta} \geq 0$ then
\begin{equation*}
\rho_{\tau,0} \big(\cal B_{\alpha} (\f x, \f t) \, \cal B_{\beta} (\f x, \f t)\big) \;=\; G_{\tau, t_{\alpha} - t_{\beta}}(x_{\alpha}; x_{\beta}) + \frac{1}{\tau} S_{\tau, t_{\alpha} - t_{\beta}}(x_{\alpha}; x_{\beta})\,.
\end{equation*}
\item
In both cases (i) and (ii) we have
\begin{equation*}
\rho_{\tau,0} \big(\cal B_{\alpha} (\f x, \f t) \, \cal B_{\beta} (\f x, \f t)\big) \;\geq\; 0\,.
\end{equation*}
\end{enumerate}
\end{lemma}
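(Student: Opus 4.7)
The plan is a direct spectral calculation. Expanding $\cal B_\alpha, \cal B_\beta$ via \eqref{efunction_exp} reduces $\rho_{\tau,0}(\cal B_\alpha \cal B_\beta)$ to a double sum over spectral indices of $h_\tau$; the two-point function formula from Lemma \ref{Quantum Wick theorem}(i), combined with the CCR \eqref{CCR k} where necessary, collapses the sum to the diagonal $k=l$ and produces an explicit series in $k$ that can be recognized as the Schwartz kernel of a function of $h_\tau$, namely $G_{\tau,t}$ from \eqref{def_Gt} or $S_{\tau,t}$ from \eqref{def_St}.

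For case (i) with $\sign_\alpha=+1,\sign_\beta=-1$, the expansion gives
\[
\cal B_\alpha \cal B_\beta \;=\; \sum_{k,l \in \N} \ee^{t_\alpha \lambda_{\tau,k}/\tau}\, \ee^{-t_\beta \lambda_{\tau,l}/\tau}\, \bar u_{\tau,k}(x_\alpha)\, u_{\tau,l}(x_\beta)\, \phi^*_{\tau,k}\phi_{\tau,l},
\]
and Lemma \ref{Quantum Wick theorem}(i) together with \eqref{quantum_G} yields $\rho_{\tau,0}(\phi^*_{\tau,k}\phi_{\tau,l}) = \delta_{kl}\bigl[\tau(\ee^{\lambda_{\tau,k}/\tau}-1)\bigr]^{-1}$. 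Resumming identifies the result with the kernel of the operator $\ee^{(t_\alpha-t_\beta)h_\tau/\tau}\bigl/\bigl[\tau(\ee^{h_\tau/\tau}-1)\bigr] = G_{\tau,-(t_\alpha-t_\beta)}$ evaluated at $(x_\beta;x_\alpha)$; the hypothesis $t_\alpha-t_\beta<1$ places the exponent in the admissible range $t\geq -1$ of \eqref{def_Gt} and makes the underlying geometric resummation convergent, and the symmetry $G_{\tau,s}(x;y)=G_{\tau,s}(y;x)$ supplied by Lemma \ref{Positivity lemma} rewrites the answer at $(x_\alpha;x_\beta)$, matching the stated formula.

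For case (ii) with $\sign_\alpha=-1,\sign_\beta=+1$, the same expansion produces products $\phi_{\tau,k}\phi^*_{\tau,l}$; applying \eqref{CCR k} in the form $\phi_{\tau,k}\phi^*_{\tau,l} = \phi^*_{\tau,l}\phi_{\tau,k} + \tau^{-1}\delta_{kl}$ splits the computation into two pieces. The normally ordered piece reproduces, via the argument of (i), the kernel $G_{\tau,t_\alpha-t_\beta}(x_\alpha;x_\beta)$, while the commutator contributes
\[
\frac{1}{\tau}\sum_{k \in \N} \ee^{-(t_\alpha-t_\beta)\lambda_{\tau,k}/\tau}\, u_{\tau,k}(x_\alpha)\, \bar u_{\tau,k}(x_\beta) \;=\; \frac{1}{\tau}S_{\tau,t_\alpha-t_\beta}(x_\alpha;x_\beta),
\]
the hypothesis $t_\alpha-t_\beta\geq 0$ keeping us within the domain \eqref{def_St}. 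Part (iii) is then immediate from Lemma \ref{Positivity lemma}, which asserts pointwise nonnegativity of both $G_{\tau,t}$ for $t\geq -1$ and $S_{\tau,t}$ for $t\geq 0$. There is no serious obstacle here: the argument is essentially bookkeeping at the level of the spectral resolution, and the only point requiring attention is that the time-difference ranges $t_\alpha-t_\beta<1$ and $t_\alpha-t_\beta\geq 0$ imposed in (i) and (ii) are precisely what is needed to remain within the admissible domains of $G_{\tau,t}$ and $S_{\tau,t}$ in \eqref{def_St}--\eqref{def_Gt}.
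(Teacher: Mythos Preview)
Your proof is correct and follows essentially the same approach as the paper: a spectral expansion via \eqref{efunction_exp}, reduction to the diagonal using Lemma \ref{Quantum Wick theorem}(i) (together with the CCR \eqref{CCR k} in case (ii)), identification with the kernels of $G_{\tau,t}$ and $S_{\tau,t}$, and an appeal to Lemma \ref{Positivity lemma} for part (iii). The only cosmetic difference is that you invoke the kernel symmetry from Lemma \ref{Positivity lemma} already in case (i) to flip the arguments, whereas the paper arrives at $(x_\alpha;x_\beta)$ directly from its spectral-kernel convention and invokes reality of the kernels only in case (ii); this is a matter of presentation rather than substance.
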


\begin{proof}
We use the spectral representation to write
\begin{equation*}
S_{\tau,t}(x;y) \;=\;\sum_{k \in \N} e^{-t\lambda_{\tau,k}/\tau} \bar{u}_{\tau,k}(x) u_{\tau,k} (y)
\end{equation*}
for $t>0$, and
\begin{equation*}
G_{\tau,t}(x;y) \;=\;\sum_{k \in \N} e^{-t\lambda_{\tau,k}/\tau} \bar{u}_{\tau,k}(x) u_{\tau,k} (y) \rho_{\tau,0} \big(\phi^*_{\tau,k} \phi_{\tau,k}\big)
\end{equation*}
for $t>-1$, where we used \eqref{quantum_G}.

If $\sign_{\alpha}=+1, \sign_{\beta}=-1$ and $t_{\alpha}-t_{\beta}<1$, we compute using \eqref{efunction_exp B} and \eqref{def_cal_A}
\begin{multline*}
\rho_{\tau,0} \big(\cal B_{\alpha} (\f x, \f t) \, \cal B_{\beta} (\f x, \f t)\big) \;=\;
\sum_{k_{\alpha},k_{\beta} \in \N} e^{\,t_{\alpha} \lambda_{\tau,k_{\alpha}}/\tau-\,t_{\beta} \lambda_{\tau,k_{\beta}}/\tau} \bar{u}_{\tau,k_{\alpha}}(x_{\alpha}) u_{\tau,k_{\beta}} (x_{\beta}) \rho_{\tau,0} \big(\phi^*_{\tau,k_{\alpha}} \phi_{\tau,k_{\beta}}\big)
\\
\;=\; \sum_{k \in \N} e^{(t_{\alpha}-t_{\beta}) \lambda_{\tau,k}/\tau} \bar{u}_{\tau,k}(x_{\alpha}) u_{\tau,k} (x_{\beta})
\scalar{u_{\tau,k}}{G_{\tau} u_{\tau,k}}
\;=\; G_{\tau, -(t_{\alpha} - t_{\beta})}(x_{\alpha}; x_{\beta})\,,
\end{multline*}
as was claimed in part (i). In the second step we used \eqref{quantum correlation}, \eqref{quantum_G}, and that $G_\tau$ is diagonal in the basis $(u_{\tau,k})$.

Likewise, if $\sign_{\alpha}=-1, \sign_{\beta}=+1$ and $t_{\alpha}-t_{\beta}\geq0$, we compute using \eqref{CCR k}
\begin{align*}
\rho_{\tau,0} \big(\cal B_{\alpha} (\f x, \f t) \, \cal B_{\beta} (\f x, \f t)\big) &\;=\; \sum_{k_{\alpha},k_{\beta} \in \N} e^{\,-t_{\alpha} \lambda_{\tau,k_{\alpha}}/\tau+\,t_{\beta} \lambda_{\tau,k_{\beta}}/\tau} u_{\tau,k_{\alpha}}(x_{\alpha}) \bar{u}_{\tau,k_{\beta}} (x_{\beta}) \rho_{\tau,0} \big(\phi^*_{\tau,k_{\alpha}} \phi_{\tau,k_{\beta}}\big) \\
&\qquad + \frac{1}{\tau} \sum_{k \in \N} e^{\,-t_{\alpha} \lambda_{\tau, k}/\tau+\,t_{\beta} \lambda_{\tau, k}/\tau} u_{\tau, k}(x_{\alpha}) \bar{u}_{\tau, k} (x_{\beta}) \\
&\;=\;\sum_{k \in \N} e^{\,-(t_{\alpha}-t_{\beta}) \lambda_{\tau,k}/\tau} u_{\tau,k}(x_{\alpha}) \bar{u}_{\tau,k} (x_{\beta}) \scalar{u_{\tau,k}}{G_{\tau} u_{\tau,k}}
\\
&\qquad + \frac{1}{\tau} \sum_{k \in \N} e^{\,-(t_{\alpha}-t_{\beta}) \lambda_{\tau,k}/\tau} u_{\tau,k}(x_{\alpha}) \bar{u}_{\tau,k} (x_{\beta}) 
\\
&\;=\; \bar{G}_{\tau, t_{\alpha} - t_{\beta}}(x_{\alpha}; x_{\beta}) + \frac{1}{\tau} \bar{S}_{\tau, t_{\alpha} - t_{\beta}}(x_{\alpha}; x_{\beta})
\\
&\;=\;G_{\tau, t_{\alpha} - t_{\beta}}(x_{\alpha}; x_{\beta}) + \frac{1}{\tau} S_{\tau, t_{\alpha} - t_{\beta}}(x_{\alpha}; x_{\beta})\,.
\end{align*}
In the last equality, we used that $G_{\tau,t_{\alpha}-t_{\beta}}(x;y)$ and $S_{\tau,t_{\alpha}-t_{\beta}}(x;y)$ are real-valued kernels by Lemma \ref{Positivity lemma}. This proves part (ii).

Part (iii) now follows from Lemma \ref{Positivity lemma}.
\end{proof}

\subsection{The explicit terms II: graphical representation} \label{sec:graphs}

It is convenient to introduce a graphical representation for the vertex set $\cal X$ and the pairing $\Pi \in \fra P$. See Figure \ref{fig:graph1} for an illustration.

\begin{figure}[!ht]
\begin{center}
{\scriptsize 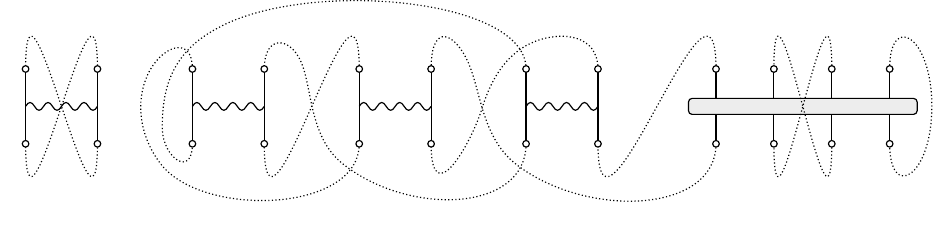}
\end{center}
\caption{A graphical depiction of a pairing $\Pi \in \fra P$. Here $m = 4$ and $p = 4$. We depict each vertex $\alpha = (i,r,\sign) \in \cal X$ with a white dot, and each edge of $\Pi$ using a dotted line. We draw vertices with $\sign = +1$ on the top and vertices with $\sign = -1$ on the bottom. We indicate the names of a few vertices next to the corresponding dots. If two vertices are connected by a vertical line, this indicates a delta function in \eqref{def_calI} that constrains their respective labels in $\f x$ to coincide, and we connect these vertical lines with a horizontal wiggly line representing a factor $w(\cdot)$ in \eqref{def_calI}. Finally, we represent the integral kernel of $\xi$ in \eqref{def_calI} with a grey box. In this picture, the leftmost $4m = 16$ vertices are ordered lexicographically according to their $(x,y)$-coordinates, and the subsequent $2p = 8$ vertices are ordered lexicographically according to their $(y,x)$-coordinates, whereby the $x$-axis is ordered from left to right and the $y$-axis from top to bottom; see \eqref{linear order}.
\label{fig:graph1}}
\end{figure}

\begin{definition} \label{def_collapsed_graph}
Fix $m,p \in \N$. To each $\Pi \in \fra P$ we assign an edge-coloured undirected multigraph\footnote{This means that the graph may have multiple edges.} $(\cal V_\Pi, \cal E_\Pi, \sigma_\Pi) \equiv (\cal V, \cal E, \sigma)$, with a colouring $\sigma \col \cal E \to {\pm 1}$, as follows.
\begin{enumerate}
\item
On $\cal X$ we introduce the equivalence relation $\alpha \sim \beta$ if and only if $i_\alpha = i_\beta \leq m$ and $r_\alpha = r_\beta$. We define the vertex set $\cal V \deq \{[\alpha] \col \alpha \in \cal X\}$ as the set of equivalence classes of $\cal X$. We use the notation ${\cal{V}} = {\cal{V}}_2 \cup {\cal{V}}_1$,
where
\begin{equation*}
{\cal V}_2 \;\deq\; \{(i,r)\col 1 \leq i \leq m, 1 \leq r \leq 2\}\,, \qquad {\cal V}_1 \;\deq\; \{(m+1,r,\pm 1)\col 1 \leq r \leq p\}\,.
\end{equation*}
\item
The set $\cal V$ carries a total order $\leq$ inherited from $\cal X$: $[\alpha] \leq [\beta]$ whenever $\alpha \leq \beta$. It is trivial to check that $\leq$ is well-defined on $\cal V$ (i.e.\ does not depend on the choice of the representatives $\alpha, \beta \in \cal X$).
\item
For a pairing $\Pi \in \fra P$, each edge $(\alpha, \beta) \in \Pi$ gives rise to an edge $e = \{[\alpha], [\beta]\}$ of $\cal E$ with $\sigma(e) \deq \sign_\beta$.
\item
We denote by $\conn(\cal E)$ the set of connected components of $\cal E$, so that $\cal E = \bigsqcup_{\cal P \in \conn(\cal E)} \cal P$. We call the connected components $\cal P$ of $\cal E$ \emph{paths}.
\end{enumerate}
\end{definition}
Since the vertex set $\cal V$ is determined by $m$ and $p$, we often refer to the multigraph $(\cal V, \cal E)$ simply as $\cal E$. An example of such a multigraph is depicted in Figure \ref{fig:graph2}. Note that the multigraph $\cal E$ may contain multiple edges, but it cannot contain loops. This absence of loops is crucial for our estimates, and is guaranteed by the condition (i) from Definition \ref{def_pairing}, which itself was a consequence of the renormalization in the interaction $W_\tau$.

\begin{figure}[!ht]
\begin{center}
{\scriptsize 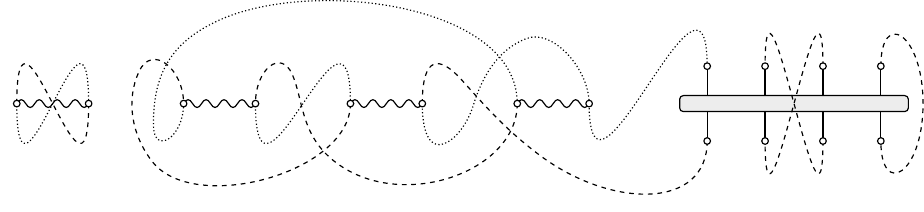}
\end{center}
\caption{A graphical depiction of the edge-coloured multigraph $\cal E$ associated with the pairing $\Pi$ from Figure \ref{fig:graph1}. We depict vertices of $\cal V$ using white dots and edges of $\cal E$ using dotted or dashed lines: edges $e$ with colour $\sigma(e) = +1$ correspond to dotted lines and 
edges $e$ with colour $\sigma(e) = -1$ to dashed lines. We indicate the names of a few vertices next to the corresponding dots. A wiggly line connecting two vertices $\va, \vb \in \cal V$ denotes a factor $w(y_{\va} - y_{\vb})$ in \eqref{I_rep}. The grey box depicts the integral kernel of $\xi$. Each line corresponding to an edge $e \in \cal E$ encodes a factor $\cal J_{\tau,e}$ in \eqref{I_rep}. The set of paths $\conn(\cal E)$ consists of two closed paths and four open paths.
\label{fig:graph2}}
\end{figure}

The proof of the following lemma is immediate from Definitions \ref{def_pairing} and \ref{def_collapsed_graph}.

\begin{lemma} \label{lem:basic_calE}
For any $\Pi \in \fra P$ the associated graph $\cal E_\Pi \equiv \cal E$ has the following properties.
\begin{enumerate}
\item Each vertex of $\cal V_2$ has degree 2.
\item Each vertex of $\cal V_1$ has degree 1.
\item There are no loops (cycles of length one).
\end{enumerate}
In particular, each $\cal P \in \conn(\cal E)$ is a path of $\cal E$ in the usual graph-theoretic sense.
\end{lemma}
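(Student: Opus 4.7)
The plan is to verify each of the three properties directly from Definitions \ref{cal X}, \ref{def_pairing}, and \ref{def_collapsed_graph}, and then to invoke a classical graph-theoretic fact about finite multigraphs whose vertex degrees lie in $\{1,2\}$.

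For (i) and (ii), the key observation is that a pairing $\Pi$ is by definition one-regular on $\cal X$, so every $\alpha \in \cal X$ lies in exactly one edge of $\Pi$. When we pass to the quotient multigraph $\cal E$ on $\cal V$, the degree of a vertex $v = [\alpha] \in \cal V$ equals the cardinality of the equivalence class $v$ itself: each element of $v$ contributes a single incident edge of $\Pi$, which remains incident to $v$ in the quotient (a potential loop would contribute two, matching a class of size two). From the equivalence relation of Definition \ref{def_collapsed_graph}, classes in $\cal V_2$ have cardinality $2$ while classes in $\cal V_1$ are singletons, yielding (i) and (ii) respectively.

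For (iii), I would argue by contradiction. A loop at $[\alpha] \in \cal V$ would have to come from an edge $(\gamma, \delta) \in \Pi$ satisfying $[\gamma] = [\delta]$. Since $\gamma \neq \delta$ (pairings carry no self-edges), the common class must have cardinality at least $2$, which forces $[\gamma] \in \cal V_2$ and $\{\gamma, \delta\} = \{(i,r,+1),(i,r,-1)\}$ for some $1 \leq i \leq m$ and $r \in \{1,2\}$. But such a pair is precisely what is excluded by property (i) of Definition \ref{def_pairing}, giving the required contradiction.

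For the final clause, I would combine (i)--(iii) to note that $\cal E$ is a finite, loopless multigraph with all vertex degrees in $\{1,2\}$. A standard argument (extend a maximal walk and use the degree bound) shows that every connected component of such a graph is either an open simple path with two endpoints of degree $1$ and interior vertices of degree $2$, or a closed cycle in which every vertex has degree $2$. Both cases genuinely occur, as the closed components in Figure \ref{fig:graph2} illustrate, and both qualify as \emph{paths} in the terminology used here, consistent with the caption of Figure \ref{fig:graph2}. I do not anticipate any serious obstacle: the entire argument is a bookkeeping check, and the only minor point to watch is interpreting ``path'' as allowing closed components (cycles) in addition to open ones.
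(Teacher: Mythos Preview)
Your proposal is correct and is essentially the same approach as the paper's own proof, which simply states that the lemma is ``immediate from Definitions \ref{def_pairing} and \ref{def_collapsed_graph}.'' You have spelled out the bookkeeping that the paper leaves implicit, including the correct handling of the degree count under the quotient and the observation that Definition \ref{def_pairing} (i) is precisely what rules out loops.
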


For the following we fix $m,p \in \N$ and a pairing $\Pi \in \fra P$, and let $(\cal V, \cal E, \sigma)$ denote the associated graph from Definition \ref{def_collapsed_graph}.

With each $\f x=(x_{\alpha})_{\alpha \in \cal X} \in \Lambda^{\cal X}$ and $\f t = (t_{\alpha})_{\alpha \in \cal X} \in \R^{\cal X}$ we associate integration labels $\f y=(y_{\va})_{\va \in \cal V}\in \Lambda^{\cal V}$ and time labels $\f s  = (s_{\va})_{\va \in \cal V} \in \R^{\cal V}$ defined by
\begin{equation} \label{def_ys}
y_{[\alpha]} \;\deq\; x_\alpha,\qquad
s_{[\alpha]} \;\deq\; t_\alpha
\end{equation}
for any $\alpha \in \cal X$; as explained above \eqref{x_t_labels}, the definitions given above do not depend on the choice of representative $\alpha$. Note also that by \eqref{t_ordering} we have
\begin{equation} \label{s_ordered}
0 \leq s_{\va} - s_{\vb} < 1 \quad \text{for} \quad \va < \vb\,.
\end{equation}
Moreover, 
\begin{equation}
\label{s_ordered equality}
s_{\va} = s_{\vb} \quad \text{if and only if} \quad i_a = i_b\,,
\end{equation}
where we use the notation $i_{[\alpha]} \deq i_\alpha$.

\begin{definition}
\label{Open and Closed path}
Let $\cal P \in \conn(\cal E)$. We call $\cal P$ a \emph{closed path} if it only contains vertices of $\cal V_2$, and an \emph{open path} otherwise.
\end{definition}

By Lemma \ref{lem:basic_calE}, we find that any closed path $\cal P \in \conn(\cal E)$ is a closed path in $\cal V_2$ in the usual graph-theoretic sense, and any open path $\cal P \in \conn(\cal E)$ has two distinct end points in $\cal V_1$, and its remaining $\abs{\cal V(\cal P)} - 2$ vertices are in $\cal V_2$. We denote by $\cal V(\cal P) \deq \bigcup_{e \in \cal P} e \subset \cal V$ the set of vertices of the path $\cal P \in \conn(\cal E)$. Moreover, we split $\cal V(\cal P) = \cal V_2(\cal P) \sqcup \cal V_1(\cal P)$, where $\cal V_i(\cal P) \deq \cal V(\cal P) \cap \cal V_i$. Note that $\cal V_1(\cal P) = \emptyset$ whenever $\cal P$ is a closed path.

\begin{definition}
\label{J_e definition}
Let $\f y = (y_{\va})_{\va \in \cal V} \in \Lambda^{\cal V}$ and $\f s  = (s_{\va})_{\va \in \cal V} \in \R^{\cal V}$ satisfy \eqref{s_ordered}. Let $e = \{\va, \vb\} \in \cal E$ with $\va < \vb$. We define the labels $\f y_e \deq (y_a, y_b) \in \Lambda^e$ and the integral kernels
\begin{align}
\label{J_e}
\cal J_{\tau, e}(\f y_e,\f s) &\;\deq\; G_{\tau, \sigma(e) (s_{\va} - s_{\vb})}(y_{\va}; y_{\vb}) + \frac{\ind{\sigma(e) = +1}}{\tau} S_{\tau, s_{\va} - s_{\vb}}(y_{\va}; y_{\vb})\,,
\\
\hat {\cal J}_{\tau, e}(\f y_e,\f s) &\;\deq\; G_{\tau, \sigma(e) (s_{\va} - s_{\vb})}(y_{\va}; y_{\vb})\,.
\label{hat J_e} 
\end{align}
\end{definition}
Note that $\hat {\cal J}_{\tau, e}$ is always a Hilbert-Schmidt operator because $s_a - s_b < 1$. However, $\cal J_{\tau, e}$ is not Hilbert-Schmidt when $\sigma(e)=+1$ and $i_a =i_b$, in which case $s_a - s_b = 0$.

In the following we use the splitting 
\begin{equation} \label{y1_y2}
\f y \;=\; (\f y_1, \f y_2)\,
\end{equation}
where $\f y_i \deq (y_a)_{a \in \cal V_i}$ for $i = 1,2$. By a slight abuse of notation, from now on we write
\begin{equation} \label{y_1}
\xi(y_{m+1,1,+1}, \dots, y_{m+1,p,+1}; y_{m+1,1,-1}, \dots, y_{m+1,p,-1}) \;=\; \xi(\f y_1)\,.
\end{equation}

\begin{lemma} \label{lem:I_rep}
With $\f s$ defined in \eqref{def_ys}, we have
\begin{equation}
\label{I_rep}
\cal I_{\tau,\Pi}^{\xi}(\f t) \;=\; \int_{\Lambda^{\cal V}} \dd \f y \, \pBB{\prod_{i=1}^{m} w(y_{i,1}-y_{i,2})} \xi(\f y_1) \prod_{e \in \cal E} \cal J_{\tau, e}(\f y_e, \f s)\,.
\end{equation}
\end{lemma}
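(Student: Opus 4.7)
The plan is to perform the proof in two essentially bookkeeping steps: first, collapse the $\Lambda^{\cal X}$-integral in \eqref{def_calI} to a $\Lambda^{\cal V}$-integral by integrating out the delta functions, and second, evaluate each two-point function $\rho_{\tau,0}(\cal B_\alpha \cal B_\beta)$ using Lemma \ref{Correlation functions} and identify the result with the kernel $\cal J_{\tau,e}$ from Definition \ref{J_e definition}.

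For the first step, I would observe that the equivalence relation $\sim$ on $\cal X$ from Definition \ref{def_collapsed_graph} is precisely the one encoded by the delta functions $\delta(x_{i,r,+1} - x_{i,r,-1})$ appearing in \eqref{def_calI}: these force $x_\alpha = x_\beta$ whenever $\alpha \sim \beta$, while vertices in $\cal V_1$ (those with $i = m+1$) are not subject to any constraint. Introducing the variable $y_{[\alpha]} \deq x_\alpha$, as in \eqref{def_ys}, the delta functions can therefore be integrated out, and the $4m + 2p$ variables $\f x$ are replaced by the $2m + 2p$ variables $\f y$. Under this change of variables, the factors $w(x_{i,1,+1} - x_{i,2,+1})$ become $w(y_{i,1} - y_{i,2})$, and the kernel of $\xi$, after relabeling via \eqref{y_1}, becomes $\xi(\f y_1)$.

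For the second step, I would fix an edge $(\alpha, \beta) \in \Pi$ with $\alpha < \beta$ and set $e = \{[\alpha],[\beta]\} \in \cal E$ with $\sigma(e) = \sign_\beta$. Property (i) of Definition \ref{def_pairing} guarantees $[\alpha] \neq [\beta]$, so $\va \deq [\alpha] < [\beta] \eqd \vb$ strictly in $\cal V$, matching the convention of Definition \ref{J_e definition}. Since $\f t \in \fra A$, the ordering relation \eqref{t_ordering} yields $0 \leq t_\alpha - t_\beta < 1$, so the hypotheses of Lemma \ref{Correlation functions} are satisfied. Then I would simply check both cases: if $\sign_\alpha = +1$ and $\sign_\beta = -1$, i.e.\ $\sigma(e) = -1$, part (i) of Lemma \ref{Correlation functions} gives $\rho_{\tau,0}(\cal B_\alpha \cal B_\beta) = G_{\tau, -(s_{\va}-s_{\vb})}(y_{\va};y_{\vb})$, which equals $\cal J_{\tau,e}(\f y_e,\f s)$ since the indicator $\ind{\sigma(e) = +1}$ vanishes; if $\sign_\alpha = -1$ and $\sign_\beta = +1$, i.e.\ $\sigma(e) = +1$, part (ii) of Lemma \ref{Correlation functions} gives both the $G_\tau$ and $\frac{1}{\tau} S_\tau$ contributions, again matching $\cal J_{\tau,e}(\f y_e,\f s)$ exactly.

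Combining these two steps and taking the product over all edges $(\alpha, \beta) \in \Pi$, which are in bijection with $e \in \cal E$ by construction of the edge-coloured multigraph in Definition \ref{def_collapsed_graph}, yields \eqref{I_rep}. There is no genuine analytical obstacle here; the only subtlety is a purely notational one, namely verifying that the order on $\cal V$ inherited from $\cal X$ is compatible with the conventions $\va < \vb$ and $\sigma(e) = \sign_\beta$ used in Definitions \ref{def_collapsed_graph} and \ref{J_e definition}, which is straightforward from the linear order \eqref{linear order}.
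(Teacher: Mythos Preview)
Your proposal is correct and follows essentially the same approach as the paper: integrate out the delta functions to pass from $\Lambda^{\cal X}$ to $\Lambda^{\cal V}$, then invoke Lemma \ref{Correlation functions} to identify each two-point function with $\cal J_{\tau,e}$. The paper phrases the first step slightly differently, defining an explicit bijection $L \col \Lambda^{\cal V} \to \{\f x \in \Lambda^{\cal X} : x_{i,r,+1} = x_{i,r,-1}\}$ and working from the right-hand side of \eqref{I_rep} back to \eqref{def_calI}, but this is the same bookkeeping in reverse.
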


\begin{proof}
We define a mapping $L: {\Lambda^{\cal V}} \rightarrow {\Lambda^{\cal X}}$ by $L \f y \deq (y_{[\alpha]})_{\alpha \in \cal X}$.
We note that $L$ is a bijection mapping ${\Lambda^{\cal V}}$ onto 
\begin{equation*}
\big\{\f x \in {\Lambda^{\cal X}}\col x_{i,r,+1}=x_{i,r,-1} \mbox{ for all }1 \leq i \leq m,\,1\leq r \leq m  \big\} \,.
\end{equation*}
Moreover, using Lemma \ref{Correlation functions} and Definition \ref{J_e definition}, it follows that for $e = \{\va, \vb\}$ with $\va < \vb$
\begin{equation*}
\cal J_{\tau, e}(\f y_e, \f s) \;=\; \rho_{\tau,0} \big(\cal B_{\alpha} (L \f y, \f t) \, \cal B_{\beta} (L \f y, \f t)\big)
\end{equation*}
where $(\alpha,\beta) \in \Pi$ is chosen such that $[\alpha]=\va,\,[\beta]=\vb.$
The claim now follows by using the change of variables $\f x = L \f y$ on the right-hand side of \eqref{I_rep}, and recalling \eqref{def_calI}. 
\end{proof}

\begin{corollary} \label{Wick_application 2}
With $\f s$ defined in \eqref{def_ys}, we have
\begin{equation}
\label{Wick_application 2 bound}
\big|\cal I_{\tau,\Pi}^{\xi}(\f t)\big| \;\leq\; \norm{w}_{L^\infty}^m \int_{\Lambda^{\cal V}} \dd \f y \, \abs{\xi(\f y_1)} \prod_{e \in \cal E} \cal J_{\tau, e}(\f y_e,\f s)\,. 
\end{equation}
\end{corollary}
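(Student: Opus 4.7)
The proof is a short combination of Lemma \ref{lem:I_rep} and the positivity statement from Lemma \ref{Positivity lemma}. My plan is as follows.

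First, I would apply Lemma \ref{lem:I_rep} to rewrite $\cal I_{\tau,\Pi}^{\xi}(\f t)$ as the integral displayed in \eqref{I_rep}, and then move the absolute value inside the integral by the triangle inequality to obtain
\begin{equation*}
\big|\cal I_{\tau,\Pi}^{\xi}(\f t)\big| \;\leq\; \int_{\Lambda^{\cal V}} \dd \f y \, \pBB{\prod_{i=1}^{m} \abs{w(y_{i,1}-y_{i,2})}} \abs{\xi(\f y_1)} \prod_{e \in \cal E} \absb{\cal J_{\tau, e}(\f y_e, \f s)}\,.
\end{equation*}

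Next, I would bound each factor $|w(y_{i,1}-y_{i,2})|$ by $\|w\|_{L^\infty}$ to pull out the constant $\|w\|_{L^\infty}^m$.

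The last (and key) step is to show that the kernels $\cal J_{\tau, e}$ are pointwise nonnegative, so that the absolute values around them can be dropped. Fix $e = \{\va, \vb\} \in \cal E$ with $\va < \vb$. By \eqref{s_ordered} we have $0 \leq s_{\va} - s_{\vb} < 1$, so both $s_\va - s_\vb$ and $-(s_\va - s_\vb)$ lie in $[-1,1)$, and in particular $\sigma(e)(s_\va - s_\vb) > -1$. Therefore $G_{\tau, \sigma(e)(s_\va - s_\vb)}(y_\va;y_\vb) \geq 0$ by Lemma \ref{Positivity lemma}, and similarly $S_{\tau, s_\va - s_\vb}(y_\va;y_\vb) \geq 0$ when $\sigma(e) = +1$. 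Thus, from Definition \ref{J_e definition}, $\cal J_{\tau,e}(\f y_e, \f s) \geq 0$, which allows us to drop the absolute value and yields \eqref{Wick_application 2 bound}.

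No step poses a real obstacle; the entire argument is a direct consequence of the formula established in Lemma \ref{lem:I_rep}, the hypothesis $w \in L^\infty$, and the positivity of heat and Green function kernels from Lemma \ref{Positivity lemma}. The only mildly nontrivial point is verifying that the time arguments of $G_{\tau,\cdot}$ indeed lie in the regime $(-1,\infty)$ where Lemma \ref{Positivity lemma} applies, which follows directly from \eqref{s_ordered}.
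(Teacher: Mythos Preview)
Your proposal is correct and follows essentially the same approach as the paper. The only cosmetic difference is that the paper cites Lemma \ref{Correlation functions} (iii) for the nonnegativity of $\cal J_{\tau,e}$, whereas you unpack that positivity directly from Definition \ref{J_e definition} and Lemma \ref{Positivity lemma}; since Lemma \ref{Correlation functions} (iii) is itself proved via Lemma \ref{Positivity lemma}, the two arguments are equivalent.
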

\begin{proof}
This follows from Lemmas \ref{lem:I_rep} and \ref{Correlation functions} (iii).
\end{proof}

\subsection{The explicit terms III: upper bound} \label{sec:upper_bound}
Recall that for the Hilbert-Schmidt norm $\norm{\cdot}_{\fra S^2(\fra H)}$ we have the basic identity
\begin{equation} \label{S2L2}
\norm{\cal A}_{\fra S^2(\fra H)} \;=\; \pbb{\int_\Lambda \dd x \, \dd y \, \abs{\cal A(x;y)}^2}^{1/2}\,.
\end{equation}
Moreover, by spectral decomposition of $\abs{\cal A} = \sqrt{\cal A^* \cal A} $, we easily find
\begin{equation}
\label{Schatten class embedding}
\|\cal A\|_{\fra S^{q_2}(\fra H)} \;\leq\; \|\cal A\|_{\fra S^{q_1}(\fra H)}
\end{equation}
for all $1 \leq q_1 \leq q_2  \leq \infty$. Both of these facts extend trivially to the $p$-particle case where $\fra H$ is replaced with $\fra H^{(p)}$.

Note also that for any $t > -1$ we have $G_{\tau,t} \in \fra S^q(\fra H)$ for all $q \in [1,\infty)$, as can be easily seen by spectral decomposition and the fact that $h_\tau^{-1} \in \fra S^2(\fra H)$. It follows that $G_{\tau,t}^q \in \fra S^1(\fra H)$ for all $1 \leq q <\infty$.

In this subsection we fix $m,p \in \N$ and a pairing $\Pi \in \fra P$, and let $(\cal V, \cal E, \sigma)$ denote the associated graph from Definition \ref{def_collapsed_graph}. The following two results are the key estimates that allow us to integrate out the variables of a single path $\cal P \in \conn(\cal E)$. The former bounds, \eqref{Closed path bound} and \eqref{Open path bound}, are used to derive an upper bound on $\big| \cal I_{\tau,\Pi}^{\xi} (\f t) \big|$; see Proposition \ref{Product of subgraphs} below. The latter bounds, \eqref{Closed path remark} and \eqref{Open path remark},  are used to prove the convergence of $\cal I_{\tau,\Pi}^{\xi} (\f t)$ for fixed $\f t \in \fra A$ in Section \ref{sec:convergence}.

\begin{lemma}
\label{Closed path}
Suppose that $\cal P \in \conn(\cal E)$ is a closed path. Then
\begin{equation}
\label{Closed path bound}
\int_{\Lambda^{\cal V(\cal P)}} \prod_{\va \in \cal V(\cal P)} \dd y_{\va} \, \prod_{e \in \cal P} \cal J_{\tau, e}(\f y_e,\f s) \;\leq\;C^{\abs{\cal V(\cal P)}}\,,
\end{equation}
and
\begin{equation}
\label{Closed path remark}
\Bigg|\int_{\Lambda^{\cal V(\cal P)}} \prod_{\va \in \cal V(\cal P)} \dd y_{\va} \Bigg(\prod_{e \in \cal P} \cal J_{\tau, e}(\f y_e,\f s)-\prod_{e \in \cal P} \hat{\cal J}_{\tau, e}(\f y_e,\f s)\Bigg)\Bigg|  \rightarrow 0 \quad \mbox{as} \quad \tau \rightarrow \infty\,.
\end{equation}
\end{lemma}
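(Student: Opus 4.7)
The plan is to exploit that $\cal P$ is a cycle so that integration over all vertex labels yields the trace of an operator product taken around the cycle. Since by Lemma \ref{Positivity lemma} each kernel $\cal J_{\tau, e}(\cdot, \cdot, \f s)$ is symmetric, the left-hand side of \eqref{Closed path bound} equals $\tr \pb{\prod_{e \in \cal P} \cal J_{\tau, e}}$ with the product taken in cyclic order and each factor regarded as an operator on $\fra H$. I would then invoke the algebraic identity $G_{\tau, t} + \tau^{-1} S_{\tau, t} = G_{\tau, t - 1}$, which is immediate from \eqref{def_St} and \eqref{def_Gt}, to rewrite each $\cal J_{\tau, e}$ as a single Green function $G_{\tau, t_e}$ with $t_e = s_\va - s_\vb - 1 \in [-1, 0)$ when $\sigma(e) = +1$, and $t_e = s_\vb - s_\va \in (-1, 0]$ when $\sigma(e) = -1$ (here $\va < \vb$). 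Since all factors commute, being functions of $h_\tau$, setting $\alpha_e \deq t_e + 1 \in [0,1]$, $A \deq \sum_{e \in \cal P} \alpha_e$, and $k \deq \abs{\cal V(\cal P)} = \abs{\cal P}$, we obtain
\begin{equation*}
\tr \pbb{\prod_{e \in \cal P} \cal J_{\tau, e}} \;=\; \sum_{\ell \in \N} \frac{e^{-A \lambda_{\tau, \ell}/\tau}}{\tau^k \pb{1 - e^{-\lambda_{\tau, \ell}/\tau}}^k}\,.
\end{equation*}

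The main step, which I expect to be the principal obstacle, is the lower bound $A \geq 1$. I would prove the stronger statement $A = k_-(\cal P)$, where $k_-(\cal P)$ is the number of edges $e \in \cal P$ with $\sigma(e) = -1$, followed by $k_-(\cal P) \geq 1$. Each vertex $\va \in \cal V_2(\cal P)$ has two incident edges, one containing $(i_\va, r_\va, +1) \in \cal X$ and the other containing $(i_\va, r_\va, -1)$; I call these attachments \emph{positive} and \emph{negative}. A short induction around the cycle shows that the traversal exits every vertex via the same kind of attachment, either always positive (case A) or always negative (case B). A direct check from Definitions \ref{def_pairing} and \ref{def_collapsed_graph} shows that $\sigma(e) = +1$ iff the positively attached endpoint of $e$ is the larger vertex in the total order on $\cal V$. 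Introducing $\epsilon_j \in \{\pm 1\}$ equal to $+1$ when the cycle crosses $e_j$ from smaller to larger vertex, this yields $\sigma_j = \epsilon_j$ in case A and $\sigma_j = -\epsilon_j$ in case B. Setting $\Delta_j \deq s_{\va_{j+1}} - s_{\va_j}$, so that $\Delta_j = -\epsilon_j \abs{\Delta_j}$ and $\sum_j \Delta_j = 0$ around the cycle, a short computation then gives $A = k_- \pm \sum_j \epsilon_j \abs{\Delta_j} = k_- \mp \sum_j \Delta_j = k_-$. The bound $k_- \geq 1$ follows because $k_- = 0$ would force all $\epsilon_j$ to agree, hence all $\Delta_j$ to share a sign, which combined with $\sum_j \Delta_j = 0$ forces all $s$-values around $\cal P$ to coincide; by \eqref{s_ordered equality} the cycle would then live on the two vertices $(i, 1), (i, 2)$ for some $i$, but the only two admissible pairings between them always produce one edge of each $\sigma$, a contradiction.

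Given $A \geq 1$, the bound \eqref{Closed path bound} follows from $e^{-A\lambda/\tau} \leq e^{-\lambda/\tau}$ together with
\begin{equation*}
\frac{e^{-\lambda/\tau}}{\tau^k \pb{1-e^{-\lambda/\tau}}^k} \;=\; \frac{1}{\tau^{k-1} \pb{1-e^{-\lambda/\tau}}^{k-1}} \cdot \frac{1}{\tau \pb{e^{\lambda/\tau}-1}}\,,
\end{equation*}
which, after summing over $\ell$, gives $\tr \pb{G_{\tau,-1}^{k-1} G_\tau} \leq \norm{G_{\tau,-1}}_{\fra S^\infty}^{k-2} \tr \pb{G_{\tau,-1} G_\tau}$. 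The uniform boundedness of $\norm{G_{\tau,-1}}_{\fra S^\infty}$ (using $h_\tau \geq \kappa > 0$) together with $\tr \pb{G_{\tau,-1} G_\tau} = \tr \frac{1}{4 \tau^2 \sinh^2(h_\tau/(2\tau))} \leq C \tr h_\tau^{-2}$, which is uniform in $\tau$ by \eqref{h_tau_conv}, then yields the claimed estimate $\leq C^{\abs{\cal V(\cal P)}}$.

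For \eqref{Closed path remark} I would telescope $\prod_e \cal J_{\tau, e} - \prod_e \hat{\cal J}_{\tau, e}$. Since $\cal J_{\tau, e} - \hat{\cal J}_{\tau, e} = \tau^{-1} S_{\tau, s_\va - s_\vb}$ when $\sigma(e) = +1$ and vanishes otherwise, each term in the resulting sum carries an explicit factor $\tau^{-1}$. The integrated contribution of each term is again a trace of a product of commuting functions of $h_\tau$, controlled uniformly in $\tau$ by the same kind of spectral estimate as above, so the total difference is of order $O(\tau^{-1}) \to 0$.
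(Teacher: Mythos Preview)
Your approach is correct and follows a genuinely different route from the paper. The key idea you introduce is the identity $G_{\tau,t}+\tau^{-1}S_{\tau,t}=G_{\tau,t-1}$, which collapses each $\cal J_{\tau,e}$ into a single time-shifted Green function $G_{\tau,t_e}$ and reduces the closed-path integral to a single spectral sum governed by the total exponent $A=\sum_e(t_e+1)$. The paper instead multiplies out $\prod_e\cal J_{\tau,e}$ into $2^{|J_{\cal P}|}$ terms (choosing the $G$- or $\tau^{-1}S$-factor on each edge of colour $+1$) and uses that the time arguments telescope around the cycle, reducing each term to $\tau^{-|I|}\tr G_\tau^{\,q-|I|}$; see \eqref{Closed path bound 3}. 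Your merged formulation is conceptually cleaner, while the paper's expansion is more explicit and feeds directly into the convergence statement. Two comments. First, your derivation of $A=k_-$ and $k_-\geq 1$ is more elaborate than needed (there is in fact a sign slip in your case-A/B relation between $\sigma_j$ and $\epsilon_j$, though it is harmless since $\sum_j\Delta_j=0$); the paper obtains the analogous facts in one line from the identity \eqref{e_j label} and by observing that the smallest vertex of $\cal P$ is incident to edges of both colours. Second, for \eqref{Closed path remark} your telescoping does not always produce a uniformly bounded factor times $\tau^{-1}$: when $k_-=1$ and $e_0$ is the last $+1$-coloured edge in the telescope, the remaining trace is essentially $\tr G_\tau$, which diverges, and one needs $\tau^{-1}\tr G_\tau\to 0$ from Lemma \ref{lem:conv_G_tau} rather than a uniform $O(\tau^{-1})$ bound. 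This is exactly the input the paper uses for its $|I|=q-1$ term, so your argument goes through with this refinement.
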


\begin{lemma}
\label{Open path}
Suppose that $\cal P \in \conn(\cal E)$ is an open path with end points $\vb_1, \vb_2 \in \cal V_1(\cal P)$.
Then
\begin{equation}
\label{Open path bound}
\Bigg\|\int_{\Lambda^{\cal V_2(\cal P)}} \prod_{\va \in \cal V_2(\cal P)} \dd y_{\va} \, \prod_{e \in \cal P} \cal J_{\tau, e} (\f y_e,\f s)\Bigg\|_{L^2_{y_{b_1}y_{b_2}}} \;\leq\; C^{\abs{\cal V_2(\cal P)}}\,,
\end{equation}
and
\begin{equation}
\label{Open path remark}
\Bigg\| \int_{\Lambda^{\cal V_2(\cal P)}} \prod_{\va \in \cal V_2(\cal P)} \dd y_{\va} \Bigg(  \prod_{e \in \cal P} \cal J_{\tau, e} (\f y_e,\f s)-  \prod_{e \in \cal P} \hat{\cal J}_{\tau, e} (\f y_e,\f s) \Bigg) \Bigg\|_{L^2_{y_{b_1}y_{b_2}}} \rightarrow 0 \quad \mbox{as} \quad \tau \rightarrow \infty\,.
\end{equation}
\end{lemma}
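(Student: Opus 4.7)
The plan is to recognize the iterated integral in \eqref{Open path bound} as the Hilbert--Schmidt norm of a product of self-adjoint commuting operators on $\fra H$, and then to control this norm by explicit spectral computations. I enumerate the vertices of the open path $\cal P$ in path order as $\va_0 \deq \vb_1, \va_1, \dots, \va_{\ell-1}, \va_\ell \deq \vb_2$, where $\ell \deq |\cal V_2(\cal P)| + 1$ is the number of edges $e_i \deq \{\va_{i-1}, \va_i\}$, with $\va_1, \dots, \va_{\ell-1} \in \cal V_2(\cal P)$. Let $J_{\tau, e}$ denote the operator on $\fra H$ with integral kernel $\cal J_{\tau, e}(\cdot, \cdot; \f s)$; by Lemma \ref{Positivity lemma} this kernel is real, nonnegative, and symmetric, so $J_{\tau, e}$ is a self-adjoint positive operator. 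Using this symmetry to freely swap the spatial arguments of each $\cal J_{\tau, e_i}$, integrating the product $\prod_i \cal J_{\tau, e_i}$ over the intermediate variables $y_{\va_1}, \dots, y_{\va_{\ell-1}}$ produces precisely the integral kernel at $(y_{\vb_1}, y_{\vb_2})$ of the operator product $J_{\tau, e_1} J_{\tau, e_2} \cdots J_{\tau, e_\ell}$. Hence, by \eqref{S2L2}, the $L^2$-norm on the left-hand side of \eqref{Open path bound} equals $\norm{J_{\tau, e_1} \cdots J_{\tau, e_\ell}}_{\fra S^2(\fra H)}$.

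The critical structural feature is that every $J_{\tau, e}$ is a function of $h_\tau$ alone, so all these operators commute and can be diagonalised simultaneously. Reading off from Definition \ref{J_e definition} and using $G_{\tau, t} = G_\tau \ee^{-t h_\tau / \tau}$ together with the identity $G_\tau + 1/\tau = 1/(\tau(1 - \ee^{-h_\tau/\tau}))$, I find that $J_{\tau, e}$ has eigenvalue $\frac{\exp(-c_e \lambda / \tau)}{\tau (1 - \exp(- \lambda / \tau))}$ at each eigenvalue $\lambda > 0$ of $h_\tau$, where $c_e \in [0,1]$ equals $s_\va - s_\vb$ if $\sigma(e) = +1$, and $1 - (s_\va - s_\vb)$ if $\sigma(e) = -1$ (with $\va < \vb$ the endpoints of $e$). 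The key observation is that the two edges of $\cal P$ incident to the endpoints in $\cal V_1$ satisfy $c_e \geq \eta$: each such edge joins $\vb \in \cal V_1$ (where $s_\vb = 0$) to $\va \in \cal V_2$ (where $s_\va = t_{i_\va} \in [\eta, 1-\eta]$ by \eqref{def_fraA}), so both $s_\va - s_\vb$ and $1 - (s_\va - s_\vb)$ are at least $\eta$. Therefore $C \deq \sum_{i=1}^\ell c_{e_i} \geq 2\eta > 0$, which is exactly where the hypothesis $\eta > 0$ enters in an essential way.

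The squared Hilbert--Schmidt norm of the product becomes the spectral sum $\sum_k \frac{\exp(-2 C \lambda_{\tau,k} / \tau)}{\tau^{2\ell}(1 - \exp(-\lambda_{\tau,k}/\tau))^{2\ell}}$. Splitting at $\lambda_{\tau, k} \sim \tau$ and using $1 - \ee^{-x} \geq x/2$ on $(0, 1]$ in the low-frequency regime and $1 - \ee^{-x} \geq 1 - \ee^{-1}$ in the high-frequency regime, I bound this by a constant times
\begin{equation*}
 \sum_k \lambda_{\tau,k}^{-2\ell} + \tau^{-2\ell} \sum_{k: \lambda_{\tau,k} > \tau} \exp(-4\eta \lambda_{\tau,k}/\tau)\,.
\end{equation*}
The first sum is uniformly controlled by $\tr h_\tau^{-2} < \infty$ (which is inherited from $\tr h^{-2} < \infty$ via \eqref{h_tau_conv}) together with $\lambda_{\tau, k} \geq \kappa > 0$; the second is controlled uniformly in $\tau$ by the Weyl-type eigenvalue asymptotics for $h_\tau$ collected in Appendix \ref{sec:GF_estimates}. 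Combining these two estimates yields \eqref{Open path bound} after absorbing the resulting constants into $C$.

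For \eqref{Open path remark}, I will use the telescoping identity
\begin{equation*}
\prod_{e \in \cal P} \cal J_{\tau, e} - \prod_{e \in \cal P} \hat {\cal J}_{\tau, e} \;=\; \sum_{i=1}^\ell \pBB{\prod_{j < i} \cal J_{\tau, e_j}} (\cal J_{\tau, e_i} - \hat {\cal J}_{\tau, e_i}) \pBB{\prod_{j > i} \hat {\cal J}_{\tau, e_j}}\,,
\end{equation*}
noting that $\cal J_{\tau, e_i} - \hat {\cal J}_{\tau, e_i}$ vanishes unless $\sigma(e_i) = +1$, in which case it equals $\tau^{-1} S_{\tau, s_\va - s_\vb}$. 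Each summand is again a product of commuting functions of $h_\tau$, and the spectral analysis above applies with one factor $(1 - \ee^{-\lambda/\tau})^{-1}$ replaced by $1$ at the cost of an extra factor $\tau^{-1}$; this yields an $O(\tau^{-1})$ bound for each summand, and summing over the finite number $\ell$ of terms gives the claimed convergence to zero. The hardest part of the whole argument is the uniform-in-$\tau$ spectral bound of the third paragraph, where the strict positivity of $C$---a consequence of $\eta > 0$ combined with the cleanness of the distinction between $\cal V_1$ (with $s = 0$) and $\cal V_2$ (with $s \geq \eta$)---prevents the high-energy portion of the spectral sum from diverging.
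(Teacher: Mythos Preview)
Your spectral approach is valid and reaches the conclusion, but it differs from the paper's proof and contains one substantive misconception. The paper does not diagonalise; instead it rewrites each factor as $G_{\tau,\,s_{a_j}-s_{a_{j-1}}}+\frac{\ind{\sigma(e_j)=+1}}{\tau}S_{\tau,\,s_{a_j}-s_{a_{j-1}}}$ (their \eqref{J tau e_j 2}), expands the product over subsets $I\subset\{j:\sigma(e_j)=+1\}$, and observes that the time increments telescope to $s_{a_\ell}-s_{a_0}=0$, so every term collapses to $\tau^{-|I|}G_\tau^{\,\ell-|I|}$ with no residual time dependence. The pointwise kernel bound \eqref{composition} via Cauchy--Schwarz then finishes. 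Your route computes the same product spectrally, which is slicker but obscures this telescoping.

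Your assertion that $\eta>0$ is essential here is wrong. In your own variables one checks $c_{e_j}=(s_{a_j}-s_{a_{j-1}})+\ind{\sigma(e_j)=-1}$, so
\[
C\;=\;\sum_{j=1}^{\ell} c_{e_j}\;=\;(s_{a_\ell}-s_{a_0})+N_{-}\;=\;N_{-}\,,
\]
the number of edges of colour $-1$. The paper shows $\sigma(e_\ell)=-1$ (see after \eqref{sigma_determined 2}), hence $C=N_{-}\geq 1$ for every open path, independently of $\eta$. This is exactly the paper's telescoping, seen spectrally. Your bound $C\geq 2\eta$ is therefore far weaker than the truth, and the lemma holds for $\eta=0$ as well --- it is in fact reused that way in Section~\ref{sec:one-d}. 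The genuine need for $\eta>0$ in the paper lies elsewhere, in the remainder estimate of Section~\ref{sec:remainder}.

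Two minor points: the case $\ell=1$ (both endpoints in $\cal V_1$) is not covered by your terminal-edge argument, but there $\sigma(e_1)=-1$ so $c_{e_1}=1$ and the case is trivial. And Appendix~\ref{sec:GF_estimates} contains no Weyl asymptotics; the high-frequency tail is controlled simply via $e^{-2C\lambda/\tau}\leq C'(\tau/\lambda)^2$ and the uniform bound on $\tr h_\tau^{-2}$.
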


\begin{proof}[Proof of Lemma \ref{Closed path}]
We first prove \eqref{Closed path bound}.
Denote by $q$ the length of the path $\cal P$, and use the notation $\cal P = \{e_1, e_2, \dots, e_q\}$ for the edges of $\cal P$, whereby $e_j$ and $e_{j+1}$ are incident for $j = 1, \dots, q$. Here, and throughout the proof, the index $j$ is always understood to be modulo $q$. Denote by $\va_j$ the unique vertex in $e_{j - 1} \cap e_j$. Without loss of generality, we suppose that $\va_1 < \va_2$. Note that we always have $q \geq 2$.

Next, by construction of $\cal E$ in Definition \ref{def_collapsed_graph}, the colour $\sigma(e_j)$ of any edge $e_j \in \cal P$ is determined by $\sigma(e_1)$. Indeed,
\begin{equation} \label{sigma_determined}
\sigma(e_j) \;=\;
\begin{cases}
\sigma(e_1) &\mbox{if }\va_j<\va_{j+1}
\\
-\sigma(e_1) &\mbox{if }\va_j>\va_{j+1}\,.
\end{cases}
\end{equation}
This follows immediately from Definition \ref{def_collapsed_graph}, and is best understood using Figure \ref{fig:graph3}.

\begin{figure}[!ht]
\begin{center}
{\small 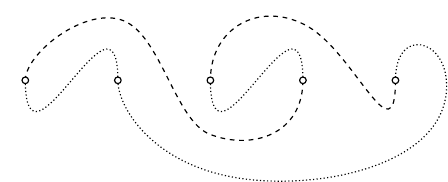}
\end{center}
\caption{A closed path $\cal P$ of length $5$, with $q = 5$. The vertices of $\cal V_2$ are ordered from left to right and depicted using white dots. Edges with colour $+1$ are depicted with dotted lines and edges with colour $-1$ with dashed lines. The edge $e_1$ is indicated, and has colour $\sigma(e_1) = -1$. Its colour determines the colours of all other edges.
\label{fig:graph3}}
\end{figure}

Next, for $j=1,\ldots,q$, we define the vertices
$\va_{j,-} \deq \min\{\va_j, \va_{j+1}\}$ and $\va_{j,+} \deq \max\{\va_j, \va_{j+1}\}$.
From \eqref{sigma_determined} we get, for $j = 1, \dots, q$,
\begin{equation}
\label{e_j label}
\sigma(e_j) (s_{\va_{j,-}}-s_{\va_{j,+}}) \;=\;\sigma(e_1) (s_{\va_j}-s_{\va_{j+1}})\,.
\end{equation}
Substituting \eqref{e_j label} into \eqref{J_e}, we get, for $j=1,\ldots,q$,
\begin{equation}
\label{J tau e_j}
\cal J_{\tau, e_j}(\f y_e,\f s) \;=\;G_{\tau, \sigma(e_1) (s_{{\va}_j} - s_{{\va}_{j+1}})}(y_{{\va}_j}; y_{{\va}_{j+1}}) + \frac{\ind{\sigma(e_j) = +1}}{\tau} S_{\tau, \sigma(e_1)(s_{{\va}_j} - s_{{\va}_{j+1}})}(y_{{\va}_j}; y_{{\va}_{j+1}})\,. 
\end{equation}
Here we used that the kernels of $G_{\tau,t}$ and $S_{\tau,t}$ are symmetric by Lemma \ref{Positivity lemma}. Note that the time arguments in \eqref{J tau e_j} are always in the corresponding domains from \eqref{def_St} and \eqref{def_Gt}, as may be easily seen using \eqref{s_ordered} and \eqref{e_j label}.
More precisely, if $\sigma(e_1)=1$, then $\sigma(e_j)=1$ implies that $\va_j <\va_{j+1}$ and hence $s_{{\va}_j} - s_{{\va}_{j+1}}\geq0$. Likewise, if $\sigma(e_1)=-1$, then $\va_j >\va_{j+1}$ and hence $s_{{\va}_j} - s_{{\va}_{j+1}}\leq0$.

We can hence rewrite the expression on the left-hand side of \eqref{Closed path bound}
as
\begin{equation}
\label{Closed path bound 2}
\tr \,\Bigg[\prod_{j=1}^{q} \bigg(G_{\tau, \sigma(e_1) (s_{{\va}_j} - s_{{\va}_{j+1}})} + \frac{\ind{\sigma(e_j) = +1}}{\tau} S_{\tau, \sigma(e_1)(s_{{\va}_j} - s_{{\va}_{j+1}})}\bigg)\Bigg]\,.
\end{equation}
Note that all operators in the above product commute, so that the order of the product is immaterial.
We let 
\begin{equation*}
J_{\cal P} \;\deq\; \{1 \leq j \leq q\col \sigma(e_j)=+1\} \,.
\end{equation*} 
By construction, $J_{\cal P} \neq \{1,\ldots,q\}$.  Indeed, by Definition \ref{def_collapsed_graph}, the smallest vertex in the path $\cal P$ is incident to an edge of colour $+1$ and to an edge of colour $-1$.

We write \eqref{Closed path bound 2} as
\begin{multline}
\sum_{I \subset J_{\cal P}} \tr \Bigg[\bigg(\prod_{j \in \{1,\ldots,q\} \setminus I} G_{\tau, \sigma(e_1) (s_{{\va}_j} - s_{{\va}_{j+1}})}\bigg) \bigg(\prod_{j \in I} \frac{1}{\tau} S_{\tau, \sigma(e_1)(s_{{\va}_j} - s_{{\va}_{j+1}})}\bigg)\Bigg]
\\
\;=\; \sum_{I \subset J_{\cal P}} \tr \Bigg[\bigg(\prod_{j \in \{1,\ldots,q\} \setminus I} G_{\tau,0}\bigg) \bigg(\prod_{j \in I} \frac{1}{\tau} S_{\tau,0}\bigg)\Bigg] \;=\; \sum_{I \subset J_{\cal P}}  \frac{1}{\tau^{|I|}} \tr \big(G_{\tau}^{q-|I|}\big)\,.
\label{Closed path bound 3}
\end{multline}
In the first equality above, we used $\sum_{j=1}^{q+1} (s_{\va_{j}}-s_{\va_{j-1}}) =s_{\va_{q+1}}-s_{\va_0}=0$, which holds since $s_{\va_{q+1}}=s_{\va_0}=0$.

By \eqref{Schatten class embedding}, we note that if $|I| \leq q- 2$ then $\tr \big(G_{\tau}^{q-|I|}\big) = \norm{G_\tau}_{\fra S^{q - \abs{I}}}^{q - \abs{I}} \leq \|G_{\tau}\|_{\fra S^2(\fra H)}^{q-|I|}$.
Consequently, \eqref{Closed path bound 2} is 
\begin{equation}
\label{Closed path bound 4}
\;\leq\; \mathop{\sum_{I \subset J_{\cal P} \col}}_{ |I| \leq q-2} \frac{1}{\tau^{|I|}} \|G_{\tau}\|_{\fra S^2(\fra H)}^{q-|I|} + \frac{q}{\tau^{q-1}} \|G_{\tau}\|_{\fra S^1(\fra H)} \;\leq\; C^{\abs{\cal V_2(\cal P)}} \, \pbb{1 +  \|G_{\tau}\|_{\fra S^2(\fra H)} + \frac{1}{\tau} \norm{G_\tau}_{\fra S^1(\fra H)}}^{\abs{\cal V(\cal P)}}\,.
\end{equation}
In the last line we used $q \geq 2$. Estimate \eqref{Closed path bound} now follows from \eqref{Closed path bound 4} by Lemma \ref{lem:conv_G_tau}.

Finally, \eqref{Closed path remark} follows by a similar argument. When estimating the right-hand side of \eqref{Closed path remark}, the only difference is that we sum over \emph{nonempty} sets $I$ in \eqref{Closed path bound 3} and \eqref{Closed path bound 4}. Recalling \eqref{Closed path bound 2}, Definition \ref{J_e definition}, and using $q \geq 2$ we thus obtain an additional factor of $\frac{1}{\tau}$ and one power of $(1 +  \|G_{\tau}\|_{\fra S^2(\fra H)} + \frac{1}{\tau} \norm{G_\tau}_{\fra S^1(\fra H)})$ less than on the right-hand side of \eqref{Closed path bound 4}, and we deduce \eqref{Closed path remark} again by using Lemma \ref{lem:conv_G_tau}.
\end{proof}

\begin{proof}[Proof of Lemma \ref{Open path}] 
We argue similarly as in the proof of Lemma \ref{Closed path}. We first prove \eqref{Open path bound}.  Let $\vb_1$ and $\vb_2$ be as in the statement of the lemma. Assume, without loss of generality, that $\vb_1<\vb_2$. By construction of $\cal E$ in Definition \ref{def_collapsed_graph}, it follows that $\sign_{\vb_1}=+1$ and  
$\sign_{\vb_2}=-1$.

Let $q \deq |\cal V_2(\cal P)|$. In the case $q = 0$ we have $\cal J_{\tau,\{\vb_1, \vb_2\}}((y_{b_1}, y_{b_2}), \f s) = G_\tau(y_{\vb_1};y_{\vb_2})$ because $\sigma(\{\vb_1, \vb_2\}) = -1$, and the claim follows immediately by Lemma \ref{lem:conv_G_tau}.

For the following we assume $q \geq 1$. We write $\cal P = \{e_1, e_2, \dots, e_q,e_{q+1}\}$ for the edges of $\cal P$, whereby $e_j$ and $e_{j+1}$ are incident for $j = 1, \dots, q$ with the further properties that $\vb_1 \in e_1, \vb_2 \in e_{q+1}$, and $\va_j \deq e_{j} \cap e_{j+1} \in \cal V_2(\cal P)$, for all $j=1, \ldots, q$. 
Furthermore, we add the conventions that $\va_{0} \deq \vb_1$ and $\va_{q+1} \deq \vb_2$.

In this case, by Definition \ref{def_collapsed_graph}, the colour $\sigma(e_j)$ of any edge $e_j \in \cal P$ is uniquely determined. Namely,
\begin{equation} \label{sigma_determined 2}
\sigma(e_j) \;=\;
\begin{cases}
-1 &\mbox{if }\va_{j-1}<\va_j
\\
+1 &\mbox{if }\va_{j-1}>\va_j\,;
\end{cases}
\end{equation}
see Figure \ref{fig:graph4}.

\begin{figure}[!ht]
\begin{center}
{\small 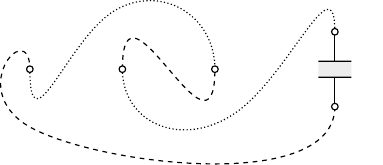}
\end{center}
\caption{An open path $\cal P$ of length $4$, with $q = 3$. The vertices of $\cal V_2$ are ordered from left to right and we draw the two vertices of $\cal V_1$ contained in $\cal P$. Edges with colour $+1$ are depicted with dotted lines and edges with colour $-1$ with dashed lines. The edge $e_1$ is indicated, and has colour $\sigma(e_1) = +1$.
\label{fig:graph4}}
\end{figure}

Taking, for $j = 1, \dots, q+1$, $\va_{j,-} \deq \min\{\va_{j-1}, \va_{j}\}$ and $\va_{j,+} \deq \max\{\va_{j-1}, \va_{j}\}$, \eqref{sigma_determined 2} implies that, for $j = 1, \dots, q+1$,
\begin{equation}
\label{e_j label 2}
\sigma(e_j) (s_{\va_{j,-}}-s_{\va_{j,+}}) \;=\; s_{\va_{j}}-s_{\va_{j-1}}\,.
\end{equation}
In particular, we substitute \eqref{e_j label 2} into \eqref{J_e} and we get, for $j=1,\ldots,q+1$,
\begin{equation}
\label{J tau e_j 2}
\cal J_{\tau, e_j}(\f y_{e_j},\f s) \;=\;G_{\tau, s_{\va_{j}}-s_{\va_{j-1}}}(y_{{\va}_{j-1}}; y_{{\va}_{j}}) + \frac{\ind{\sigma(e_j) = +1}}{\tau} S_{\tau, s_{\va_{j}}-s_{\va_{j-1}}}(y_{{\va}_{j-1}}; y_{{\va}_{j}})\,. 
\end{equation}
As for \eqref{J tau e_j}, we used the symmetry of the kernels of $G_{\tau,t}$ and $S_{\tau,t}$ from Lemma \ref{Positivity lemma}.
It immediately follows that the time arguments in \eqref{J tau e_j 2} are always in the corresponding domains from \eqref{def_St} and \eqref{def_Gt}.
From \eqref{J tau e_j 2} we deduce
\begin{equation}
\label{Open path bound 2}
\int_{\Lambda^{\cal V_2(\cal P)}} \prod_{\va \in \cal V_2(\cal P)} \dd y_{\va} \, \prod_{e \in \cal P} \cal J_{\tau, e} (\f y_e,\f s) \;=\;
\Bigg[\prod_{j=1}^{q+1} \bigg(G_{\tau,s_{\va_{j}}-s_{\va_{j-1}}} + \frac{\ind{\sigma(e_j) = +1}}{\tau} S_{\tau, s_{\va_{j}}-s_{\va_{j-1}}}\bigg) \Bigg](y_{\vb_1};y_{\vb_2})\,.
\end{equation}
We now define $J_{\cal P} \deq \{1 \leq j \leq q+1\col\sigma(e_j)=+1\}$.
We note that $\sigma(e_1)=+1$ and hence $1 \in J_{\cal P}$. Furthermore, $\sigma(e_{q+1})=-1$ and hence $q+1 \notin J_{\cal P}$. Therefore $1 \leq |J_{\cal P}|  \leq q$.

We write \eqref{Open path bound 2} as
\begin{multline}
\sum_{I \subset J_{\cal P}} \Bigg[\bigg(\prod_{j \in \{1,\ldots,q+1\} \setminus I} G_{\tau, s_{\va_{j}}-s_{\va_{j-1}}}\bigg) \bigg(\prod_{j \in I} \frac{1}{\tau} S_{\tau, s_{\va_{j}}-s_{\va_{j-1}}}\bigg)\Bigg](y_{\vb_1};y_{\vb_2})
\\
\;=\; \sum_{I \subset J_{\cal P}} \Bigg[\bigg(\prod_{j \in \{1,\ldots,q+1\} \setminus I} G_{\tau,0}\bigg) \bigg(\prod_{j \in I} \frac{1}{\tau} S_{\tau,0}\bigg)\Bigg] (y_{\vb_1};y_{\vb_2})\;=\; \sum_{I \subset J_{\cal P}}  \frac{1}{\tau^{|I|}} \big(G_{\tau}^{q+1-|I|}\big)(y_{\vb_1};y_{\vb_2})\,.
\label{Open path bound 3}
\end{multline}
In the first equality above, we used $\sum_{j=1}^{q+1}(s_{\va_{j+1}}-s_{\va_j})=s_{\va_{q+1}}-s_{\va_0}=0$, which holds since $s_{\va_{q+1}}=s_{\va_0}=0$.

Consequently,
\eqref{Open path bound 2} is
\begin{multline}
\label{Open path bound 4}
\;\leq\; \mathop{\sum_{I \subset J_{\cal P} \col}}_{|I| \leq q-1} \frac{1}{\tau^{|I|}} \|G_{\tau}\|_{\fra S^2(\fra H)}^{q-1-|I|}  \, \|G_{\tau}(y_{\vb_1};\cdot)\|_{L^2}\, \,\|G_{\tau}(y_{\vb_2};\cdot)\|_{L^2}+ \frac{q+1}{\tau^{q}}\,G_{\tau}(y_{\vb_1};y_{\vb_2})
\\
\;\leq\; C^{\abs{\cal V_2(\cal P)}}\,\pb{1 + \|G_{\tau}\|_{\fra S^2(\fra H)}}^{\abs{\cal V_2(\cal P)}} \, \Big(\|G_{\tau}(y_{\vb_1};\cdot)\|_{L^2}\, \,\|G_{\tau}(\cdot \, ; y_{\vb_2})\|_{L^2}+ G_{\tau}(y_{\vb_1};y_{\vb_2})\Big)
\,.
\end{multline}
In the first inequality we used that, for all $l \geq 2$,
\begin{multline}
\label{composition}
G_{\tau}^l (y_{\vb_1};y_{\vb_2})= \int_{\Lambda^{l-1}} \prod_{j=1}^{l-1}\dd \zeta_j \, G_{\tau}(y_{\vb_1};\zeta_1) G_{\tau}(\zeta_1;\zeta_2) \cdots G_{\tau}(\zeta_{l-2};\zeta_{l - 1}) \,G_{\tau}(\zeta_{l - 1};y_{\vb_2})
\\
\leq\; \|G_{\tau}\|_{\fra S^2(\fra H)}^{l-2}  \, \|G_{\tau}(y_{\vb_1};\cdot)\|_{L^2}\, \,\|G_{\tau}(y_{\vb_2};\cdot)\|_{L^2}\,,
\end{multline}
by the Cauchy-Schwarz inequality. Now \eqref{Open path bound} follows from \eqref{Open path bound 4} and Lemma \ref{lem:conv_G_tau}.

The proof of \eqref{Open path remark} is similar. The only difference is that we sum over \emph{nonempty} sets $I$ in \eqref{Open path bound 3} and \eqref{Open path bound 4}. We thus obtain an additional factor of $\frac{1}{\tau}$ and one power of $\pb{1 + \|G_{\tau}\|_{\fra S^2(\fra H)}}$ less than on the right-hand side of \eqref{Open path bound 4}. We deduce \eqref{Open path remark} from Lemma \ref{lem:conv_G_tau}.
\end{proof}
\begin{proposition} \label{Product of subgraphs}
For any $\Pi \in \fra P$ and $\f t \in \fra A$ we have $\big| \cal I_{\tau,\Pi}^{\xi} (\f t) \big| \leq C^{m+p} \, \|w\|_{L^{\infty}}^m$.
\end{proposition}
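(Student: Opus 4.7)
The plan is to combine Corollary \ref{Wick_application 2} with a path-by-path integration, using Lemma \ref{Closed path} to eliminate closed paths, then integrating out the $\cal V_2$-variables of each open path via Lemma \ref{Open path}, and finally pairing the resulting $L^2$-kernels (indexed by the endpoints in $\cal V_1$) against $\xi$ by a single application of the Cauchy--Schwarz inequality.

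First, by Corollary \ref{Wick_application 2} it suffices to bound
\begin{equation*}
\|w\|_{L^\infty}^m \int_{\Lambda^{\cal V}} \dd \f y \, \abs{\xi(\f y_1)} \prod_{e \in \cal E} \cal J_{\tau, e}(\f y_e,\f s)\,.
\end{equation*}
By Lemma \ref{lem:basic_calE}, every vertex of $\cal V_2$ has degree $2$ and both of its incident edges belong to the same connected component; every vertex of $\cal V_1$ has degree $1$. Consequently $\cal V_2 = \bigsqcup_{\cal P} \cal V_2(\cal P)$, and the product $\prod_{e \in \cal E} \cal J_{\tau,e}$ factorizes over the paths $\cal P \in \conn(\cal E)$. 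The key point that makes the variables decouple completely is that the $w$-factors have already been absorbed into the prefactor $\|w\|_{L^\infty}^m$, so after this step the remaining integrand is a product of factors depending on disjoint variable sets (up to the $\cal V_1$-endpoints shared with $\xi$).

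Next, I would integrate out the closed paths first. Each closed path $\cal P$ is disjoint from $\cal V_1$ and from all other paths, so by Fubini and Lemma \ref{Closed path},
\begin{equation*}
\int_{\Lambda^{\cal V(\cal P)}} \prod_{\va \in \cal V(\cal P)} \dd y_\va \prod_{e \in \cal P} \cal J_{\tau,e}(\f y_e, \f s) \;\leq\; C^{\abs{\cal V(\cal P)}}\,.
\end{equation*}
For each open path $\cal P$, I integrate out only its $\cal V_2$-variables, producing a nonnegative function $f_{\cal P}(y_{b_1}, y_{b_2})$ of its two endpoints $b_1, b_2 \in \cal V_1 \cap \cal V(\cal P)$, which by Lemma \ref{Open path} satisfies $\|f_{\cal P}\|_{L^2(\Lambda^2)} \leq C^{\abs{\cal V_2(\cal P)}}$. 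Since each vertex of $\cal V_1$ is the endpoint of exactly one open path, the pairs of endpoints over the $p$ open paths form a perfect pairing of $\cal V_1$, hence the variables $(y_{b_1(\cal P)}, y_{b_2(\cal P)})_{\cal P}$ are disjoint and exhaust $\f y_1 \in \Lambda^{2p}$.

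Therefore the remaining integral is bounded, using Cauchy--Schwarz in $L^2(\Lambda^{2p})$ and the identification \eqref{S2L2}, by
\begin{equation*}
\int_{\Lambda^{2p}} \dd \f y_1 \, \abs{\xi(\f y_1)} \prod_{\cal P \text{ open}} f_{\cal P}\p{y_{b_1(\cal P)}, y_{b_2(\cal P)}} \;\leq\; \norm{\xi}_{L^2(\Lambda^{2p})} \prod_{\cal P \text{ open}} \norm{f_{\cal P}}_{L^2(\Lambda^2)} \;\leq\; \prod_{\cal P \text{ open}} C^{\abs{\cal V_2(\cal P)}}\,,
\end{equation*}
where we used $\norm{\xi}_{L^2(\Lambda^{2p})} = \norm{\xi}_{\fra S^2(\fra H^{(p)})} \leq 1$ and that the product decouples over disjoint variable sets. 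Multiplying all path contributions together, the total exponent on $C$ is
\begin{equation*}
\sum_{\cal P \text{ closed}} \abs{\cal V(\cal P)} + \sum_{\cal P \text{ open}} \abs{\cal V_2(\cal P)} \;=\; \abs{\cal V_2} \;=\; 2m \;\leq\; 2(m+p)\,,
\end{equation*}
which after replacing $C$ by $C^2$ yields the claimed bound $C^{m+p} \|w\|_{L^\infty}^m$. The only mildly subtle point is keeping track of how the variables partition across the paths and $\xi$; this is clean thanks to the degree structure in Lemma \ref{lem:basic_calE}, which is itself a consequence of the Wick ordering of the renormalized interaction.
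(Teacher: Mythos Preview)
Your proof is correct and follows essentially the same approach as the paper: both start from Corollary \ref{Wick_application 2}, factorize over paths, apply Lemma \ref{Closed path} for closed paths and Lemma \ref{Open path} for open ones, and conclude with Cauchy--Schwarz against $\xi$ over the $\cal V_1$-variables. Your bookkeeping of the exponent as $|\cal V_2|=2m$ is slightly more explicit than the paper's, but the argument is otherwise identical.
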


\begin{proof}
We rewrite \eqref{Wick_application 2 bound} as
\begin{equation}
\big|\cal I_{\tau,\Pi}^{\xi}(\f t)\big| \;\leq\; \norm{w}_{L^\infty}^m \int_{\Lambda^{\cal V_1}} \dd \f y_1 \, \absb{\xi(\f y_1)}  \,\int_{\Lambda^{\cal V_2}} \dd \f y_2
\prod_{e \in \cal E} \cal J_{\tau, e}(\f y_e,\f s)\,.
\label{Wick_application 2 bound B}
\end{equation}
Introduce the splitting $\conn(\cal E) = \conn_c(\cal E) \sqcup \conn_o(\cal E)$ into closed paths $\conn_c(\cal E)$ and open paths $\conn_o(\cal E)$. Then we have the factorization
\begin{multline}
\,\int_{\Lambda^{\cal V_2}} \dd \f y_2
\prod_{e \in \cal E} \cal J_{\tau, e}(\f y_e,\f s)\;=\; 
\\
\prod_{\cal P \in \conn_c(\cal E)} \bigg( \int_{\Lambda^{\cal V(\cal P)}} \prod_{\va \in \cal V(\cal P)} \dd y_{\va} \, \prod_{e \in \cal P} \cal J_{\tau, e}(\f y_e,\f s) \bigg) \prod_{\cal P \in \conn_o(\cal E)} \bigg(\int_{\Lambda^{\cal V_2(\cal P)}} \prod_{\va \in \cal V_2(\cal P)} \dd y_{\va} \, \prod_{e \in \cal P} \cal J_{\tau, e} (\f y_e,\f s)\bigg)\,.
\label{Wick_application 2 bound C}
\end{multline}
Substituting \eqref{Wick_application 2 bound C} into \eqref{Wick_application 2 bound B} and using Corollary \ref{Wick_application 2} and Lemmas \ref{Closed path}, it follows that
\begin{equation*}
\big| \cal I_{\tau,\Pi}^{\xi} (\f t) \big| \;\leq\;C^{m} \, \|w\|_{L^{\infty}}^m
\int_{\Lambda^{\cal V_1}}  \dd \f y_1\, \absb{\xi (\f y_1)} \,
\prod_{\cal P \in \conn_o(\cal E)} \bigg(\int_{\Lambda^{\cal V_2(\cal P)}} \prod_{\va \in \cal V_2(\cal P)} \dd y_{\va} \, \prod_{e \in \cal P} \cal J_{\tau, e} (\f y_e,\f s)\bigg)\,,
\end{equation*}
The claim now follows by applying Lemma \ref{Open path} and the Cauchy-Schwarz inequality in $y_{\vb}$ for all $\vb \in \cal V_1$, where we used that $\abs{\conn_o(\cal E)} = p$.
\end{proof}

\begin{proposition} \label{prop:est_f}
For $f_{\tau,m}^{\xi}(\f t)$ defined in \eqref{def_a_tau} we have $\big|f_{\tau,m}^{\xi}\big(\f t)| \leq (C p)^p \, \pb{C m^2  \|w\|_{L^{\infty}}}^m$.
\end{proposition}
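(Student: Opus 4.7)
The plan is to combine the two preceding results with a combinatorial count of the pairings. By Lemma \ref{Wick_application lemma} we have the decomposition $f_{\tau,m}^{\xi}(\f t) = \sum_{\Pi \in \fra P} \cal I_{\tau,\Pi}^{\xi}(\f t)$, and by Proposition \ref{Product of subgraphs} each summand satisfies $|\cal I_{\tau,\Pi}^{\xi}(\f t)| \leq C^{m+p} \|w\|_{L^{\infty}}^m$ uniformly in $\f t \in \fra A$. Hence the entire task reduces to estimating the cardinality of $\fra P$.

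To count $|\fra P|$, I would recall from Definition \ref{def_pairing} that a pairing in $\fra P$ only matches vertices of opposite sign. There are exactly $2m+p$ vertices with $\sign = +1$ (namely the triples $(i,r,+1)$ for $i \leq m$, $r \in \{1,2\}$, plus the $p$ triples $(m+1,r,+1)$) and similarly $2m+p$ vertices with $\sign = -1$. Thus the number of sign-respecting pairings is at most the number of bijections between these two sets, which is $(2m+p)!$. (The additional constraint (i) of Definition \ref{def_pairing} only reduces the count, so we may drop it for an upper bound.) Therefore $|\fra P| \leq (2m+p)!$.

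It remains to bound this factorial in the desired product form. Using $(2m+p)! \leq 2^{2m+p}(2m)!\,p!$ together with $(2m)! \leq (2m)^{2m} = 4^m m^{2m}$ and $p! \leq p^p$, I obtain
\begin{equation*}
(2m+p)! \;\leq\; (C p)^p \,(C m^2)^m
\end{equation*}
for a suitable numerical constant $C$. Combining with the per-pairing estimate from Proposition \ref{Product of subgraphs} gives
\begin{equation*}
\big|f_{\tau,m}^{\xi}(\f t)\big| \;\leq\; |\fra P| \cdot C^{m+p} \|w\|_{L^{\infty}}^m \;\leq\; (Cp)^p \, \pb{C m^2 \|w\|_{L^{\infty}}}^m\,,
\end{equation*}
after absorbing the extra constants $C^{m+p}$ into the two base constants. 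No step is a true obstacle; essentially all the work was already done in Proposition \ref{Product of subgraphs}, and the present proposition is a bookkeeping corollary where the only point requiring care is separating the combinatorial factor cleanly into an $m$-part (giving the $m^{2m} = (m^2)^m$ growth characteristic of Wick pairings) and a $p$-part (giving $p^p$ from the observable $\xi$).
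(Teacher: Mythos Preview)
Your proof is correct and follows essentially the same approach as the paper: the paper's proof also invokes Lemma~\ref{Wick_application lemma}, Proposition~\ref{Product of subgraphs}, and the bound $|\fra P| \leq (2m+p)!$, then simply states that the claim follows. You have supplied the explicit combinatorial unpacking of $(2m+p)! \leq (Cp)^p (Cm^2)^m$ that the paper leaves to the reader, and your justification of $|\fra P| \leq (2m+p)!$ via counting sign-respecting bijections is a clean way to see it.
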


\begin{proof}
We use Lemma \ref{Wick_application lemma}, Proposition \ref{Product of subgraphs} and $|\fra P| \leq (2m+p)!$ to deduce
\begin{equation*}
\big|f_{\tau,m}^{\xi}\big(\f t)| \;\leq\; C^{m + p} \|w\|_{L^{\infty}}^m  \,(2m+p)!\,, 
\end{equation*}
from which the claim follows.
\end{proof}

\begin{corollary} \label{cor:bound_am}
For $a_{\tau,m}^{\xi}$ defined in \eqref{Explicit term a} we have $\big|a_{\tau,m}^{\xi}\big| \leq (C p)^p \, C^m\, m! \,\|w\|_{L^{\infty}}^m$.
\end{corollary}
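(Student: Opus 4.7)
The plan is to estimate $a_{\tau,m}^\xi$ directly from its representation \eqref{def_a_tau} by bounding the integrand uniformly via Proposition \ref{prop:est_f} and then integrating out the time simplex. This is a routine consequence of the preceding work; no genuine obstacle is expected, only bookkeeping of constants.

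First, I would take absolute values in \eqref{def_a_tau}, pull the prefactor $\frac{1}{(1-2\eta)^m 2^m}$ outside the integral, and apply the uniform bound $|f_{\tau,m}^\xi(\f t)| \leq (Cp)^p (Cm^2\|w\|_{L^\infty})^m$ from Proposition \ref{prop:est_f}. This reduces the estimate to computing the iterated integral
\begin{equation*}
\int_\eta^{1-\eta}\dd t_1 \int_\eta^{t_1} \dd t_2 \cdots \int_\eta^{t_{m-1}} \dd t_m \;=\; \frac{(1-2\eta)^m}{m!},
\end{equation*}
the volume of the ordered simplex $\{\eta \leq t_m \leq \cdots \leq t_1 \leq 1-\eta\}$. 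The $\eta$-dependent factors cancel cleanly against the prefactor $(1-2\eta)^{-m}$, leaving a bound uniform in $\eta \in (0, 1/4]$, namely $|a_{\tau,m}^\xi| \leq \frac{(Cp)^p (Cm^2 \|w\|_{L^\infty})^m}{2^m\, m!}$.

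Finally, I would convert the spurious factor $m^{2m}/m!$ into the required $C^m m!$ via Stirling's lower bound $m! \geq (m/e)^m$, which yields $m^{2m} = (m^m)^2 \leq e^{2m}(m!)^2$ and hence $m^{2m}/m! \leq e^{2m}\, m!$. Absorbing the numerical constants $2^{-m}$, $e^{2m}$, and the $C$ from Proposition \ref{prop:est_f} into a single redefined constant (still denoted $C$), I obtain the claimed bound $|a_{\tau,m}^\xi| \leq (Cp)^p C^m\, m!\, \|w\|_{L^\infty}^m$. The essential point is simply that the $1/m!$ produced by integrating over the ordered simplex combines with Stirling's estimate to tame the $m^{2m}$ appearing in the bound on $f_{\tau,m}^\xi$, producing a factorial growth $m!$ rather than the much larger $(m!)^2$; this factorial growth is precisely what is needed in order to apply the Borel summation machinery alluded to in Section \ref{sec:proof_outline}.
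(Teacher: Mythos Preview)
Your proof is correct and follows essentially the same approach as the paper, which simply notes that the bound follows from Proposition~\ref{prop:est_f}, the representation~\eqref{def_a_tau}, and the simplex volume $(1-2\eta)^m/m!$; you have made explicit the Stirling step that converts $m^{2m}/m!$ into $C^m m!$, which the paper leaves to the reader.
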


\begin{proof}
This follows immediately from Proposition \ref{prop:est_f} and \eqref{def_a_tau}, noting that the time integration on the right-hand side of \eqref{def_a_tau} has a volume $(1 - 2 \eta)^m / m!$.
\end{proof}

\subsection{The explicit terms IV: convergence}
\label{sec:convergence}

The following definition introduces the limit $\tau \to \infty$ of $\cal J_{\tau,e}(\f y_e, \f s)$ from Definition \ref{J_e definition}.
\begin{definition}
\label{J_e definition classical}
Let $\f y = (y_{\va})_{\va \in \cal V} \in \Lambda^{\cal V}$. We associate an integral kernel $\cal J_{e}(\f y_e) \deq G(y_{\va}; y_{\vb})$ with each edge $e = \{\va, \vb\} \in \cal E$ with $a < b$.
\end{definition}
We note that, in the above definition, the quantity $\cal J_{e}(\f y_e)$ does not depend on time. Moreover, $\cal J_{e}(\f y_e) \geq 0$ by the following lemma.

\begin{lemma}
\label{Positivity lemma 2}
We have $G(x;y) = G(y;x)\geq 0$ for all $x,y \in \Lambda$.
\end{lemma}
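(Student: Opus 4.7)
The plan is to reduce the statement to the positivity of the heat kernel of $h$, which is already proved by the Feynman--Kac formula in the proof of Lemma \ref{Positivity lemma}.

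Since $h = -\Delta + \kappa + v$ is self-adjoint with $h \geq \kappa > 0$ and has compact resolvent, the spectral theorem gives the representation
\begin{equation*}
h^{-1} \;=\; \int_0^\infty \dd t \, \ee^{-t h}
\end{equation*}
as a Bochner integral in $\fra S^2(\fra H)$ (for instance). Passing to integral kernels, we obtain
\begin{equation*}
G(x;y) \;=\; \int_0^\infty \dd t \, \ee^{-th}(x;y)\,,
\end{equation*}
at least in the distributional sense, and pointwise for $x \neq y$. The key input is that the heat kernel $\ee^{-th}(x;y)$ is pointwise nonnegative and symmetric in $x,y$ for every $t > 0$; this is exactly the argument used in Lemma \ref{Positivity lemma} via the Feynman--Kac formula, now applied to $h$ (with potential $\kappa + v$) instead of $h_\tau$. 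Integrating a nonnegative, symmetric function in $t$ preserves both properties, so $G(x;y) = G(y;x) \geq 0$ for all $x,y \in \Lambda$.

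Alternatively, one may deduce the statement from the quantum side: by \eqref{neumann_series} one has $G_{\tau,0} = \tau^{-1} \sum_{n \geq 1} \ee^{-n h_\tau/\tau}$, whose kernel is manifestly nonnegative and symmetric by Lemma \ref{Positivity lemma}, and then pass to the limit $\tau \to \infty$ using the convergence $h_\tau^{-1} \to h^{-1}$ in $\fra S^2(\fra H)$ from \eqref{h_tau_conv}; but this is a detour since the direct Feynman--Kac argument for $h$ is just as simple. I do not anticipate any real obstacle here: the only mild point is to justify the kernel representation of the Bochner integral, which is standard given the trace-class bounds on $\ee^{-th}$ for $t>0$.
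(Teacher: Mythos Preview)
Your proof is correct and follows exactly the paper's approach: write $G = h^{-1} = \int_0^\infty \dd t\, \ee^{-th}$ and invoke the Feynman--Kac formula as in Lemma \ref{Positivity lemma} to conclude nonnegativity and symmetry of the kernel. The paper's proof is a one-line reference to precisely this argument.
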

\begin{proof}
The claim immediately follows from the Feynman-Kac formula applied to $G = h^{-1} = \int_0^\infty \dd t \, \ee^{- ht}$, analogously to the proof of Lemma \ref{Positivity lemma}.
\end{proof}

The following definition introduces the limit $\tau \to \infty$ of $\cal I_{\tau,\Pi}^\xi(\f t)$ from Definition \ref{def:I_tau_PI}; see also \eqref{I_rep}.

\begin{definition}
\label{def_calI infinity} 
For $\Pi \in \fra P$ and $\cal E = \cal E_\Pi$ as in Definition \ref{def_collapsed_graph}, we define 
\begin{equation*}
\cal I^\xi_{\Pi} \;\deq\; \int_{\Lambda^{\cal V}} \dd \f y \, \pBB{\prod_{i=1}^{m} w(y_{i,1}-y_{i,2})} \xi(\f y_1) \prod_{e \in \cal E} \cal J_{e}(\f y_e) \,.
\end{equation*} 
\end{definition}

Recall the definition \eqref{def_fraA} of $\fra A$.

\begin{lemma}
\label{Convergence of explicit terms lemma}
For all fixed $\f t \in \fra A$ and $\Pi \in \fra P$ we have
\begin{equation*}
\cal I^\xi_{\tau,\Pi}(\f t) \rightarrow \cal I^\xi_{\Pi}\quad \mbox{as}\quad \tau \rightarrow \infty\quad \mbox{uniformly in} \quad \xi \in \fra B_p\,.
\end{equation*}
(Note that we do not prove that the convergence is uniform in $\f t \in \fra A$.)
\end{lemma}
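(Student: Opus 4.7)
My plan is to argue in two steps, first passing from $\cal J_{\tau,e}$ to the reduced kernel $\hat{\cal J}_{\tau,e}$, then passing to the classical limit $\hat{\cal J}_{\tau,e}\to\cal J_e$. Uniformity in $\xi \in \fra B_p$ will come from viewing integration against $\xi$ as a Hilbert--Schmidt pairing and using $\norm{\xi}_{\fra S^2(\fra H^{(p)})} \leq 1$.

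\emph{Step 1.} Let $\hat{\cal I}^{\xi}_{\tau,\Pi}(\f t)$ be defined by \eqref{I_rep} with $\cal J_{\tau,e}$ replaced by $\hat{\cal J}_{\tau,e}$. By Definition \ref{J_e definition} and Lemma \ref{Positivity lemma}, $\cal J_{\tau,e} - \hat{\cal J}_{\tau,e} = \frac{\ind{\sigma(e)=+1}}{\tau}S_{\tau, s_a - s_b}$ is a nonnegative kernel with a $\tau^{-1}$ prefactor. Expanding $\prod_e \cal J_{\tau,e}$ and subtracting $\prod_e \hat{\cal J}_{\tau,e}$ yields a sum of nonnegative products each containing at least one factor $\tau^{-1}S_{\tau,\cdot}$. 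Using $|w| \leq \norm{w}_{L^\infty}$ to decouple the $\f y_2$-integral into a product over $\cal P \in \conn(\cal E)$, and Cauchy--Schwarz in the endpoints of open paths to absorb $\norm{\xi}_{\fra S^2} \leq 1$, the paths not carrying a substitution contribute factors uniformly bounded in $\tau$ by \eqref{Closed path bound} and \eqref{Open path bound}, while the path carrying the substitution contributes a factor that vanishes by \eqref{Closed path remark} or \eqref{Open path remark}. Hence $|\cal I^{\xi}_{\tau,\Pi}(\f t) - \hat{\cal I}^{\xi}_{\tau,\Pi}(\f t)| \to 0$ uniformly in $\xi \in \fra B_p$.

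\emph{Step 2.} To prove $\hat{\cal I}^{\xi}_{\tau,\Pi}(\f t) \to \cal I^{\xi}_\Pi$, I apply a telescoping identity over edges,
\begin{equation*}
\prod_{e \in \cal E} \hat{\cal J}_{\tau,e} - \prod_{e \in \cal E} \cal J_e \;=\; \sum_{e_0 \in \cal E} \pbb{\prod_{e < e_0} \hat{\cal J}_{\tau,e}} (\hat{\cal J}_{\tau,e_0} - \cal J_{e_0}) \pbb{\prod_{e > e_0} \cal J_e},
\end{equation*}
so that the difference $\hat{\cal I}^{\xi}_{\tau,\Pi}(\f t) - \cal I^{\xi}_\Pi$ becomes a sum indexed by $e_0 \in \cal E$. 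For each summand, the distinguished factor $(\hat{\cal J}_{\tau,e_0} - \cal J_{e_0})(\f y_{e_0}) = (G_{\tau, t_{e_0}} - G)(y_{a_0}; y_{b_0})$ has vanishing $\fra S^2$-norm: indeed $t_{e_0} = \sigma(e_0)(s_{a_0} - s_{b_0})$ lies in the compact set $[-(1-2\eta),1-2\eta] \subset (-1,\infty)$ for $\f t \in \fra A$, and on any such compact subset Lemma \ref{lem:conv_G_tau} combined with a spectral-side dominated convergence applied to the multiplier $\lambda \mapsto \frac{\ee^{-t\lambda/\tau}}{\tau(\ee^{\lambda/\tau}-1)} - \lambda^{-1}$ gives $\norm{G_{\tau, t_{e_0}} - G}_{\fra S^2(\fra H)} \to 0$. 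Applying Cauchy--Schwarz in $(y_{a_0}, y_{b_0})$ against the remaining environment, whose $L^2$-norm in these variables is bounded uniformly in $\tau$ by path-wise arguments analogous to those in Section \ref{sec:upper_bound} (together with $|w|\leq\norm{w}_{L^\infty}$ and $\norm{\xi}_{\fra S^2} \leq 1$), each summand is of order $\norm{G_{\tau, t_{e_0}} - G}_{\fra S^2} \cdot O(1) \to 0$ uniformly in $\xi \in \fra B_p$.

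The main obstacle is that the factors $w(y_{i,1} - y_{i,2})$ couple vertices that may lie in distinct paths of $\conn(\cal E)$, so the integral does not factorize cleanly over paths. One bypasses this by bounding $|w| \leq \norm{w}_{L^\infty}$ whenever factorization is needed (as in Proposition \ref{Product of subgraphs}) and by using the telescoping to isolate one differing edge at a time. A secondary technical point is the $\fra S^2$-convergence $G_{\tau,t} \to G$ uniformly on compact subsets of $(-1,\infty)$, which requires a short spectral-side argument on top of Lemma \ref{lem:conv_G_tau}.
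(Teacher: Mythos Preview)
Your proposal is correct and follows essentially the same two-step strategy as the paper: first reduce $\cal J_{\tau,e}$ to $\hat{\cal J}_{\tau,e}$ via a path-wise telescoping using \eqref{Closed path remark} and \eqref{Open path remark}, then telescope over edges and exploit $\norm{G_{\tau,t}-G}_{\fra S^2}\to 0$ for the distinguished edge. Two minor remarks: the paper packages your ``spectral-side dominated convergence'' as Lemma~\ref{cor:conv_G_tau corollary}, and in Step~2 it implements the estimate by applying Cauchy--Schwarz edge-by-edge along each path (yielding $\prod_{e}\norm{\tilde{\cal J}_{\tau,e}}_{\fra S^2}$) rather than only in the two variables of $e_0$; this is cleaner since the ``environment'' around $e_0$ also depends on $y_{a_0},y_{b_0}$ through neighbouring edges, but both routes lead to the same conclusion.
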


\begin{proof}

In analogy with \eqref{I_rep} we define, for any $\Pi \in \fra P$ and $\f t \in \fra A$
\begin{equation}
\label{hat I_rep}
\hat{\cal I}_{\tau,\Pi}^{\xi}(\f t) \;\deq\; \int_{\Lambda^{\cal V}} \dd \f y \, \pBB{\prod_{i=1}^{m} w(y_{i,1}-y_{i,2})} \xi(\f y_1) \prod_{e \in \cal E} \hat{\cal J}_{\tau, e}(\f y_e, \f s)\,.
\end{equation}
Here we recall $\hat{\cal J}_{\tau, e}$ defined by \eqref{hat J_e} in Definition \ref{J_e definition}.
We observe that
\begin{equation}
\label{hat I convergence}
\cal I_{\tau,\Pi}^{\xi} (\f t)-\hat{\cal I}_{\tau,\Pi}^{\xi} (\f t) \rightarrow 0 \quad \mbox{as} \quad \tau \rightarrow \infty \quad \mbox{uniformly in} \quad \xi \in \fra B_p\,.
\end{equation}
In order to prove \eqref{hat I convergence} we first decompose $\cal I_{\tau,\Pi}^{\xi} (\f t)$ and $\hat{\cal I}_{\tau,\Pi}^{\xi} (\f t)$ as a product over paths as in \eqref{Wick_application 2 bound C}. We then use a telescoping argument for the factor corresponding to each path. We conclude by using \eqref{Closed path remark} from Lemma \ref{Closed path} and \eqref{Open path remark} from Lemma \ref{Open path} and by arguing as in the proof of Proposition \ref{Product of subgraphs}. This yields \eqref{hat I convergence}.

By \eqref{hat I convergence}, it suffices to show that
\begin{equation*}
\hat{\cal I}^\xi_{\tau,\Pi}(\f t) \rightarrow \cal I^\xi_{\Pi}\quad \mbox{as}\quad \tau \rightarrow \infty\quad \mbox{uniformly in} \quad \xi \in \fra B_p\,.
\end{equation*}
By \eqref{hat I_rep} and Definition \ref{def_calI infinity}, it follows that
\begin{equation}
\label{difference of I_Pi}
\hat{\cal I}^\xi_{\tau,\Pi}(\f t)-\cal I^\xi_{\Pi} \;=\;
\int_{\Lambda^{\cal V}} \dd \f y \, \pBB{\prod_{i=1}^{m} w(y_{i,1}-y_{i,2})} \xi(\f y_1) \Bigg[\prod_{e \in \cal E} \hat{\cal J}_{\tau,e}(\f y_e, \f s)-\prod_{e \in \cal E} \cal J_{e}(\f y_e)\Bigg] \,.
\end{equation}
By telescoping, we write
\begin{equation}
\label{telescoping}
\prod_{e \in \cal E} \hat{\cal J}_{\tau,e}(\f y_e, \f s)-\prod_{e \in \cal E} \cal J_{e}(\f y_e) \;=\; \sum_{e_0 \in \cal E}\,\Bigg[ \mathop{\prod_{e \in \cal E\col}}_{e<e_0} \hat{\cal J}_{\tau,e}(\f y_e, \f s)\,
\Big(\hat{\cal J}_{\tau,e_0}(\f y_{e_0},\f s)-\cal J_{e_0}(\f y_{e_0})\Big)\,
\mathop{\prod_{e \in \cal E\col}}_{e>e_0} \cal J_{e}(\f y_e)\Bigg]\,,
\end{equation}
where we order the elements of $\cal E$  in some arbitrary fashion.

Substituting \eqref{telescoping} into \eqref{difference of I_Pi}, it follows that
\begin{multline}
\label{difference of I_Pi absolute value}
\Big|\hat{\cal I}^\xi_{\tau,\Pi}(\f t)-\cal I^\xi_{\Pi}\Big| \leq \sum_{e_0 \in \cal E}\, \|w\|_{L^{\infty}}^m \,\int_{\Lambda^{\cal V}} \dd \f y \,\absb{\xi(\f y_1)}
\Bigg[\prod_{e <e_0} \hat{\cal J}_{\tau,e}(\f y_e, \f s)\,
\Big|\hat{\cal J}_{\tau,e_0}(\f y_{e_0},\f s)-\cal J_{e_0}(\f y_{e_0})\Big|\,
\prod_{e>e_0} \cal J_{e}(\f y_e)\Bigg]\,.
\end{multline}
Here, we used that $\hat{\cal J}_{\tau,e}(\f y_e, \f s), \cal J_{e}(\f y_e) \geq 0$ by Lemmas \ref{Positivity lemma} and \ref{Positivity lemma 2} respectively.
We denote by $\cal L_{\tau,e_0}^{\xi}(\f t)$ the summand in \eqref{difference of I_Pi absolute value} corresponding to $e_0 \in \cal E$. It suffices to show that
\begin{equation}
\label{L e_0 t}
\cal L_{\tau,e_0}^{\xi}(\f t) \rightarrow 0\quad \mbox{as}\quad \tau \rightarrow \infty\quad \mbox{uniformly in} \quad \xi \in \fra B_p\,.
\end{equation}
For the remainder of the proof, we fix $e_0 \in \cal E$ and we prove \eqref{L e_0 t}.

Let $\f y = (y_{\va})_{\va \in \cal V} \in \Lambda^{\cal V}$ and $\f s  = (s_{\va})_{\va \in \cal V} \in \R^{\cal V}$ satisfy \eqref{s_ordered}. We associate an integral kernel $\tilde{\cal J}_{\tau,e}(\f y_e,\f s)$ with each edge $e \in \cal E$ through
\begin{equation}
\label{tilde J tau e}
\tilde{\cal J}_{\tau,e}(\f y_e,\f s) \;\deq\;
\begin{cases}
\hat{\cal J}_{\tau,e}(\f y_e,\f s) &\mbox{if }e<e_0\\
\big|\hat{\cal J}_{\tau,e}(\f y_e,\f s)-\cal J_{e}(\f y_e)\big|&\mbox{if }e=e_0\\
\cal J_{e}(\f y_e) &\mbox{if }e>e_0\,. 
\end{cases}
\end{equation}
In particular
\begin{equation}
\label{L e_0 t identity}
\cal L_{\tau,e_0}^{\xi}(\f t)  \;=\;\|w\|_{L^{\infty}}^m \,\int_{\Lambda^{\cal V}} \dd \f y \,\absb{\xi(\f y_1)} \prod_{e \in \cal E} \tilde{\cal J}_{\tau,e}(\f y_e, \f s)\,.
\end{equation}
Before we proceed, we collect several useful estimates on the kernels $\tilde{\cal J}_{\tau,e}(\f y_e,\f s)$ defined in \eqref{tilde J tau e}. Here in each case, we write $e=\{\va,\vb\}$ with $\va<\vb$.
\begin{enumerate}
\item When $e<e_0$, we claim
\begin{equation}
\label{tilde J bound 1}
\|\tilde{\cal J}_{\tau,e}(\, \cdot \, ,\f s)\|_{\fra S^2(\fra H)} \;\leq\; C_{\f s}\,.
\end{equation}
We use Definition \ref{J_e definition} to write
\begin{equation}
\label{tilde J Case 1 formula}
\tilde{\cal J}_{\tau,e}(\f y_e,\f s)\;=\;G_{\tau, \sigma(e) (s_{\va} - s_{\vb})}(y_{\va}; y_{\vb})\,.
\end{equation}
We know from \eqref{s_ordered} that $s_{\va}-s_{\vb} \in [0,1)$. Hence \eqref{tilde J bound 1} follows from Lemma \ref{cor:conv_G_tau corollary}. We note that the constant $C_{\f s}$ depends on $\f s$.

\item When $e=e_0$ we claim
\begin{equation}
\label{tilde J bound 2}
\lim_{\tau \rightarrow \infty} \|\tilde{\cal J}_{\tau,e}(\, \cdot\, ,\f s)\|_{\fra S^2(\fra H)} \;=\;0\,.
\end{equation}
By Definitions \ref{J_e definition} and \ref{J_e definition classical}, it follows that 
\begin{equation}
\label{tilde J identity 2}
\tilde{\cal J}_{\tau,e}(\f y_e,\f s)\;=\; \big(G_{\tau, \sigma(e) (s_{\va} - s_{\vb})}-G\big)(y_{\va};y_{\vb})\,.
\end{equation}
We again use that $s_{\va}-s_{\vb} \in [0,1)$  and apply Lemma \ref{cor:conv_G_tau corollary}.

\item If $e>e_0$, then by Definition \ref{J_e definition classical} we have
\begin{equation}
\label{tilde J bound 3}
\|\tilde{\cal J}_{\tau,e}(\,\cdot\,,\f s)\|_{\fra S^2(\fra H)}\;=\;\|G\|_{\fra S^2(\fra H)}\;\leq\;C\,.
\end{equation}

\end{enumerate}
Arguing analogously to \eqref{Wick_application 2 bound C}, we rewrite \eqref{L e_0 t identity} as
\begin{multline}
\cal L_{\tau,e_0}^{\xi}(\f t) \;=\;
\norm{w}_{L^\infty}^m \int_{\Lambda^{\cal V_1}}  \dd \f y_1 \, \absb{\xi(\f y_1)}  
\prod_{\cal P \in \conn_c(\cal E)}\bigg( \int_{\Lambda^{\cal V(\cal P)}} \prod_{\va \in \cal V(\cal P)} \dd y_{\va} \, \prod_{e \in \cal P} \tilde{\cal J}_{\tau, e}(\f y_e,\f s) \bigg) 
\\
\times
\mathop{\prod_{\cal P \in \conn_o(\cal E)}}\bigg(\int_{\Lambda^{\cal V_2(\cal P)}} \prod_{\va \in \cal V_2(\cal P)} \dd y_{\va} \, \prod_{e \in \cal P} \tilde{\cal J}_{\tau, e} (\f y_e,\f s)\bigg)\,.
\label{L e_0 t identity 2}
\end{multline}
Let $\cal P \in \conn_c (\cal E)$. 
By the Cauchy-Schwarz inequality in $y_{\va}$ for $\va \in \cal V(\cal P)$ we obtain
\begin{equation}
\label{Closed path contribution to L e_0 t}
\int_{\Lambda^{\cal V(\cal P)}} \prod_{\va \in \cal V(\cal P)} \dd y_{\va} \, \prod_{e \in \cal P} \tilde{\cal J}_{\tau, e}(\f y_e,\f s) \;\leq\; \prod_{e \in \cal P} \|\tilde{\cal J}_{\tau,e}(\, \cdot \, ,\f s)\|_{\fra S^2(\fra H)}\,.
\end{equation}
Let $\cal P \in \conn_o (\cal E)$. 
We recall the notation from the proof of Lemma \ref{Open path}:
\begin{equation*}
 \cal P=\{e_1,\ldots,e_{q+1}\}\,,\qquad
\cal V_1(\cal P)=\{\vb_1,\vb_2\}\,, \qquad \cal V_2(\cal P)=\{\va_1,\ldots,\va_q\}\,.
\end{equation*}

By the Cauchy-Schwarz inequality in $y_{\va}$ for $\va \in \cal V_2(\cal P)$ we obtain
\begin{multline}
\label{Open path contribution to L e_0 t}
\bigg(\int_{\Lambda^{\cal V_2(\cal P)}} \prod_{\va \in \cal V_2(\cal P)} \dd y_{\va} \, \prod_{e \in \cal P} \tilde{\cal J}_{\tau, e} (\f y_e,\f s)\bigg)\;\leq\;
\\
\prod_{j=2}^{q} \|\tilde{\cal J}_{\tau,e_j}(\, \cdot\,,\f s)\|_{\fra S^2(\fra H)} \,\|\tilde{\cal J}_{\tau,e_1}((y_{b_1}, \cdot), \f s)\|_{L^2} \, \|\tilde{\cal J}_{\tau,e_1}((\cdot, y_{b_2}), \f s)\|_{L^2}\,.
\end{multline}
Substituting \eqref{Closed path contribution to L e_0 t} and \eqref{Open path contribution to L e_0 t} into \eqref{L e_0 t identity 2} and applying the Cauchy-Schwarz inequality in the $\f y_1$ variables 
we obtain
\begin{equation}
\label{L e_0 t product bound}
\cal L_{\tau,e_0}^{\xi}(\f t) \;\leq\;\norm{w}_{L^\infty}^m \,\norm{\xi}_{\fra S^2(\fra H)} \prod_{e \in \cal E} \|\tilde{\cal J}_{\tau,e}(\, \cdot\, ,\f s)\|_{\fra S^2(\fra H)} \,.
\end{equation}
Recalling \eqref{tilde J bound 1}, \eqref{tilde J bound 2}, and \eqref{tilde J bound 3}, the claim \eqref{L e_0 t} now follows from \eqref{L e_0 t product bound}.
\end{proof}

Next, for $m \in \N$ we define
\begin{equation} \label{f_m,a_m^xi classical}
a_{\infty,m}^\xi \;\deq\; \frac{(-1)^m}{m! \, 2^m} \sum_{\Pi \in \fra P} \cal I^\xi_{\Pi}\,.
\end{equation}

\begin{proposition}
\label{Convergence of the explicit terms}
For $a_{\tau,m}^{\xi}$ as defined in \eqref{Explicit term a}, we have
\begin{equation}
\label{Convergence of the explicit terms identity}
a_{\tau,m}^{\xi} \rightarrow a_{\infty,m}^\xi \quad \mbox{as}\quad \tau \rightarrow \infty\quad \mbox{uniformly in} \quad \xi \in \fra B_p\,.
\end{equation}
\end{proposition}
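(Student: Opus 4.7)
The plan is to apply dominated convergence to pass the limit $\tau\to\infty$ inside the time integral and the (finite) sum over pairings in the representation of $a_{\tau,m}^\xi$. Combining \eqref{def_a_tau} with Lemma \ref{Wick_application lemma}, we may write
\begin{equation*}
a_{\tau,m}^\xi \;=\; \frac{(-1)^m}{(1-2\eta)^m \, 2^m} \sum_{\Pi \in \fra P} \int_{\fra A} \dd \f t \, \cal I_{\tau,\Pi}^{\xi}(\f t)\,,
\end{equation*}
where we recall that the simplex $\fra A$ has volume $(1 - 2\eta)^m/m!$.

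First, by Lemma \ref{Convergence of explicit terms lemma}, for each fixed $\f t \in \fra A$ and each $\Pi \in \fra P$, we have $\cal I_{\tau,\Pi}^\xi(\f t) \to \cal I_\Pi^\xi$ as $\tau\to\infty$ uniformly in $\xi\in\fra B_p$. Second, by Proposition \ref{Product of subgraphs}, the integrand is uniformly bounded: $|\cal I_{\tau,\Pi}^\xi(\f t)| \leq C^{m+p}\|w\|_{L^\infty}^m$, uniformly in $\tau$, $\f t \in \fra A$, and $\xi\in\fra B_p$; passing to the limit $\tau\to\infty$ (pointwise in $\f t$), the limit $\cal I_\Pi^\xi$ obeys the same bound. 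Hence the function
\begin{equation*}
g_{\tau,\Pi}(\f t) \;\deq\; \sup_{\xi \in \fra B_p} \absb{\cal I_{\tau,\Pi}^\xi(\f t) - \cal I_\Pi^\xi}
\end{equation*}
satisfies $g_{\tau,\Pi}(\f t) \to 0$ pointwise in $\f t \in \fra A$ and $g_{\tau,\Pi}(\f t) \leq 2 C^{m+p}\|w\|_{L^\infty}^m$. Since $\fra A$ has finite volume, dominated convergence yields
\begin{equation*}
\lim_{\tau \to \infty} \sup_{\xi \in \fra B_p} \absbb{\int_{\fra A} \dd \f t \, \cal I_{\tau,\Pi}^{\xi}(\f t) - \frac{(1 - 2\eta)^m}{m!} \cal I_\Pi^\xi} \;=\; 0\,,
\end{equation*}
since $\cal I_\Pi^\xi$ does not depend on $\f t$ and $\abs{\fra A} = (1-2\eta)^m/m!$.

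Summing over the finite index set $\fra P$ and multiplying by the prefactor $(-1)^m/[(1-2\eta)^m 2^m]$ gives
\begin{equation*}
\lim_{\tau \to \infty} \sup_{\xi \in \fra B_p} \absb{a_{\tau,m}^\xi - a_{\infty,m}^\xi} \;=\; 0\,,
\end{equation*}
as claimed, where we used the definition \eqref{f_m,a_m^xi classical} of $a_{\infty,m}^\xi$. No step here is truly hard once Lemma \ref{Convergence of explicit terms lemma} and Proposition \ref{Product of subgraphs} are available; the only subtlety is the uniformity in $\xi$, which we handle by taking the $\sup$ over $\fra B_p$ inside the integrand before applying dominated convergence.
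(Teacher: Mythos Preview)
Your proof is correct and follows essentially the same approach as the paper: both use Proposition \ref{Product of subgraphs} for a uniform bound on the integrand, Lemma \ref{Convergence of explicit terms lemma} for pointwise (in $\f t$) convergence uniformly in $\xi$, and then dominated convergence over the finite-volume simplex. The only cosmetic difference is that the paper phrases things in terms of $f_{\tau,m}^\xi(\f t)=\sum_{\Pi\in\fra P}\cal I_{\tau,\Pi}^\xi(\f t)$ before integrating, whereas you first integrate each $\cal I_{\tau,\Pi}^\xi$ and then sum over $\Pi$; this is immaterial since $\fra P$ is finite.
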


\begin{proof}
From Lemma \ref{Wick_application lemma}, Proposition \ref{Product of subgraphs}, and Lemma \ref{lem:conv_G_tau} it follows that $f_{\tau,m}^{\xi}(\f t)$ defined in \eqref{Definition of f_{tau,m}^{xi}} is bounded uniformly in $\tau>0$, $\xi \in \fra B_p$, and $\f t \in \fra A$. Furthermore, from Lemmas \ref{Wick_application lemma} and \ref{Convergence of explicit terms lemma}, for all $\f t \in \fra A$ we have
\begin{equation*}
f_{\tau,m}^{\xi}(\f t) \rightarrow \sum_{\Pi \in \fra P} \cal I^\xi_{\Pi} \quad \mbox{as}\quad \tau \rightarrow \infty\quad \mbox{uniformly in} \quad \xi \in \fra B_p\,.
\end{equation*}
The claim now follows by using \eqref{def_a_tau}, \eqref{f_m,a_m^xi classical}, and the dominated convergence theorem.
\end{proof}

\subsection{The remainder term} \label{sec:remainder}
In this subsection we estimate the remainder term from \eqref{Remainder term R}.

\begin{proposition} \label{prop:remainder}
For $R^\xi_{\tau,M}(z)$ defined as in \eqref{Remainder term R} and $\re z \geq 0$ we have
\begin{equation}
\abs{R^\xi_{\tau,M}(z)} \;\leq\; \pbb{\frac{Cp}{\eta^2}}^p \pbb{\frac{C \norm{w}_{L^\infty} \abs{z}}{\eta^2}}^{M} \, M!\,.
\end{equation}
\end{proposition}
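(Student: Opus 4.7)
The plan is to apply H\"older's inequality for Schatten norms to isolate the full propagator $U_\tau(t_M) \deq \ee^{-t_M (H_{\tau,0} + z W_\tau/(1-2\eta))}$---which cannot be expanded by Wick's theorem---from the remaining factors, which can. As a preliminary step, I would show that $\|U_\tau(t)\|_{\fra S^\infty} \leq 1$ for all $t \geq 0$ whenever $\re z \geq 0$. This follows from the Trotter--Kato product formula together with $H_{\tau,0} \geq 0$ and $W_\tau \geq 0$; the latter is a consequence of the positive type assumption on $w$, since rewriting \eqref{Quantum W} in Fourier space gives $W_\tau = \tfrac{1}{2} \int \dd k \, \hat w(k)\, \hat A(k)^* \hat A(k)$ with $A(x) \deq \phi_\tau^*(x) \phi_\tau(x) - \varrho_\tau(x)$ self-adjoint.

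Using cyclicity of the trace and the H\"older bound $|\tr(X\, U_\tau(t_M))| \leq \|X\|_{\fra S^1}\,\|U_\tau(t_M)\|_{\fra S^\infty}$, it suffices to estimate $\|X\|_{\fra S^1}$, where $X$ contains the observable $\Theta_\tau(\xi)$, all $M$ factors of $W_\tau$, and all free propagators, with \emph{total} free propagator time $T \deq 1 - t_M \geq 2\eta$ (using $t_M \leq 1 - 2\eta$). I would factor $X = X_0 X_1 \cdots X_L$ by cyclicity and apply the generalized H\"older inequality $\|X\|_{\fra S^1} \leq \|X_0\|_{\fra S^{q_0}} \prod_{i=1}^L \|X_i\|_{\fra S^{2 k_i}}$ with $1/q_0 + \sum_i 1/(2 k_i) = 1$. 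Here $X_0 = \ee^{-\eta H_{\tau,0}/2} \Theta_\tau(\xi) \ee^{-\eta H_{\tau,0}/2}$ is handled separately using $\|\xi\|_{\fra S^2(\fra H^{(p)})} \leq 1$ and Hilbert--Schmidt bounds on $\ee^{-\eta H_{\tau,0}/2}$, which will produce the prefactor $(C p / \eta^2)^p$. Each remaining chunk $X_i$ consists of $m_i$ factors of $W_\tau$ interleaved with free propagators whose \emph{total} free propagator time is \emph{exactly} $1/(2 k_i)$ for some positive integer $k_i$. For such a chunk, $\|X_i\|_{\fra S^{2 k_i}}^{2 k_i} = \tr((X_i^* X_i)^{k_i})$ is a trace over $2 m_i k_i$ factors of $W_\tau$ separated by free propagators of total time exactly $1$. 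This has precisely the structure of the explicit coefficients in \eqref{Explicit term a}, so it can be bounded by Wick's theorem (Lemma \ref{Wick_application lemma}) together with the graph-theoretic estimates of Section \ref{sec:upper_bound}, in particular Proposition \ref{Product of subgraphs}.

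The main obstacle is the construction of a splitting algorithm that distributes the $M$ interactions $W_\tau$ together with the available free-propagator time $T \geq 2\eta$ into $L$ chunks whose times are \emph{exactly} of the form $1/(2 k_i)$, satisfying $1/q_0 + \sum_i 1/(2k_i) = 1$ and respecting the interleaved placement of the $W_\tau$'s. Heuristically, one takes $L \sim M$ chunks, each carrying $O(1)$ interactions and a time $\sim \eta/M$, so that $k_i \sim M/\eta$; the inverse time $k_i$ enters the Wick bound for each chunk as $(C m_i^2 k_i \|w\|_{L^\infty})^{m_i}$, producing one factor of $1/\eta$ per interaction per factor $k_i^{1/2}$ in the Schatten norm, i.e.\ a factor $\sim 1/\eta^2$ per $W_\tau$ after raising to power $1/(2k_i)$. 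Multiplying the chunks, applying Stirling's formula, and integrating over the simplex $\fra A$ of $\f t$-volume $(1-2\eta)^M/M!$ then yields the claimed bound $|R^\xi_{\tau,M}(z)| \leq (C p/\eta^2)^p (C \|w\|_{L^\infty}|z|/\eta^2)^M M!$. A crucial feature of the scheme is that the constraint $t_M \leq 1-2\eta$ guarantees a strictly positive free-time budget of at least $2\eta$; this is precisely why the hypothesis $\eta > 0$ is required in Theorem \ref{thm:main}.
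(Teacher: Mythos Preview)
Your overall structure---H\"older splitting, Trotter--Kato for the full propagator, and Wick for the chunks containing $W_\tau$---is indeed the paper's strategy. However, there is a genuine gap in the way you handle the full propagator.

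You estimate $U_\tau(t_M) = \ee^{-t_M(H_{\tau,0}+zW_\tau/(1-2\eta))}$ in \emph{operator norm} and then bound $\|X\|_{\fra S^1}$. The quantity that must actually be controlled is $|{\tr(XU_\tau(t_M))}|/\tr(\ee^{-H_{\tau,0}})$, i.e.\ $\|X\|_{\tilde{\fra S}^1}$ in the paper's rescaled Schatten norm, where $X$ carries total free-propagator time $1-t_M<1$. But the rescaled norms are extremely sensitive to a mismatch between the Schatten exponent and the total free time: already for $\|\ee^{-tH_{\tau,0}}\|_{\tilde{\fra S}^1}$ with $t<1$ one gets $\exp\bigl(c\,\tau^{\alpha}(1-t)\bigr)\to\infty$ (this is spelled out right after \eqref{g_est} in the paper). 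Your subsequent H\"older splitting $X=X_0X_1\cdots X_L$ with $1/q_0+\sum_i 1/(2k_i)=1$ cannot repair this: the Wick chunks force $T_i=1/(2k_i)$, so $1/q_0=1-\sum_iT_i = 1-(T-T_0)=t_M+T_0$. Hence $X_0$, which carries free time $T_0\sim\eta$, is estimated in $\tilde{\fra S}^{1/(t_M+T_0)}$ with $t_M+T_0>T_0$, and $\|X_0\|_{\tilde{\fra S}^{q_0}}$ blows up in $\tau$ for the same reason. The appeal to ``Hilbert--Schmidt bounds on $\ee^{-\eta H_{\tau,0}/2}$'' fails for exactly this reason: $\|\ee^{-\eta H_{\tau,0}/2}\|_{\fra S^2(\cal F)}^2=\tr(\ee^{-\eta H_{\tau,0}})$, which is not comparable to $\tr(\ee^{-H_{\tau,0}})$.

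The fix, which the paper implements in Lemma~\ref{lem:gest2}, is to estimate the full propagator not in $\fra S^\infty$ but in $\tilde{\fra S}^{1/t_M}$: Trotter--Kato combined with $W_\tau\geq 0$ gives $\|\ee^{-t(H_{\tau,0}+zW_\tau)}\|_{\tilde{\fra S}^{1/t}}\leq 1$. This returns the time $t_M$ to the H\"older budget, so that the three factors in \eqref{g_est} have Schatten exponents exactly matching their free times ($3\eta/2$, $1-2\eta-|\f u|$, and $\eta/2+|\f u|$, summing to $1$). With this correction in place, your chunk-by-chunk Wick argument is essentially the content of Lemma~\ref{lem:gest1}, and your bound on $X_0$ becomes Lemma~\ref{lem:gest3}.
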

The rest of this subsection is devoted to the proof of Proposition \ref{prop:remainder}. We begin by performing the change of variables $u_1=1-2\eta-t_1$ and $u_j=t_{j-1}-t_{j}$ for $2 \leq j \leq M$
in the definition \eqref{Remainder term R}, and abbreviate $\f u = (u_1, \dots, u_M)$ as well as $\abs{\f u} \deq \sum_{i = 1}^M u_i$. This yields
\begin{equation} \label{R_using_g}
R_{\tau,M}^{\xi}(z)\;=\;(-1)^M \frac{z^M}{(1-2\eta)^{M}} \int_{(0, 1 - 2 \eta)^M} \dd \f u\, \ind{\abs{\f u} < 1 - 2 \eta} \, g^\xi_{\tau,M}(z, \f u)\,,
\end{equation}
where we defined
\begin{multline}
g^\xi_{\tau,M}(z, \f u) \;\deq\; 
\tr \Big(\Theta_\tau(\xi)\,\ee^{-(\eta+u_1) H_{\tau,0}} \,W_\tau \,
\ee^{-u_2 H_{\tau,0}} \,W_{\tau} \, \ee^{-u_3 H_{\tau,0}} \cdots \\
\cdots \ee^{-u_M H_{\tau,0}} \, W_{\tau} \, \ee^{-(1 - 2 \eta - \abs{\f u}) (H_{\tau,0}+\frac{z}{1-2\eta}W_{\tau})} \ee^{-\eta H_{\tau,0}}
\Big) \Big/ \tr \big(\ee^{-H_{\tau,0}}\big) \,. 
\end{multline}
For the following we fix $\f u \in (0,1 - 2 \eta)^M$ satisfying $\abs{\f u} < 1 - 2 \eta$.

In order to estimate $g^\xi_{\tau,M}(z, \f u)$, we introduce the rescaled Schatten norm
\begin{equation*}
\norm{\cal A}_{\tilde {\fra S}^p(\cal F)} \;\deq\; \pbb{\frac{\tr \abs{\cal A}^p}{\tr (\ee^{- H_{\tau,0}})}}^{1/p} \;=\; \frac{\norm{\cal A}_{\fra S^p(\cal F)}}{\tr(\ee^{-H_{\tau,0}})^{1/p}}
\end{equation*}
for $p \in [1,\infty)$ and $\norm{\cal A}_{\tilde {\fra S}^\infty(\cal F)} \deq \norm{\cal A}_{{\fra S}^\infty(\cal F)}$ (the operator norm on $\cal F$). We note the trivial identity
\begin{equation} \label{estimated_by1}
\normb{\ee^{-t H_{\tau,0}}}_{\tilde {\fra S}^{1/t}(\cal F)} \;=\; 1
\end{equation}
for all $t > 0$.
The following result is an immediate consequence of H\"older's inequality for Schatten spaces; see \cite{Simon05}.
\begin{lemma}[H\"older's inequality]
\label{Holder's inequality in S^p}
Given $p_1,p_2 \in [1,\infty]$ and $\cal A_j \in {\fra S}^{p_j}(\cal F)$ we have
\begin{equation*}
\norm{\cal A_1 \cal A_2}_{\tilde {\fra S}^p(\cal F)} \;\leq\; \norm{\cal A_1}_{\tilde {\fra S}^{p_1}(\cal F)} \,\norm{\cal A_2}_{\tilde {\fra S}^{p_2}(\cal F)}\,,
\end{equation*}
where $\frac{1}{p}=\frac{1}{p_1}+\frac{1}{p_2}$.
\end{lemma}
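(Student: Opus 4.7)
The plan is to reduce this to the standard Hölder inequality for Schatten norms on a single Hilbert space (here $\cal F$), which for $1/p = 1/p_1 + 1/p_2$ with $p_1,p_2 \in [1,\infty]$ reads
\begin{equation*}
\norm{\cal A_1 \cal A_2}_{\fra S^p(\cal F)} \;\leq\; \norm{\cal A_1}_{\fra S^{p_1}(\cal F)} \, \norm{\cal A_2}_{\fra S^{p_2}(\cal F)}\,,
\end{equation*}
a classical result for which I would cite \cite{Simon05}. The only task is then to verify that this inequality survives the renormalization $\norm{\cdot}_{\fra S^p(\cal F)} \mapsto \norm{\cdot}_{\tilde{\fra S}^p(\cal F)}$ introduced just above the statement.

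First I would unpack the definition: for $p < \infty$, $\norm{\cal A}_{\tilde{\fra S}^p(\cal F)} = \tr(\ee^{-H_{\tau,0}})^{-1/p} \norm{\cal A}_{\fra S^p(\cal F)}$, and $\norm{\cal A}_{\tilde{\fra S}^\infty(\cal F)} = \norm{\cal A}_{\fra S^\infty(\cal F)}$ with the convention $1/\infty = 0$. Second, I would observe that the exponents satisfy
\begin{equation*}
\tr(\ee^{-H_{\tau,0}})^{-1/p} \;=\; \tr(\ee^{-H_{\tau,0}})^{-1/p_1} \cdot \tr(\ee^{-H_{\tau,0}})^{-1/p_2}\,,
\end{equation*}
since $1/p = 1/p_1 + 1/p_2$ and $\tr(\ee^{-H_{\tau,0}})$ is a strictly positive (finite) number. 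Multiplying the standard Hölder inequality by this factored prefactor yields exactly the stated inequality.

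The edge cases $p_1 = \infty$ or $p_2 = \infty$ require no separate treatment: the convention $1/\infty = 0$ makes the prefactor corresponding to the infinite exponent trivially $1$, which matches the definition $\norm{\cdot}_{\tilde{\fra S}^\infty(\cal F)} = \norm{\cdot}_{\fra S^\infty(\cal F)}$. There is no genuine obstacle here; the only thing to be careful about is the bookkeeping of the rescaling factor $\tr(\ee^{-H_{\tau,0}})$, and the fact that $H_{\tau,0}$ has compact resolvent with $\tr(\ee^{-H_{\tau,0}}) < \infty$ (guaranteed by the assumptions on $h$, in particular \eqref{tr_h_assump}) so that the rescaled norms are well defined and strictly positive constants can be cancelled.
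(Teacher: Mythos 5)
Your proposal is correct and is essentially the same argument the paper intends: the paper simply states the lemma is an immediate consequence of Hölder's inequality for Schatten spaces (citing \cite{Simon05}), and your verification that the rescaling factor $\tr(\ee^{-H_{\tau,0}})^{-1/p}$ factors multiplicatively via $1/p = 1/p_1 + 1/p_2$ (with the convention $1/\infty = 0$) is precisely the bookkeeping being left implicit.
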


By cyclicity of the trace and Lemma \ref{Holder's inequality in S^p}, we estimate
\begin{multline} \label{g_est}
\abs{g^\xi_{\tau,M}(z, \f u)} \;\leq\; \normB{\ee^{-\eta H_{\tau,0}} \Theta_\tau(\xi) \ee^{-\eta/2 \, H_{\tau,0}}}_{\tilde {\fra S}^{2/3\eta}(\cal F)} \, \normB{\ee^{-(1 - 2 \eta - \abs{\f u}) (H_{\tau,0}+\frac{z}{1-2\eta}W_{\tau})}}_{\tilde {\fra S}^{1/(1 - 2 \eta - \abs{\f u})}(\cal F)}
\\
\times \normB{\ee^{-(\eta/2+u_1) H_{\tau,0}} \,W_\tau \,
\ee^{-u_2 H_{\tau,0}} \,W_{\tau} \, \ee^{-u_3 H_{\tau,0}}
\cdots \ee^{-u_M H_{\tau,0}} \, W_{\tau}}_{\tilde {\fra S}^{1/(\eta/2 + \abs{\f u})}(\cal F)}\,.
\end{multline}
Here, and throughout the following estimates, it is crucial to work with norms inside of which the total time in the exponents add up \emph{precisely} to the inverse Schatten exponent. If this balance is broken even slightly, our estimates break down badly. This strong sensitivity to the total time in the exponents is already apparent in the simplest example $\norm{\ee^{-t H_{\tau,0}}}_{\tilde {\fra S}^{1}}$ if $t \neq 1$. Indeed, by a simple calculation analogous to the proof of Lemma \ref{Quantum Wick theorem} (i) one finds
\begin{equation*}
- \partial_t \log \tr (\ee^{-t H_{\tau,0}}) \big\vert_{t = 1} \;=\; \sum_{k \in \N} \psi(\lambda_{\tau,k} / \tau)\,, \qquad \psi(x)\;\deq\; \frac{x \ee^{-x}}{1 - \ee^{-x}}\,,
\end{equation*}
from which we deduce that $- \partial_t \log \tr (\ee^{-t H_{\tau,0}}) \vert_{t = 1} \sim \tau^{\alpha}$ with the typical behaviour $\lambda_{\tau,k} \sim k^{1/\alpha}$, where $\alpha < 2$. Hence, for $t$ close to $1$ we obtain $\norm{\ee^{-t H_{\tau,0}}}_{\tilde {\fra S}^{1}} \sim \exp(-\tau^{\alpha} (t - 1))$, which blows up rapidly for $t < 1$.

In the following three lemmas we estimate individually the three factors on the right-hand side of \eqref{g_est}.

\begin{lemma} \label{lem:gest1}
For $\f u \in (0,1 - 2 \eta)^M$ satisfying $\abs{\f u} < 1 - 2 \eta$ we have
\begin{equation} \label{g_3_est}
\normB{\ee^{-(\eta/2+u_1) H_{\tau,0}} \,W_\tau \,
\ee^{-u_2 H_{\tau,0}} \,W_{\tau} \, \ee^{-u_3 H_{\tau,0}}
\cdots \ee^{-u_M H_{\tau,0}} \, W_{\tau}}_{\tilde {\fra S}^{1/(\eta/2 + \abs{\f u})}(\cal F)} \;\leq\; \pb{C M^2 \eta^{-2} \norm{w}_{L^\infty}}^{M}\,.
\end{equation}
\end{lemma}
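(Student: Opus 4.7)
The plan is to apply H\"older's inequality for Schatten spaces (Lemma~\ref{Holder's inequality in S^p}) to split the product into $K$ sub-products $\tilde A_1, \dots, \tilde A_K$ interleaved with ``spacer'' propagators $\ee^{-\mu_k H_{\tau,0}}$, and then to bound each sub-product by relating it, via Wick's theorem, to the kind of trace estimated in Sections~\ref{sec:graphs}--\ref{sec:upper_bound}. Specifically, I would choose an integer $K$ and positive integers $j_1, \dots, j_K$ with $j_k \leq 1/\eta$, together with a grouping of the $M$ interactions into $K$ consecutive blocks of sizes $n_1, \dots, n_K$ (with $\sum_k n_k = M$), so that the operator inside the norm may be factored as
\begin{equation*}
\tilde A_1 \, \ee^{-\mu_1 H_{\tau,0}} \, \tilde A_2 \cdots \ee^{-\mu_{K-1} H_{\tau,0}} \, \tilde A_K,
\end{equation*}
where $\tilde A_k$ contains exactly $n_k$ copies of $W_\tau$ separated by free propagators whose total time equals $1/(2 j_k)$, and the spacers $\mu_k \geq 0$ absorb the leftover exponential time so that $\sum_k 1/(2 j_k) + \sum_k \mu_k = \eta/2 + \abs{\f u}$. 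H\"older's inequality combined with the identity $\normb{\ee^{-t H_{\tau,0}}}_{\tilde{\fra S}^{1/t}} = 1$ from \eqref{estimated_by1} then bounds the left-hand side by $\prod_{k = 1}^K \|\tilde A_k\|_{\tilde{\fra S}^{2 j_k}(\cal F)}$.

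For each factor, since $W_\tau = W_\tau^*$, we have $\|\tilde A_k\|_{\tilde{\fra S}^{2 j_k}}^{2 j_k} = \tr\pb{(\tilde A_k^* \tilde A_k)^{j_k}}/\tr\pb{\ee^{-H_{\tau,0}}}$, and by cyclicity this equals the trace of a cyclically ordered product of $2 n_k j_k$ interactions $W_\tau$ separated by free propagators whose total time is exactly $2 j_k \cdot 1/(2 j_k) = 1$. This is precisely the type of trace analysed in Sections~\ref{sec:time_evolved_G}--\ref{sec:upper_bound}, with no external observable (so $p = 0$) and with $m$ replaced by $2 n_k j_k$. Applying the analogue of Proposition~\ref{prop:est_f} yields $\tr\pb{(\tilde A_k^* \tilde A_k)^{j_k}}/\tr\pb{\ee^{-H_{\tau,0}}} \leq (C (n_k j_k)^2 \|w\|_{L^\infty})^{2 n_k j_k}$, hence $\|\tilde A_k\|_{\tilde{\fra S}^{2 j_k}} \leq (C (n_k j_k)^2 \|w\|_{L^\infty})^{n_k}$.

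Taking the product over $k$ and using the entropy-type bound $\sum_k n_k \log n_k \leq M \log M$ (valid since each $n_k \leq M$) together with $j_k \leq 1/\eta$ gives $\prod_k (n_k j_k)^{2 n_k} \leq (M/\eta)^{2 M}$, which produces the claimed estimate $(C M^2 \eta^{-2} \|w\|_{L^\infty})^M$. The main technical obstacle is the construction of a valid splitting: the cluster boundaries and spacer positions must be chosen consistently with the positions of the $W_\tau$'s (which are determined by $\f u$) so that each $\tilde A_k$ has total exponential time $\tilde t_k = 1/(2 j_k) \geq \eta/2$. The required flexibility comes from the ``reservoir'' $\eta/2$ of extra time built into the first exponential $\ee^{-(\eta/2 + u_1) H_{\tau,0}}$, and from the freedom to distribute leftover exponential time among the spacers; together these ensure that for any admissible $\f u$ one can always find suitable $K$, $(j_k)_k$, $(n_k)_k$, and $(\mu_k)_k$ satisfying all the constraints above.
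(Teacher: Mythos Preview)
Your overall strategy is correct and is essentially the paper's approach: split via H\"older's inequality into sub-blocks whose total exponential time is the reciprocal of an even integer, then estimate each block by raising it to that even power and invoking the quantum Wick machinery of Sections~\ref{sec:time_evolved_G}--\ref{sec:upper_bound} (i.e.\ Proposition~\ref{prop:est_f} with $p=0$). The arithmetic you give for combining the block estimates into the final bound is also right.

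The gap is precisely the part you flag as ``the main technical obstacle'': you assert but do not prove that a valid splitting exists for every admissible $\f u$. Saying that the flexibility comes from the $\eta/2$ reservoir is not enough, and in fact your description (first choose block sizes $n_k$, then find matching $j_k$) makes the existence argument harder than necessary. The paper's construction reverses the logic: it chooses the block \emph{times} first and lets the $n_k$'s be whatever they turn out to be. Concretely, given total time $T\geq\eta/2$ and interaction positions $\f s\in[0,T-\eta/2]^m$, set $\hat T\deq\max\{t\in(0,T):1/t\in 2\N\}$, so that $\hat T\geq T/2\geq\eta/4$. Split the operator at time $\hat T$ via H\"older, peeling off a block $K_{\hat T}(\hat{\f s})$ containing whichever $\hat m$ interactions happen to lie in $[0,\hat T]$; this block is handled exactly as you describe, giving $K_{\hat T}(\hat{\f s})\leq(CM^2\eta^{-2}\|w\|_{L^\infty})^{\hat m}$. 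The remaining piece has time $\tilde T=T-\hat T\leq T/2$, still satisfies $\tilde{\f s}\in[0,\tilde T-\eta/2]^{\tilde m}$ (the $\eta/2$ buffer at the top is preserved), and one iterates. Since $\tilde T$ halves at each step, the algorithm terminates after at most $\log_2(2/\eta)$ iterations, and $\sum_i\hat m^{(i)}=M$ gives the claimed bound. No spacers between blocks are needed; the only leftover spacer is the final $\tilde T$ once $\tilde m=0$.
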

\begin{proof}
We would like to use the tools developed in Sections \ref{sec:time_evolved_G}--\ref{sec:upper_bound} to estimate the left-hand side of \eqref{g_3_est}. However, looking at the definition of the Schatten norms, this is only possible if the exponent $1/(\eta/2 + \abs{\f u})$ is an even integer. Since this is in general false, we need to split up the left-hand side of \eqref{g_3_est} using H\"older's inequality in such a way that every resulting piece has an exponent in $2 \N$. This splitting has to be with care to avoid overuse of H\"older's inequality, which would lead to bounds that are not affordable. Indeed, by the quantum Wick theorem we find that $\norm{\ee^{-t H_{\tau,0}} W_\tau}_{\tilde {\fra S}^{1/t}(\cal F)} \sim t^{-2}$, so that a liberal application of H\"older's inequality with very short times is a bad idea.

We begin by defining a slightly more general form of the left-hand side of \eqref{g_3_est}.
Let $0 < T \leq 1$, $1 \leq m \leq M$, and $\f s = (s_1, \dots, s_m) \in [0, T]^m$ be ordered as $s_1 \leq s_2 \leq \cdots \leq s_m$.
Define
\begin{equation*}
K_{T}(\f s) \;\deq\; \normB{\ee^{-(T - s_m) H_{\tau,0}} W_\tau \ee^{- (s_m - s_{m - 1}) H_{\tau,0}} W_\tau \cdots \ee^{-(s_2 - s_1) H_{\tau,0}} W_\tau \ee^{-s_1 H_{\tau,0}}}_{\tilde {\fra S}^{1/T}(\cal F)}\,.
\end{equation*}
Moreover, for $m = 0$ define $K_{T}(\f s) \deq \norm{\ee^{-T H_{\tau,0}}}_{\tilde {\fra S}^{1/T}(\cal F)} = 1$, where we used \eqref{estimated_by1}.

For the following we fix $1 \leq m \leq M$ as well as $T, \f s$ satisfying
\begin{equation}
T \;\geq\; \eta/2 \,, \qquad \f s \;\in\; [0, T - \eta/2]^m\,.
\end{equation}
We define the splitting time $\hat T \deq \max \hb{t \in (0,T) \col 1/t \in 2 \N}$.
We now split $T = \tilde T + \hat T$ and $m = \tilde m + \hat m$, where $\hat m \deq \sum_{i = 1}^m \ind{s_i \leq \hat T}$. We define the new families $\tilde {\f s} \in [0, \tilde T - \eta/2]^{\tilde m}$ and $\hat {\f s} \in [0,\hat T]^{\hat m}$ by setting $\hat s_i \deq s_i$ for $1 \leq i \leq \hat m$, and $\tilde s_i \deq s_{\hat m + i} - \hat T$ for $1 \leq i \leq \tilde m$.

We now apply H\"older's inequality (Lemma \ref{Holder's inequality in S^p}) in the definition of $K_{T}(\f s)$, which yields
\begin{equation} \label{K_T_split}
K_{T}(\f s) \;\leq\; K_{\tilde T} (\tilde {\f s}) K_{\hat T} (\hat {\f s})\,.
\end{equation}
See Figure \ref{fig:graph5} for an illustration of this splitting. Explicitly, we write
\begin{multline*}
K_{T}(\f s) \;=\; \normB{\ee^{-(\tilde T - \tilde s_{\tilde m}) H_{\tau,0}} W_\tau \ee^{- (\tilde s_{\tilde m} - \tilde s_{\tilde m - 1}) H_{\tau,0}} W_\tau \cdots \ee^{-(\tilde s_2 - \tilde s_1) H_{\tau,0}} W_\tau \ee^{-\tilde s_1 H_{\tau,0}} 
\\
\times 
\ee^{-(\hat T - \hat s_{\hat m}) H_{\tau,0}} W_\tau \ee^{- (\hat s_{\hat m} - \hat s_{\hat m - 1}) H_{\tau,0}} W_\tau \cdots \ee^{-(\hat s_2 - \hat s_1) H_{\tau,0}} W_\tau \ee^{-\hat s_1 H_{\tau,0}}}_{\tilde {\fra S}^{1/T}(\cal F)}
\end{multline*}
and split the norm using H\"older's inequality (Lemma \ref{Holder's inequality in S^p}) at the line break.

\begin{figure}[!ht]
\begin{center}
{\small 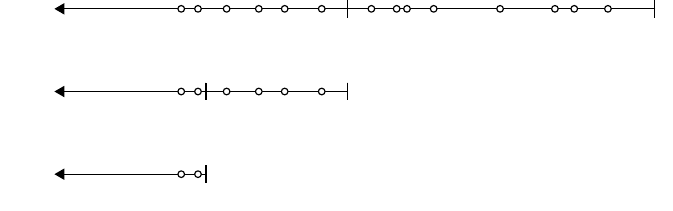}
\end{center}
\caption{An illustration of the splitting algorithm. Each expression $K_T(\f s)$ is represented graphically by a horizontal line. We draw the time axis oriented from right to left to match the ordering in the definition of $K_{T}(\f s)$. We draw a white dot, representing a factor $W_\tau$, at each $s_i$, $i = 1, \dots, m$. The interval $[0,T]$  is split into the two intervals $[0,\hat T]$ and $(\hat T, T]$, where $T = \hat T + \tilde T$. The interval $(\hat T, T]$ is the starting point of the next step of the algorithm, after a shift by $-\hat T$ which maps it to $(0,\tilde T]$. The term $K_{\tilde T}(\tilde {\f s})$ is represented graphically below its parent term $K_{T}(\f s)$, and we rename $\tilde T \to T$ and $\tilde {\f s} \to \f s$. At each step, there are $\hat m$ white dots in the first interval $[0,\hat T]$ and $\tilde m$ dots in the second interval $(\hat T, T]$. Note that, by construction, there are never dots in the interval $(T - \eta/2, T]$, which is indicated with a dotted line. The algorithm terminates when $\tilde m = 0$, i.e.\ there are no dots beyond $\hat T$.
\label{fig:graph5}}
\end{figure}

By definition of $\hat T$ we have $1/2\hat T \in \N$, so that
\begin{multline*}
K_{\hat T}(\hat {\f s})^{1/\hat T} \;=\; \frac{1}{\tr (\ee^{-H_{\tau,0}})} \tr \Biggl[ \biggl(
\pB{
\ee^{-\hat s_1 H_{\tau,0}} W_\tau \ee^{-(\hat s_2 - \hat s_1) H_{\tau,0}} \cdots W_\tau \ee^{-(\hat T - \hat s_{\hat m}) H_{\tau,0}} 
}
\\
\times \pB{\ee^{-(\hat T - \hat s_{\hat m}) H_{\tau,0}} W_\tau \cdots \ee^{-(\hat s_2 - \hat s_1) H_{\tau,0}} W_\tau \ee^{-\hat s_1 H_{\tau,0}}}\biggr)^{1/{2 \hat T}}\Biggr]\,.
\end{multline*}
As in Section \ref{sec:time_evolved_G}, using Corollary \ref{cor:Conjugation identity} we conclude that $K_{\hat T}(\hat {\f s})^{1/\hat T} = f^\emptyset_{\tau, \hat m / \hat T}(\hat {\f t})$
for some immaterial $\hat {\f t} \in [0,1)^{\hat m / \hat T}$, where $f$ was defined in \eqref{Definition of f_{tau,m}^{xi}}. Here $p = 0$, which we indicate by replacing $\xi$ with $\emptyset$. From Proposition \ref{prop:est_f} we therefore get
\begin{equation}
K_{\hat T}(\hat {\f s}) \;\leq\; \pb{C \hat m^2 \hat T^{-2} \norm{w}_{L^\infty}}^{\hat m} \;\leq\; \pb{C M^2 \eta^{-2} \norm{w}_{L^\infty}}^{\hat m}\,,
\end{equation}
where in the last step we used that $\hat T \geq T /2 \geq \eta / 4$ and $\hat m \leq m \leq M$.

Going back to \eqref{K_T_split}, we deduce the following implication:
\begin{equation} \label{splitting_iteration}
1 \leq m \leq M \,, \quad T \geq \eta/2\,, \quad \f s \in [0, T - \eta/2]^m \qquad \Longrightarrow \qquad K_T(\f s) \;\leq\; \pb{C M^2 \eta^{-2} \norm{w}_{L^\infty}}^{\hat m} K_{\tilde T}(\tilde {\f s})\,.
\end{equation}
We now iterate this procedure, starting from $m^{(0)} \deq M$, $T^{(0)} \deq \abs{\f u} + \eta/2$, and $s_k^{(0)} \deq \sum_{i = M - k + 2}^M u_i$ for $k = 1, \dots, M$. Hence, the left-hand side of \eqref{g_3_est} is equal to $K_{T^{(0)}}(\f s^{(0)})$. Note that $m^{(0)}$, $T^{(0)}$, and $\f s^{(0)}$ satisfy the assumptions of \eqref{splitting_iteration}.

We now iterate according to the following rules.
\begin{enumerate}
\item
If $m^{(i)} = 0$ then stop.
\item
If $m^{(i)} \geq 1$ then set $m^{(i + 1)} \deq \tilde m^{(i)}$, $T^{(i + 1)} \deq \tilde T^{(i)}$, and $\f s^{(i + 1)} \deq \tilde {\f s}^{(i)}$.
\end{enumerate}

Since $\hat T \geq T/2$, we deduce that $T^{(i+1)} \leq T^{(i)}/2$. Moreover, by induction we find that if $m^{(i)} \geq 1$ then we always have $T^{(i)} \geq s^{(i)}_{m^{(i)}} + \eta /2 \geq \eta/2$. We deduce that the algorithm stops after at most $\log_2 (2/\eta)$ steps. Denote by $i_*$ the time at which the algorithm stops. Then we have $\sum_{i  = 0}^{i_* - 1} \hat m^{(i)} = M$. We therefore conclude using \eqref{splitting_iteration} that
\begin{equation*}
K_{T^{(0)}}(\f s^{(0)}) \;\leq\; \pb{C M^2 \eta^{-2} \norm{w}_{L^\infty}}^{M}\,,
\end{equation*}
which is the claim.
\end{proof}

\begin{lemma} \label{lem:gest2}
For any $t \in (0,1)$ and any $z$ with $\re z \geq 0$ we have $\normb{\ee^{-t (H_{\tau,0}+z W_{\tau})}}_{\tilde {\fra S}^{1/t}(\cal F)} \leq 1$.
\end{lemma}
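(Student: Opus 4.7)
The plan is to reduce to the self-adjoint case by the Trotter--Kato product formula, absorbing the imaginary part of $z$ into a unitary factor that does not contribute to the Schatten norm. Write $z = a + ib$ with $a = \re z \geq 0$ and $b \in \R$, and set
\begin{equation*}
A \;\deq\; H_{\tau,0} + a W_\tau\,, \qquad B \;\deq\; b W_\tau\,,
\end{equation*}
so that $H_{\tau,0} + z W_\tau = A + iB$. Since $w$ is of positive type we have $W_\tau \geq 0$ (written as an integral of squares in Fourier space), hence $A$ is self-adjoint with $A \geq H_{\tau,0} \geq 0$, and $B$ is self-adjoint. The operator $A+iB$ is m-accretive since $\re \scalar{\psi}{(A+iB)\psi} = \scalar{\psi}{A\psi} \geq 0$, so $-(A+iB)$ generates a contraction semigroup on $\cal F$.

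The core step is to invoke the Trotter--Kato product formula
\begin{equation*}
\ee^{-t(A+iB)} \;=\; \lim_{n \to \infty} \pb{\ee^{-tA/n}\,\ee^{-itB/n}}^n
\end{equation*}
in the strong operator topology, which applies because $A$ is self-adjoint nonnegative and $iB$ is skew-adjoint. Each factor $\ee^{-itB/n}$ is unitary, hence has $\fra S^\infty$-norm equal to $1$. I then apply the generalized Hölder inequality for Schatten norms to the resulting product of $2n$ operators, assigning the exponent $n/t$ to each of the $n$ factors $\ee^{-tA/n}$ and $\infty$ to each of the $n$ unitary factors; the exponents sum correctly since $n\cdot(t/n) + n\cdot 0 = t$. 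This yields
\begin{equation*}
\normb{\pb{\ee^{-tA/n}\,\ee^{-itB/n}}^n}_{\fra S^{1/t}(\cal F)} \;\leq\; \normb{\ee^{-tA/n}}_{\fra S^{n/t}(\cal F)}^n \;=\; \tr\pb{\ee^{-A}}^{t}\,.
\end{equation*}

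Since $A \geq H_{\tau,0}$ as self-adjoint operators, the min-max principle gives $\tr(\ee^{-A}) \leq \tr(\ee^{-H_{\tau,0}})$, so each finite-$n$ approximant is bounded in $\fra S^{1/t}(\cal F)$ by $\tr(\ee^{-H_{\tau,0}})^t$. Passing to the limit $n \to \infty$ and using lower semicontinuity of the $\fra S^{1/t}$-norm (valid since $1/t > 1$ allows us to invoke weak-$*$ compactness on bounded sets in $\fra S^{1/t}(\cal F) = (\fra S^{1/(1-t)}(\cal F))^*$), I obtain
\begin{equation*}
\normb{\ee^{-t(H_{\tau,0} + z W_\tau)}}_{\fra S^{1/t}(\cal F)} \;\leq\; \tr\pb{\ee^{-H_{\tau,0}}}^{t}\,,
\end{equation*}
which upon dividing by $\tr(\ee^{-H_{\tau,0}})^t$ is precisely the claim in the rescaled norm $\tilde {\fra S}^{1/t}$.

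The main (and really only) obstacle is rigorously justifying the Trotter--Kato formula for these unbounded operators on Fock space — $W_\tau$ being quartic in creation and annihilation operators means one must check essential self-adjointness of $A$ and $B$ and verify the hypotheses of the Trotter--Kato theorem — but this is standard in the bosonic second quantization setting and is assumed throughout the paper. The rest of the argument is a clean interplay between Hölder for Schatten classes and the unitarity of the imaginary-phase factor, making the real part $a \geq 0$ and the positivity $W_\tau \geq 0$ the only two structural inputs needed.
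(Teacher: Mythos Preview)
Your proof is correct and follows essentially the same strategy as the paper: Trotter--Kato product formula combined with H\"older's inequality for Schatten norms, using positivity of $W_\tau$. The only difference is the splitting: the paper factors as $\ee^{-tH_{\tau,0}/n}\,\ee^{-tzW_\tau/n}$ and uses directly that $\ee^{-tzW_\tau/n}$ is a contraction (since $\re z \geq 0$ and $W_\tau \geq 0$) together with $\norm{\ee^{-tH_{\tau,0}/n}}_{\tilde{\fra S}^{n/t}} = 1$, whereas you absorb $\re z \cdot W_\tau$ into the self-adjoint part and then invoke the monotonicity $\tr(\ee^{-A}) \leq \tr(\ee^{-H_{\tau,0}})$ as an extra step; both routes arrive at the same bound with the same inputs.
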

\begin{proof}
We use the Trotter-Kato product formula, which states that
\begin{equation} \label{trotter}
\ee^{-t (H_{\tau,0}+z W_{\tau})} \;=\; \lim_{n \to \infty} \pb{\ee^{-t H_{\tau,0}/n} \ee^{-tz W_\tau/n}}^n
\end{equation}
in $\norm{\cdot}_{\tilde {\fra S}^1(\cal F)}$ and hence also in $\norm{\cdot}_{\tilde {\fra S}^{1/t}(\cal F)}$. (See e.g.\ the presentation of \cite{NZ90} whose proof may be adapted to the case $\im z \neq 0$.)
Using Lemma \ref{Holder's inequality in S^p} we find
\begin{multline*}
\normB{\pb{\ee^{-t H_{\tau,0}/n} \ee^{-tz W_\tau/n}}^n}_{\tilde {\fra S}^{1/t}(\cal F)} \;\leq\;
\normB{\ee^{-t H_{\tau,0}/n} \ee^{-tz W_\tau/n}}_{\tilde {\fra S}^{n/t}(\cal F)}^n
\\
\leq\; \normb{\ee^{-t H_{\tau,0}/n}}_{\tilde {\fra S}^{n/t}(\cal F)}^n \, \normb{\ee^{-tz W_\tau/n}}_{\tilde {\fra S}^{\infty}(\cal F)}^n
\;\leq\; \normb{\ee^{-t H_{\tau,0}/n}}_{\tilde {\fra S}^{n/t}(\cal F)}^n \;=\; 1\,,
\end{multline*}
where in the third step we used Lemma \ref{lem_W_tau_pos} below, and in the last step we used \eqref{estimated_by1}.
The claim now follows from \eqref{trotter}.
\end{proof}

\begin{lemma} \label{lem_W_tau_pos}
The operator $W_\tau$ is positive.
\end{lemma}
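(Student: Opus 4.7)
The plan is to exploit the positive-type assumption on $w$ to write $W_\tau$ as a nonnegative superposition of manifestly nonnegative operators of the form $B^\ast B$.

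First, I would introduce the self-adjoint operator-valued distribution
\begin{equation*}
A_\tau(x) \;\deq\; \phi_\tau^\ast(x)\phi_\tau(x) - \varrho_\tau(x)\,,
\end{equation*}
noting that it is symmetric in the sense of quadratic forms since $\varrho_\tau(x)$ is real-valued (cf.\ \eqref{def_varrho_tau}, \eqref{quantum_G}) and $\phi_\tau^\ast(x)\phi_\tau(x)$ is formally self-adjoint. In this notation, \eqref{Quantum W} reads
\begin{equation*}
W_\tau \;=\; \frac{1}{2} \int \dd x\, \dd y \, A_\tau(x)\, w(x - y)\, A_\tau(y)\,.
\end{equation*}

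Next, I would use that $w$ is of positive type to write it, in both the cases $\Lambda = \bb T^d$ and $\Lambda = \R^d$, as
\begin{equation*}
w(x - y) \;=\; \int \dd \sigma(k)\, \ol{e_k(x)}\, e_k(y)\,,
\end{equation*}
where $\dd \sigma$ is a nonnegative finite measure (the Fourier transform or Fourier series of $w$) and $e_k$ denotes the plane wave at frequency $k$. Substituting this into the expression for $W_\tau$ and interchanging the order of integration (which is justified when $W_\tau$ is viewed as a quadratic form on the dense domain described after \eqref{H_tau_1d}), I obtain
\begin{equation*}
W_\tau \;=\; \frac{1}{2} \int \dd \sigma(k) \pBB{\int \dd x\, A_\tau(x)\, \ol{e_k(x)}} \pBB{\int \dd y\, A_\tau(y)\, e_k(y)} \;=\; \frac{1}{2} \int \dd \sigma(k)\, B_\tau(k)^\ast B_\tau(k)\,,
\end{equation*}
where $B_\tau(k) \deq \int \dd y\, A_\tau(y)\, e_k(y)$ and I used the self-adjointness of $A_\tau$ to identify the first factor with $B_\tau(k)^\ast$.

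Since each operator $B_\tau(k)^\ast B_\tau(k)$ is manifestly nonnegative as a quadratic form and $\dd \sigma \geq 0$, the conclusion $W_\tau \geq 0$ follows. The only step requiring a bit of care is the interchange of integration, but this is routine given that all expressions are well-defined as quadratic forms on the dense domain of Fock space vectors whose components are Schwartz functions vanishing for large particle number; one can also argue by approximating $w$ by smooth functions with compactly supported Fourier transform, for which the manipulations are unambiguous, and passing to the limit.
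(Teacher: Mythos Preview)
Your proof is correct and is essentially identical to the paper's: both write $w(x-y)=\int \hat w(\dd k)\,\ee^{\ii k\cdot x}\,\ee^{-\ii k\cdot y}$ using the nonnegativity of $\hat w$, and then express $W_\tau$ as $\tfrac12\int \hat w(\dd k)\, B_\tau(k)^* B_\tau(k)$ with $B_\tau(k)=\int \dd y\, e_k(y)\,(\phi_\tau^*(y)\phi_\tau(y)-\varrho_\tau(y))$. The paper states this in one line without the auxiliary notation $A_\tau$, but the argument is the same.
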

\begin{proof}
Using the Fourier representation $w(x) = \int \hat w(\dd k) \, \ee^{\ii k \cdot x}$, we find
\begin{equation*}
W_\tau \;=\; \frac{1}{2} \int \hat w(\dd k)\, \pbb{\int \dd x \, \ee^{\ii k \cdot x} \pb{\phi^*_\tau(x) \phi_\tau(x) - \varrho_\tau(x)}}
\pbb{\int \dd y \, \ee^{\ii k \cdot y} \pb{\phi^*_\tau(y) \phi_\tau(y) - \varrho_\tau(y)}}^*\,,
\end{equation*}
which is manifestly a positive operator since $\hat w$ is nonnegative by assumption.
\end{proof}

\begin{lemma} \label{lem:gest3}
For $\xi \in \fra B_p$ we have $\normb{\ee^{-\eta H_{\tau,0}} \Theta_\tau(\xi) \ee^{-\eta/2 \,  H_{\tau,0}}}_{\tilde {\fra S}^{2/3\eta}(\cal F)} \leq  (C p \eta^{-2})^p$.
\end{lemma}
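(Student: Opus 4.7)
The plan is to reduce the non-integer Schatten exponent $2/(3\eta)$ to an even integer by applying H\"older's inequality, and then to compute the resulting norm via the quantum Wick theorem, using the graphical machinery developed in Sections \ref{sec:time_evolved_G}--\ref{sec:upper_bound}. Choose an integer $k$ with $1/(2\eta) < k \leq 1/(2\eta) + 1$, so that $k \leq C/\eta$ and both $\eta - 1/(4k)$ and $\eta/2 - 1/(4k)$ are strictly positive. Decompose
\begin{equation*}
\ee^{-\eta H_{\tau,0}} \Theta_\tau(\xi) \ee^{-\eta/2 \, H_{\tau,0}} \;=\; \ee^{-(\eta - 1/(4k)) H_{\tau,0}} \, B \, \ee^{-(\eta/2 - 1/(4k)) H_{\tau,0}}\,, \qquad B \;\deq\; \ee^{-H_{\tau,0}/(4k)} \Theta_\tau(\xi) \ee^{-H_{\tau,0}/(4k)}\,,
\end{equation*}
and apply Lemma \ref{Holder's inequality in S^p} with the three H\"older-conjugate exponents $1/(\eta - 1/(4k))$, $2k$, and $1/(\eta/2 - 1/(4k))$, whose reciprocals sum to exactly $3\eta/2$. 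By \eqref{estimated_by1} the outer two $\tilde{\fra S}$-norms equal $1$, reducing the claim to the estimate $\norm{B}_{\tilde{\fra S}^{2k}(\cal F)} \leq (C p / \eta^2)^p$.

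Next, expand
\begin{equation*}
\norm{B}_{\tilde{\fra S}^{2k}(\cal F)}^{2k} \;=\; \frac{\tr((B^*B)^k)}{\tr(\ee^{-H_{\tau,0}})} \;=\; \frac{\tr \pb{\pb{\ee^{-H_{\tau,0}/(2k)} \Theta_\tau(\xi)^* \ee^{-H_{\tau,0}/(2k)} \Theta_\tau(\xi)}^k}}{\tr(\ee^{-H_{\tau,0}})}
\end{equation*}
using cyclicity of the trace. Now the propagator exponents sum to exactly $1$, so the right-hand side has the same structure as the traces analysed in Sections \ref{sec:time_evolved_G}--\ref{sec:upper_bound}. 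Apply Lemma \ref{Pullthrough formula lemma} to absorb the propagators flanking each $\Theta_\tau(\xi)$ into time-evolved creation/annihilation operators at times that are integer multiples of $1/(2k)$. Then apply the quantum Wick theorem (Lemma \ref{Quantum Wick theorem}) to rewrite the result as a sum over the $(2pk)!$ pairings of the $4pk$ time-evolved operators, where each pair matches a $\phi_\tau^*$ with a $\phi_\tau$. Each pairing contributes an integral of a product of $2pk$ two-point kernels $G_{\tau, j/(2k)}$ (with $j \in \{0, 1, \dots, 2k-1\}$), possibly shifted by $S_{\tau, j/(2k)}/\tau$, against a tensor product of $2k$ copies of $\xi$ or $\xi^*$.

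Estimate each pairing's contribution by the path decomposition of Section \ref{sec:upper_bound}: the associated graph, viewed as a multigraph on the $2k$ $\Theta$-vertices, has all vertices of even degree $2p$ and thus decomposes into edge-disjoint cycles. Along each cycle, consecutive propagator kernels combine via the semigroup property into operators whose total time is nonnegative, and Lemma \ref{cor:conv_G_tau corollary} together with the choice $k > 1/(2\eta)$ ensures that $\norm{G_{\tau, j/(2k)}}_{\fra S^2(\fra H)}$ is bounded by a $\tau$-independent constant $C$ for all $0 \leq j \leq 2k-1$. Iterating Cauchy--Schwarz over the $2p$ arguments of each $\xi$ and using $\norm{\xi}_{\fra S^2(\fra H^{(p)})} \leq 1$ yields a per-pairing bound of the form $C^{pk}$, uniformly in the pairing and the graph topology. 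Summing over pairings gives
\begin{equation*}
\norm{B}_{\tilde{\fra S}^{2k}(\cal F)}^{2k} \;\leq\; C^{pk} \, (2pk)!\,.
\end{equation*}
Taking the $2k$-th root, using the elementary estimate $((2pk)!)^{1/(2k)} \leq (2pk)^p$ and $k \leq C/\eta$ with $\eta \leq 1$, yields $\norm{B}_{\tilde{\fra S}^{2k}(\cal F)} \leq (Cp/\eta)^p \leq (Cp/\eta^2)^p$, completing the argument. The main obstacle is the uniform per-pairing bound $C^{pk}$: one must arrange the Cauchy--Schwarz applications so as to exploit only the Hilbert--Schmidt character of $\xi$ (as no $\fra S^q$-norm of $\xi$ with $q > 2$ is controlled by our assumption), while simultaneously combining the propagator kernels along each cycle so that their combined times---rather than individual short times of size $1/(2k) \sim \eta$---govern the Schatten estimates.
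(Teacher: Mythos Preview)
Your H\"older reduction to an even-integer Schatten exponent and the subsequent Wick expansion match the paper's approach. The gap is the per-pairing bound $C^{pk}$. The two-point kernels do not all carry nonnegative times: case~(i) of Lemma~\ref{Correlation functions} produces $G_{\tau,-(t_\alpha-t_\beta)}$ with time argument as negative as $-1+1/(2k)$, and for such $t$ Lemma~\ref{cor:conv_G_tau corollary} only gives $\norm{G_{\tau,t}}_{\fra S^2}\le C/(1+t)\le Ck$, not $O(1)$. Your proposed fix, combining propagators along cycles via the semigroup property, does not apply here: unlike in Lemmas~\ref{Closed path}--\ref{Open path}, where the $\delta$-structure of $W_\tau$ identifies the entry and exit variables at each interior vertex of a path, consecutive edges of your cycle enter and leave a $\Theta$-copy through \emph{different} slots of the kernel $\xi(x_1,\dots,x_p;y_1,\dots,y_p)$, with no identification of spatial variables, so the kernels cannot be composed as operators and the times cannot telescope. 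For $p>1$ the cycles moreover all pass through the same $\xi$-factors and do not decouple into a product.

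The paper instead accepts the loss $\norm{Q_{\alpha\beta}}_{\fra S^2}\le Cq$ (with $q\sim1/\eta$) on each of the $pq$ two-point functions, applies Cauchy--Schwarz globally to get $\abs{\cal I^\xi_{\tau,\Pi}}\le(Cq)^{pq}$, sums over the $(pq)!$ pairings, and takes the $q$-th root to obtain $(Cpq^2)^p=(Cp\eta^{-2})^p$. Your concluding paragraph locates precisely this tension but does not resolve it; the exponent $\eta^{-2}$, rather than $\eta^{-1}$, is what actually comes out.
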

\begin{proof}
As in the proof of Lemma \ref{lem:gest1}, we use the quantum Wick theorem, which requires an even Schatten exponent. To that end, let $\hat \eta \deq \max \hb{s \in (0,\eta) \col 1/s \in 2 \N}$.
Then by Lemma \ref{Holder's inequality in S^p} and \eqref{estimated_by1} we have
\begin{equation} \label{xi_remaind_est0}
\normB{\ee^{-\eta H_{\tau,0}} \Theta_\tau(\xi) \ee^{-\eta/2 \,  H_{\tau,0}}}_{\tilde {\fra S}^{2/3\eta}(\cal F)}
\;\leq\; \normB{\ee^{-\hat \eta/2\, H_{\tau,0}} \Theta_\tau(\xi) \ee^{-\hat \eta/2\, H_{\tau,0}}}_{\tilde {\fra S}^{1/\hat \eta}(\cal F)}\,.
\end{equation}
Now we write
\begin{equation} \label{xi_remainder_est}
 \normB{\ee^{-\hat \eta/2\, H_{\tau,0}} \Theta_\tau(\xi) \ee^{-\hat \eta/2\, H_{\tau,0}}}_{\tilde {\fra S}^{1/\hat \eta}(\cal F)}^{1/\hat \eta} \;=\; \frac{1}{\tr (\ee^{- H_{\tau,0}})} \tr \qbb{\pB{\ee^{-H_{\tau,0}/q} \, \Theta_\tau(\xi)}^{q}}\,,
\end{equation}
where we abbreviated $q \deq 1 / \hat \eta$ and used the cyclicity of the trace. Since $\hat \eta \geq \eta / 2$, we have $q \leq 2 / \eta$.
The estimate of the right-hand side of \eqref{xi_remainder_est} is performed using the quantum Wick theorem, analogously to the argument from Sections \ref{sec:time_evolved_G}--\ref{sec:upper_bound}. Here we only explain the differences, and refer to Sections \ref{sec:time_evolved_G}--\ref{sec:upper_bound} for the full details and notations.

Analogously to Definition \ref{cal X}, we encode the creation and annihilation operators in the $q$ operators $\Theta_\tau(\xi)$ are using an abstract vertex space $\cal X \deq \{1, \dots, q\} \times \{1, \dots, p\} \times \{-1, +1\}$. We again denote the elements of $\cal X$ with triples $\alpha = (i,r,\sign)$, where $i = 1, \dots, q$ indexes the $q$ operators $\Theta_\tau(\xi)$, $r = 1, \dots, p$ the $p$ the creation or annihilation operators in the definition of $\Theta_\tau(\xi)$, and $\sign = \pm 1$ means a creation / annihilation operator. We introduce a total order on $\cal X$ that coincides with the ordering of the creation and annihilation operators: the triples $(i,r,\sign)$ are lexicographically ordered according to the string $i \sign r$, where the sets $\{1, \dots, q\}$ and $\{1, \dots, p\}$ carry the natural order and in $\{-1,+1\}$ we use the order $+1 < -1$ for the letter $\sign$. See Figure \ref{fig:graph6} for a graphical illustration of $\cal X$.

\begin{figure}[!ht]
\begin{center}
{\scriptsize 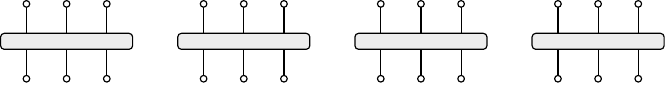}
\end{center}
\caption{
The vertex set $\cal X$ with $q = 4$ and $p = 3$, represented with white dots. In analogy to Figures \ref{fig:graph1} and \ref{fig:graph2}, we also draw grey rectangles to indicate the copy  $i = 1, \dots, q$ of $\xi$ that each vertex $(i,r,\sign)$ belongs to. For $\sign = +1$ and $\sign = -1$ we draw $(i,r,\sign)$ above and below the grey rectangle respectively. We indicate the location of the vertex $(3,2,-1)$ as an example.
\label{fig:graph6}}
\end{figure}

The set of pairings $\fra P$ is simply given by the set of bipartite pairings $\Pi$ on $\cal X$, whereby each block $(\alpha, \beta) \in \Pi$ satisfies $\sign_\alpha \sign_\beta = -1$. Then by the quantum Wick theorem (Lemma \ref{Quantum Wick theorem}) we have
\begin{equation} \label{tr_xi_Wick}
\frac{1}{\tr (\ee^{- H_{\tau,0}})} \tr \qbb{\pB{\ee^{-H_{\tau,0}/q} \, \Theta_\tau(\xi)}^{q}} \;=\; \sum_{\Pi \in \fra P} \cal I^\xi_{\tau, \Pi}\,,
\end{equation}
where we defined
\begin{equation*}
\cal I^\xi_{\tau, \Pi} \;\deq\; \int_{\Lambda^{\cal X}} \dd \f x \, (\xi^{\otimes q})(\f x) \prod_{(\alpha,\beta) \in \Pi} Q_{\alpha\beta}(x_\alpha; x_\beta)\,, \qquad Q_{\alpha\beta}(x_\alpha; x_\beta) \;\deq\; \rho_{\tau,0} \big(\cal B_{\alpha} (\f x, \f t) \, \cal B_{\beta} (\f x, \f t)
\big)\,.
\end{equation*}
Here we introduced the time labels $t_\alpha \equiv t_i \deq 1 - i/q$ for $\alpha = (i,r,\sign) \in \cal X$.

Next, we use Lemma \ref{Correlation functions} to determine the operator kernels $Q_{\alpha\beta}(x_\alpha; x_\beta)$. We consider two cases for $\alpha < \beta$.
\begin{enumerate}
\item
If $\sign_\alpha = +1$ and $\sign_\beta = -1$ then $Q_{\alpha\beta}(x_\alpha; x_\beta) = G_{\tau, -t}(x_\alpha; x_\beta)$
for some $t \leq 1 - 1/q$.
\item
If $\sign_\alpha = -1$ and $\sign_\beta = +1$ then $Q_{\alpha\beta}(x_\alpha; x_\beta) = G_{\tau,  t }(x_{\alpha}; x_{\beta}) + \frac{1}{\tau} S_{\tau,  t}(x_{\alpha}; x_{\beta})$
for some $t \geq 1/q$.
\end{enumerate}
Indeed, in case (i) we use Lemma \ref{Correlation functions} (i) and the fact that $t_i \leq 1 - 1/q$ for all $i$. In case (ii) we use that, by definition of the ordering on $\cal X$, this case can only happen if $i_\alpha < i_\beta$, which implies that $t \deq t_\alpha - t_\beta \geq 1/q$.

Either way, we deduce from \eqref{h_tau_conv} and Lemmas \ref{cor:conv_G_tau corollary} and \ref{Heat kernel estimate} that $\norm{Q_{\alpha \beta}}_{\fra S^2(\fra H)} \leq C q$.
Plugging this estimate into the definition of $\cal I^\xi_{\tau, \Pi}$ and using \eqref{S2L2} and Cauchy-Schwarz, we obtain $\abs{\cal I^\xi_{\tau, \Pi}} \leq (Cq)^{pq}$. Recalling \eqref{tr_xi_Wick} and noting that $\abs{\fra P} = (pq)!$, we conclude that
\begin{equation*}
\frac{1}{\tr (\ee^{- H_{\tau,0}})} \tr \qbb{\pB{\ee^{-H_{\tau,0}/q} \, \Theta_\tau(\xi)}^{q}} \;\leq\; (C pq^2)^{pq}\,.
\end{equation*}
Recalling \eqref{xi_remainder_est}, we find that the left-hand side of \eqref{xi_remaind_est0} is bounded by $(Cpq^2)^p$, from which the claim follows using $q \leq 2 / \eta$.
\end{proof}

Unleashing Lemmas \ref{lem:gest1}--\ref{lem:gest3} on \eqref{g_est}, we obtain the following result.

\begin{proposition} \label{prop:g_est}
For $\f u \in (0,1 - 2 \eta)^M$ satisfying $\abs{u} < 1 - 2 \eta$ and $\xi \in \fra B_p$ we have
\begin{equation*}
\abs{g^\xi_{\tau,M}(z, \f u)} \;\leq\; \pb{C M^2 \eta^{-2} \norm{w}_{L^\infty}}^{M} \, (C p \eta^{-2})^p\,.
\end{equation*}
\end{proposition}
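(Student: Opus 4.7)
The plan is to read the proposition off directly from the bound \eqref{g_est} by applying the three preceding lemmas to the three factors on its right-hand side. No further conceptual input is needed; the work has already been done in establishing Lemmas \ref{lem:gest1}--\ref{lem:gest3} (this is precisely why the factors in \eqref{g_est} were split the way they were).

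Concretely, I would proceed as follows. First, apply Lemma \ref{lem:gest3} to the first factor on the right-hand side of \eqref{g_est}, which produces the bound $(Cp\eta^{-2})^p$ on $\normb{\ee^{-\eta H_{\tau,0}} \Theta_\tau(\xi) \ee^{-\eta/2 H_{\tau,0}}}_{\tilde{\fra S}^{2/3\eta}(\cal F)}$. Next, check that the second factor matches the hypothesis of Lemma \ref{lem:gest2}: setting $t \deq 1 - 2\eta - \abs{\f u}$, the assumption $\abs{\f u} < 1 - 2\eta$ places $t$ in $(0, 1 - 2\eta) \subset (0,1)$, and the effective coupling constant $z/(1-2\eta)$ still has nonnegative real part whenever $\re z \geq 0$, so Lemma \ref{lem:gest2} bounds this factor by $1$. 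Finally, apply Lemma \ref{lem:gest1} to the third factor to obtain $(CM^2 \eta^{-2} \norm{w}_{L^\infty})^M$; here the requisite conditions $\f u \in (0, 1-2\eta)^M$ and $\abs{\f u} < 1-2\eta$ are exactly the standing hypotheses.

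Multiplying the three bounds yields the claim, after absorbing the single factor of $1$ from Lemma \ref{lem:gest2}. The only subtlety to verify is the compatibility of the parameters in Lemma \ref{lem:gest2}, namely that $t \in (0,1)$ and that $\re(z/(1-2\eta)) \geq 0$; both are immediate from the hypotheses of the proposition.
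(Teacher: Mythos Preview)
Your proposal is correct and matches the paper's proof exactly: the paper simply states ``Unleashing Lemmas \ref{lem:gest1}--\ref{lem:gest3} on \eqref{g_est}, we obtain the following result,'' which is precisely the application of the three lemmas to the three factors of \eqref{g_est} that you spell out.
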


Proposition \ref{prop:remainder} follows immediately from Proposition \ref{prop:g_est} and \eqref{R_using_g}.

We conclude this section by proving that $A_\tau^\xi$ is analytic in the right half-plane.

\begin{lemma} \label{lem:A_tau_anal}
The function $A_\tau^\xi$ is analytic in $\{z \col \re z > 0\}$.
\end{lemma}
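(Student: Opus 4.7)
The plan is to reduce, via the Duhamel expansion of Lemma \ref{lem_Dyson_exp}, to analyticity of the remainder $R_{\tau,M}^\xi(z)$ for some fixed $M$. The polynomial part $\sum_{m=0}^{M-1} a_{\tau,m}^\xi z^m$ is entire, so this reduction suffices. Using the parametrization \eqref{R_using_g}, we have $R_{\tau,M}^\xi(z) = (-z)^M(1-2\eta)^{-M} \int \dd \f u \, \ind{|\f u|<1-2\eta} \, g_{\tau,M}^\xi(z,\f u)$, so analyticity will follow from pointwise (in $\f u$) analyticity of $z \mapsto g_{\tau,M}^\xi(z, \f u)$ together with a locally uniform dominated bound that lets us interchange the $\f u$-integration with a contour integral in $z$.

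First I would show that for each fixed admissible $\f u$, the map $z \mapsto g_{\tau,M}^\xi(z,\f u)$ is analytic on $\{\re z > 0\}$. The only $z$-dependent factor in the integrand is the full propagator $F_t(z) \deq \ee^{-t(H_{\tau,0}+zW_\tau/(1-2\eta))}$ with $t = 1-2\eta-|\f u| > 0$. Because $H_{\tau,0}$ is self-adjoint and positive with $\ee^{-sH_{\tau,0}} \in \fra S^1(\cal F)$ for all $s > 0$ and $W_\tau$ is self-adjoint and positive by Lemma \ref{lem_W_tau_pos}, the operator $B_z \deq H_{\tau,0} + zW_\tau/(1-2\eta)$ is m-accretive for $\re z > 0$ and generates a holomorphic contraction semigroup; $z \mapsto F_t(z)$ is therefore analytic in operator norm on $\{\re z > 0\}$ by Kato's analytic perturbation theory. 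To upgrade this to a scalar analyticity statement, I would factor $F_t = F_{t/2} \, F_{t/2}$ and, via the Trotter-Kato product formula together with the positivity of $W_\tau$ (as in the proof of Lemma \ref{lem:gest2}), establish the locally uniform Schatten bound $\|F_{t/2}(z)\|_{\tilde {\fra S}^{2/t}(\cal F)} \leq 1$ on compact subsets of $\{\re z > 0\}$. Combined with Lemmas \ref{lem:gest1} and \ref{lem:gest3} and H\"older's inequality for Schatten spaces, this presents $g_{\tau,M}^\xi(z,\f u)$ as the trace of a product of Schatten-class operators depending analytically on $z$, yielding analyticity of the scalar function.

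With pointwise analyticity in hand, I would conclude via Morera's theorem. For any closed triangle $\gamma \subset \{\re z > 0\}$, Proposition \ref{prop:g_est} provides a bound on $g_{\tau,M}^\xi(z,\f u)$ that is uniform in $\f u$ and locally uniform in $z$, so Fubini's theorem followed by Cauchy's theorem gives $\oint_\gamma R_{\tau,M}^\xi(z) \, \dd z = 0$. The same bound furnishes continuity of $R_{\tau,M}^\xi$ on $\{\re z > 0\}$, and Morera's theorem then delivers the claim. The main obstacle is the transfer of the operator-norm analyticity of $F_t(z)$ to analyticity in an appropriate Schatten class, which requires respecting the delicate total-time bookkeeping in the exponents that underlies Lemmas \ref{lem:gest1}--\ref{lem:gest3}; everything else reduces to standard applications of semigroup theory and dominated convergence.
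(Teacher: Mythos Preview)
Your overall strategy (pointwise-in-$\f u$ analyticity of $g_{\tau,M}^\xi(z,\f u)$, then a dominated bound from Proposition~\ref{prop:g_est}, then Morera) is reasonable, but the step where you invoke ``Kato's analytic perturbation theory'' to conclude that $z \mapsto F_t(z) = \ee^{-t(H_{\tau,0}+zW_\tau/(1-2\eta))}$ is analytic in operator norm is a genuine gap. Kato's theory for holomorphic families of type (A) requires the perturbation to be relatively bounded with respect to the unperturbed generator; here $W_\tau$ is quartic in creation and annihilation operators while $H_{\tau,0}$ is quadratic, so $W_\tau$ is \emph{not} $H_{\tau,0}$-bounded on $\cal F$. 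For the same reason, the m-accretivity of $B_z = H_{\tau,0} + zW_\tau/(1-2\eta)$ on a common dense domain is not automatic and would itself need justification. Note also that the detour through the Duhamel expansion does not actually reduce the problem: the remainder $R_{\tau,M}^\xi(z)$ still contains the full propagator, so you face exactly the same analyticity question you started with.

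The paper's proof bypasses these issues by introducing the finite-particle cutoff $P^{(\leq n)}$. On the range of $P^{(\leq n)}$ the operator $W_\tau$ is bounded, so analyticity of the truncated $A_{\tau,n}^\xi(z)$ follows from an elementary Duhamel argument for bounded perturbations. One then bounds $A_{\tau,n}^\xi(z)$ uniformly in $n$ (using the same Schatten splitting and Lemmas~\ref{lem:gest2}--\ref{lem:gest3} that underlie Proposition~\ref{prop:g_est}), and concludes by passing to the limit $n \to \infty$ via a normal-families/Vitali argument. If you want to salvage your approach, the missing ingredient is precisely this particle-number truncation: it is what makes the perturbation bounded and hence makes the analyticity-in-$z$ of the semigroup elementary.
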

\begin{proof}
Let $P^{(\leq n)}$ be the orthogonal projection onto the subspace $\bigoplus_{n' = 0}^n \fra H^{(n')}$ of $\cal F$. Note that $P^{(\leq n)}$ commutes with $\Theta_\tau(\xi)$, $H_{\tau,0}$, and $W_\tau$. Define
\begin{equation} \label{A xi tau n}
A_{\tau,n}^\xi(z) \;\deq\; \frac{\tr \pb{P^{(\leq n)} \,  \Theta_\tau(\xi) \, \ee^{-\eta H_{\tau,0}} \ee^{-(1 - 2\eta) H_{\tau,0}  - z W_\tau} \ee^{-\eta H_{\tau,0}}}}{\tr (\ee^{-H_{\tau,0}})}\,.
\end{equation}
Using cyclicity of the trace and H\"older's inequality from Lemma \ref{Holder's inequality in S^p}, we find for $\re z \geq 0$
\begin{align*}
\abs{A_{\tau,n}^\xi(z)} &\;\leq\; \normb{\ee^{-\eta /2 H_{\tau,0}} \Theta_\tau(\xi) \ee^{-\eta H_{\tau,0}}}_{\tilde {\fra S}^{2/3\eta}(\cal F)} \,
\normb{\ee^{-(1 - 2 \eta) H_{\tau,0} - z W_\tau}}_{\tilde {\fra S}^{1/(1 -2 \eta)}(\cal F)}\,
\normb{\ee^{-\eta/2\, H_{\tau,0}}}_{\tilde {\fra S}^{2/\eta}(\cal F)}
\\
&\;\leq\; (Cp \eta^{-2})^p\,,
\end{align*}
where in the second step we used Lemmas \ref{lem:gest2} and \ref{lem:gest3}.

Next, using that $\Theta_\tau(\xi)$ and $W_\tau$ are bounded on the range of $P^{(\leq n)}$, a simple argument using the Duhamel formula shows that $A_{\tau,n}^\xi$ is analytic for $\re z > 0$. Since $\lim_{n \to \infty} A_{\tau,n}^\xi(z) = A_\tau^\xi(z)$ and $A_{\tau,n}^\xi(z)$ is uniformly bounded for $\re z \geq 0$, we conclude that $A_\tau^\xi$ is analytic for $\re z > 0$.
\end{proof}

\section{The classical problem} \label{sec:classical}

\subsection{Construction of $W$}
In addition to the classical Green function from \eqref{classical_G}, we define its truncated version
\begin{equation}
\scalar{f}{G_{[K]} g} \;\deq\; \int \dd \mu \, \scalar{f}{\phi_{[K]}} \, \scalar{\phi_{[K]}}{g} \;=\; \scalarbb{f}{\sum_{k = 0}^K \frac{1}{\lambda_k} u_k u_k^*\, g}\,.
\end{equation}
In particular, the classical density $\varrho_{[K]} = G_{[K]}(x;x)$ is given by the diagonal of $G_{[K]}$.

Throughout this section we use the classical (i.e.\ probabilistic) Wick theorem on polynomials of $W_{[K]}$ and $\Theta(\xi)$. The resulting pairings are conveniently described by the abstract vertex set $\cal X$ from Definition \ref{cal X} (i). Note that, since the classical fields $\phi$ commute, the ordering from Definition \ref{cal X} (ii) is superfluous in this whole section. As in Section \ref{sec:time_evolved_G}, each vertex $\alpha = (i,r,\sign) \in \cal X$ encodes a field $\bar \phi$ or $\phi$, with $i = 1, \dots, m+1$ indexing the operator ($w$ for $i = 1, \dots, m$ and $\xi$ for $i = m+1$), and $\sign = +1$ and $\sign = -1$ corresponding to $\bar \phi$ and $\phi$ respectively. In this section we take over many notations from Sections \ref{sec:time_evolved_G}--\ref{sec:graphs} without further comment.

\begin{proof}[Proof of Lemma \ref{lem:def_W}]
It suffices to prove convergence of $W_{[K]}$ in $L^m(\mu)$ for a fixed $m \in 2\N$. Choose $K_1, \dots, K_m \geq K$ arbitrary sequences indexed by $K \in \N$. We claim that as $K \to \infty$ the quantity $M_{\f K} \deq \int \dd \mu \, W_{[K_1]} \cdots W_{[K_m]}$
converges to a limit that does not depend on the sequences $\f K = (K_1, \dots, K_m)$. The proof is an application of Wick's theorem. To that end, we use the vertex set $\cal X \equiv \cal X(m,0)$ from Definition \ref{cal X} (i). Recalling the definition of the set $\fra P$ of admissible pairings $\Pi$ from Definition \ref{def_pairing}, we use Wick's theorem to write $M_{\f K} = \frac{1}{2^m} \sum_{\Pi \in \fra P} \cal I_{\f K, \Pi}$, where
\begin{equation*}
\cal I_{\f K, \Pi}
\;\deq\; \int_{\Lambda^{\cal X}} \dd \f x \,\prod_{i=1}^{m} \pbb{ w(x_{i,1,1}-x_{i,2,1}) \prod_{r=1}^{2} \delta(x_{i,r,1}-x_{i,r,-1})} \,  \prod_{(\alpha,\beta) \in \Pi} G_{[K_{i_\alpha} \wedge K_{i_\beta}]}(x_\alpha; x_\beta)\,.
\end{equation*}
Here we used that $\int \dd \mu \, \phi_{K}(x) \bar \phi_{\tilde K}(y) = G_{[K \wedge \tilde K]}(x;y)$. Next, fix $\Pi \in \fra P$ and use the graph $(\cal V, \cal E)$ associated with $\Pi$ from Definition \ref{def_collapsed_graph} to write
\begin{equation} \label{I_Pi_W_constr}
\cal I_{\f K, \Pi} \;=\; \int_{\Lambda^{\cal V}} \dd \f y \, \pBB{\prod_{i=1}^{m} w(y_{i,1}-y_{i,2})} \prod_{\{\alpha, \beta\} \in \cal E} G_{[K_{i_\alpha} \wedge K_{i_\beta}]}(y_\alpha; y_\beta)\,.
\end{equation}
We denote by $\cal I_\Pi$ the expression obtained from the right-hand side of \eqref{I_Pi_W_constr} by replacing $G_{[K_{i_\alpha} \wedge K_{i_\beta}]}$ with $G$.
We claim that $\cal I_{\Pi}$ is well-defined.
For the proof, we find using Lemma \ref{Positivity lemma 2} that
\begin{equation*}
\abs{\cal I_{\Pi}} \;\leq\; \norm{w}_{L^\infty}^m \, \int_{\Lambda^{\cal V}} \dd \f y \, \prod_{\{\alpha, \beta\} \in \cal E} G(y_\alpha; y_\beta)\,.
\end{equation*}
That the right-hand side is finite follows easily from $w \in L^\infty(\Lambda)$ and the path decomposition $\cal E = \bigsqcup_{\cal P \in \conn(\cal E)} \cal P$ of $\cal E$ from Definition \ref{def_collapsed_graph}, using that $\tr G^k < \infty$ for $k \geq 2$ by the assumption $s \geq -1$ in \eqref{tr_h_assump}.

Next, we claim that $\cal I_{\f K, \Pi} \to \cal I_\Pi$ as $K \to \infty$. This follows analogously to the above argument, by telescoping on the right-hand side of
\begin{equation*}
\abs{\cal I_{\f K, \Pi} - \cal I_{\Pi}} \;\leq\; \norm{w}_{L^\infty}^m \, \int_{\Lambda^{\cal V}} \dd \f y \, \absBB{\prod_{\{\alpha, \beta\} \in \cal E} G_{[K_{i_\alpha} \wedge K_{i_\beta}]}(y_\alpha; y_\beta) - \prod_{\{\alpha, \beta\} \in \cal E} G(y_\alpha; y_\beta)}\,,
\end{equation*}
and using that $\norm{G_{[K_{i_\alpha} \wedge K_{i_\beta}]} - G}_{\fra S^2(\fra H)} \to 0$ as $K \to \infty$, by definition of $G_{[K]}$ and the assumption $s \geq -1$ in \eqref{tr_h_assump}.

We may now easily conclude the proof. For $K \to \infty$ with $\tilde K \geq K$ we have
\begin{multline*}
\lim_{K \to \infty} 2^m \, \norm{W_{[\tilde K]} - W_{[K]}}_{L^m(\mu)}^m \;=\; \lim_{K \to \infty} \sum_{l = 0}^m (-1)^l \binom{m}{l} \int \dd \mu \, W_{[\tilde K]}^{m - l} W_{[K]}^{l}
\\
=\; \lim_{K \to \infty} \sum_{l = 0}^m (-1)^l \binom{m}{l} \int \dd \mu \, W_{[K]}^{m - l} W_{[K]}^{l} \;=\; 0\,,
\end{multline*}
as claimed.
\end{proof}

\subsection{Expansion}
We proceed analogously to Section \ref{sec:quantum}, except that in the classical case the expansion and its control is much easier because we are dealing with commuting random variables.

For a self-adjoint observable $\xi \in \fra B_p$, we define the random variable
\begin{equation}
\Theta(\xi) \;\deq\; \int \dd x_1 \cdots \dd x_p \, \dd y_1 \cdots \dd y_p \, \xi(x_1, \dots, x_p; y_1, \dots, y_p) \bar \phi(x_1) \cdots \bar \phi(x_p) \phi(y_1) \cdots \phi(y_p)\,,
\end{equation}
which is the classical analogue of \eqref{theta_tau}. We write the expectation of $\Theta(\xi)$ in the state $\rho(\cdot)$ as
\begin{equation} \label{rho_frac}
\rho(\Theta(\xi)) \;=\; \frac{\tilde \rho_1(\Theta(\xi))}{\tilde \rho_1(1)}\,,
\end{equation}
where we defined
\begin{equation} \label{def_rho_z}
\tilde \rho_z(X) \;\deq\; \int X \, \ee^{- z W} \, \dd \mu
\end{equation}
for $\re z \geq 0$. Performing a Taylor expansion of the exponential $\ee^{-z W}$ up to order $M$, we get
\begin{equation} \label{def_A}
A^\xi(z) \;\deq\; \tilde \rho_z(\Theta(\xi)) \;=\; \sum_{m = 0}^{M - 1} a^\xi_m z^m + R^\xi_M(z)\,,
\end{equation}
where we defined
\begin{equation*}
a_m^\xi \;\deq\; \frac{(-1)^m}{m!} \int \Theta(\xi) \, W^m \, \dd \mu \,, \qquad R_M^\xi(z) \;\deq\; \frac{(-1)^M z^M}{M!} \int \Theta(\xi) \, W^M \, \ee^{-\tilde z W} \, \dd \mu \quad \text{for some $\tilde z \in [0,z]$}\,.
\end{equation*}

\begin{lemma} \label{lem_classical_coeff}
Recall the definition of $a_{\infty, m}^\xi$ from \eqref{f_m,a_m^xi classical}.
For any $m,p \in \N$ and $\xi \in \fra B_p$, we have 
\begin{equation}
\label{lem_classical_coeff identity}
a_m^\xi = a_{\infty, m}^\xi\,.
\end{equation}
\end{lemma}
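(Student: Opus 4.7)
The plan is to compute $\int \Theta(\xi) \, W^m \, \dd \mu$ directly by the classical Wick theorem for the Gaussian measure $\mu$, and to check that the resulting combinatorial expression is exactly $\frac{1}{2^m} \sum_{\Pi \in \fra P} \cal I_{\Pi}^\xi$. First, by Lemma \ref{lem:def_W}, $W \in \bigcap_{q \geq 1} L^q(\mu)$ and $W_{[K]} \to W$ in every $L^q(\mu)$, while $\Theta(\xi) \in \bigcap_{q \geq 1} L^q(\mu)$ by a standard Wick computation on Gaussian monomials (using $\tr G^n < \infty$ for $n \geq 2$). An application of H\"older's inequality then reduces the problem to showing
\begin{equation*}
\lim_{K \to \infty} \int \Theta(\xi) \, W_{[K]}^m \, \dd \mu \;=\; \frac{1}{2^m} \sum_{\Pi \in \fra P} \cal I_{\Pi}^\xi\,.
\end{equation*}

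Next, I would expand $W_{[K]}^m$ using \eqref{Classical interaction} as a $2m$-fold space integral of products of Wick-ordered quadratics $\bar\phi_{[K]}(\cdot)\phi_{[K]}(\cdot) - \varrho_{[K]}(\cdot)$, pulling out the prefactor $2^{-m}$. Together with the $2p$ fields carried by $\Theta(\xi)$, the integrand becomes a product of $4m + 2p$ Gaussian factors indexed precisely by the vertex set $\cal X(m,p)$ of Definition \ref{cal X}, with sign $+1$ (resp.\ $-1$) marking an occurrence of $\bar\phi_{[K]}$ (resp.\ $\phi_{[K]}$). Applying the classical Wick theorem for complex Gaussians produces a sum over pairings; the rule that a $\bar\phi$ must always pair with a $\phi$ reproduces condition (ii) of Definition \ref{def_pairing}, and the Wick ordering at each vertex $(i,r)$ removes exactly the self-contraction $\big((i,r,+1),(i,r,-1)\big)$, which is condition (i). Hence the admissible pairings form precisely $\fra P$, and each pair contributes a factor $G_{[K]}(x_\alpha;x_\beta)$. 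Collapsing the pairs of arguments at each quartic vertex via the delta functions coming from $\bar\phi_{[K]}(x)\phi_{[K]}(x)$ and passing to the graph $(\cal V, \cal E)$ from Definition \ref{def_collapsed_graph}, one obtains
\begin{equation*}
\int \Theta(\xi) \, W_{[K]}^m \, \dd \mu \;=\; \frac{1}{2^m} \sum_{\Pi \in \fra P} \cal I^\xi_{[K], \Pi}\,,
\end{equation*}
where $\cal I^\xi_{[K], \Pi}$ is the quantity of Definition \ref{def_calI infinity} with $G$ replaced by $G_{[K]}$.

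Finally, since $\fra P$ is a finite set, it suffices to verify $\cal I^\xi_{[K], \Pi} \to \cal I^\xi_{\Pi}$ for each fixed $\Pi$. This is established by a telescoping argument analogous to the one already used in the proof of Lemma \ref{Convergence of explicit terms lemma}: decompose the graph $\cal E$ into its paths $\conn(\cal E)$, estimate each factor in $\fra S^2(\fra H)$ using $\norm{G_{[K]}}_{\fra S^2(\fra H)} \leq \norm{G}_{\fra S^2(\fra H)}$ (guaranteed by $\tr G^2 < \infty$), and invoke $\norm{G_{[K]} - G}_{\fra S^2(\fra H)} \to 0$ on one factor at a time. Combining with $\norm{w}_{L^\infty} < \infty$, Cauchy--Schwarz in the open-path endpoints, and $\xi \in \fra S^2$ yields the convergence. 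Collecting everything and multiplying by $(-1)^m/m!$ gives $a_m^\xi = a_{\infty,m}^\xi$ in view of \eqref{f_m,a_m^xi classical}. The only conceptually delicate step is matching the Wick ordering against condition (i) of Definition \ref{def_pairing}; all remaining estimates are strict simplifications of their quantum analogues in Section \ref{sec:quantum}, since no time-ordering or noncommutativity intervenes.
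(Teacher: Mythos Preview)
Your proposal is correct and follows essentially the same approach as the paper: approximate $W$ by $W_{[K]}$, apply the classical Wick theorem to $\int \Theta(\xi)\,W_{[K]}^m\,\dd\mu$, identify the surviving pairings as $\fra P$ (with the Wick ordering killing exactly the self-contractions of condition (i)), and pass to the limit $K\to\infty$ by telescoping using $\norm{G_{[K]}-G}_{\fra S^2}\to 0$. One small slip: you write that ``each pair contributes a factor $G_{[K]}(x_\alpha;x_\beta)$'' and that $\cal I^\xi_{[K],\Pi}$ is Definition~\ref{def_calI infinity} with $G$ replaced by $G_{[K]}$. This is not quite right for edges with both endpoints in $\cal V_1$, since $\Theta(\xi)$ is built from the \emph{untruncated} field $\phi$; such pairs already contribute $G$, while mixed $\phi$--$\phi_{[K]}$ pairs contribute $G_{[K]}$ (as the paper notes via $\E\,\phi\bar\phi_{[K]}=G_{[K]}$). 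This does not affect your argument---if anything it makes the telescoping shorter---but the intermediate object $\cal I^\xi_{[K],\Pi}$ should be written with this edge-dependent covariance, exactly as in the paper's proof.
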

\begin{proof}
Since $\xi \in \fra S^2(\fra H^{(p)})$ and $G \in \fra S^2(\fra H)$, we find using Wick's theorem, the Cauchy-Schwarz inequality, and \eqref{S2L2} that $\Theta(\xi) \in L^2(\mu)$.
Using Lemma \ref{lem:def_W} and H\"older's inequality, we therefore write
\begin{equation*}
a_m^\xi \;=\; \frac{(-1)^m}{m!}\lim_{K \to \infty} \int \Theta(\xi) W_{[K]}^m \, \dd \mu\,.
\end{equation*}
Using Wick's theorem, we rewrite the right-hand side as a sum over pairings. To that end, let $\cal X \equiv \cal X(m,p)$ be the vertex set from Definition \ref{cal X}, and $\fra P \equiv \fra P(m,p)$ be the set of pairings from Definition \ref{def_pairing}. Then we get from Wick's theorem that
\begin{equation*}
a_m^\xi \;=\; \frac{(-1)^m}{m! \, 2^m} \lim_{K \to \infty} \sum_{\Pi \in \fra P} \cal I_{[K], \Pi}^\xi\,,
\end{equation*}
where we defined
\begin{equation*}
\cal I_{[K], \Pi}^\xi \;\deq\; \int_{\Lambda^{\cal V}} \dd \f y \, \pBB{\prod_{i=1}^{m} w(y_{i,1}-y_{i,2})} \xi(\f y_1) \prod_{e \in \cal E} \cal J_{[K],e}(\f y_e)
\end{equation*}
and
\begin{equation*}
\cal J_{[K],e}(\f y_e) \;\deq\;
\begin{cases}
G(y_{\va}; y_{\vb}) & \text{if $e = \{a,b\}$ with $a,b \in \cal V_1$}
\\
G_{[K]}(y_{\va}; y_{\vb}) & \text{if $e = \{a,b\}$ with $a \in \cal V_2$ or $b \in \cal V_2$}\,.
\end{cases}
\end{equation*}
Here $(\cal V, \cal E)$ is the multigraph associated with $\Pi$ from Definition \ref{def_collapsed_graph}.
See also Definitions \ref{J_e definition classical} and \ref{def_calI infinity}. Here we used that
\begin{equation*}
\E \phi(x) \bar \phi_{[K]}(y) \;=\; \E \phi_{[K]}(x) \bar \phi(y) \;=\; \E \phi_{[K]}(x) \bar \phi_{[K]}(y) \;=\; G_{[K]}(x;y)\,,
\end{equation*}
as follows from the definition of $\phi_{[K]}$.

Next, by spectral decomposition, we immediately find that $\norm{G_{[K]} - G}_{\fra S^2(\fra H)} \to 0$ as $K \to \infty$. Recalling Definitions \ref{J_e definition classical} and \ref{def_calI infinity}, by a simple telescoping argument on the definition of $\cal I_{[K], \Pi}^\xi$, using \eqref{S2L2}, $w \in L^\infty(\Lambda)$, and the Cauchy-Schwarz inequality, we find  that $\cal I_{[K],\Pi}^\xi \to \cal I_{\Pi}^\xi$ as $K \to \infty$. Now the claim follows by \eqref{f_m,a_m^xi classical}.
\end{proof}

From Lemma \ref{lem_classical_coeff}, Corollary \ref{cor:bound_am}, and Proposition \ref{Convergence of the explicit terms}, we deduce the following result. (For conciseness, we obtain the upper bound from the corresponding quantum result that we have already established, Corollary \ref{cor:bound_am}, but such an upper bound could also be easily derived using Wick's theorem without passing by the quantum problem.)
\begin{corollary} \label{cor:a_m_est}
For each $m \in \N$ we have
\begin{equation}
\label{cor:a_m_est bound}
\absb{a_{m}^{\xi}} \;\leq\; (C p)^p \,(C \|w\|_{L^{\infty}})^m \, m!\,.
\end{equation}
\end{corollary}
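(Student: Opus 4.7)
The plan is to simply chain together the three results cited in the statement. First I would invoke Lemma \ref{lem_classical_coeff} to reduce the problem: since $a_m^\xi = a_{\infty,m}^\xi$, it suffices to bound the quantity $a_{\infty,m}^\xi$ defined in \eqref{f_m,a_m^xi classical}. Next, Proposition \ref{Convergence of the explicit terms} gives the pointwise convergence
\begin{equation*}
a_{\tau,m}^\xi \;\longrightarrow\; a_{\infty,m}^\xi \qquad (\tau \to \infty)
\end{equation*}
(uniformly in $\xi \in \fra B_p$, though uniformity is not needed here). Passing to the limit $\tau \to \infty$ in the uniform upper bound
\begin{equation*}
\absb{a_{\tau,m}^\xi} \;\leq\; (Cp)^p \, C^m \, m! \, \|w\|_{L^\infty}^m
\end{equation*}
supplied by Corollary \ref{cor:bound_am}, one obtains the same bound for $a_{\infty,m}^\xi$, and hence for $a_m^\xi$ by Lemma \ref{lem_classical_coeff}.

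There is no real obstacle here, because the substantial combinatorial and analytic work has already been done on the quantum side in Sections \ref{sec:time_evolved_G}--\ref{sec:upper_bound} (in particular the graph-theoretic decomposition into closed and open paths and the Schatten-norm estimates of Lemmas \ref{Closed path} and \ref{Open path}), and transferred to the classical side via Lemma \ref{lem_classical_coeff}. As the authors themselves remark, one could alternatively derive the bound \eqref{cor:a_m_est bound} directly by applying the classical Wick theorem to $\frac{(-1)^m}{m!} \int \Theta(\xi) W^m \, \dd\mu$, decomposing the resulting sum over pairings using the graph $(\cal V, \cal E)$ from Definition \ref{def_collapsed_graph}, and estimating each path using $\tr G^k \leq \|G\|_{\fra S^2}^k$ for closed paths and Cauchy-Schwarz for open paths exactly as in Lemmas \ref{Closed path}--\ref{Open path}; one would then use $|\fra P| \leq (2m+p)!$ together with Stirling. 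But routing the bound through the quantum estimate avoids duplicating this work.
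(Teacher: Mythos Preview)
Your proposal is correct and follows exactly the route the paper takes: combine Lemma \ref{lem_classical_coeff}, Proposition \ref{Convergence of the explicit terms}, and Corollary \ref{cor:bound_am} by passing to the limit $\tau \to \infty$ in the quantum bound. You even anticipate the paper's parenthetical remark that a direct classical Wick-theorem argument would also work.
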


Next, we estimate the remainder term $R_M^\xi(z)$.

\begin{lemma} \label{lem:R_est}
For each $M \in \N$ and $\re z \geq 0$ we have
\begin{equation}
\label{lem:R_est bound}
\absb{R_M^\xi(z)} \;\leq\; (Cp)^p \, \pb{C \norm{w}_{L^\infty} \abs{z}}^M M!\,.
\end{equation}
\end{lemma}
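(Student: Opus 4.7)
The plan is to combine the integral form of Taylor's remainder with the positivity $W \geq 0$ (crucial for handling the exponential factor at complex $z$ with $\re z \geq 0$), and then apply Cauchy--Schwarz to reduce matters to two Gaussian moments, each controlled by Wick's theorem. The main delicate point will be the combinatorial accounting: the Cauchy--Schwarz step introduces a factor $\sqrt{(4M)!} \sim C^M (M!)^2$, of which only one power of $M!$ is absorbed by the Taylor denominator, leaving the final $M!$ in the stated bound.

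First, I will rewrite the remainder in the integral form
\begin{equation*}
R_M^\xi(z) \;=\; \frac{(-z)^M}{(M-1)!} \int_0^1 (1-s)^{M-1} \int \Theta(\xi) W^M \ee^{-szW} \dd \mu \, \dd s\,,
\end{equation*}
which follows from Taylor expansion of $A^\xi$ around $z = 0$ together with $\partial_z^M A^\xi(z) = (-1)^M \int \Theta(\xi) W^M \ee^{-zW} \dd \mu$; differentiation under the integral sign is justified by $W \geq 0$ and $\re z \geq 0$. Since $\abs{\ee^{-szW}} \leq 1$ pointwise for $s \in [0,1]$, performing the elementary $s$-integral yields the crude pointwise bound $\abs{R_M^\xi(z)} \leq \frac{\abs{z}^M}{M!} \int \abs{\Theta(\xi)} W^M \dd \mu$.

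Next, I will apply Cauchy--Schwarz to split this as $\norm{\Theta(\xi)}_{L^2(\mu)} \cdot \norm{W^M}_{L^2(\mu)}$, and evaluate both factors by Wick's theorem. For the first, expanding $\norm{\Theta(\xi)}_{L^2(\mu)}^2$ as a sum over the $(2p)!$ bipartite pairings of the $2p$ $\phi$'s with the $2p$ $\bar\phi$'s, each pairing is bounded by $\norm{\xi}^2_{\fra S^2(\fra H^{(p)})} \cdot C^p \leq C^p$ (using $\norm{G}_{\fra S^2(\fra H)} < \infty$, which follows from the hypothesis $\tr h^{-2} < \infty$), and $\sqrt{(2p)!} \leq (Cp)^p$ then gives $\norm{\Theta(\xi)}_{L^2(\mu)} \leq (Cp)^p$. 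For the second factor, Wick's theorem represents $\int W^{2M} \dd \mu$ as a sum over $(4M)!$ pairings, each giving a multigraph of the type in Definition \ref{def_collapsed_graph} with $p = 0$, hence consisting purely of closed paths; a classical analogue of Lemma \ref{Closed path} (simpler than its quantum counterpart because there is no $S_\tau$ contribution, only $G = h^{-1}$ with $\norm{G}_{\fra S^2} < \infty$) bounds each such pairing by $C^M \norm{w}_{L^\infty}^{2M}$, and combined with $\sqrt{(4M)!} \leq C^M (M!)^2$ this gives $\norm{W^M}_{L^2(\mu)} \leq C^M (M!)^2 \norm{w}_{L^\infty}^M$.

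Combining these estimates will give
\begin{equation*}
\abs{R_M^\xi(z)} \;\leq\; \frac{\abs{z}^M}{M!} \cdot (Cp)^p \cdot C^M (M!)^2 \norm{w}_{L^\infty}^M \;=\; (Cp)^p \pb{C \norm{w}_{L^\infty} \abs{z}}^M M!\,,
\end{equation*}
as claimed. The $M!$ growth here is expected and unavoidable (the perturbation series for $A^\xi$ has zero radius of convergence), and this estimate has precisely the shape needed to feed into the Borel summation machinery of Appendix \ref{sec:borel} alongside the coefficient bound of Corollary \ref{cor:a_m_est}.
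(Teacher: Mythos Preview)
Your proof is correct and follows essentially the same approach as the paper: bound $\abs{\ee^{-\tilde z W}} \leq 1$ using $W \geq 0$, apply Cauchy--Schwarz to separate $\Theta(\xi)$ from $W^M$, and control each factor via Wick's theorem. The only minor differences are that the paper uses the Lagrange form of the remainder rather than the integral form, and that it bounds $\int W^{2M}\,\dd\mu$ by invoking Corollary~\ref{cor:a_m_est} (with $p=0$) rather than redoing the closed-path combinatorics directly; both routes yield the same $(CM)^{4M}$ bound.
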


\begin{proof}
As in Lemma \ref{lem_W_tau_pos}, we easily find that $W \geq 0$. Hence, using the Cauchy-Schwarz inequality we find
\begin{equation*}
\absb{R_M^\xi(z)} \;\leq\; \frac{\abs{z}^M}{M!} \int \abs{\Theta(\xi)} \, W^M \, \dd \mu\;\leq\;  \frac{\abs{z}^M}{M!} \pbb{\int \Theta(\xi)^2 \, \dd \mu}^{1/2} \pbb{\int W^{2M} \, \dd \mu}^{1/2}\,.
\end{equation*}
Using $\xi \in \fra S^2(\fra H^{(p)})$, we find using Wick's theorem, the Cauchy-Schwarz inequality, and \eqref{S2L2}, that $\int \Theta(\xi)^2 \, \dd \mu \leq (Cp)^{2p}$. Moreover, using Corollary \ref{cor:a_m_est} we find $\int W^{2M} \, \dd \mu \leq \norm{w}_{L^\infty}^{2M} (CM)^{4M}$. The claim now follows.
\end{proof}

Finally, a simple application of Wick's theorem and the Cauchy-Schwarz inequality yields the following result.

\begin{lemma} \label{lem:A_analytic}
The function $A^\xi$ from \eqref{def_A} is analytic for $\re z > 0$.
\end{lemma}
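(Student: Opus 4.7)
The plan is to apply Morera's theorem after verifying an integrability bound that allows Fubini's theorem to interchange the order of integration with respect to $\mu$ and with respect to a closed contour.

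First, I would observe that $W \geq 0$ $\mu$-almost surely. When $w$ is pointwise nonnegative this is obvious, and when $w$ is of positive type it follows by writing $W = \frac12 \int \hat w(\dd k) \,|\int \dd x\, \ee^{\ii k \cdot x}|\phi(x)|^2|^2$ (exactly as in the proof of Lemma \ref{lem_W_tau_pos}). Consequently, for every $\phi$ in the full-measure event on which $W$ is defined, the function $z \mapsto \ee^{-zW(\phi)}$ is entire, and for $\re z \geq 0$ one has the pointwise bound $|\ee^{-zW(\phi)}| \leq 1$. Also, as already noted in the proof of Lemma \ref{lem_classical_coeff}, an application of Wick's theorem together with $\xi \in \fra S^2(\fra H^{(p)})$ and $G \in \fra S^2(\fra H)$ shows that $\Theta(\xi) \in L^2(\mu)$, and in particular $\Theta(\xi) \in L^1(\mu)$.

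Combining these two facts, the integrand $\Theta(\xi)\, \ee^{-zW}$ in the definition \eqref{def_rho_z} of $\tilde \rho_z(\Theta(\xi))$ is dominated in absolute value by $|\Theta(\xi)| \in L^1(\mu)$ uniformly in $z$ with $\re z \geq 0$. Continuity of $A^\xi$ on $\{\re z > 0\}$ (in fact on $\{\re z \geq 0\}$) then follows from dominated convergence. To establish analyticity, I would apply Morera's theorem: fix an arbitrary closed triangle $\Gamma$ contained in the open half-plane $\{\re z > 0\}$, and compute
\begin{equation*}
\oint_\Gamma A^\xi(z)\,\dd z \;=\; \oint_\Gamma \dd z \int \dd \mu\, \Theta(\xi)\,\ee^{-zW} \;=\; \int \dd \mu\, \Theta(\xi) \oint_\Gamma \ee^{-zW}\, \dd z \;=\; 0\,,
\end{equation*}
where the interchange of integrals is justified by Fubini's theorem (the integrand is jointly measurable and its absolute value is bounded by $|\Theta(\xi)| \in L^1(\mu)$ times the finite arc-length of $\Gamma$), and the inner contour integral vanishes because $z \mapsto \ee^{-zW(\phi)}$ is entire for each $\phi$.

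The only genuine point to verify is the $L^1(\mu)$-integrability of $\Theta(\xi)$, and this is a routine Gaussian computation via Wick's theorem; all other steps are standard consequences of $W \geq 0$. Since triangles are dense enough to invoke Morera, we conclude that $A^\xi$ is analytic on $\{\re z > 0\}$.
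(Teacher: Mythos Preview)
Your proof is correct and essentially fills in what the paper leaves implicit: the paper merely says ``a simple application of Wick's theorem and the Cauchy-Schwarz inequality'' without further detail, and your Morera/Fubini argument using $W\geq 0$ and $\Theta(\xi)\in L^1(\mu)$ is the natural way to make this precise. One very minor caveat: in the setting of Section~\ref{sec:classical} the interaction $W$ is the Wick-ordered limit from Lemma~\ref{lem:def_W}, so the formula you wrote for $W$ in terms of $|\phi(x)|^2$ is not literally available; the clean way to get $W\geq 0$ here is to note that each truncated $W_{[K]}\geq 0$ (by the same Fourier argument applied to $|\phi_{[K]}(x)|^2-\varrho_{[K]}(x)$) and pass to the $L^m(\mu)$-limit, exactly as the paper does in the proof of Lemma~\ref{lem:R_est}.
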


\subsection{Proof of Theorem \ref{thm:main}} \label{sec:conclusion_proof}
We now have all the necessary ingredients to conclude the proof of Theorem \ref{thm:main}.

Using the duality $\fra S^2(\fra H^{(p)}) \cong \fra S^2(\fra H^{(p)})^*$ (see e.g.\ \cite{Simon05}), we write
\begin{equation} \label{gamma_diff_pf}
\norm{\gamma^\eta_{\tau,p} - \gamma_p}_{\fra S^2(\fra H^{(p)})} \;=\; \sup_{\xi \in \fra B_p} \absb{\tr \pb{\gamma^\eta_{\tau,p} \, \xi - \gamma_p \, \xi}} \;=\; \sup_{\xi \in \fra B_p} \absb{\rho_\tau^\eta (\Theta_\tau(\xi)) - \rho (\Theta(\xi))}\,,
\end{equation}
where we recall that $\fra B_p$ denotes the unit ball of $\fra S^2(\fra H^{(p)})$.

Let the functions $A_\tau$ and $A^\xi_\tau$ be defined as in \eqref{def_A} and \eqref{def_A_tau} respectively. We now verify the assumptions of Theorem \ref{Borel summation convergence}, with $\nu = \pb{\frac{Cp}{\eta^2}}^p$ and $\sigma = \frac{C\norm{w}_{L^\infty}}{\eta^2}$.
That $A^\xi$ and $A^\xi_\tau$ are analytic in $\cal C_R$ follows from Lemmas \ref{lem:A_analytic} and \ref{lem:A_tau_anal}. The asymptotic expansions \eqref{Asymptotic expansion tau xi} are given in \eqref{def_A} and Lemma \ref{lem_Dyson_exp}. The bound \eqref{Explicit term bound tau xi} follows from Corollaries \ref{cor:a_m_est} and \ref{cor:bound_am}. The bound \eqref{Remainder term bound tau xi} follows from Lemma \ref{lem:R_est} and Proposition \ref{prop:remainder}. Finally, the convergence \eqref{Difference of explicit terms} follows from Proposition \ref{Convergence of the explicit terms} and Lemma \ref{lem_classical_coeff}. Setting $z = 1$ and $R = 2$ in Theorem \ref{Borel summation convergence}, we therefore obtain from \eqref{A_A_tau_conv} that
\begin{equation*}
\tilde \rho^\eta_{\tau,1}(\Theta_\tau(\xi))) \;\longrightarrow\; \tilde \rho_1(\Theta(\xi)) \qquad \text{uniformly in $\xi \in \fra B_p$}\,.
\end{equation*}
Since $\tilde \rho_1(1) \neq 0$, we conclude from \eqref{rho_frac} and \eqref{rho_tau_frac} that
\begin{equation*}
\rho^\eta_{\tau}(\Theta_\tau(\xi))) \;\longrightarrow\; \rho(\Theta(\xi)) \qquad \text{uniformly in $\xi \in \fra B_p$}\,.
\end{equation*}
The claim now follows using \eqref{gamma_diff_pf}. This concludes the proof of Theorem \ref{thm:main}.

\section{The one-dimensional problem} \label{sec:one-d}

In this section we consider the case $d=1$ and prove Theorems \ref{thm:1D} and \ref{thm:1D local}. Thus, throughout this section we assume \eqref{tr_h_assump} holds for $s=0$. The main difference between this section and Sections \ref{sec:quantum}--\ref{sec:classical} is that, because $\phi \in L^2(\Lambda)$ $\mu$-almost surely, for $d = 1$ there is no need to renormalize the interaction. This simplifies the argument considerably, and the proofs of Theorems \ref{thm:1D} and \ref{thm:1D local} are easier than that of Theorem \ref{thm:main}.

In this section we take $h_\tau \equiv h$, so that the quantum Green function \eqref{quantum_G} is $G_\tau \deq \frac{1}{\tau (\ee^{h/\tau}-1)}$.
In particular, using Lemma \ref{Positivity lemma}
we have
\begin{equation}
\label{Tr G_tau 1D}
\|G_\tau\|_{\fra S^1(\fra H)} \;=\; \tr G_\tau \;=\; \sum_{k \geq 0} \frac{1}{\tau (\ee^{\lambda_k/\tau}-1)} \;\leq\;  \sum_{k \geq 0} \frac{1}{\lambda_k} \;=\; \tr h^{-1} \;<\; \infty\,.
\end{equation}

\subsection{Proof of Theorem \ref{thm:1D}, I: convergence in Hilbert-Schmidt norm}
\label{1D subsection 1}

\begin{proposition}
\label{1D proposition 1}
Under the assumptions of Theorem \ref{thm:1D} we have $\lim_{\tau \to \infty} \norm{\gamma_{\tau,p} - \gamma_p}_{\fra S^2(\fra H^{(p)})} = 0$ for all $p \in \mathbb{N}$.
\end{proposition}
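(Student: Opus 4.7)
The proof will follow the same overall architecture as Sections \ref{sec:quantum}--\ref{sec:classical}, using a Borel-summed perturbative expansion, but with two essential simplifications afforded by the one-dimensional setting: the interaction need not be Wick-renormalized, and all relevant Green functions are trace class. First, by the duality $\fra S^2(\fra H^{(p)}) \cong \fra S^2(\fra H^{(p)})^*$, as in Section \ref{sec:conclusion_proof}, it suffices to establish convergence of $\rho_\tau(\Theta_\tau(\xi)) \to \rho(\Theta(\xi))$ uniformly in $\xi \in \fra B_p$. Define $A_\tau^\xi(z) \deq \tr\pb{\Theta_\tau(\xi) \ee^{-H_{\tau,0} - zW_\tau}}/\tr(\ee^{-H_{\tau,0}})$ and $A^\xi(z) \deq \tilde \rho_z(\Theta(\xi))$, where $W_\tau$ denotes here the \emph{non-renormalized} interaction $\frac{1}{2} \int \dd x\, \dd y \, \phi_\tau^*(x) \phi_\tau^*(y)\, w(x-y)\, \phi_\tau(x) \phi_\tau(y)$ and $W$ is as in \eqref{def_W}. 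Both functions will be shown to be analytic on $\{\re z > 0\}$ by the same argument as in Lemmas \ref{lem:A_analytic} and \ref{lem:A_tau_anal}.

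Next, one performs a Duhamel expansion of $A_\tau^\xi(z)$ to order $M$, as in Section \ref{Duhamel expansion}, but with the modification $\eta = 0$ (no splitting of the free propagator is needed). Applying the quantum Wick theorem as in Section \ref{sec:time_evolved_G}, one obtains a representation of each coefficient $a_{\tau,m}^\xi$ as a sum over pairings $\Pi$. Since $W_\tau$ is not Wick-ordered, property (i) of Definition \ref{def_pairing} must be dropped, so the associated multigraph $\cal E_\Pi$ may now contain loops at vertices of $\cal V_2$. The graphical analysis of Section \ref{sec:graphs} and the path-by-path estimates of Section \ref{sec:upper_bound} carry over, with a graph-theoretic loop at a vertex $\va$ contributing a factor $\int \dd y_\va \, G_{\tau,0}(y_\va; y_\va) = \tr G_\tau$. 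By \eqref{Tr G_tau 1D} this is uniformly bounded by $\tr h^{-1} < \infty$, and more generally the trace-class bound on $G_\tau$ yields, in place of Proposition \ref{Product of subgraphs}, the estimate $\abs{\cal I_{\tau,\Pi}^\xi(\f t)} \leq C^{m+p} \|w\|_{L^\infty}^m$ uniformly in $\f t$ and $\xi \in \fra B_p$, and hence
\begin{equation*}
\absb{a_{\tau,m}^\xi} \;\leq\; (Cp)^p (C\|w\|_{L^\infty})^m m!\,.
\end{equation*}
Convergence $a_{\tau,m}^\xi \to a_m^\xi$ uniformly in $\xi \in \fra B_p$ then follows by a direct adaptation of the telescoping argument in Section \ref{sec:convergence}, using $\|G_\tau - G\|_{\fra S^1(\fra H)} \to 0$ (which holds in 1D by dominated convergence applied to \eqref{quantum_G}) together with Lemma \ref{lem_classical_coeff}.

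The main step, and the point of genuine departure from the higher-dimensional argument, is the bound on the remainder term $R_{\tau,M}^\xi(z)$. Here one avoids the delicate Schatten-norm splitting of Section \ref{sec:remainder} (and with it the need for $\eta > 0$) by invoking the Feynman--Kac formula in the bosonic Fock-space setting, exploiting the facts that $w \geq 0$, $W_\tau \geq 0$, and $\re z \geq 0$. Specifically, expressing the trace in position space over bosonic $n$-particle sectors and representing the full propagator $\ee^{-t_M(H_{\tau,0} + zW_\tau)}$ as a path integral over $n$ independent Brownian bridges with nonnegative interaction, one bounds the remainder integrand in absolute value by the corresponding quantity with $z$ replaced by $\abs{z}$ and the full propagator replaced by $\ee^{-t_M H_{\tau,0}}$. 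This yields the desired bound $\abs{R_{\tau,M}^\xi(z)} \leq (Cp)^p (C\|w\|_{L^\infty} \abs{z})^M M!$ (as foreshadowed in the discussion after Theorem \ref{thm:1D local}; see also Proposition \ref{Remainder term bound 1D} below). The analogous bound on the classical remainder $R_M^\xi(z)$ follows as in Lemma \ref{lem:R_est}. With these ingredients in hand, Theorem \ref{Borel summation convergence} applied at $z = 1$ gives $A_\tau^\xi(1) \to A^\xi(1)$ uniformly in $\xi \in \fra B_p$, and the same reasoning applied to the partition function (the case $\Theta_\tau = I$) handles the denominator; dividing and using \eqref{gamma_diff_pf} concludes the proof. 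The hardest step will be verifying the Feynman--Kac-based remainder bound, since one must carefully set up a position-space representation of a trace over Fock space that includes the observable $\Theta_\tau(\xi)$ sitting between creation and annihilation operators.
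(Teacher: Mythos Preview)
Your proposal is correct and follows essentially the same approach as the paper: a Duhamel expansion with $\eta = 0$, graphical analysis over the larger class of pairings that admits loops (each loop contributing $\tr G_\tau \leq \tr h^{-1}$), and a Feynman--Kac argument on $n$-particle sectors to bound the remainder by an explicit term. The paper's Proposition~\ref{Remainder term bound 1D} carries out precisely the kernel-positivity estimate you outline, replacing $\abs{(\ee^{-t_M(H_{\tau,0}+zW_\tau)})^{(n)}(\f x;\f y)}$ by $(\ee^{-t_M H_{\tau,0}})^{(n)}(\f x;\f y)$ and $\abs{\xi}$ by $\tilde\xi \in \fra B_p$, after which the remainder is controlled exactly as the $M$-th explicit term.
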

This subsection is devoted to the proof of Proposition \ref{1D proposition 1}. The proof is similar to that of Theorem \ref{thm:main}, but we we do not renormalize the interaction and we set $\eta = 0$. The absence of renormalization requires us to work with a different graph structure, which allows for the presence of loops. Moreover, we are able to estimate the remainder term by using the Feynman-Kac formula, thus circumventing the more complicated estimates of Section \ref{sec:remainder} which require $\eta > 0$.
The rest of the proof proceeds analogously to that of Theorem \ref{thm:main}.

We first focus on the quantum case. We begin by setting up the Duhamel expansion analogously to Section \ref{Duhamel expansion}, with the splitting of \eqref{H_tau_1d} as $H_\tau = H_{\tau,0} + W_\tau$, where we defined
\begin{equation} \label{W_tau 1D}
H_{\tau,0} \;\deq\; \int \dd x \, \dd y \, \phi^*_\tau(x) \, h(x;y) \, \phi_\tau(y)\,, \qquad
W_\tau \;\deq\; \frac{1}{2}
\int \dd x \, \dd y \, \phi^*_\tau(x) \phi^*_\tau(y)  \, w(x - y) \, \phi_\tau(x) \phi_\tau(y)\,.
\end{equation}
Given $\xi \in \fra B_p$, we perform a Taylor expansion up to order $M \in \N$ of 
$A_{\tau}^{\xi}(z) \deq \tilde \rho_{\tau,z}(\Theta_\tau(\xi))$
in the parameter $z$ by using a Duhamel expansion. Here $\re z \geq 0$ and $\tilde \rho_{\tau,z} \equiv \tilde \rho_{\tau,z}^0$ is defined as in \eqref{def_rho_tau_tilde} with $W_\tau$ given by \eqref{W_tau 1D} and $\eta=0$. By the same argument as in Lemma \ref{lem_Dyson_exp}, the coefficients $a^\xi_{\tau,m}$ of the expansion are given by \eqref{Explicit term a} and the remainder term $R^\xi_{\tau,M}(z)$ by \eqref{Remainder term R}, where $W_\tau$ is given by \eqref{W_tau 1D} and $\eta=0$. Given $m,p \in \N$, we recall the set $\cal X \equiv \cal X(m,p)$ from Definition \ref{cal X}. 
Since $W_\tau$ is now normal-ordered, we need to modify the order on $\cal X \equiv \cal X(m,p)$.
The vertices $(i,r,\sign)$ of $\cal X$ are now ordered according to the lexicographical order of the string $i \sign r$, whereby the value of $\sign$ is ordered as $+1 < -1$.
We slightly abuse notation by also denoting this new order as $\leq$. For the remainder of this section, we adapt all of the definitions from Section \ref{sec:time_evolved_G}, but by using the above lexicographical order instead of \eqref{linear order} and setting $\eta=0$ without further comment.
Owing to the absence of renormalization, we consider a larger set of pairings than $\fra P$ from Definition \ref{def_pairing}.
\begin{definition} \label{def_pairing 1D}
We define $\fra R \equiv \fra R(m,p)$ to be the set of pairings $\Pi$ of $\cal X$ such that for each $(\alpha, \beta) \in \Pi$ we have $\sign_{\alpha} \sign_{\beta} = -1$.
\end{definition}

We now extend the Definition \ref{def_collapsed_graph} to the class of all $\Pi \in \fra R$. In particular, we assign to each $\Pi \in \fra R$ an edge-coloured multigraph $(\cal V_\Pi,\cal E_\Pi,\sigma_\Pi)=(\cal V,\cal E,\sigma)$ satisfying points (i)--(iv) of Definition \ref{def_collapsed_graph}. 
By contruction of $\fra R$, the new graph can have loops. They have to be of the form $e=\{a,a\}$ for some $a \in \cal V_2$. Here, we recall the definition of the sets $\cal V_1$ and $\cal V_2$ in point (i) of Definition \ref{def_collapsed_graph}.

Given $\Pi \in \fra R$, we define its value $\cal I_{\tau,\Pi}^{\xi}(\f t)$ at $\f t \in \fra A \equiv \fra A(m)$ as in Definition \ref{def:I_tau_PI}. Note that $\cal J_{\tau,e}(\f y_e, \f s)$ is defined as in Definition \ref{J_e definition}, with the index $\sigma(e)$ defined using the above lexicographical order on $\cal V$. Here we recall that we are taking $\eta=0$ in \eqref{def_fraA}. Note that this definition makes sense for $\Pi \in \fra{R}$. By minor modifications of the proof of Lemma \ref{Wick_application lemma}, we deduce that the coefficients $a_{\tau,m}^{\xi}$ of the Taylor expansion of $A_\tau^\xi(z)$ are given by
\begin{equation}
\label{Wick_application lemma 1D identity}
a_{\tau,m}^{\xi} \;=\; \frac{(-1)^m}{2^m} \int_{\fra A}\dd \f t\, \sum_{\Pi \in \fra R} \cal I_{\tau,\Pi}^{\xi}(\f t)\,.
\end{equation}
Note that we now sum over pairings $\Pi \in \fra R$. The additional pairings $\Pi \in \fra R \setminus \fra P$ were forbidden in Section \ref{sec:quantum} by the renormalization.

For the following we fix $m,p \in \N$ and a pairing $\Pi \in \fra R$.
We can estimate all of the integrands on the right-hand side of \eqref{Wick_application lemma 1D identity} uniformly in $\f t \in \fra A$ and $\tau \geq 1$. 
\begin{proposition} \label{Value of pairing 1D}
For any $\Pi \in \fra R$, $\f t \in \fra A$, 
we have the estimate
\begin{equation}
\label{Value of pairing 1D bound}
\big| \cal I_{\tau,\Pi}^{\xi} (\f t) \big| \;\leq\; C^{m + p} \, \|w\|_{L^{\infty}}^m  \, \big(1+\|G_{\tau}\|_{\fra S^1(\fra H)}\big)^{m+2p}\,.
\end{equation}
\end{proposition}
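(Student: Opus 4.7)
My plan is to adapt the graphical estimates of Section \ref{sec:upper_bound} to the one-dimensional setting. Two modifications are needed compared to Proposition \ref{Product of subgraphs}: (a) the pairing set is enlarged from $\fra P$ to $\fra R$, so that the associated multigraphs $\cal E$ may now contain loops at vertices of $\cal V_2$; and (b) I will use the trace-class bound $\|G_\tau\|_{\fra S^1(\fra H)} \leq \tr h^{-1} < \infty$ from \eqref{Tr G_tau 1D}, together with the uniform operator-norm bound $\|G_\tau\|_{\fra S^\infty(\fra H)} \leq 1/\kappa$, in place of the Hilbert-Schmidt bounds used in higher dimensions.

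First, I would verify that the path representation of Lemma \ref{lem:I_rep} extends verbatim to pairings $\Pi \in \fra R$, with the convention that a loop $e = \{a,a\}$ at $a \in \cal V_2$ contributes the factor $\cal J_{\tau,e}(y_a,\f s) = G_\tau(y_a;y_a)$. To see this, note that such a loop arises from a pair $(\alpha,\beta) \in \Pi$ with $(i_\alpha,r_\alpha) = (i_\beta,r_\beta)$ and $i_\alpha \leq m$; under the 1D lexicographical order (with $+1<-1$), this forces $\alpha = (i,r,+1) < \beta = (i,r,-1)$, so that $\sigma(e) = -1$ and $s_\alpha = s_\beta$, and Lemma \ref{Correlation functions}(i) gives $\cal J_{\tau,e}(y_a,\f s) = G_{\tau,0}(y_a;y_a) = G_\tau(y_a;y_a)$. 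Then positivity of $\cal J_{\tau,e}$ (Lemmas \ref{Positivity lemma} and \ref{Correlation functions}(iii)) and $\|w\|_{L^\infty} < \infty$ yield
\begin{equation*}
|\cal I_{\tau,\Pi}^\xi(\f t)| \;\leq\; \|w\|_{L^\infty}^m \int_{\Lambda^{\cal V}} \dd \f y \, |\xi(\f y_1)| \prod_{e \in \cal E} \cal J_{\tau,e}(\f y_e,\f s).
\end{equation*}

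Second, I would factorize the integral over the decomposition $\conn(\cal E) = \conn_\ell(\cal E) \sqcup \conn_c(\cal E) \sqcup \conn_o(\cal E)$ into loops, closed paths of length $\geq 2$, and open paths. A loop contributes exactly $\int \dd y_a\, G_\tau(y_a;y_a) = \|G_\tau\|_{\fra S^1}$. For a closed path of length $q \geq 2$ I repeat the argument of Lemma \ref{Closed path}, but substitute the one-dimensional estimate $\tr G_\tau^k \leq \|G_\tau\|_{\fra S^\infty}^{k-1}\|G_\tau\|_{\fra S^1} \leq C^{k-1}\|G_\tau\|_{\fra S^1}$ for each $k \geq 1$; this gives a bound of order $C^q(1+\|G_\tau\|_{\fra S^1})$ per closed path. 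For an open path with $q$ internal vertices I proceed as in the proof of Lemma \ref{Open path}, expressing the integrated path as a time-ordered polynomial in $G_\tau$ and $S_\tau$ and estimating its $L^2$-norm in the boundary variables by the same trace-class substitution. Finally, combining these factors with $\xi$ via the Cauchy-Schwarz inequality in the $\f y_1 \in \Lambda^{\cal V_1}$ variables (using $\|\xi\|_{\fra S^2(\fra H^{(p)})} \leq 1$, and $|\conn_o(\cal E)| = p$, $\sum_\cal P |\cal V_2(\cal P)| = 2m$) yields the stated bound $C^{m+p}\|w\|_{L^\infty}^m (1+\|G_\tau\|_{\fra S^1})^{m+2p}$.

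The main new obstacle compared to Proposition \ref{Product of subgraphs} is the loop contribution, which was forbidden in $\fra P$ by the Wick renormalization but re-enters in $\fra R$. The finiteness of this contribution relies precisely on the one-dimensional assumption $\tr h^{-1} < \infty$, which fails in dimensions $d = 2,3$; this is the structural reason the Wick renormalization is unavoidable there but can be dispensed with in dimension one.
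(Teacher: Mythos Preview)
Your proposal is correct and follows essentially the same route as the paper. The paper proves Lemma \ref{Closed path 1D} by treating the loop case $|\cal V(\cal P)|=1$ separately (exactly as you do, getting $\|G_\tau\|_{\fra S^1}$) and invoking the proof of Lemma \ref{Closed path} unchanged for $|\cal V(\cal P)|\geq 2$; it then notes that open paths cannot contain loops so Lemma \ref{Open path} applies verbatim, and concludes via \eqref{Schatten class embedding} and the argument of Proposition \ref{Product of subgraphs}. The only cosmetic difference is that the paper converts the $\fra S^2$-bounds of Lemmas \ref{Closed path}--\ref{Open path} to $\fra S^1$-bounds via the Schatten embedding $\|G_\tau\|_{\fra S^2}\leq\|G_\tau\|_{\fra S^1}$, whereas you use $\tr G_\tau^k\leq\|G_\tau\|_{\fra S^\infty}^{k-1}\|G_\tau\|_{\fra S^1}$ directly; both yield the stated bound.
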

This result is an analogue of Proposition \ref{Product of subgraphs} in the current setting. We prove Proposition \ref{Value of pairing 1D} in several steps, following the proof of Proposition \ref{Product of subgraphs}. An important difference is that now the associated graph $(\cal V, \cal E, \sigma)$ can have loops, which requires an appropriate modification in the argument.
Moreover, we need to take into account the change of the order from \eqref{linear order} to the lexicographical order. The latter point is a minor one. It is manifested in replacing for each $1 \leq i \leq m$ every occurrence of the factor
\begin{equation*}
\rho_{\tau,0} \big(\cal B_{(i,1,-1)} (\f x, \f t) \, \cal B_{(i,2,+1)} (\f x, \f t)\big)\;=\;G_\tau(x_{(i,1,-1)};x_{(i,2,+1)})+ \frac{1}{\tau} S_{\tau,0}(x_{(i,1,-1)};x_{(i,2,+1)})
\end{equation*}
by
\begin{equation*}
\rho_{\tau,0} \big(\cal B_{(i,2,+1)} (\f x, \f t) \, \cal B_{(i,1,-1)} (\f x, \f t)\big)\;=\;G_\tau(x_{(i,2,+1)};x_{(i,1,-1)})\;=\;G_\tau(x_{(i,1,-1)};x_{(i,2,+1)})\,.
\end{equation*}
Note that this only subtracts the term $\frac{1}{\tau} S_{\tau,0}(x_{(i,1,-1)};x_{(i,2,+1)})$ and that there is no time evolution applied to the Green function in either formula.  All of the other factors of $\rho_{\tau,0} \big(\cal B_{\alpha} (\f x, \f t) \, \cal B_{\beta} (\f x, \f t)\big)$ remain the same. The analysis now proceeds as with the order given by \eqref{linear order} and we do not emphasize this point in the sequel.

With $\f y, \f s$ as in \eqref{def_ys}, $\cal J_{\tau,e}$ as in \eqref{J_e}, and $\xi(\f y_1)$ as in \eqref{y_1}, we have that 
\eqref{Wick_application 2 bound}
holds for $\Pi$, which now belongs to $\fra R$. 
Moreover, we adapt the terminology from Definition \ref{Open and Closed path} and the discussion that immediately follows to the current setting. 

Instead of Lemma \ref{Closed path}, we have the following result.

\begin{lemma}
\label{Closed path 1D}
Suppose that $\cal P \in \conn(\cal E)$ is a closed path. Then
\begin{equation}
\label{Closed path bound 1D}
\int_{\Lambda^{\cal V(\cal P)}} \prod_{\va \in \cal V(\cal P)} \dd y_{\va} \, \prod_{e \in \cal P} \cal J_{\tau, e}(\f y_e,\f s) \;\leq\; C^{\abs{\cal V(\cal P)}}\,.
\end{equation}
Moreover, with $\hat{\cal J}_{\tau,e}$ as in \eqref{hat J_e}, the convergence \eqref{Closed path remark} holds.
\end{lemma}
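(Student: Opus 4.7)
The plan is to follow the strategy of Lemma \ref{Closed path} in the renormalized setting, with two modifications. First, because we now work with pairings in $\fra R$ rather than $\fra P$, the graph $(\cal V, \cal E)$ may contain loops, and we must isolate and handle the loop case. Second, in dimension $d=1$ we may use the stronger bound $\norm{G_\tau}_{\fra S^1(\fra H)} \leq \tr h^{-1} < \infty$ provided by \eqref{Tr G_tau 1D}, which simplifies the bookkeeping because we no longer need to distinguish between traces $\tr G_\tau^{q-|I|}$ with $q-|I| \geq 2$ (previously bounded via the $\fra S^2$ norm) and $q-|I|=1$ (previously bounded via $\fra S^1$).

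Since every vertex of $\cal V_2$ has degree exactly $2$ and a loop contributes $2$ to the degree of its unique endpoint, a connected component containing a loop must reduce to a single vertex $a$ together with a single loop edge $e = \{a,a\}$. A loop originates from a pair $(\alpha, \beta) \in \Pi$ with $[\alpha] = [\beta]$, so $\alpha$ and $\beta$ differ only in the sign component; under the lexicographical order with $+1 < -1$ used in this section, this forces $\sigma(e) = -1$. Hence $\cal J_{\tau,e}(\f y_e, \f s) = G_{\tau,0}(y_a;y_a) = G_\tau(y_a;y_a) = \hat{\cal J}_{\tau,e}(\f y_e, \f s)$, and integrating over $y_a$ yields $\tr G_\tau$, bounded by \eqref{Tr G_tau 1D}. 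This settles \eqref{Closed path bound 1D} for such paths, and also the convergence statement trivially, since the two kernels coincide identically.

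For a loop-free closed path of length $q \geq 2$, I would reproduce the computation of Lemma \ref{Closed path} verbatim up to the identity
\begin{equation*}
\int_{\Lambda^{\cal V(\cal P)}} \prod_{\va \in \cal V(\cal P)} \dd y_{\va} \prod_{e \in \cal P} \cal J_{\tau, e}(\f y_e, \f s) \;=\; \sum_{I \subset J_\cal P} \frac{1}{\tau^{|I|}} \tr \pb{G_\tau^{q - |I|}}\,,
\end{equation*}
after which I would estimate uniformly $\tr(G_\tau^{q-|I|}) \leq \norm{G_\tau}_{\fra S^1(\fra H)}^{q-|I|} \leq C^{q-|I|}$ for every $|I| \leq q-1$ using \eqref{Tr G_tau 1D}, in contrast to the two-regime estimate of Lemma \ref{Closed path}. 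Since $|J_\cal P| \leq q$, this yields the bound $C^{\abs{\cal V(\cal P)}}$. For \eqref{Closed path remark} in this case I would replay the same derivation while restricting the sum above to nonempty $I$, exactly as in the proof of Lemma \ref{Closed path}; this gains an extra factor $\tau^{-1}$ and drives the discrepancy to zero. The only conceptually new feature — the possibility of loops — turns out to be automatically benign, because loops are forced into single-vertex components whose integral $\tr G_\tau$ is controlled precisely by the one-dimensional hypothesis $\tr h^{-1} < \infty$.
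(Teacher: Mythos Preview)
Your proposal is correct and follows essentially the same approach as the paper: isolate the single-vertex loop case (where $\cal J_{\tau,e} = \hat{\cal J}_{\tau,e} = G_\tau(y_a;y_a)$ and the integral equals $\tr G_\tau \leq \tr h^{-1} < \infty$), and for $q \geq 2$ invoke the proof of Lemma \ref{Closed path}. One small remark: you write ``since $|J_{\cal P}| \leq q$'', but the sum over $I \subset J_{\cal P}$ actually requires $|J_{\cal P}| \leq q-1$ to avoid the divergent term $\tau^{-q}\tr I$; this is exactly the statement $J_{\cal P} \neq \{1,\dots,q\}$ established in the Lemma \ref{Closed path} computation you are reproducing verbatim, so it is already covered.
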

\begin{proof}
We first prove \eqref{Closed path bound 1D}.
We consider the cases when $\abs{\cal V(\cal P)}=1$ and when $\abs{\cal V(\cal P)} \geq 2$ separately.
In the first case,
$\cal P$ is a loop of the form $e=\{a,a\}$ for some $a \in \cal V_2$.
In particular, the expression on the left-hand side of \eqref{Closed path bound 1D} equals
$\int_{\Lambda} \dd y_{\va} \, G_\tau(y_a;y_a) = \|G_\tau\|_{\fra S^1(\fra H)}$, so that the claim holds by \eqref{Tr G_tau 1D}.
Here we used Lemma \ref{Positivity lemma} and that there is no time evolution applied to $G_\tau$.
In the second case $\cal P$ does not contain any loops so we can use the proof of Lemma \ref{Closed path} to obtain the same bound as on the right-hand side of \eqref{Closed path bound}. The convergence \eqref{Closed path remark} in this context follows from the proof of Lemma \ref{Closed path}. The only difference that we have to note is that when $\abs{\cal V(\cal P)}=1$, the claim holds trivially since $\cal J_{\tau,e}=\hat{\cal J}_{\tau,e}$ whenever $e$ is a loop. 
\end{proof}
Furthermore, if $\cal P \in \conn(\cal E)$ is an open path, then it cannot have any loops. Hence the proof of Lemma \ref{Open path} implies that \eqref{Open path bound} and \eqref{Open path remark} still hold in this setting. Putting this observation together with Lemma \ref{Closed path 1D} and \eqref{Schatten class embedding}, and arguing as in the proof of Proposition \ref{Product of subgraphs}, we deduce Proposition \ref{Value of pairing 1D}.

Combining Proposition \ref{Value of pairing 1D} and \eqref{Tr G_tau 1D}, we deduce that Corollary \ref{cor:bound_am} holds in this setting.

Next, for $\Pi \in \fra R$ and $\f t \in \fra A$, we define 
$\cal I_{\Pi}^{\xi} (\f t)$ as in Definition \ref{def_calI}.
Analogously to \eqref{f_m,a_m^xi classical}, we define
\begin{equation} \label{f_m,a_m^xi classical 1D infty}
a_{\infty,m}^\xi \;\deq\; \frac{(-1)^m}{2^m \,  m!} \sum_{\Pi \in \fra R} \cal I^\xi_{\Pi}\,.
\end{equation}
A modification of the proof of Proposition \ref{Convergence of the explicit terms} shows that \eqref{Convergence of the explicit terms identity} still holds 
with $a^{\xi}_{\tau,m}$ given by \eqref{Wick_application lemma 1D identity} and $a^{\xi}_{\infty,m}$ given by  \eqref{f_m,a_m^xi classical 1D infty}. More precisely,  \eqref{hat I convergence} still holds by \eqref{Closed path remark} and \eqref{Open path remark} applied in this setting. In the telescoping argument, we need to take into consideration that $e$ can be a loop. If this is the case, we define $\tilde {\cal J}_{\tau,e}$ as in \eqref{tilde J tau e}.

The convergence \eqref{Convergence of the explicit terms identity} then follows by using the arguments in the discussion following \eqref{telescoping} with the estimates
\begin{equation}
\label{telescoping e=loop}
\begin{cases}
\|\tilde {\cal J}_{\tau,e}(\cdot,\f s)\|_{\fra S^1} \;\leq\;C &\mbox{if }e \neq e_0\\
\lim_{\tau \rightarrow \infty} \|\tilde {\cal J}_{\tau,e}(\cdot,\f s)\|_{\fra S^1}\;=\;0 &\mbox{if }e=e_0\,.
\end{cases}
\end{equation}
Indeed, if $e\neq e_0$, then $\|\tilde {\cal J}_{\tau,e}(\cdot,\f s)\|_{\fra S^1} \leq \|G_\tau\|_{\fra S^1} + \|G\|_{\fra S^1}$ and the bound in \eqref{telescoping e=loop} holds by \eqref{Tr G_tau 1D} and $\tr h^{-1} <\infty$. For $e=e_0$, we use $\|\tilde{\cal J}_{\tau,e}(\cdot,\f s)\|_{\fra S^1} = \|G_\tau-G\|_{\fra S^1}$, which converges to zero as $\tau \rightarrow \infty$, by spectral decomposition and dominated convergence.

Next, we estimate the remainder term. Owing to the absence of renormalization, we can in fact show that it is possible to estimate the remainder term in terms of the explicit term. The key ingredient here is the Feynman-Kac formula.

\begin{proposition}
\label{Remainder term bound 1D}
Given $\f t \in \fra A(M)$ with $\eta = 0$ and $\re z \geq 0$,  we let
\begin{equation}
\label{cal R 1}
\cal R (\f t,z) \;\deq\;
\Theta_\tau(\xi)\,\ee^{-(1-t_1) H_{\tau,0}} \,W_\tau \,
\ee^{-(t_1-t_2)H_{\tau,0}} \,W_{\tau} \,\cdots 
\,\ee^{-(t_{M-1} - t_M) H_{\tau,0}} \, W_{\tau}\,\, \ee^{-t_M (H_{\tau,0}+z W_{\tau})}\,.
\end{equation}
Then we have
\begin{equation}
\label{cal R 2}
\bigg|\frac{\tr \cal R (\f t,z)}{\tr (\ee^{-H_{\tau,0}})}\bigg| \;\leq\; (Cp)^{p} 
(C M)^{2M} \, \|w\|_{L^{\infty}}^M\,.
\end{equation}
\end{proposition}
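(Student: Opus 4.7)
My plan is to use the Feynman--Kac formula to dominate the integral kernel of the full propagator $\ee^{-t_M(H_{\tau,0}+zW_\tau)}$ pointwise by that of the free propagator $\ee^{-t_M H_{\tau,0}}$ on each $n$-particle sector, and thereby reduce the estimate \eqref{cal R 2} to the analogous bound for the explicit term at $z=0$ (with $\xi$ replaced by $\abs{\xi}$), which is already controlled by Proposition \ref{Value of pairing 1D}. Indeed, on $\fra H^{(n)}$ the operator $\tau(H_{\tau,0}+zW_\tau)$ acts as the many-body Schr\"odinger operator $\sum_{i=1}^n h_i+\frac{2z}{\tau}\sum_{i<j}w(x_i-x_j)$, whose semigroup kernel admits the Feynman--Kac representation as an expectation over $n$ independent $\Lambda$-valued Brownian paths. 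Since $v\geq 0$, $w\geq 0$ and $\re z\geq 0$, the integrand has modulus bounded by the corresponding free integrand, giving the pointwise bound
\begin{equation*}
\absb{\ee^{-t_M(H_{\tau,0}+zW_\tau)}(x^{(n)};y^{(n)})} \;\leq\; \ee^{-t_M H_{\tau,0}}(x^{(n)};y^{(n)})\,.
\end{equation*}

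Next, I would observe that in the absence of renormalization all remaining factors in $\cal R(\f t,z)$ have nonnegative kernels on each $\fra H^{(n)}$: the free propagators $\ee^{-sH_{\tau,0}}$ are bosonic symmetrizations of products of positive one-body heat kernels (Lemma \ref{Positivity lemma}), while $W_\tau\vert_{\fra H^{(n)}}$ acts as multiplication by the nonnegative function proportional to $\sum_{i<j}w(x_i-x_j)$. Since $\Theta_\tau(\xi)$ commutes with the particle-number operator, writing $\tr\cal R(\f t,z)=\sum_n\tr_{\fra H^{(n)}}\pb{\cal R(\f t,z)\vert_{\fra H^{(n)}}}$ and expressing each summand as a multi-dimensional integral over intermediate configurations, the only factors that may fail to be nonnegative are the kernel of $\Theta_\tau(\xi)$ and that of the full propagator. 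Applying the triangle inequality, together with the pointwise bound above and the elementary estimate $\absb{\Theta_\tau(\xi)(x^{(n)};y^{(n)})}\leq\Theta_\tau(\abs{\xi})(x^{(n)};y^{(n)})$, where $\abs{\xi}$ denotes the operator whose kernel is the pointwise modulus of $\xi$'s kernel, therefore yields
\begin{equation*}
\absbb{\frac{\tr\cal R(\f t,z)}{\tr(\ee^{-H_{\tau,0}})}} \;\leq\; \frac{\tr\qb{\Theta_\tau(\abs{\xi})\,\ee^{-s_0 H_{\tau,0}}W_\tau\ee^{-s_1 H_{\tau,0}}W_\tau\cdots W_\tau\ee^{-s_M H_{\tau,0}}}}{\tr(\ee^{-H_{\tau,0}})}\,,
\end{equation*}
where $s_0=1-t_1$, $s_i=t_i-t_{i+1}$, and $s_M=t_M$.

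The right-hand side is, up to a sign, precisely the integrand appearing in the explicit coefficient $a^{\abs{\xi}}_{\tau,M}$ from \eqref{Explicit term a}. Since the Hilbert--Schmidt norm depends only on the pointwise modulus of the kernel, $\|\abs{\xi}\|_{\fra S^2(\fra H^{(p)})}=\|\xi\|_{\fra S^2(\fra H^{(p)})}\leq 1$, so $\abs{\xi}\in\fra B_p$. Applying Lemma \ref{Wick_application lemma} and Proposition \ref{Value of pairing 1D} to $\abs{\xi}$ (using the pairing set $\fra R$ from Definition \ref{def_pairing 1D}), together with $\abs{\fra R(M,p)}\leq(2M+p)!$ and \eqref{Tr G_tau 1D}, one then obtains a bound of order $(Cp)^p\,C^M\,\|w\|_{L^\infty}^M\,(2M+p)!$, which yields \eqref{cal R 2} after absorbing the factorial into $(CM)^{2M}$. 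The main obstacle I expect is implementing the Feynman--Kac-based kernel domination cleanly on Fock space: this requires careful bookkeeping of the combinatorial factors when passing between $L^2(\Lambda^n)$ and its bosonic subspace $\fra H^{(n)}$, and identifying the kernels of $\Theta_\tau(\xi)$ and $W_\tau$ on each sector with sufficient precision to justify the pointwise kernel inequalities invoked above.
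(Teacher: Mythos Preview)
Your proposal is correct and follows essentially the same approach as the paper: both use the Feynman--Kac representation on each $n$-particle sector to dominate the kernel of the full propagator by that of the free one, replace $\xi$ by the operator with kernel $\abs{\xi(\cdot;\cdot)}$ (which remains in $\fra B_p$), and then invoke Proposition \ref{Value of pairing 1D} together with the crude bound $\abs{\fra R(M,p)}\leq(2M+p)!$ to conclude. The bookkeeping concern you raise at the end is harmless, since $\Theta_\tau(\xi)$, $W_\tau$, and the propagators all commute with the particle-number operator, so the trace decomposes as a sum over $n$-particle sectors without any extra combinatorial factors.
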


\begin{proof}
Let $P^{(n)}$ be the orthogonal projection onto the subspace $\fra H^{(n)}$ of $\cal F$.
For an operator $\cal A$ on $\cal F$, we denote $\cal A^{(n)} \deq P^{(n)} \cal A P^{(n)}$. All of the operators $\cal A$ that we consider in the following leave the subspace $\fra H^{(n)}$ invariant, in which case we have $\cal A = \sum_{n \in \N} \cal A^{(n)}$. Define the operator
\begin{equation}
\label{cal S b t}
\cal S (\f t) \;\deq\;
\Theta_\tau(\xi)\,\ee^{-(1-t_1) H_{\tau,0}} \,W_\tau \,
\ee^{-(t_1-t_2)H_{\tau,0}} \,W_{\tau} \,\ee^{-(t_2 - t_3) H_{\tau,0}} \cdots \,\ee^{-(t_{M-1} - t_M) H_{\tau,0}}
\, W_{\tau}\,,
\end{equation}
so that we may write
\begin{equation}
\label{First-quantized notation}
\tr \cal R (\f t,z) \;=\; \sum_{n \in \N} \tr \big(\cal R (\f t,z)\big)^{(n)} \;=\; 
\sum_{n \in \N} \tr \big(\cal S (\f t) \, \ee^{-t_M (H_{\tau,0}+z W_{\tau})}\big)^{(n)}\,.
\end{equation}
For any $n \in \N$ we have
\begin{equation*}
\tr \big(\cal S (\f t) \,\ee^{-t_M (H_{\tau,0}+z W_{\tau})}\big)^{(n)}
\;=\;
\int_{\Lambda^n} \dd \f x \int_{\Lambda^n} \dd \f y\, \big(\cal S (\f t) \big)^{(n)}(\f y;\f x) \, \big( \ee^{-t_M (H_{\tau,0}+z W_{\tau})} \big)^{(n)} (\f x;\f y)\,.
\end{equation*}

We now estimate each of the factors $\big( \ee^{-t_M (H_{\tau,0}+z W_{\tau})}\big)^{(n)} (\f y; \f x)$ and $\big(\cal S (\f t) \big)^{(n)}(\f x;\f y)$ separately in absolute value. 

We first estimate $\big(\ee^{-t_M (H_{\tau,0}+z W_{\tau})}\big)^{(n)} (\f y; \f x)$.
Given $\f x \in \Lambda^n$, we
denote by $W_{\f x}$ the law of the $\Lambda^n$-valued Brownian motion $\f b$ starting at $\f x$ with variance $\int W_{\f x}(\dd b) \, (\f b(t) - \f x)^2 = 2t$ for $t > 0$. By the Feynman-Kac formula, we have
\begin{equation*}
\big( \ee^{-t_M (H_{\tau,0}+z W_{\tau})}\big)^{(n)} (\f x; \f y) \;=\; \int W_{\f x}(\dd \f b) \, \ee^{-\frac{\kappa n}{\tau}}\,\ee^{-\int_0^{t_M} \dd s\, z \, \big(\frac{1}{\tau^2} \sum_{1 \leq i < j \leq n} w_{ij}(\f b(s))\big)}\,\delta(\f b(t)-\f y)\,.
\end{equation*}
Here we used the identity
$(W_\tau)^{(n)}(\f u;\f v) = \frac{1}{\tau^2} \sum_{1 \leq i < j \leq n} w(u_i-u_j) \,\prod_{l=1}^{n} \delta(u_l-v_l)$ and the notation 
$w_{ij}(\f u) \;\deq\; w(u_i-u_j)$ for $\f u=(u_1,\ldots,u_n) \in \fra H^{(n)}$ and $1 \leq i < j \leq n$.
In other words, on $\fra H^{(n)}$ the operator $W_\tau$ acts as multiplication by $\frac{1}{\tau^2}\, \sum_{1 \leq i < j \leq n} w_{ij}(\f u)$.
Since $\re z \geq 0$ and $w \geq 0$, we have
\begin{equation}
\label{Kernel estimate 1}
\big|\big( \ee^{-t_M (H_{\tau,0}+z W_{\tau})} \big)^{(n)} (\f x; \f y) \big| \;\leq\; \int W_{\f x}(\dd \f b) \, \ee^{-\frac{\kappa n}{\tau}}\,\delta(\f b(t)-\f y)\;=\;\big( \ee^{-t_M H_{\tau,0}} \big)^{(n)} (\f x; \f y) \,,
\end{equation}
where in the second step we used the Feynman-Kac formula again.

Next, we estimate $\big(\cal S (\f t)\big)^{(n)}(\f y;\f x)$.
In particular, we note that
\begin{equation}
\label{Kernel estimate 2}
\big|\big(\cal S (\f t)\big)^{(n)}(\f y;\f x)\big| \;\leq\;
\Big(\Theta_\tau(\tilde \xi)\,\ee^{-(1-t_1) H_{\tau,0}} \,W_\tau \,
\ee^{-(t_1-t_2)H_{\tau,0}} \, W_{\tau} \, \cdots \,\ee^{-(t_{M-1} - t_M) H_{\tau,0}}
\, W_{\tau}\Big)^{(n)} (\f y; \f x)\,.
\end{equation}
Here $\tilde \xi$ is defined to be the operator whose integral kernel is equal to the absolute value of the kernel of $\xi$.
We note that 
\begin{equation}
\label{tilde xi}
\tilde \xi \in \fra B_p\,.
\end{equation}
The estimate \eqref{Kernel estimate 2} follows by writing
\begin{multline*}
\big(\cal S (\f t)\big)^{(n)}(\f y;\f x) \;=\;
\Bigg(\prod_{j=1}^{2M} \int_{\Lambda^n} \dd \f v_j \Bigg)\,
\big(\Theta_\tau(\xi)\big)^{(n)} (\f y;\f v_1) \,\big(\ee^{-(1-t_1) H_{\tau,0}} \big)^{(n)}(\f v_1;\f v_2) \, \big(W_\tau\big)^{(n)}(\f v_2;\f v_3) \,
\\
\big(\ee^{-(t_1-t_2)H_{\tau,0}}\big)^{(n)}(\f v_3;\f v_4) \, \cdots \,\big(\ee^{-(t_{M-1} - t_M) H_{\tau,0}}\big)^{(n)}(\f v_{2M-1};\f v_{2M}) \, \big(W_{\tau}\big)^{(n)} (\f v_{2M}; \f x)\,
\end{multline*}
and noting that 
$\big|\big(\Theta_\tau(\xi)\big)^{(n)}\big| \leq \big(\Theta_\tau(\tilde \xi)\big)^{(n)}$, $\big(\ee^{-(t_j-t_{j-1})H_{\tau,0}}\big)^{(n)} \geq 0$, and $\big(W_{\tau}\big)^{(n)} \geq 0$
in the sense of operator kernels. Substituting \eqref{Kernel estimate 1} and \eqref{Kernel estimate 2} into \eqref{First-quantized notation}, we obtain
\begin{equation*}
|\tr \cal R (\f t,z)| \;\leq\; \tr \Big(\Theta_\tau(\tilde \xi)\,\ee^{-(1-t_1) H_{\tau,0}} \,W_\tau \,
\ee^{-(t_1-t_2)H_{\tau,0}} \, W_{\tau} \, \cdots \,\ee^{-(t_{M-1} - t_M) H_{\tau,0}}
\, W_{\tau}\,\, \ee^{-t_M H_{\tau,0}}\Big)\,.
\end{equation*}
We now recall \eqref{tilde xi} and argue as in the proof of Proposition \ref{Value of pairing 1D} with $\xi$ replaced by $\tilde{\cal \xi}$. In addition, we sum over all pairings $\Pi \in \fra R$. There are at most $(2M+p)! \leq C^{2M+p} \,p^p \,M^{2M}$ such pairings. The claim now follows.
\end{proof}

Recalling the formula \eqref{Remainder term R} for the remainder term $R^\xi_{\tau,m}$ and using Proposition \ref{Remainder term bound 1D} we deduce that the following result holds in this setting.

\begin{corollary} \label{cor:remainder 1D}
For all $M \in \N$ and $\re z \geq 0$ we have $\absb{R^\xi_{\tau,M}(z)} \leq (Cp)^p \, \pb{C \norm{w}_{L^{\infty}} \abs{z}}^M M!$.
\end{corollary}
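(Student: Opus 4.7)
The plan is to derive this corollary as an immediate consequence of Proposition \ref{Remainder term bound 1D}, applied pointwise to the operator $\cal R(\f t, z)$ from \eqref{cal R 1}. Setting $\eta = 0$ in the formula \eqref{Remainder term R} for the remainder term, I can rewrite it compactly as
\begin{equation*}
R^\xi_{\tau,M}(z) \;=\; (-1)^M z^M \int_{\fra A(M)} \dd \f t \,\frac{\tr \cal R(\f t, z)}{\tr (\ee^{-H_{\tau,0}})}\,,
\end{equation*}
where $\fra A(M)$ is the simplex defined in \eqref{def_fraA} (with $\eta = 0$), which has Lebesgue volume $1/M!$.

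Next, I would invoke Proposition \ref{Remainder term bound 1D} to bound the integrand uniformly in $\f t \in \fra A(M)$ by $(Cp)^p (CM)^{2M} \|w\|_{L^\infty}^M$. Pulling the absolute value inside the integral and using the volume of the simplex then yields
\begin{equation*}
\abs{R^\xi_{\tau,M}(z)} \;\leq\; \abs{z}^M \cdot (Cp)^p (CM)^{2M} \|w\|_{L^\infty}^M \cdot \frac{1}{M!}\,.
\end{equation*}

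The final step is to reshape the combinatorial factor $(CM)^{2M}/M!$ into the desired $C^M M!$ using Stirling's formula: since $M^{2M}/(M!)^2 \leq \ee^{2M}/(2\pi M)$, we have $(CM)^{2M}/M! \leq (C \ee)^{2M} M!$. Absorbing $(C\ee)^2$ into the constant $C$ delivers the claimed bound $(Cp)^p (C \|w\|_{L^\infty} \abs{z})^M M!$. There is no serious obstacle here; the work has already been done in Proposition \ref{Remainder term bound 1D}, and this corollary is a bookkeeping step to rewrite the estimate in the precise Borel-summation-friendly form required for the application of Theorem \ref{Borel summation convergence} in the one-dimensional setting (analogous to Proposition \ref{prop:remainder} in the higher-dimensional case).
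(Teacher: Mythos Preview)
Your proposal is correct and follows essentially the same approach as the paper, which simply states that the corollary follows by recalling formula \eqref{Remainder term R} and applying Proposition \ref{Remainder term bound 1D}. You have spelled out the details (integrating the uniform bound over the simplex of volume $1/M!$ and applying Stirling) that the paper leaves implicit.
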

In addition, we note that Proposition \ref{Remainder term bound 1D} and the proof of Lemma \ref{lem:A_tau_anal} imply that $A_\tau^\xi$ is analytic in $\{z \col \re z > 0\}$.

We now consider the classical case. Let $a^\xi_m$ denote the coefficients of $\tilde \rho_z (\Theta(\xi))$, where $\tilde \rho_z$ is defined for $\re z \geq 0$ as in \eqref{def_rho_z} and where $W$ is given by \eqref{def_W}  for a pointwise nonnegative $w$. Furthermore, let $R^\xi_M(z)$ denote the remainder term in the Taylor expansion up to order $M$. A simple modification of the proof of Lemma \ref{lem_classical_coeff} yields that 
\eqref{lem_classical_coeff identity} holds in this setting. In particular, adapting the proof of Corollary \ref{cor:a_m_est} implies \eqref{cor:a_m_est bound} in this setting. Finally, the proofs of Lemmas \ref{lem:R_est} and \ref{lem:A_analytic} carry over and imply that \eqref{lem:R_est bound} holds and that $A^\xi$ is analytic for $\re z>0$.

We now finish the proof of Proposition \ref{1D proposition 1} by using the duality argument given in the proof of Theorem \ref{thm:main}. Note that combining all of the results of this subsection, it follows that the assumptions of Proposition \ref{Borel summation convergence} are satisfied with 
$\nu = (Cp)^p,\, \sigma = C\norm{w}_{L^{\infty}}$.
The result of Proposition \ref{1D proposition 1} now follows as in the proof of Theorem \ref{thm:main}; see Section \ref{sec:conclusion_proof}.

\subsection{Proof of Theorem \ref{thm:1D}, II: convergence of the traces}
\label{1D subsection 2}

\begin{proposition}
\label{1D proposition 2}
Under the assumptions of Theorem \ref{thm:1D} we have $\lim_{\tau \to \infty} \tr \gamma_{\tau,p}=\tr \gamma_p$ for all $p \in \mathbb{N}$.
\end{proposition}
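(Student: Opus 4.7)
The idea is to run the Duhamel/Borel-summation machinery of Section~\ref{1D subsection 1} with the observable $\xi = I_{\fra H^{(p)}}$, the identity operator, whose lift to Fock space is (formally)
\begin{equation*}
\Theta_\tau\pb{I_{\fra H^{(p)}}} \;=\; \int \dd x_1 \cdots \dd x_p \, \phi_\tau^*(x_1) \cdots \phi_\tau^*(x_p) \phi_\tau(x_1) \cdots \phi_\tau(x_p) \;=\; \cal N_\tau \pb{\cal N_\tau - \tfrac{1}{\tau}} \cdots \pb{\cal N_\tau - \tfrac{p - 1}{\tau}}\,.
\end{equation*}
The starting identities are then
\begin{equation*}
\tr \gamma_{\tau,p} \;=\; \rho_\tau\pb{\Theta_\tau(I_{\fra H^{(p)}})}\,, \qquad \tr \gamma_p \;=\; \rho\pb{\Theta(I_{\fra H^{(p)}})} \;=\; \rho(\cal N^p)\,,
\end{equation*}
where the second follows from the commutativity of classical fields. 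The difficulty is that $I_{\fra H^{(p)}} \notin \fra S^2(\fra H^{(p)})$, so Proposition~\ref{1D proposition 1} does not apply directly.

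The first step is to perform the Duhamel expansion as in \eqref{Explicit term a}--\eqref{Remainder term R} with $\Theta_\tau(\xi)$ replaced by $\Theta_\tau(I_{\fra H^{(p)}})$, producing explicit coefficients $a_{\tau,m}^{I_{\fra H^{(p)}}}$ and a remainder $R_{\tau,M}^{I_{\fra H^{(p)}}}$. Applying Wick's theorem as in Lemma~\ref{Wick_application lemma}, the coefficients still decompose as $a_{\tau,m}^{I_{\fra H^{(p)}}} = \frac{(-1)^m}{2^m} \int_{\fra A} \dd \f t \sum_{\Pi \in \fra R} \cal I_{\tau,\Pi}^{I_{\fra H^{(p)}}}(\f t)$, where the integrand is formally given by \eqref{def_calI} with the kernel $\xi(\f y_1) = \prod_{r=1}^p \delta\pb{y_{(m+1,r,+1)} - y_{(m+1,r,-1)}}$. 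The crucial point is that these delta functions identify each endpoint in $\cal V_1$ with its sibling, so that after integrating out $\f y_1$ every \emph{open} path $\cal P \in \conn_o(\cal E)$ either closes into a cycle (possibly containing multiple endpoints chained through the $\xi$-identifications) or merges with other open paths into a single cycle of total length at most $\abs{\cal V(\cal P)} + 2$.

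The second step is the adapted upper bound. In the proof of Proposition~\ref{Value of pairing 1D}, the final Cauchy-Schwarz step used $\norm{\xi}_{\fra S^2} \leq 1$; here, we instead note that after the $\f y_1$ integration each newly formed cycle is a trace of a product of operators of the form $G_{\tau,t}$ and $S_{\tau,t}/\tau$, handled exactly as in the proof of Lemma~\ref{Closed path 1D}. The resulting contribution is bounded by powers of $\norm{G_\tau}_{\fra S^1(\fra H)}$, which by \eqref{Tr G_tau 1D} is controlled by $\tr h^{-1} < \infty$. This yields a $\tau$-uniform bound of the form $\absb{\cal I_{\tau,\Pi}^{I_{\fra H^{(p)}}}(\f t)} \leq C^{m+p} \norm{w}_{L^\infty}^m$ and, by summing over $\fra R$ and integrating in $\f t$, the factorial estimate $\absb{a_{\tau,m}^{I_{\fra H^{(p)}}}} \leq C^p \, C^m \, m! \,\norm{w}_{L^\infty}^m$ needed for Borel summation. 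The term-by-term convergence $\cal I_{\tau,\Pi}^{I_{\fra H^{(p)}}}(\f t) \to \cal I_\Pi^{I_{\fra H^{(p)}}}(\f t)$ (for $\f t$ in the open simplex) is then obtained by the telescoping argument of Proposition~\ref{Convergence of the explicit terms}, using trace-norm convergence $\norm{G_\tau - G}_{\fra S^1(\fra H)} \to 0$ (from spectral theory plus $\tr G_\tau \to \tr h^{-1}$ by monotone convergence) in place of Hilbert-Schmidt convergence.

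The third step is the remainder term. Here Proposition~\ref{Remainder term bound 1D} applies with very little modification: the Feynman-Kac argument bounds $\abs{\tr \cal R(\f t, z)}$ in terms of the free-propagator analogue obtained by dropping $z W_\tau$ in the last exponential and replacing $\xi$ by its ``absolute-value'' operator $\tilde \xi$; for $\xi = I_{\fra H^{(p)}}$ we have $\tilde \xi = \xi$, and the resulting expression is bounded by the same pairing analysis as above, giving $\absb{R_{\tau,M}^{I_{\fra H^{(p)}}}(z)} \leq C^p (C \norm{w}_{L^\infty} \abs{z})^M M!$. The corresponding classical bounds follow from Lemmas~\ref{lem_classical_coeff} and~\ref{lem:R_est} (re-interpreted for $\xi = I_{\fra H^{(p)}}$ and using $\tr G^k < \infty$ for $k \geq 1$, which holds in~1D). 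With all hypotheses of Theorem~\ref{Borel summation convergence} verified, we deduce $\tilde \rho_{\tau,1}(\Theta_\tau(I_{\fra H^{(p)}})) \to \tilde \rho_1(\Theta(I_{\fra H^{(p)}}))$ and $\tilde \rho_{\tau,1}(I) \to \tilde \rho_1(1)$; dividing yields $\tr \gamma_{\tau,p} \to \tr \gamma_p$.

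The main obstacle is the adaptation of Proposition~\ref{Value of pairing 1D} to the identity observable: one must verify that, after contracting with the $\delta$-kernels of $I_{\fra H^{(p)}}$, the graph-value $\cal I_{\tau,\Pi}^{I_{\fra H^{(p)}}}(\f t)$ still admits a uniform-in-$\tau$ bound, now in terms of $\tr h^{-1}$ (via $\norm{G_\tau}_{\fra S^1}$) rather than the Hilbert-Schmidt norm of $\xi$. Once this is done the rest of the argument is a direct rerun of Section~\ref{1D subsection 1}.
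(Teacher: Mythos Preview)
Your proposal is correct and follows essentially the same approach as the paper. The paper formalizes your ``open paths close into cycles'' observation by introducing a modified equivalence relation on $\cal X$ that additionally identifies $(m+1,r,+1) \sim (m+1,r,-1)$, yielding a new vertex set $\tilde{\cal V}$ in which every connected component is a closed path (Definition~\ref{def_collapsed_graph 1D delta}); the adapted upper bound you describe is then exactly Lemma~\ref{Closed path 1D delta}, which handles closed paths containing $\tilde{\cal V}_1$ vertices by splitting them into open segments between consecutive $\tilde{\cal V}_1$ vertices and applying the open-path estimate to each segment before integrating the $\tilde{\cal V}_1$ variables via Cauchy--Schwarz.
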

This subsection is devoted to the proof of Proposition \ref{1D proposition 2}. This is done in several steps, using arguments similar to those used in the proof of Proposition \ref{1D proposition 1}. The main difference is that we have to work with a different type of observable, and, consequently, with a different graph structure from the one used in Subsection \ref{1D subsection 1}.
More precisely, we fix throughout this section $p \in \N$ and we define $\xi \equiv \xi_p$ to be the identity operator on $\fra H^{(p)}$, with integral kernel
\begin{equation}
\label{xi_p}
\xi (x_1,\ldots,x_p;y_1,\ldots,y_p) \;\deq\; \prod_{j=1}^{p} \delta(x_j-y_j)\,.
\end{equation}
Note that $\xi \notin \fra B_p$. Nevertheless, we can extend the definitions \eqref{def_A_tau} and \eqref{def_A} of $A_{\tau}^{\xi}$ and $A^{\xi}$ for $\xi$ as in \eqref{xi_p}. Note that here we again take $\eta=0$.
With this definition, we have
\begin{equation}
\label{A_tau A delta}
\tr \gamma_{\tau,p} \;=\; \Theta_{\tau}^{\xi}(1)\,,\qquad \tr \gamma_p\;=\; \Theta^{\xi}(1)\,.
\end{equation}
With the same notation as in the preceding subsection, we set up the Duhamel expansion of $A_{\tau,p}^{\xi}(z)$ and $A_{\tau}^{\xi}(z)$ in $\re z \geq 0$. We first focus on the quantum case. For fixed $m \in \N$ we work again with the set 
$\cal X \equiv \cal X(m,p)$ from Definition \ref{cal X} and the set of partitions $\fra R \equiv \fra R (m,p)$ from Definition \ref{def_pairing 1D}.

We need to modify the graph structure in Definition \ref{def_collapsed_graph} for the current setting.

\begin{definition} \label{def_collapsed_graph 1D delta}
Fix $m,p \in \N$. To each $\Pi \in \fra R$ we assign an edge-coloured undirected multigraph $(\tilde{\cal V}_\Pi, \tilde{\cal E}_\Pi, \tilde \sigma_\Pi) \equiv (\tilde{\cal V},\tilde{\cal E},\tilde \sigma)$, with a colouring $\tilde \sigma \col \cal E \to {\pm 1}$, as follows.
\begin{enumerate}
\item
On $\cal X$ we introduce the equivalence relation $\alpha \sim \beta$ if and only if $i_\alpha = i_\beta$ and $r_\alpha = r_\beta$. We define the vertex set $\tilde{\cal V} \deq \{[\alpha] \col \alpha \in \cal X\}$ as the set of equivalence classes of $\cal X$. 
 We use the notation $\tilde{\cal{V}} = \tilde{\cal{V}}_2 \cup \tilde{\cal{V}}_1$,
where
$\tilde{\cal V}_2 \deq \{(i,r)\col 1 \leq i \leq m, 1 \leq r \leq 2\}$
and
$\tilde{\cal V}_1 \deq \{(m+1,r)\col 1 \leq r \leq p\}$.

\item
The set $\tilde{\cal V}$ carries a total order $\leq$ inherited from $\cal X$: $[\alpha] \leq [\beta]$ whenever $\alpha \leq \beta$. Note that here we use the lexicographic order on $\cal X$ introduced in Section \ref{1D subsection 1}.
\item
For a pairing $\Pi \in \fra R$, each edge $(\alpha, \beta) \in \Pi$ gives rise to an edge $e = \{[\alpha], [\beta]\}$ of $\tilde{\cal E}$ with $\tilde \sigma(e) \deq \sign_\beta$. 
\item
We denote by $\conn(\tilde{\cal E})$ the set of connected components of $\tilde{\cal E}$, so that $\tilde{\cal E} = \bigsqcup_{\cal P \in \conn(\tilde{\cal E})} \cal P$. We call the connected components $\cal P$ of $\tilde{\cal E}$ \emph{paths}.
\end{enumerate}
\end{definition}
Note that in point (i) of Definition \eqref{def_collapsed_graph}, we in particular view $\alpha$ and $\beta$ as equivalent when $i_\alpha=i_\beta=m+1$ and $r_\alpha=r_\beta$. This is the main difference with the graph structure from Definition \ref{def_collapsed_graph}. By a slight abuse of notation, we denote both the equivalence relation from Definition \ref{def_collapsed_graph} and from Definition \ref{def_collapsed_graph 1D delta} by $\sim$. From context, it will be clear to which equivalence relation we are referring.
From Definitions \ref{def_pairing 1D} and \ref{def_collapsed_graph 1D delta}, we deduce that each vertex of $\tilde{\cal V}$ has degree 2. Therefore $\tilde{\cal V}$ factorizes as a product of closed paths. In particular, these closed paths can be loops.
Moreover, all elements of $\conn (\tilde{\cal E})$ are closed paths.

For the following we fix $m,p \in \N$ and a pairing $\Pi \in \fra R$, and let $(\tilde{\cal V},\tilde{\cal E},\tilde{\sigma})$ denote the associated graph from Definition \ref{def_collapsed_graph 1D delta}. Given $\f t \in \fra A \equiv \fra A(m)$, we appropriately modify Definition \ref{def:I_tau_PI} and define 
\begin{equation*}
\cal I_{\tau,\Pi}^{\xi}(\f t) \;\deq\; \int_{\Lambda^{\tilde{\cal V}}} \dd \f y \, \pBB{\prod_{i=1}^{m} w(y_{i,1}-y_{i,2})} \prod_{e \in \tilde{\cal E}} \cal J_{\tau, e}(\f y_e, \f s)\,.
\end{equation*}
Note that \eqref{Wick_application lemma 1D identity} holds in this setting.

With each $\f x=(x_{\alpha})_{\alpha \in \cal X} \in \Lambda^{\cal X}$ and $\f t = (t_{\alpha})_{\alpha \in \cal X} \in \fra A$ we associate integration labels $\f y=(y_{\va})_{\va \in \tilde{\cal V}}\in \Lambda^{\tilde{\cal V}}$ and time labels $\f s  = (s_{\va})_{\va \in \tilde{\cal V}} \in \R^{\tilde{\cal V}}$ as in \eqref{def_ys}, except that we now take the vertex set $\tilde{\cal V}$ and the equivalence relation $\sim$ to be the one from Definition \ref{def_collapsed_graph 1D delta}. In addition we adapt the splitting \eqref{y1_y2} to this context, where now $\f y_i \deq (y_a)_{a \in \tilde{\cal V}_i}$.
Given $\cal P \in \conn (\tilde{\cal E})$ we denote by $\tilde{\cal V}(\cal P)$ the set of vertices of $\cal P$. Moreover, we let $\tilde{\cal V}_i(\cal P) \deq \tilde{\cal V}(\cal P) \cap \tilde{\cal V}_i$ for $i=1,2$. Given $e \in \tilde{\cal E}$, we define $\cal J_{\tau,e}$ as in Definition \ref{J_e definition}. (Here we replace each occurrence of $\cal V$ and $\cal E$ in Definition \ref{J_e definition} by $\tilde{\cal V}$ and $\tilde{\cal E}$ respectively.)

\begin{lemma}
\label{Closed path 1D delta}
Suppose that $\cal P \in \conn(\tilde{\cal E})$. Then
\begin{equation}
\label{Closed path bound 1D delta}
\int_{\Lambda^{\tilde{\cal V}(\cal P)}} \prod_{\va \in \tilde{\cal V}(\cal P)} \dd y_{\va} \, \prod_{e \in \cal P} \cal J_{\tau, e}(\f y_e,\f s) \;\leq\; C^{\abs{\tilde{\cal V}(\cal P)}}\,.
\end{equation}
\end{lemma}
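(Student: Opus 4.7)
My plan is to adapt the proof of Lemma \ref{Closed path 1D}, which itself mirrors that of Lemma \ref{Closed path}. The key structural observation is that, by Definitions \ref{def_pairing 1D} and \ref{def_collapsed_graph 1D delta}, every vertex of $\tilde{\cal V}$ has degree exactly two in $\tilde{\cal E}$: each equivalence class consists of two elements $(i,r,\pm 1)$ of $\cal X$, and both are paired under $\Pi$. Consequently, each connected component $\cal P$ of $\tilde{\cal E}$ is either (a) a single loop at a vertex, (b) a pair of vertices joined by two distinct edges, or (c) a simple cycle of length $q \geq 3$ without loops.

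In case (a), the unique edge arises from a pair $(\alpha,\beta) \in \Pi$ with $[\alpha]=[\beta]$ and $\sign_\alpha \sign_\beta = -1$, forcing $\{\alpha,\beta\} = \{(i_a,r_a,+1),(i_a,r_a,-1)\}$, so that $\tilde\sigma(e)=-1$ and $t_\alpha = t_\beta$. Definition \ref{J_e definition} then yields $\cal J_{\tau,e}(\f y_e,\f s) = G_\tau(y_a;y_a)$, and integrating over $y_a$ produces $\|G_\tau\|_{\fra S^1(\fra H)} \leq \tr h^{-1}$ by \eqref{Tr G_tau 1D}. In cases (b) and (c) there are no loops and the cyclic argument from the proof of Lemma \ref{Closed path} carries over: the time differences telescope around the cycle to $\sum_{j=1}^q(s_{a_j}-s_{a_{j+1}})=0$, irrespective of whether the individual vertices $a_j$ lie in $\tilde{\cal V}_1$ (with $s_{a_j}=0$) or in $\tilde{\cal V}_2$. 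The integral reduces to $\sum_{I \subset J_{\cal P}} \tau^{-|I|}\,\tr(G_\tau^{q-|I|})$, where $J_{\cal P}$ is the set of edges of colour $+1$. Since the smallest vertex of $\cal P$ is incident to one edge of each colour (both representatives of its equivalence class being paired under $\Pi$), we have $|I| \leq q-1$, so every trace involves at least one power of $G_\tau$.

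The only substantive difference from the higher-dimensional Lemma \ref{Closed path} is that here the $\fra S^1$-bound on $G_\tau$ holds uniformly in $\tau$: by \eqref{Tr G_tau 1D} and \eqref{Schatten class embedding}, $\tr(G_\tau^{q-|I|}) \leq \|G_\tau\|_{\fra S^1(\fra H)}^{q-|I|} \leq (\tr h^{-1})^{q-|I|}$. Summing over $I \subset J_{\cal P}$ with $|J_{\cal P}| \leq q$ then produces the desired bound $C^{|\tilde{\cal V}(\cal P)|}$. This one-dimensional input is precisely what allows us to dispense with the $1/\tau^{q-1}$ compensation for the borderline case $|I| = q-1$ that was necessary in higher dimensions; it is the only place where the argument genuinely diverges from that of Lemma \ref{Closed path}, and so I expect no real obstacle beyond careful bookkeeping of the loop contribution in case (a) and of the new class of paths containing $\tilde{\cal V}_1$-vertices in cases (b)--(c).
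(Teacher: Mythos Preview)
Your approach is correct and in fact cleaner than the paper's. The paper distinguishes two cases according to whether $\tilde{\cal V}_1(\cal P)$ is empty. When $\tilde{\cal V}_1(\cal P)\neq\emptyset$ and $|\tilde{\cal V}(\cal P)|\geq 2$, instead of running the closed-path telescoping on the full cycle, the paper \emph{cuts} $\cal P$ at the $\tilde{\cal V}_1$-vertices $b_1,\dots,b_l$ into open segments $\cal P_1,\dots,\cal P_l$ (each joining consecutive $b_j$'s through $\tilde{\cal V}_2$), applies the open-path estimate \eqref{Open path bound 4} to each segment (giving pointwise bounds of the form $\|G_\tau(y_{b_j};\cdot)\|_{L^2}\|G_\tau(\cdot;y_{b_{j+1}})\|_{L^2}+G_\tau(y_{b_j};y_{b_{j+1}})$), and closes the cycle by Cauchy--Schwarz in the variables $y_{b_1},\dots,y_{b_l}$. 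Your route bypasses this decomposition: the cyclic telescoping $\sum_j T_j=0$ works verbatim because at every vertex of $\tilde{\cal V}$ (including those in $\tilde{\cal V}_1$) the two incident edges use representatives of opposite sign, so each $s_a$ enters once with $+$ and once with $-$; then the uniform $\fra S^1$-bound $\tr G_\tau^{q-|I|}\leq(\tr h^{-1})^{q-|I|}$ handles all terms at once. This is more economical in one dimension, while the paper's segment-wise argument is closer in spirit to the higher-dimensional proofs and reuses Lemma~\ref{Open path} wholesale.

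One small point of care: your justification that $J_{\cal P}\neq\{1,\dots,q\}$ via ``the smallest vertex of $\cal P$ is incident to one edge of each colour'' is borrowed from the proof of Lemma~\ref{Closed path}, where it relies on equivalence classes being intervals for the order on $\cal X$. With the lexicographic order $i\sign r$ and the new equivalence relation (now also collapsing $(m{+}1,r,+1)\sim(m{+}1,r,-1)$), classes are no longer intervals, and for instance a $2$-cycle on two $\tilde{\cal V}_1$-vertices has both edges of colour $-1$. The conclusion $|J_{\cal P}|\leq q-1$ is nevertheless true for a cleaner reason: the largest element of $\cal X$ among all representatives appearing in $\cal P$ has $\sign=-1$ (since the lexicographic order places $-1$ after $+1$ within each fixed $i$), hence it is the second entry of its pair and the corresponding edge has colour $-1$. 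With this correction your argument is complete.
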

\begin{proof}
We consider two cases: (i) $\tilde{\cal V}(\cal P) \subset \tilde{\cal V}_2$ and (ii) $\tilde{\cal V}_1(\cal P) \neq \emptyset$. In case (i), all of the vertices in $\cal P$ belong to $\tilde{\cal V}_2$ and the claim follows by arguing as in the proof of Lemma \ref{Closed path 1D}.

Let us therefore focus on case (ii). If $\tilde{\cal V}(\cal P)=1$, then $\cal P$ is a loop in $\tilde{\cal V}_1$ and so the left-hand side of \eqref{Closed path bound 1D delta} equals $\|G_{\tau}\|_{\fra S^1}$, which satisfies the claimed bound.

We henceforth assume  $\tilde{\cal V}(\cal P) \geq 2$.
Since $\cal P$ is a closed path, there exists $l \in \N$ and distinct elements $b_1,\ldots,b_l \in \tilde{\cal V}_1$ such that
$\cal P = \bigsqcup_{j=1}^{l} \cal P_j$,
where for each $j=1,\ldots,l$ the path $\cal P_j$ is of the form
$\cal P_j = \{e^j_1,e^j_2,\ldots,e^j_{q_j}\}$
for some $q_j \in \N$ and edges $e^j_k$ such that
$b_j \in e^j_1, b_{j+1} \in e^j_{q_j}$
and
$e^j_k \cap e^j_{k+1} \in \tilde{\cal V}_2$,
for all $k=1,\ldots,q_j-1$. Here we set $b_{l+1} \deq b_1$. (It is possible to have $q_j=1$ in which case $\cal P_j$ is the path of length $1$ joining $b_j$ and $b_{j+1}$.)

We now write the left-hand side of \eqref{Closed path bound 1D delta} as
\begin{equation}
\label{product closed path delta}
\int_{\Lambda^l}\dd y_{b_1} \cdots \dd y_{b_l}  \Bigg(\prod_{j=1}^{l}\int_{\Lambda^{\tilde{\cal V}_2(\cal P_j)}} \prod_{\va \in \tilde{\cal V}_2(\cal P_j)} \dd y_{\va} \, \prod_{e \in \cal P_j} \cal J_{\tau, e}(\f y_e,\f s)\Bigg)\,.
\end{equation}
Arguing as in \eqref{Open path bound 2}-\eqref{Open path bound 3} we can get rid of all the time evolutions in the $j$-th factor of the integrand in \eqref{product closed path delta} for all $j=1,\ldots,l$. In particular, the proof of \eqref{Open path bound 4} (in the proof of Lemma \ref{Open path}) implies that the $j$-th factor is 
\begin{equation*}
\;\leq\; C^{\abs{\tilde{\cal V}_2(\cal P_j)}} \pb{1 + \|G_{\tau}\|_{\fra S^2(\fra H)}}^{\abs{\tilde{\cal V}_2(\cal P_j)}} \, \Big(\|G_{\tau}(y_{b_j};\cdot)\|_{\fra H}\, \,\|G_{\tau}(\cdot \, ; y_{b_{j+1}})\|_{\fra H}+ G_{\tau}(y_{b_j};y_{b_{j+1}})\Big)\,.
\end{equation*}
The estimate \eqref{Closed path bound 1D delta} now follows by applying the Cauchy-Schwarz inequality in $y_{b_1},\ldots,y_{b_l}$ and using Lemma \ref{lem:conv_G_tau}.
\end{proof}

From Lemma \ref{Closed path 1D delta} and \eqref{Tr G_tau 1D} we deduce that, for any $\Pi \in \fra R$, $\f t \in \fra A$, the estimate \eqref{Value of pairing 1D bound} holds when $\xi$ is given by \eqref{xi_p}. In particular, it follows that Corollary \ref{cor:bound_am} holds in this setting.
Next, for $\Pi \in \fra R$, we define 
\begin{equation*}
\cal I^\xi_{\Pi} \;\deq\; \int_{\Lambda^{\tilde{\cal V}}} \dd \f y \, \pBB{\prod_{i=1}^{m} w(y_{i,1}-y_{i,2})} \prod_{e \in \tilde{\cal E}} \cal J_{e}(\f y_e) \,.
\end{equation*}
We define $a_{\infty,m}^\xi$ as in \eqref{f_m,a_m^xi classical 1D infty}.
With this notation, \eqref{Convergence of the explicit terms identity} still holds in this setting by using the same telescoping proof adapted to the one-dimensional setting as in Section \ref{1D subsection 1}.
For the remainder term, we note that for $\cal R(\f t,z)$ by \eqref{cal R 1} we have that \eqref{cal R 2} holds even if we take $\xi$ as in \eqref{xi_p}. The proof is a minor modification of the proof of Proposition \ref{Remainder term bound 1D}. More precisely, we note that for $\cal S(\f t)$ given by \eqref{cal S b t}, \eqref{Kernel estimate 2} holds if we take $\tilde{\xi} \deq \xi$. Namely, in this case
$\big(\Theta_{\tau}(\xi)\big)^{(n)} \geq 0$ for all $n \in \N$ in the sense of operator kernels. The proof now proceeds as in Section \ref{1D subsection 1}. 
We hence deduce that Corollary \ref{cor:remainder 1D} holds in this setting. As in the previous subsection,
we deduce that the function $A_\tau^\xi$ is analytic in $\{z \col \re z > 0\}$

We now consider the classical case. The identity \eqref{lem_classical_coeff identity} holds when $\xi$ is given by \eqref{xi_p}. Arguing as earlier, we obtain \eqref{cor:a_m_est bound} in this setting.
As in the proof of Lemma \ref{lem:R_est}, we reduce the estimate of the remainder term to that of the explicit term by the trivial estimate $\abs{R^\xi_M(z)} \leq \frac{\abs{z}^M}{M!} \int \Theta(\xi) \, W^M \, \dd \mu$.
We deduce that Lemma \ref{lem:R_est} holds in this setting. Note that now we cannot use Cauchy-Schwarz as earlier.
Moreover, the same proof allows us to deduce that the function $A^\xi$ is analytic for $\re z > 0$.

We now conclude the proof of Proposition \ref{1D proposition 2}.
Combining all of the results of this subsection, it follows that the assumptions of Proposition \ref{Borel summation convergence} are satisfied with $\nu = (Cp)^p,\, \sigma = C\norm{w}_{L^{\infty}}$. Hence, it follows that, for $\re \,z > 0$ we have $A_{\tau}^{\xi}(z) \rightarrow A^{\xi}(z)$ as $\tau \rightarrow \infty$. In particular, setting $z=1$ and recalling \eqref{A_tau A delta}, we obtain Proposition \ref{1D proposition 2}.

\subsection{Proof of Theorem \ref{thm:1D}, III: conclusion}
\label{1D subsection 3}

In order to conclude the proof of Theorem \ref{thm:1D} we need the following lemma.

\begin{lemma}
\label{1D lemma 2}
Let $p \in \N$ be fixed. Suppose that, for all $\tau>0$, $\gamma_\tau \in \fra S^1(\fra H^{(p)})$ is positive and $\gamma \in \fra S^1(\fra H^{(p)})$ is positive. Moreover, suppose that 
\begin{equation*}
\txt{(i)} \quad \lim_{\tau \to \infty} \norm{\gamma_{\tau} - \gamma}_{\fra S^2(\fra H^{(p)})} \;=\; 0\,, \qquad
\txt{(ii)} \quad  \lim_{\tau \to \infty} \tr \gamma_{\tau} \;=\; \tr \gamma\,.
\end{equation*}
Then we have $\lim_{\tau \to \infty} \norm{\gamma_{\tau} - \gamma}_{\fra S^1(\fra H^{(p)})} = 0$.
\end{lemma}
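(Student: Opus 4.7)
The plan is to reduce the trace-norm difference to a finite-rank piece plus two tails, which is the standard Gr\"umm-type argument (see e.g.\ Simon's book on trace ideals). Pick any orthonormal basis $(e_k)_{k \in \N}$ of $\fra H^{(p)}$---for concreteness, one may take the eigenbasis of $\gamma$---and let $P_N$ be the orthogonal projection onto $\txt{span}(e_1, \dots, e_N)$. Decompose
\begin{equation*}
\gamma_\tau - \gamma \;=\; P_N (\gamma_\tau - \gamma) P_N + (I - P_N)(\gamma_\tau - \gamma) + P_N (\gamma_\tau - \gamma)(I - P_N)\,,
\end{equation*}
and bound $\norm{\gamma_\tau - \gamma}_{\fra S^1(\fra H^{(p)})}$ by the sum of the three corresponding $\fra S^1$-norms via the triangle inequality.

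For the first, finite-rank, piece I use $\norm{\cal A}_{\fra S^1} \leq \sqrt{\txt{rank}(\cal A)}\, \norm{\cal A}_{\fra S^2}$ to get $\norm{P_N (\gamma_\tau - \gamma) P_N}_{\fra S^1(\fra H^{(p)})} \leq \sqrt{N}\, \norm{\gamma_\tau - \gamma}_{\fra S^2(\fra H^{(p)})}$, which tends to zero by assumption (i) for each fixed $N$. For the other two pieces I use positivity and Cauchy--Schwarz $\norm{\cal A \cal B}_{\fra S^1} \leq \norm{\cal A}_{\fra S^2} \norm{\cal B}_{\fra S^2}$: writing $\gamma_\tau = \gamma_\tau^{1/2} \gamma_\tau^{1/2}$,
\begin{equation*}
\norm{(I-P_N)\gamma_\tau}_{\fra S^1(\fra H^{(p)})} \;\leq\; \norm{(I-P_N)\gamma_\tau^{1/2}}_{\fra S^2(\fra H^{(p)})} \, \norm{\gamma_\tau^{1/2}}_{\fra S^2(\fra H^{(p)})} \;=\; \sqrt{\tr\pb{(I-P_N)\gamma_\tau}} \, \sqrt{\tr\gamma_\tau}\,,
\end{equation*}
and analogously with $\gamma_\tau$ replaced by $\gamma$ for the contribution from $\gamma$; the same bound applies to the $P_N (\gamma_\tau - \gamma)(I-P_N)$ term by taking adjoints.

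Now I combine (i) and (ii). For any fixed $N$, since $P_N$ has rank $N$, $\absb{\tr\pb{P_N(\gamma_\tau - \gamma)}} \leq \sqrt{N}\, \norm{\gamma_\tau - \gamma}_{\fra S^2(\fra H^{(p)})} \to 0$ by (i), so $\tr(P_N \gamma_\tau) \to \tr(P_N \gamma)$. Combined with (ii) this gives $\tr\pb{(I-P_N) \gamma_\tau} \to \tr\pb{(I-P_N) \gamma}$ as $\tau \to \infty$. Because $\gamma \in \fra S^1(\fra H^{(p)})$ and $P_N \to I$ strongly, $\tr\pb{(I-P_N)\gamma} \to 0$ as $N \to \infty$, and $\tr \gamma_\tau$ is uniformly bounded in $\tau$ by (ii).

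Given $\epsilon > 0$, I therefore first pick $N$ so large that $\tr\pb{(I-P_N)\gamma} < \epsilon^2$, which forces $\limsup_{\tau \to \infty} \tr\pb{(I-P_N)\gamma_\tau} \leq \epsilon^2$ as well, hence both tail terms are bounded in the limit by $C \epsilon$ with $C$ depending on $\tr \gamma$; then for this fixed $N$ the finite-rank piece vanishes as $\tau \to \infty$ by (i). Letting $\epsilon \to 0$ yields $\lim_{\tau \to \infty} \norm{\gamma_\tau - \gamma}_{\fra S^1(\fra H^{(p)})} = 0$. There is no real obstacle: assumption (i) provides finite-rank control and trace convergence against finite-rank projections (which is all that is needed from weak-operator convergence), while assumption (ii) converts the residual Cauchy--Schwarz tails into arbitrarily small quantities via the strong $\fra S^1$-summability of $\gamma$.
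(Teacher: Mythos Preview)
Your proof is correct and follows essentially the same Gr\"umm-type strategy as the paper: choose a finite-rank projection $P$ onto eigenvectors of $\gamma$, control the $P$-localized piece by $\fra S^2$-convergence, and control the tail by combining (i), (ii), and positivity to show $\tr\pb{(I-P)\gamma_\tau}$ is eventually small. The only difference is organizational: the paper splits into the four blocks $P\cdot P$, $P\cdot Q$, $Q\cdot P$, $Q\cdot Q$ and bounds the first three directly by $\|P\|_{\fra S^2}\,\|\gamma_\tau-\gamma\|_{\fra S^2}$, reserving the positivity/tail argument for $Q\cdot Q$ alone, whereas you group $Q\cdot(\text{all})$ together and use the factorization $\gamma_\tau=\gamma_\tau^{1/2}\gamma_\tau^{1/2}$ to estimate $\|(I-P_N)\gamma_\tau\|_{\fra S^1}$ and $\|(I-P_N)\gamma\|_{\fra S^1}$ separately.
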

\begin{proof}
The proof is analogous to arguments in \cite[Section 2]{Simon05}; see in particular the proof of \cite[Lemma 2.20]{Simon05}.
For simplicity of notation, throughout the proof we abbreviate $\fra S^2(\fra H^{(p)})$ and $\fra S^1(\fra H^{(p)})$ as $\fra S^2$ and $\fra S^1$ respectively. All of the operators we consider act on $\fra H^{(p)}$. By scaling, we can assume without loss of generality that $\tr \gamma = 1$.

Let $\epsilon>0$ be given.
Since $\gamma \in \fra S^1$, it is possible to find an operator $P$ of finite rank such that for 
$Q \deq \mathrm{I}-P$
we have
\begin{equation}
\label{1D lemma 2 A}
\tr Q \,\gamma\, Q \;\leq \;\epsilon\,.
\end{equation}
In particular, we can choose $P$ to be the projection onto the span of a sufficiently large set of eigenvectors of $\gamma$.

Using cyclicity of the trace and $P^2=P$, we obtain 
\begin{equation}
\label{1D lemma 2 B}
\tr Q(\gamma_\tau-\gamma)Q\;=\;\big(\tr \gamma_\tau-\tr \gamma\big) -\big(\tr P \gamma_\tau-\tr P \gamma \big)\,.
\end{equation}
By assumption (ii), the first term on the right-hand side of \eqref{1D lemma 2 B} converges to zero as $\tau \rightarrow \infty$. Moreover, by the Cauchy-Schwarz inequality we have
\begin{equation*}
\big|\tr P \gamma_\tau-\tr P \gamma\big| \;\leq\; \|P\|_{\fra S^2} \|\gamma_\tau-\gamma\|_{\fra S^2}\,,
\end{equation*}
which converges to zero as $\tau \rightarrow \infty$ by assumption (i).
By \eqref{1D lemma 2 A} and \eqref{1D lemma 2 B}, we deduce
\begin{equation}
\label{1D lemma 2 C}
\tr Q \,\gamma_\tau \, Q \;\leq \;2\epsilon\,
\end{equation} 
for $\tau$ sufficiently large. We henceforth consider such $\tau$. 

Since $\gamma$ and $\gamma_\tau$ are positive, we obtain from \eqref{1D lemma 2 A} and \eqref{1D lemma 2 C}
\begin{equation}
\label{1D lemma 2 D}
\|Q \,\gamma\, Q\|_{\fra S^1}\;\leq\;\epsilon\,,\qquad \|Q \,\gamma_{\tau}\, Q\|_{\fra S^1} \;\leq\; 2\epsilon\,.
\end{equation}
We estimate
\begin{equation}
\label{1D lemma 2 E}
\|\gamma-\gamma_{\tau}\|_{\fra S^1} \;\leq\; \|Q(\gamma-\gamma_{\tau})Q\|_{\fra S^1} 
+ \|P(\gamma-\gamma_{\tau})Q\|_{\fra S^1} + \|Q(\gamma-\gamma_{\tau})P\|_{\fra S^1} 
+ \|P(\gamma-\gamma_{\tau})P\|_{\fra S^1}\,.
\end{equation}
The first term on the right-hand side of \eqref{1D lemma 2 E} is bounded by $3\epsilon$ by \eqref{1D lemma 2 D}.
By the Cauchy-Schwarz inequality, the second and third term are 
\begin{equation*}
\;\leq\; \|Q(\gamma-\gamma_{\tau})\|_{\fra S^2} \|P\|_{\fra S^2} \;\leq\;  \|\gamma-\gamma_{\tau}\|_{\fra S^2} \|P\|_{\fra S^2} \,,
\end{equation*}
which converges to zero as $\tau \rightarrow \infty$ by assumption (i).
Similarly, the fourth term is
\begin{equation*}
\;\leq\;\|P(\gamma-\gamma_{\tau})\|_{\fra S^2} \|P\|_{\fra S^2} \;\leq\;  \|\gamma-\gamma_{\tau}\|_{\fra S^2} \|P\|_{\fra S^2}\,.
\end{equation*}
In particular, for $\tau$ sufficiently large we have $\|\gamma-\gamma_{\tau}\|_{\fra S^1} \leq 4\epsilon$.
Since $\epsilon$ is arbitrary, the result follows.
\end{proof}

Theorem \ref{thm:1D} now follows from Propositions \ref{1D proposition 1} and \ref{1D proposition 2} and Lemma \ref{1D lemma 2} with $\gamma_{\tau}=\gamma_{\tau,p}$ and $\gamma=\gamma_p$.

\subsection{Proof of Theorem \ref{thm:1D local}}

The proof of Theorem \ref{thm:1D local} is similar to that of Theorem \ref{thm:1D}, but since the interaction is no longer uniformly bounded, some modifications in the proof are needed. 
We first explain several changes in the setup of the problem. 
Instead of the nonlocal $W$ from \eqref{def_W}, we consider the local $W$ from \eqref{def_W_local}.
Instead of $W_\tau$ as in \eqref{W_tau 1D}, we consider 
\begin{equation} \label{W_tau 1D local}
W_\tau \;\deq\;
\frac{1}{2} \int \dd x \, \dd y \, \phi^*_\tau(x) \phi^*_\tau(y)  \, w_\tau(x - y) \, \phi_\tau(x) \phi_\tau(y)\,.
\end{equation}
For simplicity, throughout the proof we suppose that $\alpha = \lim_{\tau \to \infty} \int w_\tau$ is equal to $1$.

Let us first consider the quantum case.
For fixed $p \in \N$ we consider the observable $\xi$ that either belongs to $\fra B_p$ or is the identity $\xi_p = I$ on $\fra H^{(p)}$ with kernel \eqref{xi_p}.
Given such a $\xi$, we again perform a Taylor expansion up to order $M \in \N$ of 
$A_{\tau}^{\xi}(z) \deq \tilde \rho_{\tau,z}(\Theta_\tau(\xi))$
in the parameter $z$ with $\re z \geq 0$. Here $\tilde \rho_{\tau,z} \equiv \tilde \rho_{\tau,z}^0$ is defined as in \eqref{def_rho_tau_tilde} with $W_\tau$ given by \eqref{W_tau 1D local} and $\eta=0$. The coefficients $a^\xi_{\tau,m}$ of the expansion are given by \eqref{Explicit term a} and the remainder term $R^\xi_{\tau,M}(z)$ by \eqref{Remainder term R}, where $W_\tau$ is given by \eqref{W_tau 1D local} and $\eta=0$. Furthermore, given $m \in \N$ and $\Pi \in \fra R\equiv \fra R(m)$ we define $\cal I_{\tau,\Pi}^{\xi}(\f t)$ at $\f t \in \fra A \equiv \fra A(m)$ by
\begin{equation}
\label{I_rep local1}
\cal I_{\tau,\Pi}^{\xi}(\f t) \;\deq\; \int_{\Lambda^{\cal V}} \dd \f y \, \pBB{\prod_{i=1}^{m} w_\tau(y_{i,1}-y_{i,2})} \xi(\f y_1) \prod_{e \in \cal E} \cal J_{\tau, e}(\f y_e, \f s)
\end{equation}
if $\xi \in \fra B_p$ and by
\begin{equation}
\label{I_rep local2}
\cal I_{\tau,\Pi}^{\xi}(\f t) \;\deq\; \int_{\Lambda^{\tilde{\cal V}}} \dd \f y \, \pBB{\prod_{i=1}^{m} w_\tau(y_{i,1}-y_{i,2})} \prod_{e \in \tilde{\cal E}} \cal J_{\tau, e}(\f y_e, \f s)
\end{equation}
if $\xi=\xi_p$.
The order $\leq$ (used in the definition of $\sigma(\cdot)$) on the sets $\cal V$ and $\tilde {\cal V}$ is the lexicographical order from Sections \ref{1D subsection 1}--\ref{1D subsection 2}.
The order $\leq$ (used in the definition of $\sigma(\cdot)$) on the sets $\cal V$ and $\tilde {\cal V}$ is the lexicographical order from Sections \ref{1D subsection 1}--\ref{1D subsection 2}.
Note that, as in Sections \ref{1D subsection 1} and \ref{1D subsection 2}, we take $\eta=0$ in the definition of $\fra A(m)$.
Arguing as in the proof of Corollary \ref{cor:bound_am}, the needed upper bounds on the explicit terms are a consequence of the following result.

Throughout the following, we use positive constants $\CP,\CA,\CB > 0$ that may only depend on $\kappa$ and the constant $C$ from \eqref{w_tau bound}.
\begin{proposition}
\label{Product of subgraphs local}
For any $\Pi \in \fra R$, $\f t \in \fra A$ and for $\cal I_{\tau,\Pi}^{\xi}(\f t)$ as defined in \eqref{I_rep local1}--\eqref{I_rep local2} we have $\big| \cal I_{\tau,\Pi}^{\xi} (\f t) \big| \leq \CP^{m+p}$.
\end{proposition}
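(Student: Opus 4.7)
My strategy is to adapt the proof of Proposition \ref{Value of pairing 1D} to the local-interaction setting. In that proposition the interaction factors $w(y_{i,1} - y_{i,2})$ were decoupled by the estimate $w \leq \|w\|_{L^\infty}$, which is no longer uniform in $\tau$ because $\|w_\tau\|_{L^\infty} \leq C\tau$. I will instead exploit the bound $\|w_\tau\|_{L^1} \leq C$ from \eqref{w_tau bound}.

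The first step is a change of variables: for each $i = 1, \ldots, m$, substitute $z_i = y_{i,1} - y_{i,2}$ and $y_i = y_{i,2}$ in \eqref{I_rep local} to obtain
\begin{equation*}
\big| \cal I_{\tau,\Pi}^{\xi}(\f t) \big| \;\leq\; \int_{\Lambda^m} \prod_{i=1}^m w_\tau(z_i) \, F_\Pi(\f z, \f t) \, \dd \f z\,,
\end{equation*}
where $F_\Pi(\f z, \f t)$ is the integral over the remaining variables with every occurrence of $y_{i,1}$ replaced by $y_i + z_i$. Since $w_\tau \geq 0$ and $\|w_\tau\|_{L^1} \leq C$, the task reduces to proving the $\f z$-uniform estimate $\sup_{\f z \in \Lambda^m} F_\Pi(\f z, \f t) \leq \CP^{m+p}$. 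Because translation preserves all $L^p$- and $\fra S^q$-norms of the Green-function kernels, the shifts $z_i$ are harmless for the estimate; graph-theoretically, the effect of the substitution is to collapse the pair $(i,1), (i,2)$ into a single vertex $i$, so that the relevant graph becomes the one on $\tilde{\cal V}_2 \sqcup \hat{\cal V}_1$ in which every vertex of $\tilde{\cal V}_2$ has degree $4$ (and every $\hat{\cal V}_1$-vertex has degree $1$ in Case A, respectively $2$ in Case B).

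To control $F_\Pi$, I would use the 1D-specific pointwise bound $\|G_{\tau,t}\|_{L^\infty(\Lambda^2)} \leq C$, uniform in $t \in [0,1)$ and $\tau \geq 1$, which follows on $\Lambda = \bb T^1$ with $v=0$ from the explicit Fourier representation of $G_{\tau,t}$ and the assumption $\tr h^{-1} < \infty$. Expanding $\cal J_{\tau,e} = G_{\tau, \cdot} + \frac{\ind{\sigma(e)=+1}}{\tau} S_{\tau, \cdot}$ in each factor and treating each resulting term separately, at every degree-$4$ vertex of $\tilde{\cal V}_2$ I plan to bound two of the four incident $G_{\tau, \cdot}$-factors by their $L^\infty$-norms (contributing $C^{2m}$ in total), while keeping the other two explicit so that the remaining subgraph is $2$-regular on $\tilde{\cal V}_2$, retains all $\hat{\cal V}_1$-adjacent edges, and hence decomposes into closed paths together with $p$ open paths between $\hat{\cal V}_1$-pairs (Case A) or into closed paths only (Case B). Its integral is then controlled by Lemmas \ref{Closed path 1D}, \ref{Open path}, and \ref{Closed path 1D delta}. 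The $\frac{1}{\tau} S_{\tau,t}$ contributions are accommodated using $\|S_{\tau,t}(y, \cdot)\|_{L^1} \leq 1$, the $\tau^{-1}$ prefactor absorbing any loss when summing over the $2^{|\hat{\cal E}|}$ terms of the expansion.

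The main obstacle concerns the edges $e$ of $\hat{\cal E}$ coming from pairings $(\alpha, \beta) \in \Pi$ with $i_\alpha = i_\beta$: for these, $s_{\va} - s_{\vb} = 0$ and $\cal J_{\tau,e}$ contains the distributional kernel $\frac{1}{\tau} S_{\tau, 0}(y_{(i,1)}, y_{(i,2)}) = \frac{1}{\tau} \delta(y_{(i,1)} - y_{(i,2)})$, which is not controlled in $L^\infty$. This singular piece, however, always appears multiplied by $w_\tau(y_{i,1} - y_{i,2})$, and the combination $w_\tau(y) \cdot \frac{1}{\tau} \delta(y)$ integrates to $\frac{w_\tau(0)}{\tau} \leq \frac{\|w_\tau\|_{L^\infty}}{\tau} \leq C$ by \eqref{w_tau bound}, so it can be absorbed into a bounded contribution. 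A secondary point of care is arranging the choice of the two $L^\infty$-bounded edges at each degree-$4$ vertex compatibly with the requirement that no $\hat{\cal V}_1$-adjacent edge be dropped (otherwise the residual subgraph fails to be of the form needed by the path lemmas); this is always possible because the collapsed $4$-regular subgraph admits a $2$-factorization, and at a $\tilde{\cal V}_2$-vertex adjacent to more than two $\hat{\cal V}_1$-edges the extra $\hat{\cal V}_1$-edges may be bounded directly using $\|G_{\tau,\cdot}\|_{L^\infty} \leq C$ and the Cauchy-Schwarz inequality in the $\hat{\cal V}_1$-variables against $\xi \in \fra S^2(\fra H^{(p)})$. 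Assembling these estimates yields the desired $\tau$-uniform bound $\CP^{m+p}$.
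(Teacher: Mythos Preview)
Your change of variables to extract $\prod_i w_\tau(z_i)$ and use $\|w_\tau\|_{L^1}\le C$ is a natural opening, but the plan for the residual integral $F_\Pi(\f z,\f t)$ has a genuine gap at the point where you invoke Lemmas \ref{Closed path 1D}, \ref{Open path}, and \ref{Closed path 1D delta} for the kept $2$-factor. Those lemmas are statements about paths of the \emph{original} graph $\hat{\cal E}$, and their proofs rely on the telescoping identity $\sum_j \zeta_j=0$ along each such path (see \eqref{Closed path bound 3} and \eqref{Open path bound 3}). After you collapse $(i,1),(i,2)$ to a single vertex $i$ and select an arbitrary $2$-factor of the resulting $4$-regular multigraph, the cycles of that $2$-factor typically mix edges from different original paths, and their edge-times no longer sum to zero; for instance, if $\hat{\cal E}$ consists of two disjoint $2$-cycles $\{(1,1),(2,1)\}$ and $\{(1,2),(2,2)\}$ with times $(\zeta_1,\zeta_2)$ and $(\zeta_3,\zeta_4)$, the collapsed $2$-factor $\{e_1,e_3\}$ is a $2$-cycle with $\zeta_1+\zeta_3\neq 0$ in general. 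A related difficulty is that at each collapsed vertex $i$ two incident edges carry the shift $+z_i$ and two do not, so a $2$-factor that mixes them produces factors $G_{\tau,\zeta}(y_i+z_i;\,\cdot\,)\,G_{\tau,\zeta'}(y_i;\,\cdot\,)$ that cannot be merged into an operator product by a single $y_i$-translation; ``translation preserves all norms'' does not dispose of this. Finally, your $L^\infty$ bound on $G_{\tau,t}$ is only uniform for $t\in[0,1)$, whereas edges with $\sigma(e)=-1$ carry times in $(-1,0]$; near $t=-1$ one has $\|G_{\tau,t}\|_{L^\infty}\sim (\tau(1+t))^{-1/2}$, so you cannot freely choose which two edges to drop.

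The paper's proof does not collapse the graph. It keeps the original path decomposition of $\hat{\cal E}$, assigns each factor $w_\tau(y_{i,1}-y_{i,2})$ to exactly one of the paths through $i$ via the device $\cal W_\tau^a$ of \eqref{cal W}, and then integrates out each path one vertex at a time using the pointwise self-reproducing inequality \eqref{Induction_Step},
\[
\int \dd y_{a_2}\,\cal W_\tau^{a_2}(y_{a_2}-y_{a_2^*})\,Q_{\tau,\zeta_1}(y_{a_1};y_{a_2})\,Q_{\tau,\zeta_2}(y_{a_2};y_{a_3})\;\le\;\CP\,Q_{\tau,\zeta_1+\zeta_2}(y_{a_1};y_{a_3}),
\]
proved via the splitting $Q_{\tau,t}=Q_{\tau,t}^{(1)}+\tau^{-1}Q_{\tau,t}^{(2)}$ of \eqref{Q_splitting} and Lemma \ref{lem:kernel bounds}. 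This step uses \emph{both} bounds in \eqref{w_tau bound}: the $L^1$-bound controls the $Q^{(1)}\!\cdot Q^{(1)}$ term, while the $L^\infty$-bound $\|w_\tau\|_{L^\infty}\le C\tau$ is what compensates the $\tau^{-1}$ attached to $Q^{(2)}$. The pointwise \emph{lower} bound $Q_{\tau,t}(x;y)\ge\CB>0$ of Lemma \ref{lem:kernel bounds}~(iv) is then needed to reabsorb the resulting additive constants into a single $Q_{\tau,\zeta_1+\zeta_2}$. By separating $w_\tau$ completely at the outset you lose access to $\|w_\tau\|_{L^\infty}$, and with it the mechanism that controls the heat-kernel pieces.
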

Before proving Proposition \ref{Product of subgraphs local} we introduce some basic tools. Recalling $S_{\tau,t}$ and $G_{\tau,t}$ given by \eqref{def_St} and \eqref{def_Gt}, we define $Q_{\tau,t}$ for $t \in (-1,1)$ by
\begin{align} \label{def_Qt}
Q_{\tau,t} \;\deq\;
\begin{cases}
G_{\tau,t} + \frac{1}{\tau}S_{\tau,t} &\mbox{if }t \in (0,1)\\
G_{\tau,t} &\mbox{if }t \in (-1,0]\,.
\end{cases}
\end{align}
Note that for $t \in (-1,0)$ we can write
\begin{equation}
\label{def_QtB}
Q_{\tau,t} \;=\; \frac{\ee^{-(t+1)h/\tau}}{\tau (\ee^{h/\tau}-1)}+\frac{1}{\tau}\,\ee^{-(t+1)h/\tau} \;=\; \frac{\ee^{- \{t\}h/\tau}}{\tau (\ee^{h/\tau}-1)}+\frac{1}{\tau}\,\ee^{-\{t\}h/\tau}\,
\end{equation}
where $\{x\} \;\deq\; x -\lfloor x \rfloor \in [0,1)$ denotes the fractional part of $x$. We adopt this notation from now on. 
From \eqref{def_Qt} and \eqref{def_QtB} we note for $t \in (-1,1)$ the splitting
\begin{equation}
\label{Q_splitting}
Q_{\tau,t}=Q_{\tau,t}^{(1)}+\frac{1}{\tau}\,Q_{\tau,t}^{(2)}\,,
\end{equation} 
where
\begin{equation} \label{def_Qt_split}
Q_{\tau,t}^{(1)} \;\deq\;\frac{\ee^{- \{t\}h/\tau}}{\tau (\ee^{h/\tau}-1)} \,, \qquad Q_{\tau,t}^{(2)} \;\deq\;
\begin{cases}
\ee^{-\{t\}h/\tau} &\mbox{if }t \in (-1,1) \setminus \{0\} \\
0 &\mbox{if }t=0\,.
\end{cases}
\end{equation}
A splitting of the form \eqref{Q_splitting} valid for all $t \in (-1,1)$ is a key ingredient of our proof. From Lemma \ref{Positivity lemma} it follows that for all $t \in (-1,1)$ and $x,y \in \Lambda$ we have
\begin{equation}
\label{Q positive}
Q_{\tau,t}^{(1)}(x;y) \;=\; Q_{\tau,t}^{(1)}(y;x) \;\geq\;0\,,\qquad
Q_{\tau,t}^{(2)}(x;y) \;=\; Q_{\tau,t}^{(2)}(y;x) \;\geq\;0\,.
\end{equation}

We note the following result.
\begin{lemma}
\label{lem:kernel bounds} With $Q_{\tau,t}^{(j)}$ given by \eqref{def_Qt_split} and $Q_{\tau,t}$ given by \eqref{def_Qt} we have the following estimates.
\begin{itemize}
\item[(i)] $\|Q_{\tau,t}^{(1)}\|_{L^{\infty}(\Lambda^2)} \leq \CA$ for all $t \in (-1,1)$.
\item[(ii)] $\int \dd y \, Q_{\tau,t}^{(2)}(x;y) = \int \dd y \, Q_{\tau,t}^{(2)}(y;x) \leq 1$ for all $t \in (-1,1)$ and $x \in \Lambda$.
\item[(iii)] $\|Q_{\tau,t}^{(2)}\|_{L^{\infty}(\Lambda^2)} \leq \CA \tau$ if $\{t\} \geq 1/4$.
\item[(iv)] $Q_{\tau,t}(x;y) \geq \CB$ for all $t \in (-1,1)$ and $x,y \in \Lambda$.
\end{itemize}
\end{lemma}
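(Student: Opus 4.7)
The proof will exploit the explicit structure on $\Lambda = \mathbb{T}^1$ with $v_\tau = 0$, where $h = -\Delta + \kappa$ and the heat semigroup factorizes as
$$\ee^{-sh/\tau}(x;y) \;=\; \ee^{-s\kappa/\tau}\, p_{s/\tau}(x-y)\,, \qquad p_\sigma(z) \;\deq\; \sum_{m \in \Z}(4\pi\sigma)^{-1/2}\,\ee^{-(z-m)^2/(4\sigma)}\,,$$
with $p_\sigma$ the periodic heat kernel of $-\Delta$ on $\mathbb{T}^1$. I shall use four standard properties of $p_\sigma$: (a) $p_\sigma \geq 0$; (b) $\int_{\mathbb{T}^1} p_\sigma = 1$; (c) $\|p_\sigma\|_{L^\infty} \leq C\sigma^{-1/2}$ for $\sigma \leq 1$; and (d) $p_\sigma(z) \geq 1/2$ uniformly in $z$ whenever $\sigma \geq \sigma_0$, for some absolute $\sigma_0$, which follows from the Fourier representation $p_\sigma(z) = 1 + 2\sum_{k \geq 1}\ee^{-4\pi^2\sigma k^2}\cos(2\pi k z)$. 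Ambient constants may depend on $\kappa$, and I work throughout in the regime $\tau \geq 1$ of interest.

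Parts (ii) and (iii) are immediate from the formula for $Q_{\tau,t}^{(2)} = \ee^{-\{t\}h/\tau}$ when $t \neq 0$. For (ii), property (b) gives
$$\int \dd y\,Q_{\tau,t}^{(2)}(x;y) \;=\; \ee^{-\{t\}\kappa/\tau}\int p_{\{t\}/\tau}(x-y)\,\dd y \;=\; \ee^{-\{t\}\kappa/\tau} \;\leq\; 1\,,$$
and the symmetric statement follows from $Q_{\tau,t}^{(2)}(x;y) = Q_{\tau,t}^{(2)}(y;x)$ (Lemma \ref{Positivity lemma}). For (iii), the hypothesis $\{t\} \geq 1/4$ combined with $\tau \geq 1$ gives $\{t\}/\tau \leq 1$, so property (c) yields $Q_{\tau,t}^{(2)}(x;y) \leq p_{\{t\}/\tau}(x-y) \leq C(\{t\}/\tau)^{-1/2} \leq 2C\sqrt{\tau} \leq \CA\tau$.

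Parts (i) and (iv) rest on the Neumann representation, obtained by substituting the identity $[\tau(\ee^{h/\tau}-1)]^{-1} = \tau^{-1}\sum_{n \geq 1}\ee^{-nh/\tau}$ into \eqref{def_Qt_split}:
$$Q_{\tau,t}^{(1)}(x;y) \;=\; \frac{1}{\tau}\sum_{n \geq 1}\ee^{-(\{t\}+n)\kappa/\tau}\,p_{\sigma_n}(x-y)\,, \qquad \sigma_n \;\deq\; (\{t\}+n)/\tau\,.$$
For (i), I split at the threshold $n \sim \tau$: when $\sigma_n \leq 1$ (i.e.\ $n \lesssim \tau$) property (c) gives $p_{\sigma_n} \leq C\sqrt{\tau/n}$, yielding a contribution $\tau^{-1}\sum_{n=1}^{\lceil\tau\rceil}\sqrt{\tau/n} \leq \tau^{-1}\cdot 2\tau = O(1)$; when $\sigma_n \geq 1$ one has the crude bound $p_{\sigma_n} \leq C$, and then the geometric sum $\tau^{-1}\sum_{n > \tau}\ee^{-n\kappa/\tau} \leq \tau^{-1}(\ee^{\kappa/\tau}-1)^{-1} \leq 1/\kappa$ closes the estimate.

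For (iv), since $Q_{\tau,t}^{(2)}(x;y) \geq 0$ (Lemma \ref{Positivity lemma}), it suffices to bound $Q_{\tau,t}^{(1)}$ from below. I discard all terms with $\sigma_n < \sigma_0$ and apply property (d) to the remainder, obtaining
$$Q_{\tau,t}^{(1)}(x;y) \;\geq\; \frac{1}{2\tau}\sum_{n \geq \lceil\tau\sigma_0\rceil}\ee^{-(\{t\}+n)\kappa/\tau} \;\geq\; \frac{1}{2\tau}\cdot\frac{\ee^{-(1+\sigma_0+1/\tau)\kappa}}{1-\ee^{-\kappa/\tau}} \;\geq\; \frac{\ee^{-(2+\sigma_0)\kappa}}{2\kappa} \;\eqd\; \CB\,,$$
where I used $\{t\} \leq 1$ and $1-\ee^{-\kappa/\tau} \leq \kappa/\tau$. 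None of the four estimates is delicate; the only point requiring care is switching between the short-time ($\sigma_n \leq 1$) and long-time ($\sigma_n \geq \sigma_0$) regimes of the heat kernel $p_{\sigma_n}$ at the correct threshold $n \sim \tau$—the transitional regime $\sigma_n \sim 1$ is precisely where both the upper and lower bounds are saturated.
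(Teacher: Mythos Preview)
Your proof is correct. For parts (ii) and (iv) you follow essentially the same route as the paper: integrate out $y$ so that only the zero Fourier mode survives for (ii), and use the Neumann series together with a uniform lower bound on the long-time heat kernel for (iv). The paper sums a finite window $\tau+1 \leq n \leq 2\tau+1$ rather than an infinite tail, but this is cosmetic.

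For parts (i) and (iii) you take a genuinely different path. The paper works in Fourier space: writing $Q_{\tau,t}^{(1)}(x;y) = \sum_{k\in\Z} \frac{\ee^{-\{t\}\lambda_k/\tau}}{\tau(\ee^{\lambda_k/\tau}-1)}\,\ee^{2\pi\ii k(x-y)}$ with $\lambda_k = (2\pi k)^2+\kappa$, one gets (i) in one line via $\|Q_{\tau,t}^{(1)}\|_{L^\infty} \leq \sum_k \lambda_k^{-1} < \infty$, and (iii) similarly. You instead stay in physical space, expand via the Neumann series $\tau^{-1}\sum_{n\geq 1}\ee^{-(\{t\}+n)h/\tau}$, and split at $n\sim\tau$ using short-time and long-time heat kernel bounds. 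Your argument is a little longer for (i) but perfectly valid, and has the advantage of making the role of the transition scale $\sigma_n\sim 1$ explicit; the paper's Fourier argument hides this behind the single estimate $x(\ee^x-1)^{-1}\leq 1$.
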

\begin{proof}
Since we are working in one dimension and since $v = 0$, we change our earlier indexing convention and we index the eigenvalues of $h$ as $\lambda_k = (2 \pi k)^2 + \kappa$, where $k \in \Z$.

We first prove (i). Note that
\begin{equation*}
\|Q_{\tau,t}^{(1)}\|_{L^{\infty}(\Lambda^2)} \;=\; \Bigg\|\sum_{k \in \Z }  \frac{\ee^{-\{t\} \lambda_k/\tau}}{\tau(\ee^{\lambda_k/\tau}-1)}\,\ee^{2\pi i k (x-y)} \Bigg\|_{L^{\infty}(\Lambda^2)} \;\leq\; \sum_{k \in \Z }  \frac{\ee^{-\{t\} \lambda_k/\tau}}{\tau(\ee^{\lambda_k/\tau}-1)}
\;\leq\; \sum_{k \in \Z} \frac{1}{\lambda_k} \;<\;\infty\,.
\end{equation*}
Part (iii) is proved in the same way.

We now prove (ii). If $t=0$ the claim immediately holds since $Q^{(2)}_{\tau,0} \equiv 0$. We therefore consider $t \in (-1,1) \setminus \{0\}$. The two integrals in the claim are indeed equal by \eqref{Q positive}.
Note that
\begin{equation*}
\int \dd y \, Q_{\tau,t}^{(2)}(x;y) \;=\; \int \dd y\, \sum_{k \in \Z} \ee^{-\{t\}\lambda_k /\tau} \ee^{2\pi i k(x-y)} \;=\; \sum_{k \in \Z} \int \dd y\,  \ee^{-\{t\}\lambda_k /\tau} \ee^{2\pi i k(x-y)} \;=\, \ee^{-\{t\}\lambda_0 /\tau} \;\leq 1\,.
\end{equation*}
The interchanging of the integration in $y$ and the summation in $k$ is justified by the decay of $\ee^{-\{t\}\lambda_k /\tau}$.

Finally we prove (iv). It suffices to prove that $G_{\tau,t}(x;y) \geq \CB$. First, we observe that there exists a constant $c > 0$ such that for all $s \geq 1$ and $x,y \in \Lambda$ we have $\ee^{s \Delta}(x;y) \geq c$; this can be proved either using the Feynman-Kac formula or the Poisson summation formula. The claim (iv) then easily follows from the Neumann series representation \eqref{neumann_series} and Lemma \ref{Positivity lemma}, which yield
\begin{equation*}
G_{\tau,t}(x;y) \;\geq\; \frac{1}{\tau} \sum_{n = \tau + 1}^{2\tau + 1} \ee^{-(t + n) \kappa / \tau} \ee^{(t + n) \Delta / \tau}(x;y)\,. \qedhere
\end{equation*}
\end{proof}

We now have all the necessary ingredients to prove Proposition \ref{Product of subgraphs local}.

\begin{proof}[Proof of Proposition \ref{Product of subgraphs local}]
Let us first consider the case when $\xi \in \fra B_p$.
We decompose the multigraph $\cal E_\Pi \equiv \cal E$ into paths. We order these paths as $\cal P_1,\ldots,\cal P_k$ in an arbitrary fashion.
Using the nonnegativity of $\cal J_{\tau, e}$
we obtain the estimate
\begin{equation}
\label{Product of subgraphs local 1}
\big|\cal I_{\tau,\Pi}^{\xi}(\f t)\big| \;\leq\; \int_{\Lambda^{\cal V}} \dd \f y \, \pBB{\prod_{i=1}^{m} w_\tau(y_{i,1}-y_{i,2}) }\big|\xi(\f y_1)\big| \pBB{\prod_{j=1}^{k} \, \prod_{e \in \cal P_j} \cal J_{\tau, e}(\f y_e, \f s)}\,.
\end{equation}
Note that, due to the structure of the interaction we cannot automatically decouple the paths in \eqref{Product of subgraphs local 1} by arguing as in \eqref{Wick_application 2 bound B}-\eqref{Wick_application 2 bound C}. Indeed, arguing in this way would give us a factor of $\|w_\tau\|_{L^\infty}^m$, which we only know is bounded by $(C \tau)^m$. Such a rough upper bound is not affordable. Therefore we need to carefully distribute the factors of $w_\tau(y_{i,1}-y_{i,2})$ among the paths.

Before we proceed we introduce some notation.
Given $a=(i_a,r_a) \in \cal V_2$ we define $a^* \in \cal V_2$ by $a^* \deq (i_a,3-r_a)$. Since $r_a = 1,2$ for all $a \in \cal V_2$, we have $(a^{*})^{*}=a$. Note that we always have $a \neq a^*$.
Moreover we define for $a \in \cal V_2$ the function $\cal W_\tau^{a}$ by
\begin{align} \label{cal W}
\cal W_\tau^{a} \;\deq\;
\begin{cases}
w_\tau &\mbox{if }a \in \cal P_j,\,a^{*} \in \cal P_l \,\, \mbox{for some } 1 \leq j<l \leq k \\
1 &\mbox{if }a \in \cal P_j,\,a^{*} \in \cal P_l \,\, \mbox{for some } 1 \leq l<j \leq k \\
w_\tau &\mbox{if }a,a^* \in \cal P_j \,\, \mbox{for some } 1 \leq j \leq k \,\, \mbox{and } r_a=1 \\
1 &\mbox{if }a,a^* \in \cal P_j \,\, \mbox{for some } 1 \leq j \leq k \,\, \mbox{and } r_a=2\,.
\end{cases}
\end{align}
In particular, we can write
\begin{equation}
\label{product w tau}
\prod_{i=1}^{m} w_\tau(y_{i,1}-y_{i,2}) \;=\; \prod_{a \in \cal V_2} \cal W_\tau^{a}(y_{a}-y_{a^*})\,.
\end{equation}
Moreover, from \eqref{w_tau bound} we deduce
\begin{equation}
\label{cal W bound}
\|\cal W_{\tau}^a\|_{L^1} \;\leq\; C \,,\qquad
\|\cal W_{\tau}^a\|_{L^{\infty}} \;\leq\; C\tau\,.
\end{equation}
Given $\cal P \in \conn (\cal E)$ we define $\cal V^*(\cal P) \deq\cal \{a^{*}\col a \in \cal V_2 (\cal P) \} \setminus \cal V_2(\cal P)$.
Graphically, $\cal V^*(\cal P)$ corresponds to all of the vertices in $\cal V$ which are connected to some vertex in $\cal P$ by an interaction $w_\tau$, but which do not belong to $\cal V_2(\cal P)$.

We shall integrate out the variables $\f y_2$ by integrating successively the $\f y_2$-variables in the paths $\cal P_1, \dots, \cal P_k$.
Given $1 \leq j \leq k+1$, we define $\cal V_{2,j} \deq \cal V_2 \,\big\backslash \,\big(\bigcup_{l=1}^{j-1} \cal V_2(\cal P_l)\big)$,
which indexes the variables left after having integrated out the paths $\cal P_1, \dots, \cal P_{j-1}$.
In particular, $\cal V_{2,1} = \cal V_2$ and $\cal V_{2,k+1}= \emptyset$.
For a subset $A \subset \cal V$ we abbreviate $\f y_A \deq (y_a)_{a \in A}$ and for $r \in [1,\infty]$ we write $L^r_A$ for the space $L^r_{\f y_A}$ with the corresponding norm. We also use the splitting $\f y=(\f y_1,\f y_2)$, which, in the above notation, is shorthand for $\f y=(\f y_{\cal V_1},\f y_{\cal V_2})$.

Using \eqref{product w tau} we note that the right-hand side of \eqref{Product of subgraphs local 1} equals
\begin{multline}
\label{Product of subgraphs local 2}
\int_{\Lambda^{\cal V_1}} \dd \f y_1 \, \big|\xi(\f y_1)\big| \int_{\Lambda^{\cal V_2}} \dd \f y_2 \,\prod_{j=1}^{k} \pBB{ \prod_{e \in \cal P_j} \cal J_{\tau, e}(\f y_e, \f s) \, \prod_{a \in \cal V_2(\cal P_j)} \cal W_\tau^{a}(y_{a}-y_{a^*})}
\\
\;\leq\;
\Bigg\|\int_{\Lambda^{\cal V_2}} \dd \f y_2 \, \prod_{j=1}^{k} \pBB{\prod_{e \in \cal P_j} \cal J_{\tau, e}(\f y_e, \f s)\, \prod_{a \in \cal V_2(\cal P_j)} \cal W_\tau^{a}(y_{a}-y_{a^*})} \Bigg\|_{L^{\infty}_{\f y_1}}\,.
\end{multline}
Here we used an $L^1$-$L^{\infty}$-H\"{o}lder inequality in $\f y_1$ and $\|\xi\|_{L^1_{\f y_1}} \leq \|\xi\|_{L^2_{\f y_1}} \leq 1.$

We shall show that for all $1 \leq l \leq k$
\begin{multline}
\label{Inductive inequality 1}
\Bigg\|\int_{\Lambda^{\cal V_{2,l}}} \dd \f y_{\cal V_{2,l}} \, \prod_{j=l}^{k} \pBB{\prod_{e \in \cal P_j} \cal J_{\tau, e}(\f y_e, \f s)\, \prod_{a \in \cal V_2(\cal P_j)} \cal W_\tau^{a}(y_{a}-y_{a^*})} \Bigg\|_{L^{\infty}_{\f y_1}L^{\infty}_{\cal V_2 \setminus \cal V_{2,l}}}
\\
\;\leq\;
\CP^{\,\abs{\cal V(\cal P_l)}} \,
\Bigg\|\int_{\Lambda^{\cal V_{2,l+1}}} \dd \f y_{\cal V_{2,l+1}} \, \prod_{j=l+1}^{k} \pBB{\prod_{e \in \cal P_j} \cal J_{\tau, e}(\f y_e, \f s)\, \prod_{a \in \cal V_2(\cal P_j)} \cal W_\tau^{a}(y_{a}-y_{a^*})} \Bigg\|_{L^{\infty}_{\f y_1}L^{\infty}_{\cal V_2 \setminus \cal V_{2,l+1}}} \,.
\end{multline}
The claim of Proposition \ref{Product of subgraphs local} follows from \eqref{Product of subgraphs local 2} by iteratively applying \eqref{Inductive inequality 1}.

What remains is the proof of \eqref{Inductive inequality 1}. Fix $1 \leq l \leq k$. 
Note that 
\begin{equation}
\label{cal V 2 l}
\cal V_{2,l}\;=\;\cal V_2(\cal P_l) \sqcup \cal V_{2,l+1} \;=\;\cal V_2(\cal P_l) \sqcup \big(\cal V_{2,l+1} \cap \cal V^{*}(\cal P_l)\big) \sqcup \big(\cal V_{2,l+1} \setminus \cal V^{*}(\cal P_l)\big)\,.
\end{equation}
Using \eqref{cal V 2 l} we rewrite the left-hand side of \eqref{Inductive inequality 1} as
\begin{multline}
\label{L 1 L infty A}
\Bigg\|\int_{\Lambda^{\cal V_{2,l+1} \cap \cal V^{*}(\cal P_l)}} \dd \f y_{\cal V_{2,l+1} \cap \cal V^{*}(\cal P_l)} \,
\int_{\Lambda^{\cal V_2(\cal P_l)}} \dd \f y_{\cal V_2(\cal P_l)} \,\prod_{e \in \cal P_l} \cal J_{\tau, e}(\f y_e, \f s)\, \prod_{a \in \cal V_2(\cal P_l)} \cal W_\tau^{a}(y_{a}-y_{a^*}) 
\\
\Bigg\{\int_{\Lambda^{\cal V_{2,l+1} \setminus \cal V^{*}(\cal P_l)}} \dd \f y_{\cal V_{2,l+1} \setminus \cal V^{*}(\cal P_l)} \,
\prod_{j=l+1}^{k} \pBB{\prod_{e \in \cal P_j} \cal J_{\tau, e}(\f y_e, \f s)\, \prod_{a \in \cal V_2(\cal P_j)} \cal W_\tau^{a}(y_{a}-y_{a^*})} \Bigg\} \Bigg\|_{L^{\infty}_{\f y_1}L^{\infty}_{\cal V_2 \setminus \cal V_{2,l}}}\,.
\end{multline}
We first integrate in $\f y_{\cal V_2(\cal P_l)}$.  In doing so we note that, by construction, the 
$\f y_{\cal V_{2,l+1} \setminus \cal V^{*}(\cal P_l)}$ integral in \eqref{L 1 L infty A} does not depend on $\f y_{\cal V_2(\cal P_l)}$.
After this we use an $L^\infty$-$L^1$-H\"{o}lder inequality in $\f y_{\cal V_{2,l+1} \cap \cal V^{*}(\cal P_l)}$. Finally we take the supremum in the remaining variables. We therefore deduce that \eqref{L 1 L infty A} is
\begin{multline}
\label{L1 Linfty}
\;\leq\;
\Bigg\|\prod_{e \in \cal P_l} \cal J_{\tau, e}(\f y_e, \f s)\, \prod_{a \in \cal V_2(\cal P_l)} \cal W_\tau^{a}(y_{a}-y_{a^*})\Bigg\|_{L^{\infty}_{\f y_1}L^{\infty}_{\cal V^{*}(\cal P_l)}L^1_{\cal V_2(\cal P_l)}}
\\
\times 
\Bigg\|\int_{\Lambda^{\cal V_{2,l+1}}} \dd \f y_{\cal V_{2,l+1}} \, \prod_{j=l+1}^{k} \pBB{\prod_{e \in \cal P_j} \cal J_{\tau, e}(\f y_e, \f s)\, \prod_{a \in \cal V_2(\cal P_j)} \cal W_\tau^{a}(y_{a}-y_{a^*})} \Bigg\|_{L^{\infty}_{\f y_1}L^{\infty}_{\cal V_2 \setminus \cal V_{2,l+1}}} \,.
\end{multline}
For the first factor we used that the function which we are estimating depends only on 
$\f y_1, \f y_{\cal V_2(\cal P_l) \cup \cal V^{*}(\cal P_l)}$.
From \eqref{L1 Linfty} we note that \eqref{Inductive inequality 1} follows 
if we prove
\begin{equation}
\label{Inductive inequality 2}
\Bigg\|\prod_{e \in \cal P_l} \cal J_{\tau, e}(\f y_e, \f s)\, \prod_{a \in \cal V_2(\cal P_l)} \cal W_\tau^{a}(y_{a}-y_{a^*})\Bigg\|_{L^{\infty}_{\f y_1}L^{\infty}_{\cal V^{*}(\cal P_l)}L^1_{\cal V_2(\cal P_l)}} \;\leq\;\CP^{\,\abs{\cal V(\cal P_l)}}\,.
\end{equation}
We now prove \eqref{Inductive inequality 2} by induction on $n \deq \abs{\cal V(\cal P_l)}$.
In the sequel we use the same notation for the vertices, edges, and associated times as in the proof of Lemma \ref{Closed path} if $\cal P_l$ is a closed path and as in the proof of Lemma \ref{Open path} in $\cal P_l$ is an open path.
Suppose that $\cal P_l$ is a closed path. For $1 \leq j \leq n$, we write the time associated with the edge $e_j$ as $\zeta_j \deq \sigma(e_1) (s_{a_j}-s_{a_{j+1}})$.
The sum of the times in $\cal P_l$ is 
\begin{equation}
\label{Path condition 1}
\sum_{j=1}^{n} \zeta_j \;=\; \sum_{j=1}^{n} \sigma(e_1) (s_{a_j}-s_{a_{j+1}}) \;=\;0\,.
\end{equation}
Moreover, for all $1 \leq i \leq n$ and $1 \leq q \leq n-1$ we have 
\begin{equation}
\label{Path condition 2}
\sum_{j=i}^{i+q} \zeta_j \;=\; \sum_{j=i}^{i+q} \sigma(e_1) (s_{a_j}-s_{a_{j+1}}) \;=\; \sigma(e_1) (s_{a_i}-s_{a_{i+q+1}}) \in (-1,1)\,.
\end{equation}
In the above calculations all the indices are understood to be modulo $n$. Analogous results with obvious modifications hold for open paths $\cal P_l$ as well. We use these identities tacitly throughout the following.

Our goal is to estimate $\fra I(\cal P_l)$, where
\begin{equation*}
\fra I(\cal P) \;\deq\; \int \dd \f y_{\cal V_2(\cal P)} \, \prod_{e \in \cal P} \cal J_{\tau, e}(\f y_e, \f s)\, \prod_{a \in \cal V_2(\cal P)} \cal W_\tau^{a}(y_{a}-y_{a^*})\,,
\end{equation*}
which in general is a function of $(\f y_1, \f y_{\cal V^*(\cal P)})$.

\subsubsection*{Induction base: $n=1$ and $n=2$}
If $n=1$ then $\cal P$ is a loop at $a_1 \in \cal V_2$. In particular, $a_1 \neq a_1^*$ and $\zeta_1 = 0$, so that
\begin{equation*}
\fra I (\cal P) \;=\; \int \dd y_{a_1} \, \cal W_\tau^{a_1} (y_{a_1}-y_{a_1^*}) \, Q_{\tau,0}(y_{a_1};y_{a_1}) \;\leq\; \|Q_{\tau,0}\|_{L^{\infty}(\Lambda^2)} \, \|\cal W_\tau^{a_1}\|_{L^1(\Lambda)}\;\leq \;C\,.
\end{equation*}
Here we used Lemma \ref{lem:kernel bounds} (i) and \eqref{cal W bound}.

If $n=2$ we consider three cases.
\begin{itemize}
\item[(A)] $\cal P$ is a closed path with vertices $a_1,a_2 \in \cal V_2$ satisfying $a_2=a_1^{*}$. In this case $\zeta_{1}=\zeta_{2}=0$. Let us assume without loss of generality that $r_{a_1}=1$. Then, by \eqref{cal W} we have $\cal W_{\tau}^{a_2} = 1$. Consequently
\begin{equation*}
\fra I (\cal P) \;=\;\int \dd y_{a_1} \,\dd y_{a_2} \,\cal W_{\tau}^{a_1}(y_{a_1}-y_{a_2})\,Q_{\tau,0}(y_{a_1};y_{a_2})\,Q_{\tau,0}(y_{a_2};y_{a_1})\;\leq\; \|Q_{\tau,0}\|_{L^{\infty}(\Lambda^2)}^2 \|\cal W_{\tau}^{a_1}\|_{L^1(\Lambda)}\,.
\end{equation*}
By Lemma \ref{lem:kernel bounds} (i) and \eqref{cal W bound} this is bounded by a constant.

\item[(B)] $\cal P$ is a closed path with vertices $a_1,a_2 \in \cal V_2$, where $a_2 \neq a_1^{*}$.

In this case $\zeta_1 = \sigma(e_1) (s_{a_1}-s_{a_2}) \in (-1,1) \setminus \{0\}$, and $\zeta_2 = -\zeta_1$. It follows that 
\begin{equation*}
\fra I (\cal P)\;=\; \int \dd y_{a_1}\,\dd y_{a_2}\,\cal W_{\tau}^{a_1}(y_{a_1}-y_{a_1^{*}})\,\cal W_{\tau}^{a_2}(y_{a_2}-y_{a_2^{*}})\,Q_{\tau,\zeta_1}(y_{a_1};y_{a_2})\,Q_{\tau,-\zeta_1}(y_{a_2};y_{a_1})\,.
\end{equation*}
We now apply the decomposition \eqref{Q_splitting} to $Q_{\tau,\zeta_1}$ and $Q_{\tau,-\zeta_1}$ and write
\begin{align*}
\fra I (\cal P) &\;=\; \int \dd y_{a_1}\,\dd y_{a_2}\,\cal W_{\tau}^{a_1}(y_{a_1}-y_{a_1^{*}})\,\cal W_{\tau}^{a_2}(y_{a_2}-y_{a_2^{*}})\,Q^{(1)}_{\tau,\zeta_1}(y_{a_1};y_{a_2})\,Q^{(1)}_{\tau,-\zeta_1}(y_{a_2};y_{a_1})
\\ 
&+ \frac{1}{\tau}\,\int \dd y_{a_1}\,\dd y_{a_2}\,\cal W_{\tau}^{a_1}(y_{a_1}-y_{a_1^{*}})\,\cal W_{\tau}^{a_2}(y_{a_2}-y_{a_2^{*}})\,Q^{(2)}_{\tau,\zeta_1}(y_{a_1};y_{a_2})\,Q^{(1)}_{\tau,-\zeta_1}(y_{a_2};y_{a_1})
\\
&+\frac{1}{\tau}\,\int \dd y_{a_1}\,\dd y_{a_2}\,\cal W_{\tau}^{a_1}(y_{a_1}-y_{a_1^{*}})\,\cal W_{\tau}^{a_2}(y_{a_2}-y_{a_2^{*}})\,Q^{(1)}_{\tau,\zeta_1}(y_{a_1};y_{a_2})\,Q^{(2)}_{\tau,-\zeta_1}(y_{a_2};y_{a_1})
\\
&+\frac{1}{\tau^2}\,\int \dd y_{a_1}\,\dd y_{a_2}\, \cal W_{\tau}^{a_1}(y_{a_1}-y_{a_1^{*}})\,\cal W_{\tau}^{a_2}(y_{a_2}-y_{a_2^{*}})\,Q^{(2)}_{\tau,\zeta_1}(y_{a_1};y_{a_2})\,Q^{(2)}_{\tau,-\zeta_1}(y_{a_2};y_{a_1})\,.
\end{align*}
We need to estimate each term separately. 

The first term is
\begin{multline*}
\;\leq\; \|Q^{(1)}_{\tau,\zeta_1}\|_{L^{\infty}(\Lambda^2)} \, \|Q^{(1)}_{\tau,-\zeta_1}\|_{L^{\infty}(\Lambda^2)} \, \int \dd y_{a_1}\,\dd y_{a_2}\,\cal W_{\tau}^{a_1}(y_{a_1}-y_{a_1^{*}})\,\cal W_{\tau}^{a_2}(y_{a_2}-y_{a_2^{*}})
\\
\;=\; \|Q^{(1)}_{\tau,\zeta_1}\|_{L^{\infty}(\Lambda^2)} \, \|Q^{(1)}_{\tau,-\zeta_1}\|_{L^{\infty}(\Lambda^2)} \, \|\cal W_{\tau}^{a_1}\|_{L^1(\Lambda)} \, \|\cal W_{\tau}^{a_2}\|_{L^1(\Lambda)}\,,
\end{multline*}
which is bounded by Lemma \ref{lem:kernel bounds} (i) and \eqref{cal W bound}.

The second term is estimated, using Lemma \ref{lem:kernel bounds} (ii), as
\begin{align*}
&\leq\;
\frac{1}{\tau}\,\|\cal W_\tau^{a_2}\|_{L^{\infty}(\Lambda)} \, \|Q^{(1)}_{\tau,-\zeta_1}\|_{L^{\infty}(\Lambda^2)} \, \int \dd y_{a_1}\, \cal W_\tau^{a_1}(y_{a_1}-y_{a_1^{*}})\int \dd y_{a_2}\,Q^{(2)}_{\tau,\zeta_1}(y_{a_1};y_{a_2})
\\
&\leq\; \frac{1}{\tau}\,\|\cal W_\tau^{a_2}\|_{L^{\infty}(\Lambda)} \, \|Q^{(1)}_{\tau,-\zeta_1}\|_{L^{\infty}(\Lambda^2)} \, \int \dd y_{a_1}\, \cal W_\tau^{a_1}(y_{a_1}-y_{a_1^{*}}) 
\\
&=\;
\frac{1}{\tau}\,\|\cal W_\tau^{a_2}\|_{L^{\infty}(\Lambda)} \, \|Q^{(1)}_{\tau,-\zeta_1}\|_{L^{\infty}(\Lambda^2)} \, \|\cal W_\tau^{a_1}\|_{L^1(\Lambda)}\,,
\end{align*}
which is bounded by a constant, by Lemma \ref{lem:kernel bounds} (i) and \eqref{cal W bound}.
The third term is estimated analogously.

When estimating the fourth term we consider two cases.
\begin{itemize}
\item[(i)]$\zeta_1 \in (-3/4,0) \cup [1/4,1)$.
We estimate the fourth term using Lemma \ref{lem:kernel bounds} (ii)
\begin{multline*}
\leq\; \frac{1}{\tau^2} \, \|\cal W_{\tau}^{a_2}\|_{L^{\infty}(\Lambda)}\,\|Q_{\tau,\zeta_1}^{(2)}\|_{L^{\infty}(\Lambda^2)} \,\int \dd y_{a_1}\, \cal W_{\tau}^{a_1}(y_{a_1}-y_{a_1^{*}}) \int \dd y_{a_2} \, Q_{\tau,-\zeta_1}^{(2)}(y_{a_2};y_{a_1})
\\
\leq\; \frac{1}{\tau^2}\,\|\cal W_{\tau}^{a_1}\|_{L^{1}(\Lambda)} \, \|\cal W_{\tau}^{a_2}\|_{L^{\infty}(\Lambda)}\,\|Q_{\tau,\zeta_1}^{(2)}\|_{L^{\infty}(\Lambda^2)}\,,
\end{multline*}
which by \eqref{cal W bound} and Lemma \ref{lem:kernel bounds} (iii) is bounded by a constant.

\item[(ii)]$\zeta_1 \in (-1,-3/4] \cup (0,1/4)$.
We then estimate the fourth term using Lemma \ref{lem:kernel bounds} (ii) by
\begin{equation*}
\leq\; \frac{1}{\tau^2} \, \|\cal W_{\tau}^{a_2}\|_{L^{\infty}(\Lambda)}\,\|Q_{\tau,-\zeta_1}^{(2)}\|_{L^{\infty}(\Lambda^2)} \,\int \dd y_{a_1}\, \cal W_{\tau}^{a_1}(y_{a_1}-y_{a_1^{*}}) \int \dd y_{a_2} \, Q_{\tau,\zeta_1}^{(2)}(y_{a_1};y_{a_2})\,,
\end{equation*}
and we conclude the argument as in (i).
\end{itemize}
\item[(C)] $\cal P$ is an open path with vertices $b_1,b_2 \in \cal V_1$.
In this case $\fra I (\cal P)=Q_{\tau,0}(y_{b_1};y_{b_2})$, which is bounded by a constant 
by Lemma \ref{lem:kernel bounds} (i) .
\end{itemize}

Summarizing the base step, we have proved that $\fra I (\cal P)$ is bounded by a constant
when $n = \abs{\cal V(\cal P)} \leq 2$.

\subsubsection*{Induction step}
Suppose that $\fra I(\cal P) \leq \CP^{\abs{\cal V(\cal P)}}$
whenever $\abs{\cal V(\cal P)} \leq n-1$ with $n \geq 3$. We want to show that the bound holds if $\abs{\cal V(\cal P)}=n$. We do this by integrating a vertex in $\cal V_2(\cal P)$ and reducing the path $\cal P$ to a new path with one fewer vertex, satisfying the same properties.

We pick an arbitrary vertex $a \in \cal V_2(\cal P)$, subject to the restriction that if $a^* \in \cal V_2(\cal P)$ then $r_a = 1$. It is easy to see that such a vertex always exists.
For definiteness, we assume without loss of generality that $a=a_2$.

By construction, the only dependence on $y_a=y_{a_2}$ in the integrand defining $\fra I(\cal P)$ is 
\begin{equation*}
\cal W_\tau^{a_2}(y_{a_2}-y_{a_2^{*}}) \,Q_{\tau,\zeta_1}(y_{a_1};y_{a_2}) \, Q_{\tau,\zeta_2}(y_{a_2};y_{a_3})\,.
\end{equation*}
In what follows we prove
\begin{equation}
\label{Induction_Step}
\int \dd y_{a_2}\,\cal W_\tau^{a_2}(y_{a_2}-y_{a_2^{*}}) \,Q_{\tau,\zeta_1}(y_{a_1};y_{a_2}) \, Q_{\tau,\zeta_2}(y_{a_2};y_{a_3}) \;\leq\; \CP \, Q_{\tau,\zeta_1+\zeta_2} (y_{a_1};y_{a_3})\,.
\end{equation}
Note that \eqref{Induction_Step} implies that $\fra I (\cal P) \leq \CP \,\fra I(\cal {\hat P})$, where $\cal {\hat P}$ is the path (open or closed) obtained from $\cal P$ by deleting the vertex $a_2$ and replacing the edges $\{a_1,a_2\}$ and $\{a_2,a_3\}$ with the edge $\{a_1,a_3\}$ carrying the time $\zeta_1+\zeta_2$.
We observe that the path $\hat{\cal P}$ still satisfies the conditions \eqref{Path condition 1} and \eqref{Path condition 2} by construction.

Assuming \eqref{Induction_Step} is proved, we can complete the induction. Together with the induction base, this will conclude the proof of Proposition \ref{Product of subgraphs local}.

What remains is the proof of \eqref{Induction_Step}. Using \eqref{Q_splitting} it follows that the left-hand side of \eqref{Induction_Step} is
\begin{align*}
&\int \dd y_{a_2}\,\cal W_\tau^{a_2}(y_{a_2}-y_{a_2^{*}}) \,Q_{\tau,\zeta_1}^{(1)}(y_{a_1};y_{a_2}) \, Q_{\tau,\zeta_2}^{(1)}(y_{a_2};y_{a_3})
\\ 
&+\frac{1}{\tau}\,\int \dd y_{a_2}\,\cal W_\tau^{a_2}(y_{a_2}-y_{a_2^{*}}) \,Q_{\tau,\zeta_1}^{(2)}(y_{a_1};y_{a_2}) \, Q_{\tau,\zeta_2}^{(1)}(y_{a_2};y_{a_3})
\\
&+\frac{1}{\tau}\,\int \dd y_{a_2}\,\cal W_\tau^{a_2}(y_{a_2}-y_{a_2^{*}}) \,Q_{\tau,\zeta_1}^{(1)}(y_{a_1};y_{a_2}) \, Q_{\tau,\zeta_2}^{(2)}(y_{a_2};y_{a_3})
\\
&+\frac{1}{\tau^2}\,\int \dd y_{a_2}\,\cal W_\tau^{a_2}(y_{a_2}-y_{a_2^{*}}) \,Q_{\tau,\zeta_1}^{(2)}(y_{a_1};y_{a_2}) \, Q_{\tau,\zeta_2}^{(2)}(y_{a_2};y_{a_3})\,.
\end{align*}
We need to estimate each term separately. 

The first term is
\begin{equation*}
\;\leq\;\|Q_{\tau,\zeta_1}^{(1)}\|_{L^{\infty}(\Lambda^2)} \, \|Q_{\tau,\zeta_2}^{(1)}\|_{L^{\infty}(\Lambda^2)} \, \int \dd y_{a_2}\,\cal W_\tau^{a_2}(y_{a_2}-y_{a_2^{*}}) \;=\; \|Q_{\tau,\zeta_1}^{(1)}\|_{L^{\infty}(\Lambda^2)} \, \|Q_{\tau,\zeta_2}^{(1)}\|_{L^{\infty}(\Lambda^2)} \,\|\cal W_\tau^{a_2}\|_{L^1(\Lambda)}\,,
\end{equation*}
which is bounded by a constant, by Lemma \ref{lem:kernel bounds} (i) and \eqref{cal W bound}.

The second term is 
\begin{equation*}
\;\leq\;\frac{1}{\tau}\,\|\cal W_\tau^{a_2}\|_{L^{\infty}(\Lambda)} \, \|Q_{\tau,\zeta_2}^{(1)}\|_{L^{\infty}(\Lambda^2)} \, \int \dd y_{a_2} \,Q_{\tau,\zeta_1}^{(2)}(y_{a_1};y_{a_2})\,
\end{equation*}
which is bounded by a constant, by Lemma \ref{lem:kernel bounds} (i) and (ii) and \eqref{cal W bound}. The third term is estimated analogously.

The fourth term is
\begin{equation*}
\;\leq\; \frac{1}{\tau^2} \,\|\cal W_\tau^{a_2}\|_{L^{\infty}(\Lambda)} \,\int \dd y_{a_2}\,Q_{\tau,\zeta_1}^{(2)}(y_{a_1};y_{a_2}) \, Q_{\tau,\zeta_2}^{(2)}(y_{a_2};y_{a_3})\,
\end{equation*}
which by \eqref{cal W bound} is
\begin{equation}
\label{Induction fourth term}
\;\leq\; \frac{C}{\tau} \int \dd y_{a_2}\,Q_{\tau,\zeta_1}^{(2)}(y_{a_1};y_{a_2}) \, Q_{\tau,\zeta_2}^{(2)}(y_{a_2};y_{a_3})\,.
\end{equation}
We now estimate \eqref{Induction fourth term}. First note that, if $\zeta_1=0$ or $\zeta_2=0$, this expression equals to zero by \eqref{def_Qt_split}. We henceforth consider the case when $\zeta_1, \zeta_2 \in (-1,1) \setminus \{0\}$. In particular, we can rewrite the expression in \eqref{Induction fourth term} as
\begin{equation}
\label{Induction fourth term B}
\frac{C}{\tau} \int \dd y_{a_2}\,\ee^{-\{\zeta_1\}h/\tau}(y_{a_1};y_{a_2}) \, \ee^{-\{\zeta_2\}h/\tau}(y_{a_2};y_{a_3})\;=\; \frac{C}{\tau}  \, \ee^{-(\{\zeta_1\}+\{\zeta_2\})h/\tau}(y_{a_1};y_{a_3})\,.
\end{equation}
If $\{\zeta_1\}+\{\zeta_2\} \geq 1$, then the expression in \eqref{Induction fourth term B} is $\leq C$ for some constant $C$ by the proof of Lemma \ref{lem:kernel bounds} (iii).
Otherwise, if
$\{\zeta_1\}+\{\zeta_2\} < 1$, we have $\{\zeta_1\}+\{\zeta_2\}=\{\zeta_1+\zeta_2\} > 0$. 
Therefore, in this case, the expression in \eqref{Induction fourth term B} is
\begin{equation*}
\frac{C}{\tau}  \, \ee^{-(\{\zeta_1+\zeta_2\})h/\tau}(y_{a_1};y_{a_3}) \;=\;\frac{C}{\tau}  \,Q_{\tau,\zeta_1+\zeta_2}^{(2)}(y_{a_1};y_{a_3})\;\leq\;C  \,Q_{\tau,\zeta_1+\zeta_2}(y_{a_1};y_{a_3})\,.
\end{equation*}
Putting everything together, it follows that the expression on the left-hand side of 
\eqref{Induction_Step} is 
\begin{equation*}
\;\leq\; C+C\,Q_{\tau,\zeta_1+\zeta_2}(y_{a_1};y_{a_3})\,.
\end{equation*}
The claim \eqref{Induction_Step} now follows from Lemma \ref{lem:kernel bounds} (iv). This concludes the proof of the induction step.

We can now deduce Proposition \ref{Product of subgraphs local} when $\cal \xi \in \fra B_p$.
The proof of Proposition \ref{Product of subgraphs local} when $\xi=\xi_p$ proceeds analogously. 
The only difference is that we now work with $(\tilde{\cal V},\tilde{\cal E})$ instead of $(\cal V,\cal E)$ and we note that all of the connected components of $\tilde{\cal E}$ are closed paths. (Recall that \eqref{Product of subgraphs local 2} was estimated in terms of the $L^1$-norm of $\xi$.)
\end{proof}
From Proposition \ref{Product of subgraphs local} we deduce that Corollary \ref{cor:bound_am} holds in this setting. Next, for $\Pi \in \fra R$, we define
\begin{equation*}
\cal I^\xi_{\Pi} \;\deq\; \int_{\Lambda^{\cal V}} \dd \f y \, \pBB{\prod_{i=1}^{m} \delta(y_{i,1}-y_{i,2})} \xi(\f y_1) \prod_{e \in \cal E} \cal J_{e}(\f y_e)
\end{equation*} 
if $\xi \in \fra B_p$ and
\begin{equation*}
\cal I^\xi_{\Pi} \;\deq\; \int_{\Lambda^{\tilde{\cal V}}} \dd \f y \, \pBB{\prod_{i=1}^{m} \delta(y_{i,1}-y_{i,2})} \prod_{e \in \tilde{\cal E}} \cal J_{e}(\f y_e) \,.
\end{equation*} 
if $\xi = \xi_p$.
In other words, we replace $w$ by $\delta$ in Definition \ref{def_calI infinity}. With $\cal I^\xi_{\Pi}$ given in this way we define $a_{\infty,m}^\xi$ by \eqref{f_m,a_m^xi classical 1D infty}.
We want to show that \eqref{Convergence of the explicit terms identity} still holds in this setting. The following analogue of Lemma \ref{Convergence of explicit terms lemma} holds in this setting.

\begin{lemma}
\label{Convergence of explicit terms delta}
For all fixed $\f t \in \fra A$ and $\Pi \in \fra R$ we have
\begin{equation*}
\cal I^\xi_{\tau,\Pi}(\f t) \rightarrow \cal I^\xi_{\Pi}\quad \mbox{as}\quad \tau \rightarrow \infty\quad \mbox{uniformly in} \quad \xi \in \fra B_p \cup \{\xi_p\}\,.\end{equation*}
\end{lemma}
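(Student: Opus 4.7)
The proof will mirror that of Lemma \ref{Convergence of explicit terms lemma}, augmented by a new step exploiting the weak convergence $w_\tau \rightharpoonup \alpha\delta$ (with $\alpha=1$). In a first step, I will introduce the auxiliary quantity $\hat{\cal I}^\xi_{\tau,\Pi}(\f t)$ defined as in \eqref{hat I_rep} but with $w_\tau$ in place of $w$, and telescope over $\hat{\cal E}$ to replace each $\cal J_{\tau,e}$ by $\hat{\cal J}_{\tau,e}$. Each such replacement produces a factor $\tfrac{1}{\tau}$ from the subtracted term $\tfrac{1}{\tau}S_{\tau,\cdot}$ in \eqref{J_e}, while the remaining factors are controlled by a straightforward adaptation of the path-decomposition estimates in the proof of Proposition \ref{Product of subgraphs local} (using $\|w_\tau\|_{L^1}\leq C$); this will give $\cal I^\xi_{\tau,\Pi}(\f t)-\hat{\cal I}^\xi_{\tau,\Pi}(\f t)\to 0$ uniformly in $\xi\in\fra B_p\cup\{\xi_p\}$.

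In a second step, I will define $\tilde{\cal I}^\xi_{\tau,\Pi}(\f t)$ by further replacing each $\hat{\cal J}_{\tau,e}(\f y_e,\f s)$ in $\hat{\cal I}^\xi_{\tau,\Pi}(\f t)$ with $\cal J_e(\f y_e)=G(y_a;y_b)$. Telescoping as in \eqref{telescoping} and combining Lemma \ref{cor:conv_G_tau corollary} with the classical analogue of the path estimates from the proof of Proposition \ref{Product of subgraphs local} (now with $G$ in place of $\cal J_{\tau,e}$) will give $\hat{\cal I}^\xi_{\tau,\Pi}(\f t)-\tilde{\cal I}^\xi_{\tau,\Pi}(\f t)\to 0$ uniformly in $\xi$. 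After these two reductions it will remain to show $\tilde{\cal I}^\xi_{\tau,\Pi}(\f t)\to\cal I^\xi_\Pi$ uniformly in $\xi$.

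To handle this, I will perform the change of variables $v_i\deq y_{i,1}-y_{i,2}$, $u_i\deq y_{i,1}$, rewriting
\[
\tilde{\cal I}^\xi_{\tau,\Pi}(\f t) \;=\; \int_{\Lambda^m}\dd\f v\,\prod_{i=1}^{m}w_\tau(v_i)\,F^\xi(\f v),
\]
where $F^\xi(\f v)$ denotes the integral of $\xi(\f y_1)\prod_{e\in\hat{\cal E}}G(y_a;y_b)$ over the remaining variables, with $y_{i,2}$ set to $u_i-v_i$. Since $\prod_i w_\tau(v_i)$ is uniformly $L^1$-bounded and, by the hypothesis on $w_\tau$, converges weakly against bounded continuous test functions to $\prod_i\delta(v_i)$, it will suffice to establish that the family $(F^\xi)_{\xi\in\fra B_p\cup\{\xi_p\}}$ is uniformly bounded on $\Lambda^m$ and uniformly equicontinuous at $\f v=\f 0$, with $F^\xi(\f 0)=\cal I^\xi_\Pi$. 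The uniform boundedness follows from path-decomposition estimates analogous to those of Proposition \ref{Product of subgraphs local} combined with Cauchy-Schwarz in $\f y_1$ (for $\xi\in\fra B_p$) or a direct trace estimate (for $\xi=\xi_p$).

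The hard part will be the uniform equicontinuity at $\f 0$. For $\xi\in\fra B_p$, I will rewrite $F^\xi(\f v)=\scalar{\xi}{\Phi(\f v)}_{\fra S^2(\fra H^{(p)})}$, where the Hilbert-Schmidt operator $\Phi(\f v)$ on $\fra H^{(p)}$ has kernel obtained by integrating $\prod_{e\in\cal E}G(y_a;y_b)$ over the $\f u$-variables with $y_{i,2}=u_i-v_i$; Cauchy-Schwarz then yields $|F^\xi(\f v)-F^\xi(\f 0)|\leq\|\Phi(\f v)-\Phi(\f 0)\|_{\fra S^2}$, reducing the problem to showing $\|\Phi(\f v)-\Phi(\f 0)\|_{\fra S^2}\to 0$ as $\f v\to 0$, which I will deduce from the path decomposition together with the strong $L^2$-continuity of translations on $\Lambda=\bb T^1$ applied to one endpoint of the relevant Green-function factors (exploiting $G\in\fra S^2(\fra H)$ and $\tr h^{-1}<\infty$). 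The case $\xi=\xi_p$ yields a scalar-valued integral (all connected components of $\tilde{\cal E}$ are closed paths) and will be handled directly by dominated convergence, using the continuity of the kernel $G(x;y)$ on $\bb T^1\times\bb T^1$.
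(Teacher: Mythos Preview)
Your three-step reduction ($\cal J_{\tau,e}\to\hat{\cal J}_{\tau,e}\to\cal J_e$, then $w_\tau\to\delta$) is workable, but the justification you give for steps~1--2 has a gap. You claim that isolating a factor $\tfrac{1}{\tau}S_{\tau,t}$ and then invoking ``a straightforward adaptation of the path-decomposition estimates in the proof of Proposition~\ref{Product of subgraphs local}'' yields decay. It does not: that proposition's induction step \eqref{Induction_Step} uses $\|\cal W^a_\tau\|_{L^\infty}\leq C\tau$ precisely to \emph{absorb} the $\tfrac{1}{\tau}$ coming from $Q^{(2)}$-factors, so the bound it produces is $\CP^{m+p}$ with no decay in~$\tau$. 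Replacing a single $\cal J_{\tau,e_0}$ by $\tfrac{1}{\tau}S_{\tau,t_0}$ and rerunning the induction just gives back $\CP^{m+p}$. The same issue arises in step~2: Lemma~\ref{cor:conv_G_tau corollary} gives $\|G_{\tau,t}-G\|_{\fra S^2}\to 0$, but the machinery of Proposition~\ref{Product of subgraphs local} is built around pointwise kernel bounds and does not interface directly with a single $\fra S^2$-small factor.

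What actually makes the argument go through --- and what the paper uses instead --- is the one-dimensional fact that on $\bb T^1$ the kernels $G_{\tau,t}(x;y)$ and $G(x;y)$ are continuous and uniformly bounded, and $\tfrac{1}{\tau}\|S_{\tau,t}\|_{L^\infty(\Lambda^2)}\to 0$ for each fixed $t>0$. With this in hand, a single telescoping gives $\bigl\|\prod_{e}\cal J_{\tau,e}(\f y_e,\f s)-\prod_{e}\cal J_e(\f y_e)\bigr\|_{L^\infty_{\f y}}\to 0$, collapsing your steps~1--2 into one line. The paper then splits $\cal I^\xi_{\tau,\Pi}(\f t)-\cal I^\xi_\Pi$ into two terms: one bounded by this $L^\infty$-smallness times $\|\xi\|_{L^1_{\f y_1}}\prod_i\|w_\tau\|_{L^1}$, and one where $\prod_i(w_\tau-\delta)$ is tested against the bounded continuous function $\f y_2\mapsto\int\xi(\f y_1)\prod_eG\,\dd\f y_1$. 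Your step~3 equicontinuity argument is correct and makes the uniformity in $\xi\in\fra B_p$ explicit (via $|F^\xi(\f v)-F^\xi(\f 0)|\leq\|\Phi(\f v)-\Phi(\f 0)\|_{\fra S^2}$); the paper gets the same uniformity more implicitly from the uniform continuity of $\prod_e G$ on the compact torus.
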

\begin{proof} 
Since we are working in one dimension, it follows from the definition of $\cal J_{\tau, e}, \cal J_{e}$ and from a telescoping argument that
\begin{equation}
\label{convergence in C}
\lim_{\tau \rightarrow \infty}\Big\|\prod_{e \in \hat{\cal E}} \cal J_{\tau, e}(\f y_e, \f s)-\prod_{e \in \hat{\cal E}} \cal J_{e}(\f y_e)\Big\|_{L^\infty_{\f y}} \;=\;0\,,
\end{equation}
where $\hat {\cal E}$ stands for $\cal E$ if $x \in \fra B_p$ or $\tilde {\cal E}$ if $\xi = \xi_p$.
Note that this convergence is not uniform in $\f s$. In the telescoping step of the proof of \eqref{convergence in C} we use $\frac{1}{\tau} \|S_{\tau,t}\|_{L^\infty(\Lambda^2)} \rightarrow 0$ as $\tau \rightarrow \infty$ as well as the estimate 
\begin{equation*}
\bigg|\frac{\ee^{t\lambda_k/\tau}}{\tau(\ee^{\lambda_k/\tau} - 1)}\bigg| \;\leq\;
C \begin{cases}
\frac{1}{\lambda_k}&\mbox{if }\lambda_k \leq \tau \\
\frac{1}{\tau}\ee^{-(1-t)\lambda_k/\tau}&\mbox{if }\lambda_k > \tau\,,
\end{cases}
\end{equation*}
for all $t \leq 1$. The latter allows us to apply the dominated convergence theorem and deduce that $\|\cal J_{\tau, e}(\f y_e, \f s)-\cal J_{e}(\f y_e)\|_{L^\infty_{\f y}} \rightarrow 0$ as $\tau \rightarrow \infty$.

We consider first the case when $\xi \in \cal B_p$. We compute
\begin{multline}
\label{I difference local}
\cal I_{\tau,\Pi}^{\xi}(\f t)-\cal I_{\Pi}^{\xi}\;=\;\int_{\Lambda^{\cal V}} \dd \f y \, \pBB{\prod_{i=1}^{m} w_\tau(y_{i,1}-y_{i,2})} \xi(\f y_1) \Bigg(\prod_{e \in \cal E} \cal J_{\tau, e}(\f y_e, \f s)-\prod_{e \in \cal E} \cal J_{e}(\f y_e)\Bigg)\\
+
\int_{\Lambda^{\cal V_2}} \dd \f y_2 \, \Bigg[\int_{\Lambda^{\cal V_1}} \dd \f y_1\, \prod_{e \in \cal E} \cal J_{e}(\f y_e) \, \xi(\f y_1)\Bigg] \pBB{\prod_{i=1}^{m} w_\tau(y_{i,1}-y_{i,2})-\prod_{i=1}^{m} \delta(y_{i,1}-y_{i,2})}\,.
\end{multline}
The first line on the right-hand side of \eqref{I difference local} is bounded in absolute value by
\begin{multline*}
\Big\|\prod_{e \in \cal E} \cal J_{\tau, e}(\f y_e, \f s)-\prod_{e \in \cal E} \cal J_{e}(\f y_e)\Big\|_{L^\infty_{\f y}} \,\int_{\Lambda^{\cal V}} \dd \f y \,\pBB{\prod_{i=1}^{m} w_\tau(y_{i,1}-y_{i,2})} \,|\xi(\f y_1)|
\\
\leq\; \Big\|\prod_{e \in \cal E} \cal J_{\tau, e}(\f y_e, \f s)-\prod_{e \in \cal E} \cal J_{e}(\f y_e)\Big\|_{L^\infty_{\f y}} \norm{\xi}_{L^1_{\f y_1}}\,,
\end{multline*}
where we used that $\int w_\tau = 1$ and that the arguments $y_{i,1},y_{i,2}$ of each $w_\tau$ are in $\f y_2$.
This converges to zero as $\tau \rightarrow \infty$ by \eqref{I difference local}, uniformly in $\xi \in \fra B_p$.
Furthermore, we note that by \eqref{I difference local}, $\xi \in L^2_{\f y_1}$ and the dominated convergence theorem we find that the expression in square brackets on the second line of \eqref{I difference local} is a bounded continuous function of $\f y_2$. Therefore, the second line on the right-hand side of \eqref{I difference local} also converges to zero as $\tau \rightarrow \infty$ by the assumption that $w_\tau$ converges weakly to the delta function at $0$. The claim when $\xi \in \fra B_p$ follows, and the argument for the case $\xi = \xi_p$ is analogous.
\end{proof}
Since $w_\tau \geq 0$ pointwise, we can apply the Feynman-Kac formula as in the proof of Proposition \ref{Remainder term bound 1D} and reduce the estimates on the remainder term to those on the explicit terms, as we did in Sections \ref{1D subsection 1} and \ref{1D subsection 2}. In particular, from Proposition \ref{Product of subgraphs local} we obtain that for all $M \in \N$ and $\re z \geq 0$ we have
\begin{equation*}
\notag
\absb{R^\xi_{\tau,M}(z)} \;\leq\; (C\CP\, p)^p \, \pb{C \CP \abs{z}}^M M!\,,
\end{equation*}
for the constant $\CP$ from Proposition \ref{Product of subgraphs local}. Moreover, for fixed $\tau>0$, the function $w_\tau \in L^{\infty}(\Lambda)$ and hence the function $A_\tau^{\xi}$ is analytic in $\{z\col\re z > 0\}$ by the same arguments as in Sections \ref{1D subsection 1} and \ref{1D subsection 2}. 

We now consider the classical case. Let $a^\xi_m$ denote the Taylor coefficients and $R_M^{\xi}(z)$ the remainder term of $A^{\xi}(z) \deq \tilde \rho_z (\Theta(\xi))$, where $\tilde \rho_z$ is defined for $\re z \geq 0$ as in \eqref{def_rho_z}, except that now $W$ is given by \eqref{def_W_local}. The identity $a_m^\xi = a_{\infty, m}^\xi\,$ follows from Wick's theorem. 
Proposition \ref{Product of subgraphs local} and Lemma \ref{Convergence of explicit terms delta} imply that for all $m \in \N$ we have
\begin{equation}
\label{am bound local}
\absb{a^\xi_{m}} \;\leq\; (C\CP\, p)^p \, \pb{C \CP}^m m!\,,
\end{equation}
for the constant $\CP$ from Proposition \ref{Product of subgraphs local}. 
If $\xi \in \cal B_p$ then the proof of Lemma \ref{lem:A_analytic} carries over this setting and shows that the $R_M^{\xi}(z)$ satisfies 
\begin{equation}
\label{RM bound local}
\absb{R^\xi_{M}(z)} \;\leq\; (C\CP\, p)^p \, \pb{C \CP \abs{z}}^M M!\,,
\end{equation}
and that the function $A^\xi$ is analytic for $\re z>0$. If $\xi=\xi_p$ we argue as in Section \ref{1D subsection 2} and we reduce the estimate of the remainder term to that of the explicit term and so for $\xi=\xi_p$ we obtain \eqref{RM bound local} from \eqref{am bound local}.  Furthermore, we deduce that the function $A^\xi$ is analytic for $\re z > 0$.

Putting everything together we deduce Theorem \ref{thm:1D local} by using the same duality arguments as in the proof of Theorem \ref{thm:1D} in Section \ref{1D subsection 3}. This concludes the proof of Theorem \ref{thm:1D local}.

\section{The counterterm problem} \label{sec:counterterm}

In this section we formulate the counterterm problem from \eqref{counter_intro} precisely, and solve it. We fix the spatial domain $\Lambda=\mathbb{R}^d$, with $d=2,3$. 
Throughout this section, we use the convention
\begin{equation*}
\hat{f}(p)\;\deq\;\int \dd x\, f(x)\,\ee^{-\ii p\cdot x}
\end{equation*}
for the Fourier transform, in order to avoid factors of $2\pi$ in the calculations below. Note that this is a slightly different convention from the one used earlier.

For a nonnegative function $u \col \Lambda \to [0,\infty)$ we use the abbreviations
\begin{equation*}
G_\tau^u \;\deq\; \frac{1}{\tau (\ee^{(-\Delta + u)/\tau} - 1)}\,, \qquad \varrho_\tau^u(x) \;\deq\; G_\tau^u(x;x)\,.
\end{equation*}
Thus, in the notation \eqref{quantum_G} and \eqref{def_varrho_tau} we have $G_\tau = G_\tau^{\kappa + v_\tau}$ and $\varrho_\tau = \varrho_\tau^{\kappa + v_\tau}$ (recall the remark after \eqref{quantum correlation}). For given $\kappa > 0$ we now make the choice
\begin{equation} \label{def_rho_bar}
\bar \varrho_\tau \;\equiv\; \bar \varrho_\tau^\kappa \;\deq\; \varrho_\tau^\kappa(x;x)\,,
\end{equation}
which is independent of $x$ by translation invariance of the operator $-\Delta + \kappa$.

With these notations, we may write the counterterm problem \eqref{counter_intro} as 
\begin{equation}\label{eq:fix} 
v_{\tau} \;=\; V + w* (\varrho^{\kappa+v_\tau}_\tau - \bar{\varrho}^\kappa_\tau)\,,
\end{equation}
where the dependence of the right-hand side on $v_\tau$ is now made explicit.

The main result of this section, Theorem \ref{thm:counter} below, states that, for a general class of external potentials $V$, for sufficiently large $\kappa > 0$ and for all $\tau > 0$, the counterterm problem \eqref{eq:fix} has a unique solution $v_\tau$. Since the solution $v_\tau$ of (\ref{eq:fix}) depends on $\tau$, we also show that $v_\tau$ approaches a limit $v$ as $\tau \to\infty$. This one-body potential $v$ is the one-body potential in the classical one-particle Hamiltonian \eqref{def_h} that yields the correct rigorous construction of the formal measure \eqref{def_classical_H} with external potential $V$.

We start by imposing some conditions on the external potential $V$ in the original Hamiltonian (\ref{eq:standard}). We require that $V \in L^\infty_\txt{loc} (\Lambda)$ satisfies $V(x) \geq 0$ for all $x \in \Lambda$, that $h_V \deq -\Delta + \kappa + V$ satisfies $\| h_V^{-1} \|_{\fra S^2} < \infty$, and that there exists a constant $C > 0$ such that  
\begin{equation}\label{eq:assV} V(x+y) \;\leq\; C V(x) V(y)
\end{equation}
for all $x,y \in \Lambda$.
Moreover, for some of our results, we also need the additional assumption $V \in C^1 (\Lambda)$, with $\| \nabla V / V \|_{L^\infty} < \infty$; see the statement of Theorem \ref{thm:counter} below.

Note that by the condition $\| h_V^{-1} \|_{\fra S^2} < \infty$ we have $V \not \equiv 0$. Hence, (\ref{eq:assV}) implies that $V (x) > 0$ for all $x \in \Lambda$. (In fact, since we have the freedom of choosing the constant $\kappa$, we can always assume that $V(x) \geq c$ for some constant $c$.) Moreover, (\ref{eq:assV}) implies that $V$ grows at most exponentially at infinity. In particular, the following result holds.

\begin{lemma} 
\label{lem:Vexp}
For $V>0$ satisfying \eqref{eq:assV} there exists a constant $C_0 \geq 0$ such that
\begin{equation}\label{eq:Vexp} V(x) \;\leq\; e^{C_0 (\abs{x} + 1)}\, 
\end{equation}
for all $x \in \Lambda$.
\end{lemma}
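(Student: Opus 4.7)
The plan is to reduce \eqref{eq:assV} to a subadditivity statement by taking logarithms, and then to exploit that $V$ is locally bounded (a consequence of $V \in L^\infty_\txt{loc}(\Lambda)$) in order to iterate the subadditivity bound. Concretely, set $M \deq \sup_{\abs{y} \leq 1} V(y)$, which is finite by the $L^\infty_\txt{loc}$ assumption. I would first show, by induction on $n \geq 1$, that
\begin{equation*}
V(x_1 + \cdots + x_n) \;\leq\; C^{n-1} V(x_1) \cdots V(x_n)
\end{equation*}
for all $x_1,\dots,x_n \in \Lambda$; the base case $n=1$ is trivial and the induction step is immediate from \eqref{eq:assV}.

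Next, for arbitrary $x \in \Lambda$, I would set $n \deq \lceil \abs{x} \rceil + 1 \in \{1,2,3,\dots\}$, so that $\abs{x/n} \leq 1$. Writing $x = (x/n) + \cdots + (x/n)$ ($n$ summands) and applying the iterated form of \eqref{eq:assV}, one gets
\begin{equation*}
V(x) \;\leq\; C^{n-1} V(x/n)^n \;\leq\; C^{n-1} M^n \;\leq\; \pb{\max(1, CM)}^n.
\end{equation*}
Choosing $C_0 \deq \max\pb{0, \log(CM)}\geq 0$, this yields
\begin{equation*}
V(x) \;\leq\; \ee^{C_0 n} \;\leq\; \ee^{C_0 (\abs{x}+1)},
\end{equation*}
which is exactly \eqref{eq:Vexp}.

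There is no real obstacle here: the only subtlety is arranging the constant to satisfy $C_0 \geq 0$, which is handled by the $\max$ above (if $CM < 1$, then the bound $V(x) \leq 1 \leq \ee^{0 \cdot (\abs{x}+1)}$ holds trivially, while if $CM \geq 1$ we simply take $C_0 = \log(CM)$). The use of $V \in L^\infty_\txt{loc}$ to define $M$ is essential, and is the only ingredient beyond \eqref{eq:assV} itself.
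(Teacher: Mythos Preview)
Your approach is essentially the same as the paper's: both exploit that \eqref{eq:assV} becomes a subadditive inequality after taking logarithms, iterate to reduce to a unit-scale bound, and invoke $V \in L^\infty_\txt{loc}$ to control $\sup_{|y|\leq 1} V(y)$. The paper uses dyadic halving ($x \mapsto x/2^n$ with $n$ chosen so that $|x/2^n| \leq 1$), while you split $x$ into $n$ equal pieces directly; the two schemes are interchangeable.

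There is, however, a small arithmetic slip in your final line. With $n = \lceil |x| \rceil + 1$ one only has $n < |x| + 2$, not $n \leq |x| + 1$ (e.g.\ $|x| = 1/2$ gives $n = 2 > 3/2$), so the last inequality $e^{C_0 n} \leq e^{C_0(|x|+1)}$ fails as written. The fix is trivial: either take $n \deq \max(1, \lceil |x| \rceil)$, which still satisfies $|x/n| \leq 1$ and now also $n \leq |x| + 1$, or keep your choice of $n$ and absorb the extra factor via $e^{C_0(|x|+2)} \leq e^{2C_0(|x|+1)}$, yielding the claim with constant $2C_0$.
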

\begin{proof}
We let $f(x) \deq \log V(x)$ and note that (\ref{eq:assV}) implies $f(x+y) - f(x) - f(y) \leq C$
for all $x,y \in \Lambda$. Hence $f(2x) \leq C + 2 f(x)$ for all $x \in \Lambda$. Iterating this inequality, we find
\begin{equation}\label{eq:f-func} f(x) \;\leq\; \sum_{j=0}^{n-1} 2^j C + 2^{n} f(x/2^{n}) \;\leq\; 2^n C + 2^n f(x/2^n)  \,.
\end{equation}
For a given $x \in \Lambda$ with $\abs{x} > 1$, we find a unique $n \in \N \setminus \{ 0 \}$ such that $2^{n-1} < \abs{x} \leq 2^n$. {F}rom (\ref{eq:f-func}) we conclude that $f(x) \leq  2 \abs{x} C + 2 \abs{x} \sup_{\abs{y} \leq 1} f(y) \leq C_0 \abs{x}$
for all $x \in \Lambda$ with $\abs{x} > 1$. Here we defined $C_0 \deq \max\{2C + 2\sup_{\abs{y} \leq 1} \log V(y),0\} \geq 0$. Since trivially $f(x) \leq C_0$ for all $x \in \Lambda$ with $\abs{x} \leq 1$, we conclude that $f(x) \leq C_0 (\abs{x} + 1)$ for all $x \in \Lambda$, which implies (\ref{eq:Vexp}). 
\end{proof}

To prove the existence and uniqueness of $v_{\tau}$ satisfying (\ref{eq:fix}), we regard (\ref{eq:fix}) as a fixed point equation on an appropriate metric space. This allows us to apply Banach's fixed point theorem. We fix $V \in L^\infty_\txt{loc} (\Lambda)$ with $V \geq 0$ satisfying (\ref{eq:assV}). For $f \in L^\infty_\txt{loc} (\Lambda)$, we set
\[ \| f \|_V \;\deq\; \sup_{x \in \Lambda} | f(x) / V(x)|\,. \] 
Then we define the Banach space $B \deq \{ f \in L^\infty_\txt{loc} (\Lambda) \col \| f \|_V < \infty \}$.
For $\rad > 0$, we denote by 
\[ B_\rad (V) \;\deq\; \{ f \in B \col \| f - V \|_V \leq \rad \} \]
the closed ball of radius $\rad$ around $V$ in the space $B$. This is a complete metric space with respect to the metric inherited from $B$.

The main result of this section is that the \emph{counterterm problem}
\begin{equation} \label{counterterm_problem}
u \;=\; V + w * (\varrho^{\kappa+u}_\tau - \bar{\varrho}^\kappa_\tau) \,, \qquad u \in B_\rad(V)\,,
\end{equation}
has a unique solution $u$, which we call $v_\tau$, and that this solution is continuous in $\tau$ in an appropriate topology.

\begin{theorem}\label{thm:counter}
Let $V\in L^\infty_\txt{loc} (\Lambda)$ satisfy $V(x) \geq 0$ for all $x \in \Lambda$, $[ -\Delta + \kappa + V ]^{-1} \in \fra S^2$, and (\ref{eq:assV}). Let $w \in L^\infty (\Lambda)$ be an even function such that 
\begin{equation}\label{eq:assw} 
\int |w(y)| \pb{1 + V^2 (y)} \, \dd y \;<\; \infty \,.
\end{equation}

Then for every $\rad \in (0,1)$ there exists $\kappa_0 \equiv \kappa_0(\rad) > 0$ with the following properties. For all $\kappa > \kappa_0$ and for all $\tau > 0$ there exists a unique solution $u \eqd v_\tau$ of the counterterm problem \eqref{counterterm_problem}. Furthermore, for all $\kappa > \kappa_0$ there exists $v \equiv v(\kappa) \in B_\rad (V)$ such that $\lim_{\tau \to \infty} \| v_\tau - v \|_V = 0$. Under the additional assumption $V \in C^1 (\Lambda)$ with $\| \nabla V \|_V < \infty$, we also have  
\begin{equation}\label{eq:HS-conv} 
\lim_{\tau \to \infty} \left\| [-\Delta + \kappa + v_\tau ]^{-1} - [-\Delta + \kappa + v]^{-1} \right\|_{\fra S^2} \;=\; 0 \,. \end{equation}
\end{theorem}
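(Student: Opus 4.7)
The plan is to apply Banach's fixed point theorem to the map $\Phi_\tau(u) \deq V + w * (\varrho_\tau^{\kappa+u} - \bar\varrho_\tau^\kappa)$ on the complete metric space $B_\rad(V)$. The central ingredient is a pointwise estimate
\[
\absb{\varrho_\tau^{\kappa + u}(x) - \bar\varrho_\tau^\kappa} \;\leq\; \epsilon(\kappa)\, V(x),
\]
with $\epsilon(\kappa) \to 0$ as $\kappa \to \infty$, uniformly in $\tau > 0$ and $u \in B_\rad(V)$. Since $-\Delta + \kappa$ is translation invariant, $\bar\varrho_\tau^\kappa = \varrho_\tau^\kappa(x)$, so this is the diagonal of $G_\tau^{\kappa+u} - G_\tau^\kappa$. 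I would use the Neumann series representation \eqref{neumann_series} and Duhamel's formula to write
\[
(\ee^{-n(-\Delta+\kappa+u)/\tau} - \ee^{-n(-\Delta+\kappa)/\tau})(x;x) \;=\; -\int_0^{n/\tau} \dd s\, \qb{\ee^{-s(-\Delta+\kappa+u)}\, u\, \ee^{-(n/\tau - s)(-\Delta+\kappa)}}(x;x),
\]
then apply the pointwise Feynman--Kac bound $\ee^{-s(-\Delta+\kappa+u)}(x;z) \leq \ee^{-s h_0}(x;z)$ (valid since $u \geq 0$; here $h_0 \deq -\Delta + \kappa$), and recognise $\sum_{n \geq 1} \tau^{-1}(\cdots)$ as a Riemann sum dominated by the integral $\qb{h_0^{-1} u\, h_0^{-1}}(x;x)$. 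Using \eqref{eq:assV} together with the translation invariance of $h_0$ to factor out $V(x)$ reduces this to $C V(x) \int h_0^{-1}(0;y)^2 V(y)\, \dd y$; by Lemma \ref{lem:Vexp} and the explicit large-$\kappa$ decay of $h_0^{-1}(0;y)$ (exponential with rate $\sqrt{\kappa}$ in $d=3$, analogous modified Bessel form in $d=2$), this last integral is finite and vanishes as $\kappa \to \infty$. Convolving with $w$ and using \eqref{eq:assV}, \eqref{eq:assw} yields $\|\Phi_\tau(u) - V\|_V \leq C \epsilon(\kappa) \int |w(z)| V(z) \, \dd z$, so $\Phi_\tau$ maps $B_\rad(V)$ into itself for $\kappa$ large.

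An identical Duhamel expansion applied to the difference $\varrho_\tau^{\kappa+u_1} - \varrho_\tau^{\kappa+u_2}$, together with the pointwise bound $|u_1 - u_2| \leq \|u_1-u_2\|_V\, V$, yields a Lipschitz estimate with constant $\leq \epsilon(\kappa) < 1$ once $\kappa > \kappa_0$, and Banach's theorem produces the unique fixed point $v_\tau$. For the convergence $v_\tau \to v$, the Riemann sum $\sum_n \tau^{-1}(\cdots)$ converges, by dominated convergence against the majorant above, to $(G^{\kappa+u} - G^\kappa)(x;x) = -\int G^\kappa(x;z)\, u(z)\, G^{\kappa+u}(z;x)\, \dd z$ (with $G^u \deq (-\Delta + u)^{-1}$), uniformly in $u \in B_\rad(V)$. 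Hence $\Phi_\tau(u) \to \Phi_\infty(u)$ in $\|\cdot\|_V$, where $\Phi_\infty$ is the limiting map; $\Phi_\infty$ inherits the same contraction estimate, so it has a unique fixed point $v \in B_\rad(V)$, and the standard stability bound $\|v_\tau - v\|_V \leq (1-\epsilon(\kappa))^{-1} \|\Phi_\tau(v) - \Phi_\infty(v)\|_V$ gives $v_\tau \to v$ in $\|\cdot\|_V$.

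The main obstacle is the Hilbert--Schmidt convergence \eqref{eq:HS-conv}, which is the only place the regularity $V \in C^1$ with $\|\nabla V\|_V < \infty$ is needed. The resolvent identity
\[
(-\Delta+\kappa+v_\tau)^{-1} - (-\Delta+\kappa+v)^{-1} \;=\; (-\Delta+\kappa+v_\tau)^{-1}(v - v_\tau)(-\Delta+\kappa+v)^{-1},
\]
combined with $|v - v_\tau| \leq \|v-v_\tau\|_V\, V$ pointwise and the nonnegativity of the resolvent kernels (Feynman--Kac), reduces the claim to a uniform $\fra S^2$-bound on $(-\Delta+\kappa+v_\tau)^{-1}\, V\, (-\Delta+\kappa+v)^{-1}$. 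Because $V$ is unbounded as a multiplier, direct Schatten estimates fail; instead I would use the commutator identity
\[
V (-\Delta + \kappa + v)^{-1} \;=\; (-\Delta + \kappa + v)^{-1} V - (-\Delta + \kappa + v)^{-1} \qb{(\Delta V) + 2(\nabla V)\cdot \nabla}(-\Delta + \kappa + v)^{-1}
\]
to move the $V$-factor past a resolvent, at the cost of a commutator term controlled by $\|\nabla V\|_V$ together with standard bounds on $\nabla (-\Delta+\kappa+v)^{-1}$. Combined with the comparability $(1-\rad) V \leq v_\tau,\, v \leq (1+\rad) V$ (which transfers the hypothesis $(-\Delta+\kappa+V)^{-1} \in \fra S^2$ to both $(-\Delta+\kappa+v)^{-1}$ and $(-\Delta+\kappa+v_\tau)^{-1}$) and the bound $\|V^{1/2}(-\Delta+\kappa+v)^{-1} V^{1/2}\| \leq (1-\rad)^{-1}$ from the same comparability, this yields the required uniform $\fra S^2$-estimate, so $\|v_\tau - v\|_V \to 0$ transfers to the Hilbert--Schmidt convergence \eqref{eq:HS-conv}.
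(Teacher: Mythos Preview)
Your overall strategy---Banach fixed point on $B_\rad(V)$, Duhamel expansion combined with the Feynman--Kac domination $\ee^{-s h_u}(x;z)\le \ee^{-s h_0}(x;z)$, and a resolvent--commutator argument for \eqref{eq:HS-conv}---is exactly the paper's. Two technical steps, however, do not go through as you have written them.

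\emph{First, the ``Riemann sum dominated by $[h_0^{-1}uh_0^{-1}](x;x)$'' step is not justified.} After Feynman--Kac you are effectively bounding the diagonal of $\int_0^1\dd t\,G^0_{\tau,-t}\,u\,G^0_{\tau,t-1}$, where $G^0_{\tau,s}=\tau^{-1}\ee^{-sh_0/\tau}(\ee^{h_0/\tau}-1)^{-1}$. These kernels are \emph{not} pointwise dominated by $h_0^{-1}$ uniformly in $\tau$: for $s$ close to $-1$ the Fourier symbol of $G^0_{\tau,s}$ does not decay at infinity (it tends to $\tau^{-1}$), so there is a short-distance piece with no counterpart in $h_0^{-1}$. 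The Riemann-sum comparison you invoke would require monotonicity of the summand, which fails here. The paper instead bounds the finite-$\tau$ object directly via Lemma~\ref{lm:Lk}: part (a) gives the diagonal estimate $\bigl[\tau^{-2}\ee^{h_0/\tau}(\ee^{h_0/\tau}-1)^{-2}\bigr](y;y)\le C\kappa^{d/2-2}$, used for $|y-z|\le 1$ after pulling out $u(z)\le C\|u\|_V V(y)$; part (b) gives exponential off-diagonal decay $\le C\kappa^{d/2-1}\ee^{-\sqrt\kappa|y-z|/4}$, used for $|y-z|>1$ together with Lemma~\ref{lem:Vexp}. This yields the uniform bound $|\varrho_\tau^{\kappa+u}(y)-\bar\varrho_\tau^\kappa|\le C\kappa^{d/2-2}\|u\|_V V(y)$ without passing through the $\tau=\infty$ limit. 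The same correction applies to your convergence argument, which relies on the same majorant.

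\emph{Second, your commutator identity for \eqref{eq:HS-conv} requires $\Delta V$, which is not available under $V\in C^1$.} Even if you rewrite $[V,-\Delta]$ in divergence form $\nabla\!\cdot\!\nabla V+\nabla V\!\cdot\!\nabla$, the resulting operator estimate needs $|\nabla V|^2\lesssim h_\tau$, and since $|\nabla V|^2\le\|\nabla V\|_V^2\,V^2$ while $h_\tau$ only controls $V$, this fails for growing $V$. The paper's remedy is to split $V=V^{1/2}V^{1/2}$: then $[-\Delta,V^{1/2}]$ is a first-order operator with coefficient $\nabla V/\sqrt V$, and the relevant bound becomes $|\nabla V|^2/V\le\|\nabla V\|_V^2\,V\lesssim h_\tau$, which does hold. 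Concretely, the paper shows $\|h_\tau^{-1}V\|_{\fra S^\infty}<\infty$ uniformly in $\tau$ by estimating $\|V^{1/2}h_\tau^{-1/2}\|_{\fra S^\infty}$, $\|\nabla h_\tau^{-1/2}\|_{\fra S^\infty}$, and $\|(\nabla V/\sqrt V)h_\tau^{-1/2}\|_{\fra S^\infty}$ separately; combined with $\|h^{-1}\|_{\fra S^2}<\infty$ this gives \eqref{eq:HS-conv}. Your sketch has the right shape but needs this $V^{1/2}$ refinement to close.
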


\begin{remark}
\label{counterterm remark}
The condition $u \in B_\rad (V)$, for a $\rad \in (0,1)$, implies that 
\[ (1-\rad) V(x) \;\leq\; u (x) \;\leq\; (1+\rad) V(x)\,. \]
In particular, $u (x) \geq 0$ for all $x \in \Lambda$ and the Hamiltonian $h_u = -\Delta + \kappa + u$ is such that
\begin{equation}\label{eq:hhV} (1-\rad) h_V \;\leq\; h_u \;\leq\; (1+\rad) h_V \, . \end{equation}
This also implies that $\| h_u^{-1} \|_{\fra S^2} \leq (1-\rad)^{-1} \| h_V^{-1} \|_{\fra S^2} < \infty$ for all $u \in B_\rad (V)$.
\end{remark}

In order to prove Theorem \ref{thm:counter}, we need some basic properties of the translation invariant Gibbs states associated with $h_0 = -\Delta + \kappa$. Similar properties for the Gibbs state associated with $h_v = h_0 +v$, for a potential $v \in B_\rad (V)$ follow from the Feynman-Kac representation of $e^{-\alpha h_v}$. 

\begin{lemma}\label{lm:Lk}
\begin{itemize}
\item[(a)] For every $\alpha \in [0,2)$ there exists a constant $C_\alpha > 0$ such that
\[ \left[ \frac{1}{\tau^2} \frac{e^{\alpha (-\Delta +\kappa) / \tau}}{(e^{(-\Delta + \kappa)/ \tau}-1)^2} \right] (x;x) \;\leq\; C_\alpha \kappa^{-(2-d/2)} \,.\]
The constant $C_\alpha$ diverges as $\alpha$ approaches $2$. Because of translation invariance, the left-hand side is actually independent of $x$. 
\item[(b)] There exists a constant $C > 0$ such that 
\[ 0 \;\leq\; \left[ \frac{1}{\tau} \frac{e^{\alpha (-\Delta + \kappa) / \tau}}{e^{(-\Delta + \kappa)/\tau} - 1} \right] (x;y) \;\leq\; C \kappa^{d/2-1} \, e^{-\sqrt{\kappa} |x-y|/4} \]
for all $x,y \in \Lambda$ with $|x-y| > 1$, all $\kappa \geq 1$ and all  $0 \leq \alpha < 1$. The constant $C$ can be chosen independently of $\alpha$, for $\alpha \in [0,1)$. In the sense of distributions, the bound continues to hold for $\alpha = 1$, with the convention that the Dirac $\delta$-function satisfies $\delta (x-y) = 0$, if $|x-y| \geq 1$. 
\end{itemize}
\end{lemma}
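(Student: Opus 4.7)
The plan is to expand both expressions as Neumann series in $e^{-h_0/\tau}$ (using $|e^{-h_0/\tau}|<1$ which holds since $h_0 \geq \kappa > 0$), and reduce each term to an explicit heat-kernel computation. Since $h_0=-\Delta+\kappa$ is translation invariant with heat kernel
\[ e^{-s h_0}(x;y) \;=\; (4\pi s)^{-d/2}\, e^{-s\kappa - |x-y|^2/(4s)} \,, \]
each summand becomes elementary. Specifically, for part (a) I would use
\[ \frac{1}{(e^{h_0/\tau}-1)^2} \;=\; \sum_{m\geq 1} m\, e^{-(m+1)h_0/\tau} \,,\qquad\mbox{so}\quad \frac{e^{\alpha h_0/\tau}}{(e^{h_0/\tau}-1)^2} \;=\; \sum_{m\geq 2}(m-1)\,e^{-(m-\alpha)h_0/\tau} \,, \]
which converges for $\alpha<2$ since $m-\alpha>0$ for $m\geq 2$. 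Evaluating on the diagonal $x=y$ gives
\[ \left[\frac{1}{\tau^2}\frac{e^{\alpha h_0/\tau}}{(e^{h_0/\tau}-1)^2}\right](x;x) \;=\; \frac{\tau^{d/2-2}}{(4\pi)^{d/2}} \sum_{m\geq 2}\frac{(m-1)\,e^{-(m-\alpha)\kappa/\tau}}{(m-\alpha)^{d/2}} \,.\]
Using $(m-1)\leq(m-\alpha)$, the remaining sum is bounded by $\sum_{m\geq 2}(m-\alpha)^{-(d/2-1)}e^{-(m-\alpha)\kappa/\tau}$, which I would control by comparison with the integral $\int_{2-\alpha}^\infty s^{1-d/2}e^{-s\kappa/\tau}\,ds$; after the substitution $t=s\kappa/\tau$ this equals $(\tau/\kappa)^{2-d/2}\int_{(2-\alpha)\kappa/\tau}^\infty t^{1-d/2}e^{-t}\,dt$, whose integrand is integrable near $0$ for $d\leq 3$. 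The prefactor $\tau^{d/2-2}$ combined with $(\tau/\kappa)^{2-d/2}$ then yields $\kappa^{d/2-2}$ uniformly in $\tau$, as desired. The blow-up of $C_\alpha$ as $\alpha\to 2$ arises from the $m=2$ term, which contains $(2-\alpha)^{-d/2}$.

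For part (b) I would use the analogous expansion
\[ \frac{1}{\tau}\frac{e^{\alpha h_0/\tau}}{e^{h_0/\tau}-1} \;=\; \frac{1}{\tau}\sum_{m\geq 1}e^{-(m-\alpha)h_0/\tau} \,,\]
convergent for $\alpha<1$, and obtain
\[ \left[\frac{1}{\tau}\frac{e^{\alpha h_0/\tau}}{e^{h_0/\tau}-1}\right](x;y) \;=\; \frac{\tau^{d/2-1}}{(4\pi)^{d/2}}\sum_{m\geq 1}\frac{e^{-(m-\alpha)\kappa/\tau - |x-y|^2\tau/(4(m-\alpha))}}{(m-\alpha)^{d/2}} \,. \]
Each summand is manifestly nonnegative, yielding the lower bound $0$. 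For the upper bound, I would apply the AM-GM inequality $a+b\geq 2\sqrt{ab}=\sqrt{\kappa}|x-y|$ to the two exponents, which gives the quantitative split
\[ e^{-(a+b)} \;\leq\; e^{-\sqrt{\kappa}|x-y|/4}\cdot e^{-3(a+b)/4} \]
(using $\sqrt{\kappa}|x-y|/4\leq(a+b)/4$). Keeping the full $e^{-3(a+b)/4}$ factor inside the sum, I would then compare to the integral $\int_0^\infty s^{-d/2}e^{-3s\kappa/4-3|x-y|^2/(16s)}\,ds$, which by the standard Bessel-function identity $\int_0^\infty s^{\nu-1}e^{-As-B/s}\,ds=2(B/A)^{\nu/2}K_\nu(2\sqrt{AB})$ (applied with $\nu=1-d/2$) evaluates to a constant multiple of $\kappa^{(d-2)/4}|x-y|^{1-d/2}K_{1-d/2}(3\sqrt{\kappa}|x-y|/4)$. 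Using the exponential decay $K_\nu(z)\sim\sqrt{\pi/(2z)}e^{-z}$ for large $z$ and the hypothesis $|x-y|>1$, $\kappa\geq 1$, this integral is bounded by $C\kappa^{d/2-1}e^{-\sqrt{\kappa}|x-y|/4}$ after combining with the already-extracted $e^{-\sqrt{\kappa}|x-y|/4}$ factor. The prefactor $\tau^{d/2-1}$ exactly cancels the $\tau^{1-d/2}$ coming from the integral-versus-sum change of variables $s=(m-\alpha)/\tau$, giving $\tau$-independence.

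The main obstacle is the case $\alpha\to 1$ together with small $\tau$, where the $m=1$ term of the sum has the singular factor $(1-\alpha)^{-d/2}$; it is essential to \emph{retain} the $e^{-3|x-y|^2\tau/(16(m-\alpha))}$ piece of $e^{-3(a+b)/4}$ (rather than dropping it), as this Gaussian tail in $1/(m-\alpha)$ regularizes the pole. Separately, a minor case analysis on whether $\kappa/\tau\lesssim 1$ or $\kappa/\tau\gtrsim 1$ is needed when comparing Riemann sum to integral: in the former regime the step $\Delta s=1/\tau$ is small and the sum is well-approximated by the integral, while in the latter the terms decay geometrically and the sum is bounded by its first nonvanishing term. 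Finally, the distributional statement at $\alpha=1$ follows by writing $1/(1-e^{-h_0/\tau})=1+1/(e^{h_0/\tau}-1)$, so that the $\alpha=1$ operator equals $I/\tau+G_\tau$; the identity $I/\tau$ contributes the distributional kernel $\delta(x-y)/\tau$ that vanishes for $|x-y|\geq 1$ by convention, and $G_\tau$ satisfies the desired bound by the $\alpha=0$ case just established.
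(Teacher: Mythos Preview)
Your treatment of part (a) is correct and runs parallel to the paper's, though you expand the heat kernel in position space while the paper works in Fourier space and splits the $q$-integral at $|q|=\sqrt{\kappa}$. One small slip: the inequality $(m-1)\leq(m-\alpha)$ only holds for $\alpha\leq 1$; for $\alpha\in(1,2)$ replace it by $(m-1)\leq(2-\alpha)^{-1}(m-\alpha)$, which is harmless since $C_\alpha$ is allowed to diverge as $\alpha\to 2$. Your handling of the distributional statement at $\alpha=1$ in (b) also matches the paper's.

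For the main estimate in (b), the AM--GM extraction of $e^{-\sqrt{\kappa}|x-y|/4}$ followed by the Bessel identity is elegant and gives the right answer at the level of the \emph{integral}. The gap is in passing from $\sum_{m\geq 1}h(m-\alpha)$ to $\int h$: the summand $h(v)=v^{-d/2}\exp\bigl(-3v\kappa/(4\tau)-3r^2\tau/(16v)\bigr)$ is unimodal rather than monotone, so the Riemann-sum comparison carries an additive $\max_v h$, and one computes $\tau^{d/2-1}\max_v h\sim\tau^{-1}\kappa^{d/4}e^{-3\sqrt{\kappa}r/4}$, which is unbounded as $\tau\to 0$. Your proposed case split on $\kappa/\tau$ does not close this: in the regime $\kappa\gtrsim\tau$ the claimed geometric decay of successive terms fails uniformly in $\alpha$, since the ratio $h(v+1)/h(v)$ contains the growing factor $\exp\bigl(+3r^2\tau/(16v(v+1))\bigr)$, which is large precisely when $v=1-\alpha$ is small, so ``bounding by the first term'' reproduces the same $\tau^{-1}$ behaviour. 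The paper instead splits the sum at the \emph{peak} location $n+1-\alpha\sim\tau|x-y|/(2\sqrt{\kappa})$: on the decreasing side it uses the monotone sum-to-integral bound directly; on the increasing side it introduces a dyadic block index $\ell$ with $n+1-\alpha\sim\tau|x-y|/(2\ell\sqrt{\kappa})$, bounds each term in block $\ell$ by the block's worst case, and multiplies by the number of integers per block to obtain a geometrically convergent sum over~$\ell$. The split at the peak, rather than at a threshold in $\kappa/\tau$, is what makes the two pieces separately tractable.
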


\begin{proof}
(a) In Fourier space, we find 
\begin{equation}\label{eq:lm1} \left[ \frac{1}{\tau^2} \frac{e^{\alpha (-\Delta + \kappa)/\tau}}{(e^{(-\Delta + \kappa) / \tau} - 1)^2} \right] (x;x) \;=\; \int \dd q \, \frac{1}{\tau^2} \frac{e^{\alpha \frac{\abs{q}^2 + \kappa}{\tau}}}{\big( e^{\frac{\abs{q}^2 + \kappa}{\tau}} - 1\big)^2}\,. \end{equation}
To estimate the integral we observe, first of all, that the function $x \to e^{\alpha x}/(e^x-1)^2$ is monotonically decreasing in $x$, for $x \geq 0$. In fact
\[\frac{e^{\alpha x}}{(e^x - 1)^2} \;=\; \frac{e^{(\alpha-2)x}}{(1-e^{-x})^2}
\;=\; \sum_{n \geq 1} n e^{-(n+1-\alpha) x}\,, \]
which is a sum of positive and monotonically decreasing functions (since $\alpha < 2$ by assumption). We divide the integral in (\ref{eq:lm1}) in two parts. For $|q|^2 \leq \kappa$, we bound $\abs{q}^2 + \kappa > \kappa$. For $|q|^2 \geq \kappa$ we use $\abs{q}^2 + \kappa > \abs{q}^2$. We find
\begin{equation}\label{eq:lm2} 
\begin{split} 
\left[ \frac{1}{\tau^2} \frac{e^{\alpha (-\Delta + \kappa)/\tau}}{(e^{(-\Delta + \kappa)/ \tau} - 1)^2} \right] (x;x) &\;\leq\; \frac{C \kappa^{d/2}}{\tau^2} \frac{e^{\alpha \kappa/\tau}}{(e^{\kappa/\tau}-1)^2} + \frac{1}{\tau^{2}} \int_{|q| > \sqrt{\kappa}}  \frac{e^{\alpha \abs{q}^2/\tau}}{(e^{\abs{q}^2/\tau} -1)^2} \, \dd q \\ &\;\leq\; \frac{C}{\kappa^{2-d/2}} \sup_{x \geq 0}  \frac{x^2 e^{\alpha x}}{(e^x-1)^2} + \frac{1}{\tau^{2-d/2}} \int_{|q| > \sqrt{\kappa /\tau}}  \frac{e^{\alpha \abs{q}^2}}{(e^{\abs{q}^2} -1)^2} \, \dd q \,.
\end{split} \end{equation}

If $\kappa \leq  4\tau$, we have 
\[ \begin{split} \frac{1}{\tau^{2-d/2}}  \int_{|q| > \sqrt{\kappa/\tau}} \frac{e^{\alpha \abs{q}^2}}{(e^{\abs{q}^2} - 1)^2} \, \dd q & \;\leq\; \frac{C}{\tau^{2-d/2}} \int_{\sqrt{\kappa/\tau} \leq |q| \leq 2} \frac{1}{|q|^4} \, \dd q + \frac{1}{\tau^{2-d/2}} \int_{|q| > 2} \frac{e^{\alpha \abs{q}^2}}{(e^{\abs{q}^2} - 1)^2} \, \dd q \\& \;\leq\; C \left[ \frac{1}{\kappa^{2-d/2}} + \frac{1}{\tau^{d/2-2}} \right] \;\leq\; \frac{C}{\kappa^{d/2-2}} \end{split}  \]
for a constant $C$ depending on $\alpha$ (being finite, for all fixed $\alpha < 2$). If $\kappa > 4\tau$, we obtain 
\[ \begin{split} \frac{1}{\tau^{d/2-2}} \int_{|q| > \sqrt{\kappa/\tau}} \frac{e^{\alpha \abs{q}^2}}{(e^{\abs{q}^2} - 1)^2} \, \dd q  \;\leq\; \frac{C}{\tau^{d/2-2}} \int_{|q| > \sqrt{\kappa/\tau}} e^{-(2-\alpha) \abs{q}^2} \, \dd q \;\leq\; \frac{C}{\tau^{2-d/2}} e^{-c\kappa /\tau} \;\leq\; \frac{C}{\kappa^{2-d/2}} \,.\end{split} \]
Also here, $C$ depends on $\alpha$. {F}rom (\ref{eq:lm2}), we conclude that 
\[ \left[ \frac{1}{\tau^2} \frac{e^{\alpha (-\Delta + \kappa)/\tau}}{(e^{(-\Delta + \kappa)/\tau} - 1)^2} \right] (x;x)  \;\leq\; \frac{C}{\kappa^{2-d/2}}\,, \]
as claimed.

(b) We have 
\[ \frac{1}{\tau} \frac{e^{\alpha (-\Delta + \kappa) / \tau}}{e^{ (-\Delta + \kappa) /\tau} - 1} \;=\; \frac{1}{\tau} e^{(\alpha-1) (-\Delta + \kappa) /\tau} \sum_{n \geq 0} e^{-n (-\Delta + \kappa) /\tau} \;=\; \frac{1}{\tau} \sum_{n \geq 0} e^{-(n+1-\alpha) (-\Delta + \kappa) / \tau}\,.  \]
Since $\exp (- (n+1-\alpha) (p^2 + \kappa)/\tau )$ has a positive Fourier transform, for all $n \in \N$, $\alpha \in [0,1)$, we conclude that 
\[ \left[ \frac{1}{\tau} \frac{e^{\alpha (-\Delta + \kappa) / \tau}}{e^{(-\Delta + \kappa) /\tau} - 1} \right] (x;y) \;\geq\; 0 \]
for all $x,y \in \Lambda$ (the kernel depends only on $x-y$). To show the upper bound, we compute
\begin{equation}\label{eq:lm3} 
\begin{split} \left[ \frac{1}{\tau} \frac{e^{\alpha (-\Delta + \kappa) / \tau}}{e^{ (-\Delta + \kappa) /\tau} - 1} \right] (x;y) & \;=\; \frac{1}{\tau} \sum_{n \geq 0}  \int \dd q \, e^{-(n+1-\alpha) (\abs{q}^2 + \kappa)/\tau} e^{-iq \cdot (x-y)} \\ &\;=\; \pi^{d/2} \tau^{d/2-1} \sum_{n\geq 0} \frac{1}{(n+1-\alpha)^{d/2}} e^{-(n+1-\alpha) \kappa/\tau} e^{-\frac{\abs{x-y}^2 \tau}{4(n+1-\alpha)}} 
\\ &\;= \;  \pi^{d/2} \tau^{d/2-1} \sum_{\txt{I}} \frac{1}{(n+1-\alpha)^{d/2}} e^{-(n+1-\alpha) \kappa/\tau} e^{-\frac{\abs{x-y}^2 \tau}{4(n+1-\alpha)}}  \\ &\qquad + \pi^{d/2} \tau^{d/2-1} \sum_{\txt{II}} \frac{1}{(n+1-\alpha)^{d/2}} e^{-(n+1-\alpha) \kappa/\tau} e^{-\frac{\abs{x-y}^2 \tau}{4(n+1-\alpha)}}\,,\end{split} 
\end{equation} 
where $\sum_{\txt{I}}$ denotes the sum over all $n\in \N$ with $2(n+1-\alpha) > \tau |x-y| / \sqrt{\kappa}$ and $\sum_{\txt{II}}$ indicates the sum over all $n\in \N$ with $2(n+1-\alpha) \leq \tau |x-y|/\sqrt{\kappa}$. In the first sum, we neglect the factor $\exp (-\abs{x-y}^2 \tau / 4 (n+1-\alpha))$ and we estimate (comparing the sum of a monotone sequence with the corresponding integral) 
\begin{equation}\label{eq:sumI} \begin{split} \pi^{d/2} \tau^{d/2-1} \sum_{\txt{I}} \frac{1}{(n+1-\alpha)^{d/2}} &e^{-(n+1-\alpha) \kappa/\tau} e^{-\frac{\abs{x-y}^2 \tau}{4(n+1-\alpha)}}  \\ &\;\leq\; C \tau^{d/2-1} e^{-\sqrt{\kappa} |x-y|/4}   \int_{|z| > \tau |x-y|/\sqrt{\kappa}}  \frac{e^{-z \kappa / \tau}}{z^{d/2}} \, \dd z \\
&\;\leq\; C \kappa^{d/2-1} e^{-\sqrt{\kappa} |x-y| /4} \int_{|z| > \sqrt{\kappa} |x-y|} \frac{e^{-z}}{z^{d/2}} \, \dd z \\
&\;\leq\; C \kappa^{d/2-1} e^{-\sqrt{\kappa} |x-y| /4} \end{split} 
\end{equation}
for all $\kappa > 1$ and $|x-y| > 1$. In the second sum on the right-hand side of (\ref{eq:lm3}), we neglect the factor $\exp (-(n+1-\alpha) \kappa/\tau)$. Furthermore, we introduce a new index $\ell \in \N$ and we sum first over all $n \in \N$ such that 
\begin{equation}\label{eq:index-l} \frac{\tau |x-y|}{(\ell+1) \sqrt{\kappa}} \;<\; 2(n+1-\alpha)  \;\leq\; \frac{\tau |x-y|}{\ell \sqrt{\kappa}} \end{equation}
and then over all $\ell \in \N \setminus \{ 0 \}$. Since there are at most 
\[ \frac{|x-y| \tau}{2 \ell \sqrt{\kappa}} - \frac{|x-y| \tau}{2 (\ell+1) \sqrt{\kappa}} \;=\; \frac{|x-y| \tau}{2 \ell (\ell+1)\sqrt{\kappa}} \;\leq\; \frac{|x-y| \tau}{2 \ell^2 \sqrt{\kappa} } \]
indices $n \in \N$ satisfying (\ref{eq:index-l}), we conclude that
\begin{multline*}
\pi^{d/2} \tau^{d/2-1} \sum_{\txt{II}}  \frac{1}{(n+1-\alpha)^{d/2}} e^{-(n+1-\alpha) \kappa/\tau} e^{-\frac{\abs{x-y}^2 \tau}{4(n+1-\alpha)}}
\\
\leq\; C \kappa^{d/4-1/2} |x-y|^{1-d/2} \sum_{\ell \geq 1} \frac{e^{-\sqrt{\kappa} |x-y| \ell/2}}{\ell^{2-d/2}} \;\leq\; C \kappa^{d/2-1} e^{-\sqrt{\kappa}|x-y|/4}
\end{multline*}
for all $x,y \in \Lambda$ with $|x-y| > 1$ and all $\kappa \geq 1$.
Together with (\ref{eq:lm3}) and (\ref{eq:sumI}), the last equation implies that there exists a constant $C > 0$ such that 
\begin{equation}\label{eq:lm4} \left[ \frac{1}{\tau} \frac{e^{\alpha (-\Delta + \kappa) / \tau}}{e^{ (-\Delta + \kappa)/\tau} - 1} \right] (x;y)  \;\leq\; C \kappa^{d/2-1} e^{-\sqrt{\kappa} |x-y|/4} 
\end{equation}
for all $x,y \in \Lambda$ with $|x-y| >1$ and all $\kappa \geq  1$. 

Notice that the same computation holds for $\alpha = 1$. In this case, however, the term with $n=0$ on the first line in the right-hand side of (\ref{eq:lm3}) has to be handled differently; it gives precisely the contribution $\tau^{-1} \delta (x-y)$ which, in the sense of distributions, vanishes for $|x-y| > 1$. All other terms in the sum over $n$ can be handled as above.
\end{proof}

We are now ready to proceed with the proof of Theorem \ref{thm:counter}.

\begin{proof}[Proof of Theorem \ref{thm:counter}] 
For $\tau > 0$ and $u \in B$ we define 
\[ \Phi_\tau (u) \;\deq\; V + w * (\varrho^{\kappa+u}_\tau - \bar{\varrho}^\kappa_\tau)\,. \]
We claim, first of all, that for fixed $\rad \in (0,1)$ there exists $\kappa_0$ such that $\Phi_\tau (u) \in B_\rad (V)$ for all $u \in B_\rad (V)$, $\tau > 0$ and $\kappa > \kappa_0$. In fact 
\begin{align}
\varrho_\tau^{\kappa + u}(y) - \bar \varrho_\tau^{\kappa} &\;=\; G^{\kappa + u}_\tau(y;y) - G^{\kappa}_\tau(y;y)
\notag
\\
&\;=\; \frac{1}{\tau} \pbb{\frac{1}{\ee^{(-\Delta + \kappa + u)/\tau} - 1} - \frac{1}{\ee^{(-\Delta + \kappa)/\tau} - 1}}(y;y)
\notag
\\
&\;=\; \frac{1}{\tau} \pbb{\frac{1}{\ee^{(-\Delta + \kappa + u)/\tau} - 1} \pb{\ee^{(-\Delta + \kappa)/\tau} - \ee^{(-\Delta + \kappa + u)/\tau}} \frac{1}{\ee^{(-\Delta + \kappa)/\tau} - 1}}(y;y)
\notag
\\
&\;=\; - \frac{1}{\tau} \int_0^1 \dd t \, \pbb{\frac{\ee^{t (-\Delta + \kappa + u)/\tau}}{\ee^{(-\Delta +\kappa + u)/\tau} - 1} \, \frac{u}{\tau}\, \frac{\ee^{(1-t) (-\Delta + \kappa)/\tau}}{\ee^{(-\Delta + \kappa)/\tau} - 1}}(y;y)
\notag
\\
&\;=\; - \int_0^1 \dd t \int \dd z \, \pbb{\frac{1}{\tau}\frac{\ee^{t (-\Delta + \kappa + u)/\tau}}{\ee^{(-\Delta + \kappa  + u)/\tau} - 1}}(y;z) \, u(z) \, \pbb{\frac{1}{\tau}\frac{\ee^{(1-t) (-\Delta + \kappa) /\tau}}{\ee^{ (-\Delta + \kappa) /\tau} - 1}}(z;y)\,.
\label{Fixed point 1}
\end{align}
In the third equality, we used the resolvent identity and in the fourth equality a Duhamel expansion.

Now we split the integral in two parts. In the region with $|y-z| \leq 1$, we estimate \[ 0\; \leq \; u(z) \;\leq\; \| u \|_V V (z) \;\leq\; C \| u \|_V V(y)\,. \]
In the last estimate we used $V \in L^\infty_\txt{loc} (\Lambda)$ and the assumption (\ref{eq:assV}) to deduce
\[ V(z) \;=\; V(y + (z-y)) \;\leq\; C V(y) \sup_{\abs{x} \leq 1} V(x) \;\leq\; C V(y)\,. \]
We observe that 
\[ \frac{e^{t (-\Delta + \kappa + u)/\tau}}{e^{(-\Delta + \kappa +u)/\tau} - 1} \;=\; e^{(t-1) (-\Delta + \kappa +u)/\tau} \frac{1}{1- e^{-(-\Delta + \kappa + u)/\tau}} \;=\; \sum_{n \geq 0} e^{-(n+1-t) (-\Delta + \kappa + u)/\tau}\,. \]
For all $t < 1$, we have $n+1-t > 0$. Using the Feynman-Kac representation for the heat kernel $\exp (-(n+1-t) ( -\Delta +\kappa + u)/\tau)$ and $u (x) \geq 0$ for all $x \in \Lambda$, we find that 
\begin{equation}\label{eq:kac}  0 \;\leq\; \left( \frac{1}{\tau} \frac{e^{t(-\Delta + \kappa + u)/\tau}}{e^{(-\Delta + \kappa + u)/\tau} - 1} \right) (y;z)  \;\leq\;  \left( \frac{1}{\tau} \frac{e^{t (-\Delta + \kappa) /\tau}}{e^{(-\Delta + \kappa)/\tau} - 1} \right) (y;z) \, . \end{equation}
Hence, by (\ref{eq:kac}), $u \geq 0$, and (\ref{eq:assV}) we have
\begin{equation} \label{Fixed point 2A}
\begin{split} 0 \;\leq\; \int_0^1 \dd t \int_{|y-z| \leq 1}  \dd z \, &\pbb{\frac{1}{\tau}\frac{\ee^{t (-\Delta + \kappa + u)/\tau}}{\ee^{(-\Delta + \kappa + u)/\tau} - 1}}(y;z) \, u(z) \, \pbb{\frac{1}{\tau}\frac{\ee^{(1-t)(-\Delta + \kappa)/\tau}}{\ee^{(-\Delta + \kappa)/\tau} - 1}}(z;y) \\ &\hspace{4cm} \;\leq\; C \| u \|_V V(y) \left[ \frac{1}{\tau^2} \frac{e^{(-\Delta + \kappa) / \tau}}{(e^{(-\Delta + \kappa)/\tau} - 1)^2} \right] (y;y) \,,\end{split} 
\end{equation}
which by part (a) of Lemma \ref{lm:Lk} is
\begin{equation}
\label{Fixed point 2}
\;\leq\;C  \kappa^{d/2-2} \| u \|_V V(y) \,.
\end{equation}
Next, we consider the integral in the region $|y-z| > 1$. Using part (b) of Lemma \ref{lm:Lk}, (\ref{eq:kac}), and $u \geq 0$ we obtain
\begin{equation}
\label{Fixed point 3}
\begin{split} 0 \;\leq\; \int_0^1 \dd t \int_{|y-z| > 1}  \dd z \, &\pbb{\frac{1}{\tau} \frac{\ee^{t (-\Delta + \kappa + u)/\tau}}{\ee^{(-\Delta + \kappa + u)/\tau} - 1}}(y;z) \, u(z) \, \pbb{\frac{1}{\tau}\frac{\ee^{(1-t)(-\Delta + \kappa)/\tau}}{\ee^{(-\Delta + \kappa)/\tau} - 1}}(z;y) \\  &\;\leq\; C \kappa^{d-2} \| u \|_V \int_{|y-z| > 1} e^{-\sqrt{\kappa} |y-z|/2} V(z) \, \dd z \\ &\;\leq\; C \kappa^{d-2} \| u \|_V  V(y) \int_{|x|>1} e^{-\sqrt{\kappa} |x|/2} V(x) \, \dd x \\ & \;\leq\; C \kappa^{d/2 -2} \| u \|_V V(y)  \end{split}  
\end{equation}
for $\kappa$ large enough. In the second inequality, we used (\ref{eq:assV}) and the change of variables $x =y-z$. In the third inequality, we used Lemma \ref{lem:Vexp} and integrated over the region $|x|>1$.
We conclude from \eqref{Fixed point 1}, \eqref{Fixed point 2A}, \eqref{Fixed point 2}, and \eqref{Fixed point 3} that
\begin{equation}\label{eq:rhodiff} 
0 \;\leq\; -(\varrho_\tau^{\kappa +u} (y) - \bar{\varrho}_\tau^{\kappa}) \;\leq\; C  \kappa^{d/2-2}  \|u \|_V  V(y) \end{equation}
for all $y \in \Lambda$ and all $\kappa >0$ large enough, depending on the constant $C_0$ in the exponential bound (\ref{eq:Vexp}) for $V$. Using  (\ref{eq:assV}), we find 
\begin{multline*}
\left| w * (\varrho_\tau^{\kappa +u} - \bar{\varrho}_\tau^{\kappa}) (x)\right|  \;\leq\; C \kappa^{d/2-2}  \| u \|_V \int |w(y)|  V(x-y) \, \dd y \\ 
\;\leq\; C \kappa^{d/2-2} \| u \|_V V(x) \int |w(y)| V(y) \, \dd y \;\leq\; C \kappa^{d/2-2}  \| u \|_V V(x) \,,
\end{multline*}
where we used \eqref{eq:assw}.
We therefore obtain
\begin{equation}\label{eq:wstar} \left\| w* (\varrho_\tau^{\kappa+u} - \bar{\varrho}_\tau^\kappa) \right\|_V \;\leq\; C \kappa^{d/2-2} \| u \|_V  \;\leq\; C \kappa^{d/2-2} \| u - V \|_V + C \kappa^{d/2-2}  \end{equation}
because $\| V \|_V  = 1$. Hence, since $d/2-2<0$, it follows that for every $\rad \in (0,1)$, there exists $\kappa_0 \equiv \kappa_0(\rad) > 0$ such that $\Phi_\tau : B_\rad (V) \to B_\rad (V)$ for all $\kappa > \kappa_0$ and all $\tau > 0$.

Furthermore, if $u_1, u_2 \in B_\rad (V)$, we find $\Phi_\tau (u_1) - \Phi_\tau (u_2) = w * (\varrho_\tau^{\kappa+u_1} - \varrho_\tau^{\kappa + u_2})$.
Now we have
\[ \begin{split}  
\varrho_\tau^{\kappa+u_1} (y) &- \varrho_\tau^{\kappa + u_2} (y) \\ &\;=\; \frac{1}{\tau} \frac{1}{e^{(-\Delta + \kappa + u_1)/\tau} -1}  \left[ e^{(-\Delta + \kappa + u_1)/\tau} - e^{(-\Delta + \kappa + u_2)/\tau} \right] 
 \frac{1}{e^{(-\Delta + \kappa + u_2)/\tau} -1} (y;y) \\  
&\;=\; \frac{1}{\tau} \int_0^1 \dd t \, \frac{e^{t(-\Delta + \kappa +u_1)/\tau}}{e^{(-\Delta + \kappa + u_1)/\tau} - 1} \frac{u_1 - u_2}{\tau} \frac{e^{(1-t)(-\Delta + \kappa +u_2)/\tau}}{e^{(-\Delta + \kappa + u_2)/\tau} - 1} (y;y) \\
&\;=\; \int_0^1 \dd t \int \dd z \, \left[ \frac{1}{\tau} \frac{e^{t(-\Delta + \kappa +u_1)/\tau}}{e^{(-\Delta + \kappa + u_1)/\tau} - 1} \right] (y;z)  (u_1 (z)- u_2 (z)) \left[ \frac{1}{\tau} \frac{e^{(1-t)(-\Delta + \kappa +u_2)/\tau}}{e^{(-\Delta + \kappa + u_2)/\tau} - 1} \right] (z;y) \,.\end{split} \]
Similarly as before, dividing the integral in two parts, we obtain 
\[ |\varrho_\tau^{\kappa + u_1} (y) - \varrho_\tau^{\kappa + u_2} (y) | \;\leq\; C \kappa^{d/2-2} \| u_1 - u_2 \|_V V(y) \,, \]
which implies that 
\[ \left\| \Phi_\tau (u_1) - \Phi_\tau (u_2) \right\|_V \;\leq\; C \kappa^{d/2-2} \| u_1 - u_2 \|_V\,. \]
Hence, for $\kappa$ large enough and for all $\tau > 0$, $\Phi_\tau \col B_\rad (V) \to B_\rad (V)$ is a contraction on $B_\rad (V)$. Since $B_\rad (V)$ is complete, there exists a unique fixed point $v_\tau \in B_\rad (V)$ such that (\ref{eq:fix}) holds.. 

Next, we show that the sequence of fixed points $v_\tau$ has a limit in $B_\rad (V)$, as $\tau \to \infty$. For $\tau > 0$,  let $h_\tau = -\Delta +\kappa +v_\tau$. We find
\begin{equation}\label{eq:v-v} v_\tau- v_{\tau'} \;=\; \Phi_\tau (v_{\tau}) - \Phi_{\tau'} (v_{\tau'}) \;=\; w * \left( (\varrho^{\kappa + v_\tau}_\tau - \bar{\varrho}_\tau^\kappa) - (\varrho^{\kappa + v_{\tau'}}_{\tau'} - \bar{\varrho}_{\tau'}^\kappa \right) \,.\end{equation}
We have
\begin{equation}\label{eq:AB} \begin{split} & \mspace{-20mu} \left[ \varrho_\tau^{\kappa + v_\tau} (y) - \bar{\varrho}_{\tau} \right] - \left[ \varrho_{\tau'}^{\kappa + v_{\tau'}} (y) - \bar{\varrho}_{\tau'} \right] \\ &\;= \; \left[ \frac{1}{\tau} \frac{1}{e^{h_\tau/\tau}-1} - \frac{1}{\tau} \frac{1}{e^{(-\Delta + \kappa)/\tau}-1} \right] (y;y) - \left[ \frac{1}{\tau'} \frac{1}{e^{h_\tau/\tau'}-1} - \frac{1}{\tau'} \frac{1}{e^{(-\Delta + \kappa)/\tau'}-1} \right] (y;y) \\ &\qquad + \left[ \frac{1}{\tau'} \frac{1}{e^{h_\tau/\tau'} -1} - \frac{1}{\tau'} \frac{1}{e^{h_{\tau'}/\tau'} - 1} \right] (y;y) \\ &\;\eqd\; A (y) + B (y)\,, \end{split} \end{equation}
where $A(y)$ and $B(y)$ denote the contributions of the first and second lines respectively. We start with the term $B$. We obtain, by using the resolvent identity and a Duhamel expansion as in \eqref{Fixed point 1},
\begin{equation} 
\label{eq:rhodiffB}
\begin{split} 
&\mspace{-20mu} \Big| \Big[ \frac{1}{\tau'} \frac{1}{e^{h_\tau /\tau'} -1} - \frac{1}{\tau'} \frac{1}{e^{h_{\tau'}/\tau'}- 1} \Big] (y;y) \Big| \\ &\;= \;  \left| \int_0^1 \dd t \int \dd z \, \left[ \frac{1}{\tau'} \frac{e^{t h_{\tau} / \tau'}}{e^{h_\tau / \tau'} -1} \right] (y;z)  (v_\tau - v_{\tau'}) (z) \left[ \frac{1}{\tau'} \frac{e^{(1-t) h_{\tau'} / \tau'}}{e^{h_{\tau'} / \tau'} -1 } \right] (z;y) \right| \\
&\;\leq \;  \| v_\tau - v_{\tau'} \|_V \, \int_0^1 \dd t \int \dd z \, \left[  \frac{1}{\tau'} \frac{e^{t h_{\tau} / \tau'}}{e^{h_\tau /\tau'} -1} \right]  (y;z)  V (z) \left[  \frac{1}{\tau'} \frac{e^{(1-t) h_{\tau'} / \tau'}}{e^{h_{\tau'} / \tau'} -1 } \right] (z;y) 
\\ &\;\leq \; C \kappa^{d/2-2} \,  \| v_\tau - v_{\tau'} \|_V \, V(y) \,.
\end{split} 
\end{equation}

The last bound can be proved similarly to (\ref{eq:rhodiff}), integrating first over the region $|y-z| \leq 1$ using part (a) of Lemma \ref{lm:Lk} and then over $|y-z| > 1$ by using
\[ \left[ \frac{1}{\tau'} \frac{e^{\alpha h_{\tau'} / \tau'}}{e^{h_{\tau'} / \tau'} -1 } \right] (y;z) \;\leq\; \left[ \frac{1}{\tau'} \frac{e^{\alpha (-\Delta +\kappa) / \tau'}}{e^{(-\Delta + \kappa)/ \tau'} -1 } \right] (y;z) \;\leq\; C \kappa^{d/2-1} e^{-\sqrt{\kappa} |y-z|/4} \]
for all $|y-z| > 1$, uniformly in $\alpha \in [0,1)$, which follows from Part (b) of Lemma \ref{lm:Lk} (the same estimate holds if we replace $h_{\tau'}$ with $h_\tau$). 
We conclude that 
\[ \left| \int w(x-y) \, \Big[ \frac{1}{\tau'} \frac{1}{e^{h_\tau /\tau'} -1} - \frac{1}{\tau'} \frac{1}{e^{h_{\tau'}/\tau'}- 1} \Big] (y;y) \, \dd y \right| \;\leq\; C \kappa^{d/2-2} \| v_\tau - v_\tau' \|_V \, V(x)\,. \]
Here we again used \eqref{eq:assV} and \eqref{eq:assw}. The estimate \eqref{eq:rhodiffB} now follows.
From (\ref{eq:v-v}), (\ref{eq:AB}), and \eqref{eq:rhodiffB} we find 
\begin{equation}\label{eq:v-v2} (1- C \kappa^{d/2-2}) \| v_\tau - v_{\tau'} \|_V \;\leq\; \| w * A \|_V \,.
\end{equation}
Next, we consider the contribution of the term $A$, defined in (\ref{eq:AB}). We claim that $\| w * A \|_V \to 0$ as $\tau,\tau' \to \infty$. By (\ref{eq:v-v2}), this also implies that $\| v_\tau - v_{\tau'} \|_V \to 0$, as $\tau,\tau' \to \infty$, if $\kappa > 0$ is large enough. Writing
\[ A(y) \;=\; - \int_0^1 \dd t \, \left[ \frac{1}{\tau} \frac{e^{th_\tau /\tau}}{e^{h_\tau / \tau} -1} \frac{v_\tau}{\tau} \frac{e^{(1-t) (-\Delta + \kappa)/\tau}}{e^{(-\Delta + \kappa)/\tau}-1} - \frac{1}{\tau'} \frac{e^{th_\tau /\tau'}}{e^{h_\tau / \tau'} -1} \frac{v_\tau}{\tau'} \frac{e^{(1-t) (-\Delta + \kappa)/\tau'}}{e^{(-\Delta + \kappa)/\tau'}-1}\right] (y;y)\,,\]
we remark that 
\[ \begin{split} &\| w * A \|_V \\ &\leq\; \int_0^1 \dd t \, \left\| \int \dd y \, w(\cdot-y) \left[ 
\frac{1}{\tau} \frac{e^{th_\tau/\tau}}{e^{h_\tau /\tau}-1} \frac{v_\tau}{\tau} \frac{e^{(1-t)(-\Delta + \kappa) / \tau}}{e^{(-\Delta + \kappa)/\tau} - 1}  -  \frac{1}{\tau'} \frac{e^{t h_\tau / \tau'}}{e^{h_\tau /\tau'}-1} \frac{v_\tau}{\tau'} \frac{e^{(1-t)(-\Delta + \kappa) / \tau'}}{e^{(-\Delta + \kappa)/\tau'} - 1} \right] (y;y) \right\|_V \,.\end{split}\]
Proceeding similarly as in the derivation of (\ref{eq:wstar}), the integrand of $t$-integral is
\begin{equation*} 
\begin{split} 
& 
\;\leq\; \left\| \int \dd y \, w(\cdot-y) \left[ 
\frac{1}{\tau} \frac{e^{th_\tau/\tau}}{e^{h_\tau /\tau}-1} \frac{v_\tau}{\tau} \frac{e^{(1-t)(-\Delta + \kappa) / \tau}}{e^{(-\Delta + \kappa)/\tau} - 1} \right] (y;y) \right\|_V \\ & \qquad \qquad + \left\| \int \dd y \, w(\cdot-y) \left[\frac{1}{\tau'} \frac{e^{th_\tau/\tau'}}{e^{h_\tau /\tau'}-1} \frac{v_\tau}{\tau'} \frac{e^{(1-t)(-\Delta + \kappa) / \tau'}}{e^{(-\Delta + \kappa)/\tau'} - 1} \right] (y;y) \right\|_V \;\leq\; 2C \kappa^{d/2-2} \| v \|_V \,.
\end{split} 
\end{equation*}
Since the right-hand side is clearly integrable over $t \in [0,1]$, the convergence $\| w*A \|_V \to 0$ follows from the dominated convergence theorem if we can prove that 
\begin{equation}\label{eq:t-fix} 
\left\| \int \dd y \, w(\cdot-y) \left[ 
\frac{1}{\tau} \frac{e^{th_\tau/\tau}}{e^{h_\tau /\tau}-1} \frac{v_\tau}{\tau} \frac{e^{(1-t)(-\Delta + \kappa) / \tau}}{e^{(-\Delta + \kappa)/\tau} - 1}  -  \frac{1}{\tau'} \frac{e^{th_\tau/\tau'}}{e^{h_\tau /\tau'}-1} \frac{v_\tau}{\tau'} \frac{e^{(1-t)(-\Delta + \kappa) / \tau'}}{e^{(-\Delta + \kappa)/\tau'} - 1} \right] (y;y) \right\|_V \;\to\; 0 \end{equation}
as $\tau,\tau' \to \infty$, for every fixed $t \in (0,1)$. To this end, we decompose
\begin{equation}\label{eq:I+II} \begin{split} &\left[ \frac{1}{\tau} \frac{e^{th_\tau / \tau}}{e^{h_\tau/\tau} - 1} \frac{v_\tau}{\tau} \frac{e^{(1-t) (-\Delta + \kappa) /\tau}}{e^{(-\Delta + \kappa) / \tau}-1} - \frac{1}{\tau'} \frac{e^{th_\tau / \tau'}}{e^{h_\tau/\tau'} - 1} \frac{v_\tau}{\tau'} \frac{e^{(1-t) (-\Delta + \kappa) /\tau'}}{e^{(-\Delta + \kappa) / \tau'}-1} \right](y;y) \\ &\qquad \;=\; \left[ \frac{1}{\tau} \frac{e^{th_\tau / \tau}}{e^{h_\tau/\tau} - 1} - \frac{1}{\tau'} \frac{e^{th_\tau/\tau'}}{e^{h_\tau/\tau'}-1} \right] \frac{v_\tau}{\tau} \frac{e^{(1-t) (-\Delta + \kappa)/\tau}}{e^{(-\Delta+\kappa)/\tau}-1}(y;y) \\ &\qquad \qquad +
\frac{1}{\tau'} \frac{e^{th_\tau/\tau'}}{e^{h_\tau/\tau'}-1} v_\tau \left[ \frac{1}{\tau} \frac{e^{(1-t)(-\Delta + \kappa) / \tau}}{e^{(-\Delta + \kappa) / \tau}-1} - \frac{1}{\tau'} \frac{e^{(1-t) (-\Delta + \kappa) / \tau'}}{e^{(-\Delta + \kappa) / \tau'}-1} \right](y;y) \,. \end{split} \end{equation}
The absolute value of the first term is
\begin{equation} \label{Counterterm A}
\begin{split} 
& \;=\; \left| \int \dd z \, \left[  \frac{1}{\tau} \frac{e^{th_\tau / \tau}}{e^{h_\tau/\tau} - 1} - \frac{1}{\tau'} \frac{e^{th_\tau/\tau'}}{e^{h_\tau/\tau'}-1} \right] (y;z) v_\tau (z) \left[ \frac{1}{\tau} \frac{e^{(1-t) (-\Delta + \kappa) / \tau}}{e^{(-\Delta + \kappa) / \tau} - 1} \right] (z;y) \right| \\  
&\;\leq\; \alpha \int \dd z \, \left| \left[ \frac{1}{\tau} \frac{e^{th_\tau / \tau}}{e^{h_\tau/\tau} - 1} - \frac{1}{\tau'} \frac{e^{th_\tau/\tau'}}{e^{h_\tau/\tau'}-1} \right] (y;z) \right|^2  + \alpha^{-1} \int \dd z \, v^2_\tau (z) \left| \left[ \frac{1}{\tau} \frac{e^{(1-t) (-\Delta + \kappa) / \tau}}{e^{(-\Delta + \kappa) / \tau} - 1} \right] (z;y) \right|^2 \end{split} 
\end{equation}
for an arbitrary $\alpha > 0$. On the one hand, using part (a) of Lemma \ref{lm:Lk} and arguing as in the proof of \eqref{Fixed point 2}, we have
\begin{equation} \label{Counterterm B}
\begin{split} 
\int_{|y-z| \leq 1} \dd z \, v^2_\tau (z) \left| \left[ \frac{1}{\tau} \frac{e^{(1-t) (-\Delta + \kappa) / \tau}}{e^{(-\Delta + \kappa) / \tau} - 1} \right] (z;y) \right|^2 &\;\leq\; C \| v_\tau \|_V^2 V(y)^2 \left[ \frac{1}{\tau} \frac{e^{2(1-t) (-\Delta + \kappa) / \tau}}{(e^{(-\Delta + \kappa) / \tau} - 1)^2} \right] (y;y)  \\ &\;\leq\; C \kappa^{d/2-2} \| v_\tau \|_V^2 V(y)^2 \end{split}
\end{equation}
for a constant $C > 0$ depending on $t$, finite for all fixed $0 \leq t < 1$. On the other hand, by using part (b) of Lemma \ref{lm:Lk} and arguing as in the proof of \eqref{Fixed point 3}, we find 
\begin{equation} \label{Counterterm C}
\begin{split} 
\int_{|y-z| > 1} \dd z \, v^2_\tau (z) \left| \left[ \frac{1}{\tau} \frac{e^{(1-t) (-\Delta + \kappa) / \tau}}{e^{(-\Delta + \kappa) / \tau} - 1} \right] (z;y) \right|^2 &\;\leq\; C \kappa^{d-2} \| v_\tau \|_V^2 \int V^2(z) e^{-\sqrt{\kappa} |y-z|/2} \, \dd z \\ &\;\leq\; C \kappa^{d/2-2} \| v_\tau \|_V^2  \, V(y)^2 \end{split} 
\end{equation}
for $\kappa$ large enough. By \eqref{Counterterm A}, \eqref{Counterterm B}, and \eqref{Counterterm C},  we conclude that
\[ \begin{split}  &\left| \int \dd y \, w(x-y) \, \left[ \frac{1}{\tau} \frac{e^{th_\tau / \tau}}{e^{h_\tau/\tau} - 1} - \frac{1}{\tau'} \frac{e^{th_\tau/\tau'}}{e^{h_\tau/\tau'}-1} \right] \frac{v_\tau}{\tau} \frac{e^{(1-t) (-\Delta + \kappa)/\tau}}{e^{(-\Delta+\kappa)/\tau}-1}  (y;y) \right| \\ &\qquad \;\leq\; C \alpha^{-1} \kappa^{d/2-2} \| v_\tau \|^2_V  \int |w(x-y)| V^2 (y) \, \dd y + \alpha \| w \|_{L^\infty} \left\|  \frac{1}{\tau} \frac{e^{th_\tau / \tau}}{e^{h_\tau/\tau} - 1} - \frac{1}{\tau'} \frac{e^{th_\tau/\tau'}}{e^{h_\tau/\tau'}-1}  \right\|_{\fra S^2}^2\,.
\end{split} \]
With the assumptions (\ref{eq:assV}) and (\ref{eq:assw}) and with the optimal choice
\begin{equation*}
\alpha \;=\;\frac{V(x) \,\kappa^{d/4-1}}{\big\|  \frac{1}{\tau} \frac{e^{th_\tau / \tau}}{e^{h_\tau/\tau} - 1} - \frac{1}{\tau'} \frac{e^{th_\tau/\tau'}}{e^{h_\tau/\tau'}-1}  \big\|_{\fra S^2}} \,,
\end{equation*}
we obtain the bound
\[ \begin{split}  &\left\| \int \dd y \, w(x-y) \, \left[ \frac{1}{\tau} \frac{e^{th_\tau / \tau}}{e^{h_\tau/\tau} - 1} - \frac{1}{\tau'} \frac{e^{th_\tau/\tau'}}{e^{h_\tau/\tau'}-1} \right] \frac{v_\tau}{\tau} \frac{e^{(1-t) (-\Delta + \kappa)/\tau}}{e^{(-\Delta+\kappa)/\tau}-1}  (y;y) \right\|_V \\ &\hspace{3.5cm} \;\leq\; C \kappa^{d/4-1} \| v_\tau \|_V \, \left\|  \frac{1}{\tau} \frac{e^{th_\tau / \tau}}{e^{h_\tau /\tau} - 1} - \frac{1}{\tau'} \frac{e^{th_\tau/\tau'}}{e^{h_\tau/\tau'}-1}  \right\|_{\fra S^2}\,. \end{split} 
\]
Since
\begin{equation} \label{Counterterm D}
\left\|   \frac{1}{\tau} \frac{e^{th_\tau / \tau}}{e^{h_\tau/\tau} - 1} - \frac{1}{\tau'} \frac{e^{th_\tau/\tau'}}{e^{h_\tau/\tau'}-1}  \right\|_{\fra S^2}  \;\leq\; \left\|   \frac{1}{\tau} \frac{e^{th_\tau / \tau}}{e^{h/\tau} - 1} - \frac{1}{h_\tau} \right\|_{\fra S^2} + \left\|   \frac{1}{\tau'} \frac{e^{th_\tau / \tau'}}{e^{h_\tau/\tau'} - 1} - \frac{1}{h_\tau} \right\|_{\fra S^2} \,,
\end{equation}
it follows from Lemma \ref{cor:conv_G_tau corollary2} that 
\begin{equation}\label{eq:I-last}  \left\| \int \dd y \, w(\cdot -y) \, \left[ \frac{1}{\tau} \frac{e^{th_\tau / \tau}}{e^{h/\tau} - 1} - \frac{1}{\tau'} \frac{e^{th_\tau/\tau'}}{e^{h_\tau/\tau'}-1} \right] \frac{v_\tau}{\tau} \frac{e^{(1-t) (-\Delta + \kappa)/\tau}}{e^{(-\Delta+\kappa)/\tau}-1}  (y;y) \right\|_V  \;\to\; 0 \end{equation}
as $\tau, \tau' \to \infty$. The application of Lemma \ref{cor:conv_G_tau corollary2} is justified by the assumption that $v_\tau \in B_\rad(V)$ and Remark \ref{counterterm remark}. Note that, for the second term on the right-hand side of \eqref{Counterterm D}, we first let $\tau' \rightarrow \infty$ and we apply Lemma \ref{cor:conv_G_tau corollary2} with the parameter $\tau'$ and the operator $h_\tau$ which is constant in the parameter $\tau'$.
Next, we consider the second term on the right-hand side of
(\ref{eq:I+II}), which is
\[ \begin{split} 
&\;=\;\int \dd z \, \left[ \frac{1}{\tau'} \frac{e^{th_\tau / \tau'}}{e^{h_\tau / \tau'} - 1} \right] (y;z) v_\tau (z) \left[ \frac{1}{\tau} \frac{e^{(1-t)(-\Delta + \kappa) / \tau}}{e^{(-\Delta + \kappa) / \tau}-1} - \frac{1}{\tau'} \frac{e^{(1-t) (-\Delta + \kappa) / \tau'}}{e^{(-\Delta + \kappa) / \tau'}-1} \right] (z-y) \\ &\;\leq\;  
\alpha \int \dd z \, v^2_\tau (z) \, \left| \frac{1}{\tau'} \frac{e^{th_\tau / \tau'}}{e^{h_\tau / \tau'} - 1} (y;z) \right|^2  
+ \alpha^{-1} \int \dd z \, \left| \left[ \frac{1}{\tau} \frac{e^{(1-t)(-\Delta + \kappa) / \tau}}{e^{(-\Delta + \kappa) / \tau}-1} - \frac{1}{\tau'} \frac{e^{(1-t) (-\Delta + \kappa) / \tau'}}{e^{(-\Delta + \kappa) / \tau'}-1} \right]  (z) \right|^2 
\end{split} \]
for every $\alpha > 0$. Proceeding as above (considering separately the regions with $|y-z| \leq 1$ and $|y-z| > 1$, and using the results of Lemma \ref{lm:Lk}), this quantity is
\[ \begin{split} 
\;\leq\;  C \alpha \kappa^{d/2-2} \| v_\tau \|_V^2 V(y)^2 + C \alpha^{-1} \, \left[ \frac{1}{\tau} \frac{e^{(1-t)(-\Delta + \kappa) / \tau}}{e^{(-\Delta + \kappa) / \tau}-1} - \frac{1}{\tau'} \frac{e^{(1-t) (-\Delta + \kappa) / \tau'}}{e^{(-\Delta + \kappa) / \tau'}-1} \right]^2 (0;0) \,. \end{split} \]
With the optimal choice
\begin{equation*}
\alpha \;=\; \frac{\Big| \Big[ \frac{1}{\tau} \frac{e^{(1-t)(-\Delta + \kappa) / \tau}}{e^{(-\Delta + \kappa) / \tau}-1} - \frac{1}{\tau'} \frac{e^{(1-t) (-\Delta + \kappa) / \tau'}}{e^{(-\Delta + \kappa) / \tau'}-1} \Big]^2 (0;0) \Big|^{1/2}\,\kappa^{-d/4+1}}{\|v_\tau\|_V \,V(y)}
\,,
\end{equation*}
 we find 
\[ \begin{split} 
\frac{1}{\tau'} \frac{e^{th/\tau'}}{e^{h/\tau'}-1} v &\left[ \frac{1}{\tau} \frac{e^{(1-t)(-\Delta + \kappa) / \tau}}{e^{(-\Delta + \kappa) / \tau}-1} - \frac{1}{\tau'} \frac{e^{(1-t) (-\Delta + \kappa) / \tau'}}{e^{(-\Delta + \kappa) / \tau'}-1} \right] (y;y) \\ & \;\leq\; C \kappa^{d/4-1} \|v_\tau \|_V V(y) \left| \left[ \frac{1}{\tau} \frac{e^{(1-t)(-\Delta + \kappa) / \tau}}{e^{(-\Delta + \kappa) / \tau}-1} - \frac{1}{\tau'} \frac{e^{(1-t) (-\Delta + \kappa) / \tau'}}{e^{(-\Delta + \kappa) / \tau'}-1} \right]^2 (0;0) \right|^{1/2} \,.\end{split} \]
We conclude that 
\begin{equation}\label{eq:II-last} \begin{split} &\left\| \int \dd y \, w(\cdot-y) \frac{1}{\tau'} \frac{e^{th/\tau'}}{e^{h/\tau'}-1} v \left[ \frac{1}{\tau} \frac{e^{(1-t)(-\Delta + \kappa) / \tau}}{e^{(-\Delta + \kappa) / \tau}-1} - \frac{1}{\tau'} \frac{e^{(1-t) (-\Delta + \kappa) / \tau'}}{e^{(-\Delta + \kappa) / \tau'}-1} \right] (y;y) \right\|_V \\ &\hspace{5cm} \;\leq\; C \kappa^{d/2-1} \| v_\tau \|_V \left| \left[ \frac{1}{\tau} \frac{e^{(1-t)(-\Delta + \kappa) / \tau}}{e^{(-\Delta + \kappa) / \tau}-1} - \frac{1}{\tau'} \frac{e^{(1-t) (-\Delta + \kappa) / \tau'}}{e^{(-\Delta + \kappa) / \tau'}-1} \right]^2 (0;0) \right|^{1/2}\,. \end{split} \end{equation}
To prove that the right-hand side converges to zero, as $\tau,\tau' \to\infty$, we show that
\begin{equation}\label{eq:dom-p}\lim_{\tau \to \infty} \left[ \frac{1}{\tau} \frac{e^{(1-t) (-\Delta + \kappa) / \tau}}{e^{(-\Delta + \kappa) /\tau} - 1} - \frac{1}{(-\Delta + \kappa)} \right]^2 (0;0) \;=\; \lim_{\tau \to \infty} \int \dd p \, \left[ \frac{1}{\tau} \frac{e^{(1-t) (p^2 + \kappa)/\tau}}{e^{(p^2 + \kappa)/\tau} -1} - \frac{1}{p^2 + \kappa} \right]^2 \;=\; 0 \,. \end{equation}
Since 
\[\left|\frac{1}{\tau} \frac{e^{t (p^2+\kappa)/\tau}}{e^{(p^2+\kappa)/\tau}-1} - \frac{1}{p^2 + \kappa} \right| \;\leq\; \frac{C}{p^2 + \kappa} \]
for an appropriate constant $C > 0$, since $(p^2 + \kappa)^{-2}$ is integrable (in dimensions $d \leq 3$), and since 
\[  \lim_{\tau \to \infty} \left[ \frac{1}{\tau} \frac{e^{t (p^2+\kappa)/\tau}}{e^{(p^2+\kappa)/\tau}-1} - \frac{1}{p^2 + \kappa} \right] \;=\; 0 \]
pointwise, for all $p \in \Lambda$, the dominated convergence theorem implies (\ref{eq:dom-p}). {F}rom (\ref{eq:II-last}), we obtain that 
\[ \left\| \int \dd y \, w(\cdot-y) \frac{1}{\tau'} \frac{e^{th/\tau'}}{e^{h/\tau'}-1} v \left[ \frac{1}{\tau} \frac{e^{(1-t)(-\Delta + \kappa) / \tau}}{e^{(-\Delta + \kappa) / \tau}-1} - \frac{1}{\tau'} \frac{e^{(1-t) (-\Delta + \kappa) / \tau'}}{e^{(-\Delta + \kappa) / \tau'}-1} \right] (y;y) \right\|_V \to 0 \]
as $\tau,\tau' \to \infty$. Together with (\ref{eq:I-last}) and (\ref{eq:I+II}), we conclude that (\ref{eq:t-fix}) holds, for all fixed $t\in (0,1)$. This shows that $\| v_\tau - v_{\tau'}\|_V \to 0$, for $\tau,\tau' \to \infty$. Hence $v_\tau$ is a Cauchy sequence in the complete metric space $B_\rad (V)$. As a consequence, there exists $v \in B_\rad (V)$ with $v_\tau \to v$, as $\tau \to \infty$. 

Finally, we prove (\ref{eq:HS-conv}). As before, we use the notation $h_\tau = -\Delta + \kappa + v_\tau$. Moreover, we set $h = -\Delta + \kappa + v$. From the resolvent identity we get
\begin{multline*}
\| h_\tau^{-1} - h^{-1} \|^2_{\fra S^2} \;=\; \| h_{\tau}^{-1} (v_\tau - v) h^{-1} \|^2_{\fra S^2} \;=\; \tr \, h_{\tau}^{-1} (v_\tau - v) h^{-2} (v_\tau - v) h_{\tau}^{-1} \\ \leq\; \| h_{\tau}^{-1} V \|_{\fra S^\infty} \| V h^{-1} \|_{\fra S^\infty} \, \| h^{-1} \|_{\fra S^2} \, \| h_{\tau}^{-1} \|_{\fra S^2}\, \| v_\tau - v \|^2_V \,.
\end{multline*}
By Remark \ref{counterterm remark}, it follows that the above expression is 
\begin{equation}
\label{eq:HS-conv1}
\;\leq\; C\, \| h_{\tau}^{-1} V \|_{\fra S^\infty} \| V h^{-1} \|_{\fra S^\infty} \, \| h^{-1} \|_{\fra S^2}^2 \,\| v_\tau - v \|^2_V \,. 
\end{equation}

Since $\| v_\tau - v \|_V \to 0$, as $\tau \to \infty$, to prove that $\|h_\tau^{-1} - h^{-1} \|_{\fra S^2} \to 0$ it is enough to show that the operator norms $\| h_\tau^{-1} V \|_{\fra S^\infty}$, $\| V h^{-1} \|_{\fra S^\infty}$ are bounded, uniformly in $\tau$. To this end, we observe that, for all $\psi \in L^2 (\Lambda)$,  
\begin{equation}\label{eq:h-1V} \begin{split} 
\| h_\tau^{-1} V \psi \|^2 &\;= \; \left\langle V \psi , h_\tau^{-2} V \psi \right\rangle \\ &\;= \; \left\langle V^{1/2} \psi , \left\{ h_\tau^{-1} V^{1/2} + \left[ V^{1/2} , h_\tau^{-1} \right] \right\} \left\{ V^{1/2} h_\tau^{-1} + \left[ h_\tau^{-1} , V^{1/2} \right] \right\} V^{1/2} \psi \right\rangle \\  & \;\leq \; 2 \left\langle V^{1/2} \psi , h_\tau^{-1} V h_\tau^{-1} V^{1/2} \psi \right\rangle + 2 \left\langle V^{1/2} \psi, h_\tau^{-1} \left[ -\Delta , V^{1/2} \right] h_\tau^{-2} \left[ V^{1/2} , -\Delta \right] h_\tau^{-1} V^{1/2} \psi \right\rangle  \\ &\;\leq \; 2 \left\langle V^{1/2} \psi , h_\tau^{-1} V h_\tau^{-1} V^{1/2} \psi \right\rangle  + 2 \left\langle V^{1/2} \psi, h_\tau^{-1} \nabla \cdot \frac{\nabla V}{\sqrt{V}} \, h_\tau^{-2} \, \frac{\nabla V}{\sqrt{V}} \cdot \nabla h_\tau^{-1} V^{1/2} \psi \right\rangle \\ &\qquad + 2 \left\langle V^{1/2} \psi, h^{-1}_\tau \frac{\nabla V}{\sqrt{V}} \cdot \nabla \, h_\tau^{-2} \, \nabla \cdot \frac{\nabla V}{\sqrt{V}} h_\tau^{-1} V^{1/2} \psi \right\rangle \,, \end{split} \end{equation}
where we used the Cauchy-Schwarz inequality in the form $\scalar{x}{(A + B) (A^* + B^*) x} \leq 2 \scalar{x}{AA^* x} + 2 \scalar{x}{BB^* x}$, as well as the identities $[A,B^{-1}] = B^{-1} [B,A] B^{-1}$ and $[-\Delta, V^{1/2}] = - \nabla \cdot \frac{\nabla V}{\sqrt{V}} - \frac{\nabla V}{\sqrt{V}} \cdot \nabla$. We observe that 
\[ \left\| h_\tau^{-1/2}  V^{1/2} \varphi \right\|^2 \;=\; \left\langle V^{1/2} \varphi , h_\tau^{-1} \, V^{1/2} \varphi \right\rangle  \;\leq\; \| V/v_\tau \|_{L^\infty} \| \varphi \|^2 \;\leq\; \frac{1}{1-\rad} \| \varphi \|^2\,, \]
since $v_\tau (x) \geq (1-\rad) V(x)$ for all $\tau > 0$ (because $v_\tau \in B_\rad (V)$). This implies that 
\[ \| h_\tau^{-1/2} V^{1/2} \|_{\fra S^\infty} \;=\; \| V^{1/2} h_\tau^{-1/2}  \|_{\fra S^\infty} \;\leq\; \frac{1}{1-\rad} \,.\]
Similarly, 
\[ \left\| h_\tau^{-1/2} \frac{\nabla V}{\sqrt{V}} \right\|_{\fra S^\infty} \;=\; \left\| \frac{\nabla V}{\sqrt{V}} h_\tau^{-1/2} \right\|_{\fra S^\infty} \;\leq\; \frac{\| \nabla V \|_V}{1-\rad}\,. \] 
Moreover, $\| \nabla \, h_\tau^{-1/2} \|_{\fra S^\infty} = \| h_\tau^{-1/2} \nabla \|_{\fra S^\infty} \leq 1$.

{F}rom (\ref{eq:h-1V}), we find
\[ \begin{split} \| h_\tau^{-1} V \psi \|^2 &\;\leq \; \left[ 2 \| h_\tau^{-1/2} V^{1/2} \|_{\fra S^{\infty}}^4 + 4 \| h_\tau^{-1/2} V^{1/2} \|^2_{\fra S^\infty} \| \nabla h_\tau^{-1/2} \|_{\fra S^\infty}^2 \left\| \frac{\nabla V}{\sqrt{V}} h_\tau^{-1/2} \right\|_{\fra S^\infty}^2 \| h_\tau^{-1} \|_{\fra S^\infty} \right] \| \psi \|^2 \\ 
&\;\leq \;  \frac{4}{(1-\rad)^4} (1+ \kappa^{d/2-2} \| \nabla V \|_V^2) \| \psi \|^2 \,. \end{split} \]
This proves that $\| h_\tau^{-1} V \|_{\fra S^\infty} < \infty$, uniformly in $\tau > 0$. Analogously, we find $\| h^{-1} V \|_{\fra S^\infty} < \infty$. Hence, (\ref{eq:HS-conv1}) implies $\lim_{\tau \to \infty} \| h_\tau^{-1} - h^{-1} \|_{\fra S^2} = 0$.
\end{proof}

\appendix

\section{Borel summation} \label{sec:borel}

The following result is the key tool that allows us to deduce the convergence of analytic functions from the convergence of the coefficients of their asymptotic expansions. For its statement, for $R \geq 1$ we introduce the open ball
\begin{equation} \label{def_CR}
\mathcal{C}_R \;\deq\; \{z \in \C \col \re z^{-1}>R^{-1}\}\,.
\end{equation}

\begin{theorem}
\label{Borel summation convergence}
Let $(A^{\xi})_{\xi}$ and $(A_{\tau}^{\xi})_{\xi,\tau}$ be families of functions that are analytic in $\mathcal{C}_R$. The first family is indexed by a parameter $\xi$ in some arbitrary set and the second one is, in addition, indexed by $\tau>0$.
Suppose that, for all $M \in \mathbb{N}$, $A^{\xi}$ and $A_{\tau}^{\xi}$ are given by the asymptotic expansions
\begin{equation}
\label{Asymptotic expansion tau xi}
A^{\xi}(z)\;=\;\sum_{m=0}^{M-1}a_{m}^{\xi} z^m + R_{M}^{\xi}(z) \quad \mbox{and} \quad  A_{\tau}^{\xi}(z)\;=\;\sum_{m=0}^{M-1}a_{\tau,m}^{\xi} z^m + R_{\tau,M}^{\xi}(z)\quad \mbox{in}\quad \mathcal{C}_R \,,
\end{equation}
where the explicit terms satisfy
\begin{equation}
\label{Explicit term bound tau xi}
\sup_{\xi} \big|a_{m}^{\xi}\big| +
\sup_{\tau,\xi} \big|a_{\tau,m}^{\xi}\big| \;\leq\; \nu  \sigma^m  m!
\end{equation}
and the remainder terms satisfy
\begin{equation}
\label{Remainder term bound tau xi}
\sup_{\xi} \Big|R_{M}^{\xi}(z)\Big|+
\sup_{\tau,\xi} \Big|R_{\tau,M}^{\xi}(z)\Big| \;\leq\; \nu  \sigma^M  M!  |z|^M\quad \mbox{for all}\quad z \in \mathcal{C}_R
\end{equation}
for some constants $\nu>0$ and $\sigma \geq 1$, which are independent of $m$ and $M$.

Moreover, suppose that the differences of the explicit terms satisfy
\begin{equation}
\label{Difference of explicit terms}
\sup_{\xi} \big|a_{\tau,m}^{\xi}-a_m^{\xi}\big| \rightarrow 0  \quad \mbox{as}\quad \tau \rightarrow \infty \,.
\end{equation}
Then
\begin{equation} \label{A_A_tau_conv}
\notag
\sup_{\xi} \big|A_{\tau}^{\xi} - A^{\xi}\big| \rightarrow 0 \quad \mbox{pointwise in}\quad \mathcal{C}_R \quad \mbox{as}\quad \tau \rightarrow \infty \,.
\end{equation}
\end{theorem}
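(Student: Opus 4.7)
The plan is to invoke the Nevanlinna--Sokal resummation theorem to represent each $A^{\xi}$ and $A_{\tau}^{\xi}$ as the Laplace transform of its Borel transform $B^{\xi}(t) \deq \sum_{m \geq 0} a_m^{\xi} t^m / m!$ and $B_{\tau}^{\xi}(t) \deq \sum_{m \geq 0} a_{\tau,m}^{\xi} t^m / m!$ respectively. Under the hypotheses \eqref{Asymptotic expansion tau xi}--\eqref{Remainder term bound tau xi}, this theorem asserts that $B^{\xi}$ and $B_{\tau}^{\xi}$ extend analytically from the common disk of convergence $\{\abs{t} < 1/\sigma\}$ to a common neighbourhood $\cal S \deq \{t \in \C \col \dist(t, [0,\infty)) < 1/\sigma\}$ of the positive real axis, where they satisfy a uniform-in-$\xi,\tau$ exponential bound $\abs{B^{\xi}(t)}, \abs{B_{\tau}^{\xi}(t)} \leq K \ee^{\re t / R}$ with $K$ depending only on $\nu, \sigma, R$, and that the Laplace representation
\begin{equation*}
A^{\xi}(z) \;=\; \frac{1}{z} \int_0^\infty B^{\xi}(t) \, \ee^{-t/z} \, \dd t
\end{equation*}
holds for every $z \in \cal C_R$, with the analogous formula for $A_\tau^\xi$. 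Since the hypotheses \eqref{Explicit term bound tau xi}--\eqref{Remainder term bound tau xi} are uniform in $\xi,\tau$, the output of the Nevanlinna--Sokal theorem inherits this uniformity; its proof rests on a contour deformation argument that exploits the remainder bound \eqref{Remainder term bound tau xi} to match the formal Borel transform against the genuine one in a quantitative way.

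Granted this uniform Laplace representation, for each fixed $z \in \cal C_R$ we write
\begin{equation*}
A_\tau^\xi(z) - A^\xi(z) \;=\; \frac{1}{z} \int_0^\infty \bigl(B_\tau^\xi(t) - B^\xi(t)\bigr) \ee^{-t/z} \, \dd t\,.
\end{equation*}
The integrand is dominated by $2K \abs{z}^{-1} \ee^{-t(\re(1/z) - 1/R)}$, which is integrable on $[0,\infty)$ and independent of $\xi,\tau$ since $\re(1/z) > 1/R$. By dominated convergence, the conclusion reduces to showing that $\sup_\xi \abs{B_\tau^\xi(t) - B^\xi(t)} \to 0$ as $\tau \to \infty$ for each fixed $t \geq 0$. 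On the disk $\abs{t} \leq \delta$ for any $\delta < 1/\sigma$, splitting the power series at truncation level $M$ yields
\begin{equation*}
\sup_\xi \absb{B_\tau^\xi(t) - B^\xi(t)} \;\leq\; \sum_{m=0}^{M-1} \frac{\eta_m(\tau)}{m!}\, \delta^m + \frac{2\nu (\sigma \delta)^M}{1 - \sigma \delta}\,,
\end{equation*}
where $\eta_m(\tau) \deq \sup_\xi \abs{a_{\tau,m}^\xi - a_m^\xi} \to 0$ for each fixed $m$ by \eqref{Difference of explicit terms}; taking $M$ large first and then $\tau$ large gives the claim uniformly in $\xi$ on $[0,\delta]$.

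The last step is to propagate this pointwise smallness from $[0,\delta]$ to all of $[0,\infty)$ using the uniform bound $\abs{B_\tau^\xi - B^\xi} \leq 2K \ee^{\re t / R}$ on $\cal S$. We cover $[0, t_0]$ by finitely many overlapping disks $D(c_j, r)$ contained in $\cal S$ with $r < 1/(2\sigma)$ and iteratively apply Hadamard's three-circles theorem: if an analytic function $F$ is bounded by $M_*$ on $D(c, 2r)$ and by $\epsilon$ on the sub-disk $D(c, r/2)$, then it is bounded by $M_*^{1-\alpha}\epsilon^\alpha$ on $D(c, r)$ for some $\alpha \in (0,1)$ depending only on the ratio of radii. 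Starting from the smallness already established on $D(0, \delta)$ and iterating along the chain, after $N$ steps the error bound degrades as $\epsilon^{\alpha^N}$ but still tends to zero as $\epsilon \deq \sup_\xi \sup_{t \in [0,\delta]} \abs{B_\tau^\xi(t) - B^\xi(t)} \to 0$. The main obstacle is the uniform-in-$\xi,\tau$ version of the Nevanlinna--Sokal theorem described in the first paragraph; once that quantitative extension and exponential bound are in hand, the remainder of the proof proceeds by soft complex-analytic tools and dominated convergence.
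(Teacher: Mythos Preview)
Your proposal is correct and follows the same overall architecture as the paper: invoke the Nevanlinna--Sokal theorem (which the paper states and proves as Theorem~\ref{Borel summation}) to obtain the uniform Laplace representation and the uniform exponential bound on the Borel transforms, reduce via dominated convergence to showing $\sup_\xi |B_\tau^\xi(t)-B^\xi(t)|\to 0$ for each fixed $t\geq 0$, establish this on a small disk around the origin by truncating the power series, and then propagate the smallness along a chain of overlapping disks covering $[0,t_0]$.

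The only substantive difference is in the propagation mechanism. The paper first reduces by translation to $A^\xi\equiv 0$, and then passes from one disk $S_j$ to the next by Cauchy's integral formula: it bounds each Taylor coefficient $b_{\tau,k}^\xi(t_{j+1})$ by integrating over a small circle contained in $S_j$, and concludes by dominated convergence in $k$, using the a~priori coefficient bound \eqref{Bound on b_k(t)} extracted from the \emph{proof} of Sokal's theorem as the dominating sequence. Your propagation via Hadamard's three-circles inequality is a legitimate alternative that uses only the \emph{output} of Sokal's theorem (analytic continuation to the strip and the uniform exponential bound), treating it as a black box; the price is that the smallness degrades like $\epsilon^{\alpha^N}$ along the chain, whereas the paper's argument keeps the smallness qualitative at each step. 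Both arguments are short and standard; yours is slightly more self-contained, while the paper's dovetails naturally with the explicit construction it has just carried out in proving Theorem~\ref{Borel summation}.
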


The proof of Theorem \ref{Borel summation convergence} is based on Borel summation techniques inspired by the following result of Sokal from \cite{Sokal}.

\begin{theorem}[after \cite{Sokal}]
\label{Borel summation}
Given $R \geq 1$, let $A$ be an analytic function on the ball \eqref{def_CR}.
Suppose that for all $M \in \mathbb{N}$ the function $A$ is given by the asymptotic expansion
\begin{equation}
\label{Asymptotic expansion}
A(z) \;=\; \sum_{m=0}^{M-1}a_m z^m + R_M(z)\quad\mbox{in}\quad\mathcal{C}_R 
\end{equation}
where the explicit terms satisfy
\begin{equation}
\label{Explicit term bound}
|a_m| \;\leq\;  \nu \sigma^m m!
\end{equation}
and the remainder term satisfies
\begin{equation}
\label{Remainder term bound}
\big|R_M(z)\big| \;\leq\; \nu  \sigma^M  M!  |z|^M\quad\mbox{for all}\quad z \in \mathcal{C}_R
\end{equation}
for some constants $\nu>0$ and $\sigma \geq 1$, which are independent of $m$ and $M$.

Then, the following statements hold.
\begin{enumerate}
\item The series
\begin{equation}
\label{B(t)}
B(t) \;\deq\; \sum_{m=0}^{\infty} a_m t^m /m!
\end{equation}
converges absolutely for all $|t| < \frac{1}{\sigma}$ and it has an analytic extension to the region 
\begin{equation}
\label{Region S}
S_{\sigma} \;\deq\; \bigg\{t\col \dist (t,\mathbb{R}^{+}) <\frac{1}{\sigma}\bigg\}\,.
\end{equation}
\item In the region $\bar{S}_{2\sigma}=\Big\{t\col \dist (t,\mathbb{R}^{+}) \leq \frac{1}{2\sigma}\Big\}$, the function $B(t)$ satisfies the bound
\begin{equation}
\label{Bound on B}
\big|B(t)\big| \;\leq\; C_0  R  \ee^{|t|/R}  \nu  \sigma
\end{equation}
for some universal constant $C_0>0$.
\item We have
\begin{equation}
\label{Formula for f}
A(z) \;=\; \frac{1}{z}  \int_0^{+\infty} \ee^{-t/z}  B(t)\,\dd t \quad \mbox{for all} \quad z \in \mathcal{C}_R \,.
\end{equation}
\end{enumerate}
\end{theorem}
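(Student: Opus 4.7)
My plan is to prove all three conclusions by working on the Laplace-conjugate side. Substitute $s = 1/z$, which maps $\cal C_R$ biholomorphically onto the half-plane $\{s \col \re s > 1/R\}$, and set $\tilde A(s) \deq A(1/s)$. In these variables the hypotheses (\ref{Asymptotic expansion})--(\ref{Remainder term bound}) translate to: $\tilde A$ is analytic in $\re s > 1/R$ and satisfies $\tilde A(s) = \sum_{m=0}^{M-1} a_m s^{-m} + \tilde R_M(s)$ with $\abs{\tilde R_M(s)} \leq \nu \sigma^M M!/\abs{s}^M$ for all $M \in \N$ and $\re s > 1/R$. Claim (i) for $\abs t < 1/\sigma$ is then immediate from (\ref{Explicit term bound}): the series $\sum a_m t^m/m!$ is bounded termwise by a geometric series with ratio $\sigma \abs t < 1$.

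For the analytic extension of $B$ and the estimate (ii), I define for any $M \geq 1$ and any contour $\Gamma \subset \{\re s > 1/R\}$ going from $\infty e^{-i\vartheta}$ to $\infty e^{i\vartheta}$ for some $\vartheta \in (\pi/2, \pi)$ (a Hankel-type contour entering the right half-plane from infinity)
\begin{equation*}
B(t) \;\deq\; \sum_{m=0}^{M-1} \frac{a_m}{m!} t^m + \frac{1}{2\pi \ii} \int_{\Gamma} \frac{\tilde R_M(s)}{s} \ee^{ts}\, \dd s\,.
\end{equation*}
The bound $\abs{\tilde R_M(s)/s} \leq \nu \sigma^M M!/\abs{s}^{M+1}$ makes this absolutely convergent whenever $M \geq 1$ and $\re(ts) \to -\infty$ along the tails of $\Gamma$, which holds for every $t$ in an open sector depending on the slope of $\Gamma$; varying $\vartheta$ yields analyticity on a neighbourhood of $\bb R^+$. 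The definition is independent of $M$: the step from $M$ to $M+1$ replaces $\tilde R_M$ by $\tilde R_M - a_M s^{-M}$, and the Hankel representation $\frac{1}{2\pi \ii}\int_\Gamma s^{-M-1} \ee^{ts}\, \dd s = t^M/M!$ (for $\re t > 0$, obtained by closing $\Gamma$ to the left and taking the residue at $0$) shows that the change in the integral cancels the new explicit term. Taking $\Gamma$ to be a circle of radius $1/\sigma$ around $0$ shifted into $\re s > 1/R$ recovers agreement with the power series for $\abs t < 1/\sigma$ via the same Hankel identity.

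For the quantitative bound (ii), given $t \in \bar S_{2\sigma}$ I will choose the contour $\Gamma$ to pass at distance of order $1/R + 1/(2\sigma)$ from the origin while keeping $\re s > 1/R$, and choose $M \sim \abs t / R$ optimally. With this choice, $\abs{\ee^{ts}} \leq \ee^{(1 + o(1))\abs t/R}$ on $\Gamma$, and the explicit terms are bounded via $\abs{a_m t^m/m!} \leq \nu (\sigma \abs t)^m \leq \nu \ee^{\sigma \abs t}$, while the remainder integral is bounded by $C \nu R\sigma \, \ee^{\abs t /R}$ after applying Stirling's formula to $\sigma^M M!/\abs s^{M+1}$ at the optimal $M$. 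Combining these yields (\ref{Bound on B}) with a universal $C_0$.

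For (iii), with $M \geq 2$ I insert the definition of $B$ into the right-hand side of (\ref{Formula for f}) and apply Fubini, which is justified by the absolute bound $\abs{\tilde R_M(s)/s} \leq \nu \sigma^M M!/\abs s^{M+1}$ together with $\int_0^\infty \ee^{-t/z} \ee^{ts}\, \dd t = 1/(1/z - s)$ for $z \in \cal C_R$ with $\re(1/z) > \max_\Gamma \re s$. The first summand contributes $\sum_{m=0}^{M-1} a_m z^m$ by the elementary identity $z^{-1}\int_0^\infty t^m \ee^{-t/z}\,\dd t = m!\, z^m$, while the second becomes
\begin{equation*}
\frac{1}{2\pi \ii} \int_\Gamma \frac{\tilde R_M(s)}{s(1 - sz)}\, \dd s\,,
\end{equation*}
which, by closing $\Gamma$ to the right (valid since $\tilde R_M$ is analytic in $\re s > 1/R$ and decays like $\abs s^{-M}$), equals the residue at the simple pole $s = 1/z$, namely $\tilde R_M(1/z) = R_M(z)$. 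Summing gives $A(z)$, proving (iii). The main obstacle is the first step of the second paragraph, namely constructing an $M$-independent analytic continuation of $B$ to all of $S_\sigma$ via contour deformations that simultaneously keep $\Gamma \subset \{\re s > 1/R\}$ and ensure $\re(ts) \to -\infty$ along the tails; the Hankel-representation identity for $1/M!$ is the algebraic glue that makes this construction both well-defined and consistent with the power series.
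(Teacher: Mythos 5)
The central analytic fact you need is the extension of $B$ from the disc $\abs t < 1/\sigma$ to the whole strip $S_\sigma$, and the contour representation you propose cannot supply it. You ask for a contour $\Gamma$ with two incompatible properties: it must lie inside $\{\re s > 1/R\}$ (so that $\tilde R_M$ is defined on $\Gamma$), and its tails must go off to $\infty \ee^{\pm\ii\vartheta}$ with $\vartheta \in (\pi/2,\pi)$ (so that $\re(ts) \to -\infty$ along the tails and the Hankel identity $\frac{1}{2\pi\ii}\int_\Gamma s^{-M-1}\ee^{ts}\,\dd s = t^M/M!$ applies). When $\vartheta > \pi/2$ the directions $\ee^{\pm\ii\vartheta}$ lie in the open left half-plane, so the tails necessarily leave $\{\re s > 1/R\}$; since the hypotheses give no information whatever about $A$ (hence $\tilde R_M$) outside $\cal C_R$, the integrand is simply undefined there. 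If instead you keep $\Gamma$ inside the half-plane --- e.g.\ asymptotically vertical, as the paper does with $\re\zeta = r^{-1}$ --- then $\abs{\ee^{ts}}$ grows without bound as $\im s \to \pm\infty$ whenever $\im t \neq 0$, and the polynomial decay $\tilde R_M(s)/s = O(\abs{s}^{-M-1})$ does not compensate an exponential; the integral converges only for real $t$, so the contour formula does not by itself define an analytic function of $t$ off $\R^+$. You flag this as ``the main obstacle'' in your last sentence, but it is precisely the crux, and no contour deformation within the given domain of analyticity can overcome it.

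The paper circumvents this by staying with the vertical line $\re\zeta = r^{-1}$ (equivalently $\re s = r^{-1}$ in your variables) and introducing the auxiliary functions $b_k(t)$, defined for real $t \geq 0$, which satisfy $b_k = b_0^{(k)}$ together with the estimate $\abs{b_k(t)} \leq 2R\,\ee^{t/R}\,\nu\,\sigma^{k+1}(k+1)!$. This bound shows that the Taylor series $B^{t_0}(t) = \sum_k b_k(t_0)(t-t_0)^k/k!$, re-centred at any real $t_0 \geq 0$, has radius of convergence at least $1/\sigma$; compatibility of the various $B^{t_0}$ on overlaps then yields the analytic extension to all of $S_\sigma$ by patching. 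The extension is obtained by re-expansion along $\R^+$ rather than by moving the contour. Your parts (ii) and (iii) --- bounding $B$ by optimizing over $M$, and recovering $A$ by Fubini and closing the contour onto the simple pole at $s = 1/z$ --- parallel the paper's argument once one restricts the contour to the vertical line and $t$ to $\R^+$, but without the $b_k$ re-expansion the analytic continuation is not constructed, and with it both the remainder of (i) and the uniform bound (ii) on $\bar S_{2\sigma}$ remain unproved.
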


The proof of Theorem \ref{Borel summation} was outlined in \cite{Sokal} which, in turn, is based on the methods used in the proof of \cite[Theorem 136]{Hardy} and it rediscovers and extends the work of \cite{Nevanlinna}. The proof in \cite{Hardy} is based on the work
\cite{Watson}, in which the function $A$ is assumed to be analytic in a larger domain. For completeness and with applications to the proof of Theorem \ref{Borel summation convergence} below in mind, we give a full proof of Theorem \ref{Borel summation}. Theorem \ref{Borel summation convergence} itself is proved at the end of this appendix, using tools from the proof of Theorem \ref{Borel summation}.

\begin{proof}[Proof of Theorem \ref{Borel summation}]
The fact that the series $B(t)$, defined in \eqref{B(t)} converges absolutely for all $|t| < \frac{1}{\sigma}$ follows immediately from the assumption \eqref{Explicit term bound}.
We now explain how to extend $B$ analytically to the whole region $S_{\sigma}$.
Following \cite{Sokal}, we define for  $k \in \N,r \in (0,R), t \geq 0$ the quantities
\begin{equation}
\label{b_k(t)}
b_k(t)\;\deq\;a_k+\frac{1}{2 \pi \mathrm{i}}  \mathop{\int}_{\re \zeta=r^{-1}} \ee^{t \zeta}  \zeta^{k-1}  \bigg(A(\zeta^{-1})-\sum_{m=0}^{k}a_m \zeta^{-k}\bigg) \, \dd \zeta\,.
\end{equation}
Note that the above integrals are absolutely convergent for $t \geq 0$ and independent of $r$ for $0<r<R$.

In order to deduce the absolute convergence, we note that by \eqref{Asymptotic expansion} and \eqref{Remainder term bound}, for $\re \zeta=r^{-1}$ we have
\begin{equation}
\notag
\bigg|A(\zeta^{-1})-\sum_{m=0}^{k}a_m \zeta^{-m}\bigg| \;\leq\; \nu \sigma^{k+1}  (k+1)!  |\zeta|^{-(k+1)}\,.
\end{equation}
Hence
\begin{multline*}
\int_{\re \zeta=r^{-1}}   \bigg|\ee^{t \zeta}  \zeta^{k-1}  \Big(A(\zeta^{-1})-\sum_{m=0}^{k}a_m \zeta^{-m}\Big) \bigg| \,|\dd\zeta| 
\;\leq\; \mathop{\int}_{\re \zeta=r^{-1}} \ee^{t/r}  |\zeta|^{k-1}  \nu  \sigma^{k+1}  (k+1)!  |\zeta|^{-(k+1)} \,|\dd\zeta|
\\
=\;\ee^{t/r}  \nu  \sigma^{k+1}  (k+1)!  \mathop{\int}_{\re \zeta=r^{-1}} |\zeta|^{-2} \,|\dd \zeta|
\;=\; \nu \pi  r  \ee^{t/r}  \sigma^{k+1}  (k+1)!
\end{multline*}
In the last equality, we used the identity
\begin{equation}
\label{Integral 1/|zeta|^2}
\mathop{\int}_{\re \zeta=r^{-1}} |\zeta|^{-2} \,|\dd \zeta| \;=\;\pi r\,.
\end{equation}
Absolute convergence now follows.

In order to see that each $b_k(t)$ is independent of $r$ for $0<r<R$, we note that the function
\begin{equation}
\notag
\sign(\zeta)\;\deq\;\ee^{t\zeta}  \zeta^{k-1}  \Big(A(\zeta^{-1})-\sum_{m=0}^{k}a_m  \zeta^{-m}\Big)
\end{equation}
is analytic in the region $\re \zeta>0$,
and that for $\zeta=r^{-1}$, it satisfies the bound
\begin{equation}
\notag
|\sign(\zeta)| \;\leq\; \ee^{t/r}  \nu  \sigma^{k+1}  (k+1)!  \frac{1}{|\zeta|^2}\,. 
\end{equation} 
The claim that $b_m(t)$ is independent of $r$ follows from Cauchy's theorem and the fact that, for fixed $0<r_2<r_1<R$
\begin{equation}
\notag
\int_{1/r_1}^{1/r_2} \frac{1}{|s\pm L \mathrm{i}|^2} \,\dd s  \;\leq\; \Big(\frac{1}{r_2}-\frac{1}{r_1}\Big)  \frac{1}{L^2} \;\rightarrow\; 0 \quad\mbox{as}\quad L \rightarrow \infty \,.
\end{equation}
In particular, we can take $r$ to be arbitrarily close to $R$.

In the following we use the identity
\begin{equation}
\label{Borel summation star}
\frac{1}{2 \pi \mathrm{i}} \mathop{\int}_{\re \zeta=r^{-1}} \ee^{t \zeta}  \zeta^{-(m+1)} \,\dd \zeta \;=\; \frac{t^m}{m!}\quad\mbox{for all}\quad m \in \N \,,
\end{equation}
which follows easily from the residue theorem.
Recalling \eqref{b_k(t)}, we write, for $M \in \mathbb{N}$
\begin{align*}
b_0(t) &\;=\; a_0+\frac{1}{2 \pi \mathrm{i}}  \mathop{\int}_{\re \zeta=r^{-1}} \ee^{t\zeta}  \zeta^{-1}  \Big(A(\zeta^{-1})-a_0\Big) \,\dd \zeta
\\
&\;=\; a_0+\frac{1}{2 \pi \mathrm{i}}  \mathop{\int}_{\re \zeta=r^{-1}} \ee^{t \zeta}  \zeta^{-1}  \bigg(A(\zeta^{-1})-\sum_{m=0}^{M-1}a_m\zeta^{-m}/m! \bigg) \,\dd \zeta
\\
&\qquad
+\frac{1}{2 \pi \mathrm{i}}  \mathop{\int}_{\re \zeta=r^{-1}} \ee^{t \zeta}  \zeta^{-1}  \bigg(\sum_{m=1}^{M-1}a_m\zeta^{-m}/m!\bigg)  \,\dd \zeta\,,
\end{align*}
which by \eqref{Asymptotic expansion} and \eqref{Borel summation star} is equal to
\begin{equation}
\label{seven}
\bigg(a_0+\sum_{m=1}^{M-1}a_m t^m/m!\bigg)+\frac{1}{2 \pi \mathrm{i}}  \mathop{\int}_{\re \zeta=r^{-1}} \ee^{t \zeta}  \zeta^{-1}  R_M(\zeta^{-1})  \,\dd \zeta \,.
\end{equation}
Let $t>0$. We choose $M>\frac{t}{R}$ in \eqref{seven}, and we let 
\begin{equation}
\label{choice of r}
r\;\deq\;\frac{t}{M} \,. 
\end{equation}
For parameters chosen in this way, we need to estimate the remainder term in \eqref{seven}. We use \eqref{Remainder term bound} in order to deduce that
\begin{multline}
\label{7B}
\bigg|\frac{1}{2 \pi \mathrm{i}}  \mathop{\int}_{\re \zeta=r^{-1}} \ee^{t \zeta}  \zeta^{-1}  R_M(\zeta^{-1})  \,\dd \zeta\bigg| \;\leq\; \frac{1}{2\pi}  \mathop{\int}_{\re \zeta=r^{-1}} \ee^{t/r}  \nu  \sigma^M  M!  |\zeta|^{-(M+1)} \,|\dd \zeta| 
\\
\leq\; \frac{1}{2\pi}  \ee^{t/r}  \nu  \sigma^M  M!   r^{M-1}  \mathop{\int}_{\re \zeta=r^{-1}} \frac{1}{|\zeta|^2} \,\dd \zeta \;=\; \frac{1}{2}  \ee^{t/r}  \nu  \sigma^M  M!   r^M\,,
\end{multline}
where we used \eqref{Integral 1/|zeta|^2}.
In particular, by \eqref{choice of r} and by Stirling's formula, the right-hand side of \eqref{7B} is
\begin{equation*}
\;\leq\; C  e^M  \nu  \sigma^M  \sqrt{M}  \Big(\frac{M}{e}\Big)^M  \Big(\frac{t}{M}\Big)^M 
\;\leq\;  C  \nu  \sqrt{M}  \big(\sigma t)^M\,,
\end{equation*}
which converges to zero as $M \rightarrow \infty$ provided that $t< \frac{1}{\sigma}$.
In particular, we note that 
\begin{equation}
\label{b_0(t) identity}
b_0(t)\;=\;\sum_{m=0}^{\infty} a_m t^m/m!\;=\;B(t)
\end{equation}
for all $0 < t < \frac{1}{\sigma}$.

In order to show that the function $B$ has the desired analytic continuation property, we relate the functions $b_k$ defined in \eqref{b_k(t)} with $b_0$. Moreover, we show upper bounds on $|b_k(t)|$. In order to address the first point, we show that, for all $k \in \N$ and  $t \geq 0$
\begin{equation}
\label{kth derivative of b_0}
b_0^{(k)}(t)\;=\;b_k(t) \,.
\end{equation}
In particular, using \eqref{b_0(t) identity} in \eqref{kth derivative of b_0}, we deduce that, for all $k\in \N$
\begin{equation}
\label{b_k(0)}
b_k(0)\;=\;a_k \,.
\end{equation}
We show the identity \eqref{kth derivative of b_0} inductively. The claim when $k=0$ holds by definition. 

We compute
\begin{align*}
b_0'(t) &\;=\;\frac{1}{2 \pi \mathrm{i}}  \mathop{\int}_{\re \zeta=r^{-1}} \ee^{t \zeta}  \Big(A(\zeta^{-1})-a_0\Big) \,d\zeta
\\
&\;=\;a_1  \frac{1}{2 \pi \mathrm{i}}  \mathop{\int}_{\re \zeta=r^{-1}} \ee^{t\zeta}  \zeta^{-1} \,d\zeta +\frac{1}{2 \pi \mathrm{i}}  \mathop{\int}_{\re \zeta=r^{-1}} \ee^{t\zeta}  \Big(A(\zeta^{-1})-a_0-a_1\zeta^{-1}\Big) \,\dd \zeta
\\
&\;=\; a_1+\frac{1}{2 \pi \mathrm{i}}  \mathop{\int}_{\zeta=r^{-1}} \ee^{t\zeta}  \Big(A(\zeta^{-1})-a_0-a_1\zeta^{-1}\Big) \, \dd \zeta \;=\; b_1(t) \,,
\end{align*}
where in the last step we used \eqref{Borel summation star}.
We know from the discussion immediately following \eqref{b_k(t)} that the obtained integral is absolutely convergent, which justified the differentiation under the integral. We iterate this procedure in order to obtain the identity \eqref{kth derivative of b_0}.

In order to obtain a good bound on $|b_k(t)|$, we first estimate
\begin{equation}
\label{b_k(t) bound first step}
\big|b_k(t) - a_k\big| \;=\; \bigg|\frac{1}{2 \pi \mathrm{i}}  \mathop{\int}_{\re \zeta=r^{-1}} \ee^{t \zeta}  \zeta^{k-1}  R_{k+1}(\zeta^{-1}) \, \dd\zeta\bigg| \,.
\end{equation}
By \eqref{Remainder term bound}, the expression in \eqref{b_k(t) bound first step} is bounded by
\begin{multline*}
\frac{1}{2\pi}  \mathop{\int}_{\re \zeta=r^{-1}} \ee^{t/r}  |\zeta|^{k-1}  \nu  \sigma^{k+1}  (k+1)!  |\zeta|^{-(k+1)} \,|\dd \zeta|
\\
=\;\frac{1}{2\pi}  \ee^{t/r}  \nu  \sigma^{k+1}  (k+1)!  \mathop{\int}_{\re \zeta=r^{-1}}  \frac{1}{|\zeta|^2} \, \dd \zeta=\frac{1}{2}  r  \ee^{t/r}  \nu  \sigma^{k+1}  (k+1)! \,,
\end{multline*}
where we again used \eqref{Integral 1/|zeta|^2}.
In particular, we can let $r \rightarrow R$ and deduce that
\begin{equation}
\notag
\big|b_k(t)-a_k\big| \;\leq\; R  \ee^{t/R}  \nu  \sigma^{k+1}  (k+1)! \,.
\end{equation}
Using \eqref{Explicit term bound} and $R,\sigma \geq 1$, it follows that
\begin{equation}
\label{Bound on b_k(t)}
\big|b_k(t)\big| \;\leq\; 2 R  \ee^{t/R}  \nu  \sigma^{k+1}  (k+1)! 
\end{equation}
for all $t \geq 0$.

For fixed $t_0 \geq 0$, define the function
\begin{equation}
\label{B^{t_0}(t)}
B^{t_0}(t)\;\deq\;\sum_{k=0}^{\infty} b_k(t_0)  (t-t_0)^k/k! \,.
\end{equation}
By using \eqref{Bound on b_k(t)}, it follows that the series in \eqref{B^{t_0}(t)} is absolutely convergent for $t \in \mathbb{C}$ with $|t-t_0| < \frac{1}{\sigma}$\,. Hence, it defines an analytic function in this domain. 
From \eqref{b_k(0)}, it follows that $B^0 = B$.
Suppose that $0 \leq t_0 <t_1$ are such that $t_1-t_0<1/\sigma$. Then, using \eqref{kth derivative of b_0}, \eqref{Bound on b_k(t)}, and Taylor's theorem, it follows that for all $t \in (t_0,t_1)$
\begin{equation}
\notag
\sum_{k=0}^{\infty} b_k(t_0)  (t-t_0)^k /k! \;=\; \sum_{k=0}^{\infty} b_k(t_1)  (t-t_1)^k /k! \;=\; b_0(t) \,.
\end{equation}
In particular, the functions $B^{t_0}$ and $B^{t_1}$ agree on the interval $(t_0,t_1)$. By analyticity, it follows that they agree on the whole intersection of their domains of definition, i.e.\ on the intersection of the balls of radius $1/\sigma$ centred at $t_0$ and at $t_1$ respectively.
Consequently, we can extend $B=B^0$ to the whole strip $S_{\sigma}$ by letting
\begin{equation}
\label{definition of B}
B(t)\;\deq\;B^{t_j}(t)
\end{equation}
for some $t_j>0$ with $|t-t_j|<1/\sigma$. This is a well-defined analytic function in $S_{\sigma}$. Part (i) now follows.

We now prove part (ii), i.e.\ the bound on $\big|B(t)\big|$ given by \eqref{Bound on B}.

Let $t \in \bar{S}_{2\sigma}$ be given. We can find $t_0>0$ such that $|t-t_0| \leq 1/(2\sigma)$\,.
By the construction in \eqref{definition of B}, we know that $B(t)=B^{t_0}(t)$. We substitute \eqref{Bound on b_k(t)} into \eqref{B^{t_0}(t)} in order to deduce that
\begin{equation}
\notag
\big|B(t)\big| \;\leq\; 2R  \ee^{t_0/R}  \nu  \sum_{k=0}^{\infty} (k+1)  \sigma  \Big(\frac{1}{2}\Big)^k \;=\; C  R  \ee^{t_0/R}  \nu  \sigma
\end{equation}
for some universal constant $C>0$. Moreover,
\begin{equation}
\notag
\ee^{t_0/R}\;=\;\ee^{|t_0|/R} \;\leq\; \ee^{(|t|+|t-t_0|)/R}\;=\; \ee^{|t|/R}  \ee^{|t-t_0|/R} \;\leq\; \ee^{|t|/R}  \ee^{1/(2\sigma R)} \;\leq\; \ee^{|t|/R}  \ee^{1/2} \,.
\end{equation}
Here we used that $R, \sigma \geq 1$.
Consequently, $\big|B(t)\big| \leq C_0  R  \ee^{|t|/R}  \nu  \sigma$
for some universal constant $C_0$. Hence, \eqref{Bound on B} follows.

We now prove part (iii), i.e.\ \eqref{Formula for f}, which allows us to write $f$ in terms of $B$. In the proof of part (i), we saw that, for all $t \geq 0$,
\begin{equation}
\notag
B(t)\;=\;b_0(t)\;=\; a_0+\frac{1}{2 \pi \mathrm{i}}  \mathop{\int}_{\re \zeta=r^{-1}} \ee^{t\zeta}  \frac{A(\zeta^{-1})-a_0}{\zeta} \, \dd \zeta \,.
\end{equation}
Here $r \in (0,R)$ is arbitrary.
We then compute, for $z \in \mathcal{C}_R$, the right-hand side of \eqref{Formula for f}
\begin{align}
\frac{1}{z}  \int_0^{+\infty} \ee^{-t/z}  B(t) \, \dd t 
\label{RHS of formula for f}
\;=\;
a_0+ \frac{1}{z}  \frac{1}{2 \pi \mathrm{i}}  \int_0^{+\infty} \mathop{\int}_{\re \zeta=r^{-1}}  \ee^{-t/z+t \zeta}  \frac{A(\zeta^{-1})-a_0}{\zeta} \, \dd \zeta\,\dd t \,.
\end{align}
By \eqref{Remainder term bound}, we know that $\absb{\frac{A(\zeta^{-1})-a_0}{\zeta}} \leq C/|\zeta|^2$
for some universal constant $C>0$; this is integrable over $\re \zeta =r^{-1}$ for all $r \in (0,R)$.
Furthermore, choose $r>0$ in the integral in \eqref{RHS of formula for f} such that $\re z^{-1}>r^{-1}>R^{-1}$. This is possible to do since $z \in \mathcal{C}_R$. In particular, for all $t>0$ we have $\re \pb{-\frac{t}{z}+t \zeta} < 0$.
Hence, the integral in \eqref{RHS of formula for f} is absolutely convergent. Therefore, we can interchange the orders of integration and deduce that
\begin{multline*}
\frac{1}{z}  \int_0^{+\infty} \ee^{-t/z}  B(t) \,\dd t \;=\;
a_0 +\frac{1}{z}  \frac{1}{2 \pi \mathrm{i}}  \mathop{\int}_{\re \zeta =r^{-1}} \int_0^{+\infty}  \ee^{-t/z+t \zeta}  \frac{A(\zeta^{-1})-a_0}{\zeta} \, \dd t\, \dd \zeta
\\
=\; a_0+\frac{1}{z}  \frac{1}{2 \pi \mathrm{i}}  \mathop{\int}_{\re \zeta=r^{-1}} \frac{1}{\frac{1}{z}-\zeta}  \frac{A(\zeta^{-1})-a_0}{\zeta} \, \dd \zeta \;=\; a_0-\frac{1}{z}  \frac{1}{2 \pi \mathrm{i}}  \mathop{\int}_{\re \zeta=r^{-1}}  \frac{A(\zeta^{-1})-a_0}{(\zeta-\frac{1}{z})  \zeta} \, \dd \zeta\,.
\end{multline*}
Note that
\begin{equation}
\notag
f(\zeta)\;\deq\;\frac{A(\zeta^{-1})-a_0}{(\zeta-\frac{1}{z})  \zeta}
\end{equation}
is a meromorphic function in $\re \zeta>R^{-1}$.  We now evaluate the above integral
by using the residue theorem. For fixed $L>0$, we let $\Phi_L$ be the  rectangular path in $\mathbb{C}$ whose top and bottom sides are $\im=L$ and $\im=-L$ respectively. The left side of $\Phi_L$ is $\re=r^{-1}$ and the right side is $\re=r^{-1}+L$. We orient $\Phi_L$ clockwise. For $L$ large enough, $\frac{1}{z}$ lies within $\Phi_L$. We henceforth assume that this is the case.

The only pole of $f$ inside $\Phi_L$ is at $\frac{1}{z}$, which is a simple pole.
We know that 
$ |A(\zeta^{-1})-a_0| \leq C/|\zeta|$ for some universal constant $C>0$. Consequently, for $|\zeta| \gg \frac{1}{|z|}$, it follows that $|f(\zeta)| \leq \frac{C}{|\zeta|^3}$
for some universal constant $C>0$.
Therefore, if we take $L \gg \frac{1}{|z|}$, then the top, right, and bottom contributions are $\r O\big(\frac{L}{L^3}\big) = \r O\big(\frac{1}{L^2}\big)$.
Hence, putting everything together, it follows from the residue theorem that

\begin{multline*}
\frac{1}{z}  \int_0^{+\infty} \ee^{-t/z}  B(t) \, \dd t\;=\;a_0- \frac{1}{z}   (-1)  \Res \bigg(\frac{A(\zeta^{-1})-a_0}{(\zeta-\frac{1}{z})  \zeta};\frac{1}{z}\bigg)
\\
\;=\;a_0+\frac{1}{z}  \frac{A(z)-a_0}{1/z}\;=\;a_0+(A(z)-a_0)\;=\;A(z) \,.
\end{multline*}
The factor of $-1$ in the above calculation came from the fact that $\Phi_L$ is oriented clockwise.
\end{proof}

\begin{proof}[Proof of Theorem \ref{Borel summation convergence}]
By translation, we can assume, without loss of generality, that $A^{\xi} \equiv 0$, i.e.\ $a_m^{\xi}=0$ for all $m$ and $\xi$.
Define
\begin{equation}
\label{Sum representation tau xi 0}
B_{\tau}^{\xi}(t)\;\deq\;\sum_{m=0}^{\infty} a_{\tau,m}^{\xi} t^m /m! \,.
\end{equation}
By using the bounds \eqref{Explicit term bound tau xi} and \eqref{Remainder term bound tau xi} and by the proof of Theorem \ref{Borel summation}, it follows that $B_{\tau}^{\xi}$ converges absolutely as a series for all $|t| < \frac{1}{\sigma}$ and that it has an analytic extension to the region $S_{\sigma}$. Moreover, the bound 
\begin{equation}
\label{Bound on B tau xi}
\sup_{\tau, \xi} \big|B_{\tau}^{\xi}(t)\big| \;\leq\; C_0  R  \ee^{|t|/R}  \nu  \sigma
\end{equation}
holds in $\bar{S}_{2\sigma}$ for some universal constant $C_0>0$. Finally, 
for all $\tau,\xi$,
\begin{equation}
\label{Formula for A tau xi}
A_{\tau}^{\xi}(z) \;=\; \frac{1}{z}  \int_0^{+\infty} \ee^{-t/z}  B_{\tau}^{\xi}(t) \, \dd t\quad \mbox{for all}\quad z \in \mathcal{C}_R\,.
\end{equation}
Given a non-negative integer $j$, let $t_j\deq\frac{j}{4\sigma}$ and let $S_j\deq\overline{B(t_j,\frac{1}{2\sigma})}$ be the closed ball of radius $\frac{1}{2\sigma}$ centred at $t_j$. Again, using the proof of Theorem \ref{Borel summation}, it follows that, on $S_j$ one can write
\begin{equation}
\label{Sum representation tau xi j}
B_{\tau}^{\xi}(t) \;=\; \sum_{k=0}^{\infty} b_{\tau,k}^{\xi}(t_j)  (t-t_j)^k/k!
\end{equation}
where the coefficients $b_{\tau,k}^{\xi}(t_j)$ satisfy
\begin{equation}
\label{Bound on the coefficients tau xi}
\sup_{\tau,\xi} \big|b_{\tau,k}^{\xi}(t_j)\big| \;\leq\; C_j  \sigma^k  k! 
\end{equation}
for some $C_j>0$ which depends on $j,R,\sigma,\nu$. Namely, for the latter bound, we are using \eqref{Bound on b_k(t)} applied in this context.

We now show that for all $j$
\begin{equation}
\label{Delta_j}
\sup_{\zeta \in S_j} \sup_{\xi} \big| B_{\tau}^{\xi}(\zeta)\big| \rightarrow 0 \quad \mbox{as}\quad  \tau \rightarrow \infty \,.
\end{equation}
We show \eqref{Delta_j} by induction.

For the base case $j=0$, we consider $t \in S_0$, which means that $|t| \leq \frac{1}{2\sigma}$. In particular, we can use \eqref{Sum representation tau xi 0} and deduce that
\begin{equation}
\notag
\sup_{\xi} \big| B_{\tau}^{\xi}(t)\big| \;\leq\; \sum_{m=0}^{\infty} \Big(\sup_{\xi}\big|a_{\tau,m}^{\xi}\big|\Big)  |t|^m /m! 
\;\leq\; \sum_{m=0}^{\infty} \Big(\sup_{\xi}\big|a_{\tau,m}^{\xi}\big|\Big)  \frac{1}{(2\sigma)^m}  \frac{1}{m!}\,. 
\end{equation}
The latter expression converges to zero as $\tau \rightarrow \infty$ by using \eqref{Explicit term bound tau xi}, \eqref{Difference of explicit terms} and the dominated convergence theorem. This proves the base case.

We now prove the induction step. Namely, we suppose that \eqref{Delta_j} holds for some $j$. We now show that it holds for $j+1$. Let $\Gamma_{j+1}$ denote the circle centred at $t_{j+1}$ of radius $\frac{1}{10\sigma}$. We note that then $\Gamma_{j+1}$ is contained in the ball $S_j$. We give $\Gamma_{j+1}$ the positive orientation. By Cauchy's integral formula,
\begin{equation}
\notag
b_{\tau,k}^{\xi}(t_{j+1}) \;=\; \frac{k!}{2\pi i}  \mathop{\oint}_{\Gamma_{j+1}} \frac{B_{\tau}^{\xi}(\zeta)}{(\zeta-t_{j+1})^{k+1}} \, \dd \zeta \,.
\end{equation}
In particular,
\begin{multline} \label{(**)_j}
\sup_{\xi} \big|b_{\tau,k}^{\xi}(t_{j+1})\big| \;\leq\; \frac{k!}{2\pi}  \mathop{\oint}_{\Gamma_{j+1}} \frac{\sup_{\xi} \big|B_{\tau}^{\xi}(\zeta)\big|}{|\zeta-t_{j+1}|^{k+1}} \, |\dd \zeta| 
\;\leq\; \frac{k!}{2\pi}  \frac{2\pi  \big(\frac{1}{10 \sigma}\big)}{\big(\frac{1}{10 \sigma}\big)^{k+1}}   \sup_{\zeta \in \Gamma_{j+1}} \sup_{\xi} \big|B_{\tau}^{\xi}(\zeta)\big|
\\
\leq\; 10^k  \sigma^k  k! \sup_{\zeta \in S_j} \sup_{\xi} \big|B_{\tau}^{\xi}(\zeta)\big| \rightarrow 0 \quad \mbox{as}\quad \tau \rightarrow \infty \,.
\end{multline}
In order to deduce the last step, we used the induction assumption.
Furthermore, by \eqref{Sum representation tau xi j} applied on $S_{j+1}$, we note that, for $t \in S_{j+1}$
\begin{equation}
\notag
\sup_{\xi} \big|B_{\tau}^{\xi}(t)\big| \;\leq\; \sum_{k=0}^{\infty} \Big(\sup_{\xi}\big|b_{\tau,k}^{\xi}(t_{j+1})\big|\Big)  \frac{1}{(2\sigma)^k}  \frac{1}{k!} \,.
\end{equation}
The quantity on the right-hand side converges to zero as $\tau \rightarrow \infty$ by using \eqref{Bound on the coefficients tau xi}, \eqref{(**)_j}, and the dominated convergence theorem.
In particular, we deduce that
\begin{equation}
\notag
\sup_{\zeta \in S_{j+1}} \sup_{\xi} \big| B_{\tau}^{\xi}(\zeta)\big| \rightarrow 0 \quad \mbox{as}\quad  \tau \rightarrow \infty \,.
\end{equation}
This finishes the induction.

From \eqref{Delta_j}, we deduce that
\begin{equation}
\label{Pointwise convergence xi tau}
\big|B_{\tau}^{\xi}\big| \rightarrow 0 \quad \mbox{pointwise on}\quad \bar{S}_{2\sigma} \quad \mbox{as}\quad \tau \rightarrow \infty \,.
\end{equation}
Now, \eqref{Formula for A tau xi} implies that, for all $z \in \mathcal{C}_R$
\begin{equation}
\notag
\sup_{\xi} \big|A_{\tau}^{\xi}(z)\big| \;\leq\; \frac{1}{|z|}  \int_{0}^{+\infty} \ee^{-t \re z^{-1}}  \Big(\sup_{\xi} \big|B_{\tau}^{\xi}(t)\big|\Big) \, \dd t \,,
\end{equation}
which converges to zero as $\tau \rightarrow \infty$ by \eqref{Bound on B tau xi}, \eqref{Pointwise convergence xi tau}, and the dominated convergence theorem.
\end{proof}

\section{The quantum Wick theorem} \label{sec:Wick_quantum}

In this appendix we review some standard facts about bosonic quasi-free states. Throughout this appendix, let $h > 0$ be a positive self-adjoint operator on $\fra H$ satisfying $\tr \ee^{-h} < \infty$. We write $h=\sum_{k \geq 0} \lambda_k u_k u_k^*$ and abbreviate $b_k \deq b(u_k)$.
We define the quasi-free state
\begin{equation*}
\rho_0^h(\cal A) \;\deq\; \frac{\tr \p{\cal A \, \ee^{-\sum_k \lambda_k b^*_k b_k}}}{\tr(\ee^{-\sum_k \lambda_k b^*_k b_k})}\,.
\end{equation*}

\begin{lemma}[Quantum Wick theorem]
\label{Quantum Wick theorem}
With the above notations, the following holds.
\begin{enumerate}
\item
We have
\begin{equation}
\label{QWT_i_1}
\scalar{f}{G^h \, g}\;\deq\; \rho_{0}^h \pb{b^*(g) \, b(f)} \;=\; \scalarbb{f}{\frac{1}{\ee^{h} - 1}\, g}
\end{equation}
and
\begin{equation}
\label{QWT_i_2}
\rho_{0}^h\pb{b(f) \, b(g)} \;=\; \rho_{0}^h\pb{b^*(f) \, b^*(g)} \;=\; 0
\end{equation}
for all $f,g \in \fra H$.
\item
Let $\cal A_1, \dots, \cal A_n$ be operators of the form $\cal A_i = b(f_i)$ or $\cal A_i = b^*(f_i)$, where $f_1, \dots, f_n \in \fra H$. Then we have
\begin{equation}
\label{QWT_ii}
\rho_{0}^h (\cal A_1 \cdots \cal A_n) \;=\; \sum_{\Pi} \prod_{(i,j) \in \Pi} \rho_{0}^h(\cal A_i \cal A_j)\,,
\end{equation}
where the sum ranges over all pairings of $\{1, \dots, n\}$, and we label the edges of $\Pi$ using ordered pairs $(i,j)$ with $i < j$.
\end{enumerate}
\end{lemma}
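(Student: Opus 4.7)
Plan.

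\textbf{Part (i).} I would first observe that the state $\rho_0^h$ factorizes over the eigenmodes of $h$. Indeed, writing $H_0 \deq \sum_k \lambda_k b_k^* b_k$, the Fock space factorizes as $\cal F \simeq \bigotimes_k \cal F_k$ with respect to the basis $(u_k)$, and $\ee^{-H_0} = \bigotimes_k \ee^{-\lambda_k b_k^* b_k}$. By antilinearity of $f \mapsto b(f)$ and linearity of $f \mapsto b^*(f)$, it suffices to compute the three families $\rho_0^h(b_k^* b_l)$, $\rho_0^h(b_k b_l)$, and $\rho_0^h(b_k^* b_l^*)$. The gauge transformation $b_k \mapsto \ee^{\ii \theta_k} b_k$, $b_k^* \mapsto \ee^{-\ii \theta_k} b_k^*$, leaves $H_0$ invariant but rotates these quantities by $\ee^{\ii(\theta_l - \theta_k)}$, $\ee^{\ii(\theta_k+\theta_l)}$, and $\ee^{-\ii(\theta_k+\theta_l)}$ respectively, forcing all three to vanish except when $k = l$ in the first case. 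For $k = l$, the computation reduces to the single harmonic oscillator, where $\tr \ee^{-\lambda_k b_k^* b_k} = (1 - \ee^{-\lambda_k})^{-1}$ and differentiating in $\lambda_k$ gives $\rho_0^h(b_k^* b_k) = 1/(\ee^{\lambda_k} - 1)$. Summing against $\bar f u_k$ and $g u_k$ and using the spectral resolution of $h$ yields \eqref{QWT_i_1} and \eqref{QWT_i_2}.

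\textbf{Part (ii).} The key tool is the imaginary-time pull-through identity
\begin{equation*}
\ee^{-H_0} b(f) \;=\; b(\ee^{h} f) \, \ee^{-H_0}\,, \qquad \ee^{-H_0} b^*(f) \;=\; b^*(\ee^{-h} f) \, \ee^{-H_0}\,,
\end{equation*}
proved (on the dense domain of finite-particle vectors with Schwartz components) by the same differential-equation argument as in Lemma \ref{Pullthrough formula lemma} above, adapted to the value $t = -1$ of the parameter.

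I would then induct on $n$. If $n$ is odd, both sides of \eqref{QWT_ii} vanish: the left by global gauge invariance (each $\cal A_i$ carries charge $\pm 1$, and the only nonzero expectations have total charge zero), the right because odd sets admit no pairing. For even $n \geq 2$, suppose without loss of generality that $\cal A_1 = b(f)$; the case $\cal A_1 = b^*(f)$ is handled analogously. Using cyclicity of the trace together with the pull-through identity,
\begin{equation*}
\rho_0^h\pb{b(f)\, \cal A_2 \cdots \cal A_n} \;=\; \rho_0^h\pb{\cal A_2 \cdots \cal A_n\, b(\ee^{h} f)}\,.
\end{equation*}
Next, moving $b(\ee^{h} f)$ back to the left through $\cal A_2, \dots, \cal A_n$ using the CCR \eqref{CCR_b}, each commutator $[\cal A_j, b(\ee^{h} f)]$ is a scalar that vanishes unless $\cal A_j = b^*(g_j)$, in which case it equals $-\scalar{\ee^{h} f}{g_j}$. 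Collecting terms and replacing $f$ by $(1 - \ee^{h})^{-1} f$ (which is well defined on the algebraic sum of eigenspaces of $h$ since $\ee^{h} > 1$), I obtain the recursion
\begin{equation*}
\rho_0^h\pb{b(f)\, \cal A_2 \cdots \cal A_n} \;=\; \sum_{j = 2}^{n} \rho_0^h\pb{b(f) \cal A_j} \, \rho_0^h\pb{\cal A_2 \cdots \widehat{\cal A_j} \cdots \cal A_n}\,,
\end{equation*}
where the coefficient $\rho_0^h(b(f) \cal A_j) = \scalar{f}{\ee^{h}(\ee^{h}-1)^{-1} g_j}$ is precisely the two-point function from part (i) (combining \eqref{QWT_i_1} with the CCR gives $\rho_0^h(b(f) b^*(g_j)) = \scalar{f}{\ee^{h}(\ee^{h}-1)^{-1} g_j}$). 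Applying the induction hypothesis to each $(n-2)$-point factor produces a sum over pairings of $\{2, \dots, n\}\setminus\{j\}$; adjoining the pair $\{1, j\}$ yields all pairings of $\{1, \dots, n\}$ in which $1$ is paired with $j$, and summing over $j$ recovers \eqref{QWT_ii}.

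\textbf{Main obstacle.} The argument is conceptually routine, but the only subtle point is the functional-analytic justification of the pull-through identity and the unbounded manipulations above. I would handle this by interpreting all identities as densely defined quadratic-form identities on the subspace of $\Psi \in \cal F$ with only finitely many nonzero Schwartz components $\Psi^{(n)}$, on which $b(f)$, $b^*(f)$, $H_0$, and $\ee^{-H_0}$ all act unambiguously, together with dominated convergence in the spectral expansion to extend the conclusions to the trace formulas. All the algebraic steps above then become legitimate manipulations.
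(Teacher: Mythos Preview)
Your proof is correct, and for part (ii) it takes a genuinely different route from the paper.

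For part (i), both you and the paper reduce to the eigenmodes via gauge invariance and factorization, then compute the single-mode two-point function; you do it by differentiating the partition function while the paper uses the occupation-number basis directly, but this is a cosmetic difference.

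For part (ii), the paper proceeds combinatorially: it first computes $\rho_0^h\big((b_k^*)^r (b_k)^r\big) = r!\, c(k)^r$ for a single mode in normal order, then for a single mode in arbitrary order it expands the product into normal-ordered pieces by tracking ``inversions'' and ``matchings'' and matches the result term-by-term against the pairing sum; finally it invokes the mode factorization and linearity to pass to general $f_i$. Your argument instead uses the pull-through identity $\ee^{-H_0} b(f) = b(\ee^{h} f)\, \ee^{-H_0}$ together with cyclicity of the trace and the CCR to derive directly the recursion $\rho_0^h(\cal A_1 \cdots \cal A_n) = \sum_{j \geq 2} \rho_0^h(\cal A_1 \cal A_j)\, \rho_0^h(\cal A_2 \cdots \widehat{\cal A_j} \cdots \cal A_n)$, which is exactly the recursion characterizing the pairing expansion. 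This is the standard KMS-style proof of the quantum Wick theorem and is considerably shorter and more conceptual than the paper's explicit normal-ordering bookkeeping; on the other hand, the paper's argument is entirely elementary and avoids any domain issues with the unbounded substitution $f \mapsto (I - \ee^{h})^{-1} f$, which in your approach requires the reduction to finite linear combinations of eigenvectors that you note in your ``main obstacle'' paragraph.
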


Before we proceed with the proof of Lemma \ref{Quantum Wick theorem}, we record three auxiliary results. First, using \eqref{CCR_b}, we get
\begin{equation}
\label{quotient rule}
\rho_0^h(\cal A) \;=\; \frac{\tr \p{\cal A \, \ee^{-\lambda_k b^*_k b_k}}}{\tr(\ee^{-\lambda_k b^*_k b_k})}\,,
\end{equation}
whenever $\cal A$ is a polynomial in the variables $b_k$ and $b_k^*$.
Second, we have the following factorization property.
\begin{lemma}
\label{factorization property}
Let $\cal A_1, \dots, \cal A_n$ be operators of the form $\cal A_i = b_{k_i}$ or $\cal A_i = b_{k_i}^*$ for $k_1, \dots, k_n \in \N$. 
Then we have
\begin{equation*}
\rho_{0}^h (\cal A_1 \cdots \cal A_n) \;=\;\prod_{k \geq 0} 
\,\rho_{0}^h \Bigg(\prod_{i \col k_i=k}\cal{A}_i\Bigg)\,,
\end{equation*}
where the ordering of the operators in the product is always increasing in $i$.
\end{lemma}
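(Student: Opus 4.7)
The plan is to exploit the fact that the quadratic Hamiltonian $\sum_k \lambda_k b_k^* b_k$ is a sum of commuting, mode-local operators, which induces a tensor product factorization of both the Fock space and the density operator $\ee^{-\sum_k \lambda_k b_k^* b_k}$.

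First, I would use the canonical commutation relations \eqref{CCR_b} applied to the eigenbasis to see that $[b_k, b_l] = [b_k^*, b_l^*] = 0$ and $[b_k, b_l^*] = \delta_{kl}$, so any two factors $\cal A_i, \cal A_j$ with $k_i \neq k_j$ commute. By a sequence of adjacent transpositions that never swaps two operators carrying the same index, the original product can be rearranged as
\begin{equation*}
\cal A_1 \cdots \cal A_n \;=\; \prod_{k \geq 0} \bigg(\prod_{i \col k_i = k} \cal A_i\bigg),
\end{equation*}
where only finitely many $k$ contribute non-trivially and the inner products preserve the relative order in $i$.

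Second, I would invoke the standard tensor product decomposition $\cal F = \bigotimes_{k \geq 0} \cal F_k$, where $\cal F_k$ is the bosonic Fock space over the one-dimensional space $\C u_k$. Under this factorization, $b_k$ and $b_k^*$ act non-trivially only on the $k$-th tensor factor, and $\ee^{-\sum_l \lambda_l b_l^* b_l} = \bigotimes_k \ee^{-\lambda_k b_k^* b_k}$. The assumption $\tr \ee^{-h} < \infty$ together with $\tr_{\cal F_k} \ee^{-\lambda_k b_k^* b_k} = (1 - \ee^{-\lambda_k})^{-1}$ ensures $\prod_k \tr_{\cal F_k} \ee^{-\lambda_k b_k^* b_k} < \infty$, so the trace factorizes:
\begin{equation*}
\frac{\tr\pb{\cal A_1 \cdots \cal A_n \, \ee^{-\sum_l \lambda_l b_l^* b_l}}}{\tr \pb{\ee^{-\sum_l \lambda_l b_l^* b_l}}} \;=\; \prod_{k \geq 0} \frac{\tr_{\cal F_k} \pB{\pb{\prod_{i \col k_i = k} \cal A_i} \, \ee^{-\lambda_k b_k^* b_k}}}{\tr_{\cal F_k} \pb{\ee^{-\lambda_k b_k^* b_k}}}.
\end{equation*}
Finally, applying \eqref{quotient rule} to each factor on the right identifies it with $\rho_0^h\pb{\prod_{i \col k_i = k} \cal A_i}$, yielding the claim.

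The main (minor) obstacle is ensuring the rigorous meaning of the infinite tensor product of Fock spaces and the exchange of products with the trace. This is handled by first restricting to the finite-dimensional subspace spanned by the eigenvectors $u_k$ that actually appear among the indices $k_1, \dots, k_n$: on its orthogonal complement all operators $\cal A_i$ act as the identity, so both sides of the claimed identity are unchanged when the ambient trace is split into the tensor product of the (finite) trace on modes that appear and the (convergent) trace on the remaining modes, which simply cancels between numerator and denominator.
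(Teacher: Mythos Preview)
Your proof is correct and is exactly the argument the paper has in mind; the paper's own proof is the single sentence ``This is immediate from the definition of $\rho^h_0(\cdot)$ and \eqref{CCR_b},'' which your proposal unpacks faithfully (commuting operators with distinct mode indices via the CCR, then using the tensor factorization of $\cal F$ over modes so that both the Gibbs weight and the trace split). Your remark about restricting to the finitely many modes that actually occur is the right way to make the infinite product rigorous, and your invocation of \eqref{quotient rule} to identify each factor with $\rho_0^h$ is precisely what the paper records just before stating the lemma.
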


\begin{proof}
This is immediate from the definition of $\rho^h_0(\cdot)$ and \eqref{CCR_b}.
\end{proof}

\begin{lemma}[Gauge invariance]
\label{Quantum Wick Theorem Lemma}
Let $\cal A_1, \dots, \cal A_n$ be as in Lemma \ref{factorization property}.
Given $k \in \N$, we define
\begin{equation*}
\cal{N}_k^-\;\deq\; \Big|\big\{1 \leq i \leq n\col \cal A_i=b_{k}\big\}\Big|\,,\qquad \cal{N}_k^+\;\deq\; \Big|\big\{1 \leq i \leq n\col \cal A_i=b_{k}^*\big\}\Big|\,.
\end{equation*}
Then we have $\rho_{0}^h (\cal A_1 \cdots \cal A_n) = 0$
unless $\cal{N}_k^-=\cal{N}_k^+$ for all $k \in \N$.
\end{lemma}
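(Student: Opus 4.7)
The strategy is to exploit a gauge (phase) symmetry mode by mode. First I would reduce to a single-mode statement using Lemma \ref{factorization property}: since
\begin{equation*}
\rho_0^h(\cal A_1 \cdots \cal A_n) \;=\; \prod_{k \geq 0} \rho_0^h \pBB{\prod_{i \col k_i = k} \cal A_i}\,,
\end{equation*}
it suffices to show that for each fixed $k$, the one-mode expectation $\rho_0^h\pb{\prod_{i \col k_i = k} \cal A_i}$ vanishes whenever $\cal N_k^- \neq \cal N_k^+$.

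Next I would establish a one-mode pull-through formula, mimicking the argument of Lemma \ref{Pullthrough formula lemma}. Namely, using \eqref{CCR_b} and differentiating in $t$, one checks that
\begin{equation*}
\ee^{\ii t\, b_k^* b_k}\, b_k \,\ee^{-\ii t\, b_k^* b_k} \;=\; \ee^{-\ii t}\, b_k\,, \qquad \ee^{\ii t\, b_k^* b_k}\, b_k^* \,\ee^{-\ii t\, b_k^* b_k} \;=\; \ee^{\ii t}\, b_k^*\,,
\end{equation*}
as quadratic forms for all $t \in \R$. Iterating this identity on a product $\cal B_k \deq \prod_{i \col k_i = k} \cal A_i$ of $b_k$ and $b_k^*$ factors yields
\begin{equation*}
\ee^{\ii t\, b_k^* b_k}\, \cal B_k \,\ee^{-\ii t\, b_k^* b_k} \;=\; \ee^{\ii t (\cal N_k^+ - \cal N_k^-)}\, \cal B_k\,.
\end{equation*}

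The final step is to combine this with cyclicity of the trace. Since $\ee^{\ii t\, b_k^* b_k}$ commutes with $\ee^{-\lambda_k b_k^* b_k}$ (both are functions of the same operator $b_k^* b_k$), the representation \eqref{quotient rule} gives
\begin{equation*}
\rho_0^h(\cal B_k) \;=\; \frac{\tr\pb{\ee^{\ii t b_k^* b_k} \cal B_k \ee^{-\ii t b_k^* b_k}\, \ee^{-\lambda_k b_k^* b_k}}}{\tr(\ee^{-\lambda_k b_k^* b_k})} \;=\; \ee^{\ii t (\cal N_k^+ - \cal N_k^-)} \, \rho_0^h(\cal B_k)\,.
\end{equation*}
As this identity holds for every $t \in \R$, the prefactor $\ee^{\ii t (\cal N_k^+ - \cal N_k^-)}$ is not identically equal to $1$ unless $\cal N_k^+ = \cal N_k^-$, forcing $\rho_0^h(\cal B_k) = 0$ in the unbalanced case. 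Substituting into the factorization above proves the lemma.

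The only subtle point is the justification of cyclicity of the trace with the unbounded one-mode operators, but this is standard once one restricts, as in \eqref{quotient rule}, to the Fock subspace associated with mode $k$ and uses that $\ee^{-\lambda_k b_k^* b_k}$ is trace class on that subspace together with the fact that $\cal B_k$ maps finitely-supported number-basis states to finitely-supported number-basis states; the remaining computation is then a convergent sum over the one-mode number basis, where both the pull-through and cyclicity are elementary.
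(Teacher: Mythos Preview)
Your proof is correct and follows essentially the same strategy as the paper: reduce to a single mode via Lemma~\ref{factorization property}, then use the pull-through identity for conjugation by $\ee^{t b_k^* b_k}$ together with the fact that this conjugation commutes with the Gibbs weight. The only difference is cosmetic: the paper takes $t$ real and commutes the density operator $\ee^{-\lambda_k b_k^* b_k}$ itself through $\cal B_k$, obtaining $\rho_0^h(\cal B_k) = \ee^{(\cal N_k^+ - \cal N_k^-)\lambda_k}\rho_0^h(\cal B_k)$, whereas you take $t$ imaginary and use the unitary $\ee^{\ii t b_k^* b_k}$. Your variant is arguably cleaner, since the conjugating operator is bounded and the cyclicity step needs no separate justification; it also makes the ``gauge invariance'' in the lemma's title literal.
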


\begin{proof}[Proof of Lemma \ref{Quantum Wick Theorem Lemma}]
By Lemma \ref{factorization property}, 
the claim follows if we show that, for fixed $k \in \N$ we have
\begin{equation*}
\rho_{0}^h \Bigg(\prod_{i \col k_i=k}\cal{A}_i\Bigg) \;=\; \tr \Bigg(\prod_{i \col k_i=k}\cal{A}_i \,\ee^{- \lambda_k b_k^* b_k}\Bigg) \bigg/ \tr\pb{\ee^{- \lambda_k b_k^* b_k}}\;=\;0
\end{equation*}
unless $\cal{N}_k^-=\cal{N}_k^+$. This, in turn, is a consequence of the identity
\begin{equation}
\label{Wick_prod}
\rho_{0}^h \Bigg(\prod_{i \col k_i=k}\cal{A}_i\Bigg) \;=\; \ee^{(\cal{N}^+_k-\cal{N}^-_k) \lambda_{k}} \rho_{0}^h \Bigg(\prod_{i \col k_i=k}\cal{A}_i\Bigg)\,,
\end{equation}
which follows by a repeated application of $\ee^{t b_k^* b_k} \, b_k^* \, \ee^{-t b_k^* b_k} = \ee^{t} \, b_k^*$, which follows from \eqref{Pullthrough formula} by setting $\tau = 1$.
\end{proof}

\begin{proof}[Proof of Lemma \ref{Quantum Wick theorem}]

(i) The claim \eqref{QWT_i_2} follows immediately by Lemma \ref{Quantum Wick Theorem Lemma}. We now show the claim \eqref{QWT_i_1}. By linearity and Lemma \ref{Quantum Wick Theorem Lemma} it suffices to prove
\begin{equation}
\label{Wick_A}
\rho_{0}^h(b^*_k\,b_k) \;=\;\frac{1}{e^{\lambda_k}-1}\,.
\end{equation}

We prove \eqref{Wick_A} by using the occupation state basis $(\psi_{\f m})_{\f m \in \N^\N}$, an orthonormal basis of $\cal F$ defined by $\psi_{\f m} \deq \prod_{l \in \N} \frac{(b^*_l)^{m_l}}{\sqrt{m_l!}} \, \Omega$,
where $\Omega=  (1,0,0,\dots) \in \cal F$ is the vacuum state and $\f m = (m_l)_{l \in \N}$ ranges over $\N^{\N}$. Using $b_k \Omega = 0$ and \eqref{CCR_b} we easily find
\begin{equation}
\begin{cases}
\label{Occupation State Basis b_k}
b_k \,\psi_{\f m} = \sqrt{m_k}\, \psi_{\tilde{\f m}} & \text{for } \tilde{m}_l=m_l-\delta_{kl}\\
b_k^*\; \psi_{\f m} = \sqrt{m_k+1}\, \psi_{\tilde{\f m}} & \text{for } \tilde{m}_l=m_l+\delta_{kl}\,.
\end{cases}
\end{equation}
Here we use the convention that $\psi_{\f m} = 0$ if an entry of $\f m$ is negative.
In particular, we have 
\begin{equation}
\label{eigenstate}
b_k^*\, b_k \, \psi_{\f m} \;=\; m_k \,\psi_{\f m}\,.
\end{equation}
Substituting \eqref{eigenstate} into \eqref{quotient rule}, we obtain

\begin{equation*}
\rho_{0}^h(b^*_k\, b_k)
\;=\; \frac{\sum_{m_k} m_k\, \ee^{-\lambda_k m_k}}{\sum_{m_k} \ee^{-\lambda_k  m_k}}\;=\;\frac{1}{e^{\lambda_k}-1}\,,
\end{equation*}
as was claimed.

(ii) We prove this part of the Lemma by first considering several special cases and by then deducing the general result.
In what follows, we denote for fixed $k \in \mathbb{N}$
\begin{equation}
\label{c_tau}
c(k) \;\deq\; \rho_{0}^h(b^*_k\,b_k)\;=\;\frac{1}{e^{\lambda_k}-1} \,.
\end{equation}
Moreover, we abbreviate $\cal A \deq\cal A_1 \cdots \cal A_n$.
\begin{enumerate}
\item[(1)] Suppose that $n=2r$ and, for a fixed $k \in \N$, $\cal A_i=b_k^*$ for $1 \leq i \leq r$ and $\cal A_i=b_k$ for $r+1 \leq i \leq n$. In other words, the operators $\cal A_1, \dots, \cal A_n$ are \emph{normal-ordered}: all creation operators are to the left of all annihilation operators. In this case we have
\begin{equation}
\label{Wick Case 1}
\rho_{0}^h (\cal A) \;=\;r! \,\big(c(k)\big)^r\,.
\end{equation}
We note that, by Lemma \ref{Quantum Wick Theorem Lemma} and \eqref{c_tau}, the expression on the right-hand side of \eqref{Wick Case 1} indeed equals to $\sum_{\Pi} \prod_{(i,j) \in \Pi} \rho_{0}^h(\cal A_i \cal A_j)$ in this case. We prove \eqref{Wick Case 1} by again using the occupation state basis. Namely, using \eqref{quotient rule} and \eqref{Occupation State Basis b_k}, we have
\begin{equation*}
\rho_{0}^h \Big((b_k^*)^r \,(b_k)^r \Big) \;=\;\frac{\sum_{m_k} m_k (m_k-1) \cdots (m_k-r+1)\,\ee^{-\lambda_k m_k}}{\sum_{m_k} \ee^{-\lambda_k m_k}}\;=\; r! \, \bigg(\frac{1}{e^{\lambda_k}-1}\bigg)^r\,,
\end{equation*}
as was claimed.
\item[(2)] Suppose that, for a fixed $k \in \N$ and for all $1 \leq i \leq n$, $\cal A_i=b^*_k$ or $\cal A_i=b_k$. 

Now the operators are no longer normal-ordered.
By Lemma \ref{Quantum Wick Theorem Lemma}, it follows that
\begin{equation*}
\rho_{0}(\cal A)\;=\;\sum_{\Pi} \prod_{(i,j) \in \Pi} \rho_{0}^h(\cal A_i \cal A_j)\;=\;0
\end{equation*}
unless $n=2r$ is even and that there are exactly $r$ values of $i$ for which $\cal A_i=b_k$. 
Let us now assume that this is the case.

If $\cal B_1, \ldots, \cal B_p$ are operators each of which is $b_k$ or $b^*_k$, then we write 
\begin{equation*}
\col \prod_{j=1}^{p} \cal B_j \col \;=\; (b^*_k)^{\cal N^+} \, (b_k)^{p-\cal N^+}
\end{equation*}
for the normal-ordering of $\prod_{j=1}^{p} \cal B_j$, where $\cal N^+$ is the number of $1 \leq j \leq p$ such that $\cal B_j = b^*_k$. (This is a slight abuse of notation since, in \eqref{Wick ordering quadratic}, we used $\col \cdot \col$ to denote the renormalized product of two operators that are linear in $b_k, b_k^*$. Since we never use these two operations simultaneously, it will be clear from context to which one we are referring.)

Given the sequence $\f {\cal A} =(\cal A_1, \ldots, \cal A_n)$ as in the assumptions, we say that the pair $(i,j)$ is an \emph{inversion} if $1 \leq i <j \leq n$ and $\cal A_i=b_k, \,\cal A_j=b^*_k$. We denote the set of all inversions by $\fra I \equiv  \fra I (\f {\cal A})$.
Furthermore, for $0 \leq s \leq r$, we say that $\cal M= \{(i_1,j_1),\ldots,(i_s,j_s)\}$ is an \emph{$s$-matching} if
\begin{enumerate}
\item[(i)] $i_1,\ldots,i_s,j_1,\ldots,j_s$ are distinct elements of $\{1,\ldots,n \}$.
\item[(ii)] $(i_q,j_q) \in \fra I$ for all $q=1,\ldots,s$. 
\end{enumerate}
Finally, given an $s$-matching $\cal M$, we write
\begin{equation*}
\cal A^{\cal M} \;\deq\; \cal A_1 \cdots  \hat{\cal A}_{i_1} \cdots \hat{\cal A}_{j_s} \cdots \cal A_n\,,
\end{equation*}
where we have omitted the factors $\cal A_{i_1}, \ldots, \cal A_{i_s}, \cal A_{j_1},\ldots,\cal A_{j_s}$ from the product and we have kept the original ordering which was increasing in $i$.
We denote the set of all $s$-matchings by $\bb M_s \equiv \bb M_s (\f {\cal A})$.

We then have
\begin{equation}
\label{normal ordered sum}
\cal A \;=\; \sum_{s=0}^{r} \sum_{\cal M \in \bb M_s} \col \cal A^{\cal M}\col\,.
\end{equation}
This claim is proved by induction on the number of inversions.

Namely, in the base case, where there are zero inversions, the operator $\cal A$ is already normal-ordered and the identity \eqref{normal ordered sum} immediately follows. For the induction step, we assume that the claim holds when there are $q \geq 0$ inversions. Suppose that $\f {\cal A} = (\cal A_1,\ldots,\cal A_n)$ has $q+1$ inversions. Then, we can find $1 \leq i \leq n$ such that $(i,i+1)$ is an inversion. We choose $i$ to be minimal. We note that by \eqref{CCR_b} we have
\begin{equation}
\label{sum induction step}
\cal A \;=\; \cal A_1 \cdots \cal A_{i-1} \cal A_{i+1} \cal A_i  \cal A_{i+2} \cdots \cal A_n + \cal A_1 \cdots \cal A_{i-1}  \cal A_{i+2} \cdots \cal A_n \,.
\end{equation}
Furthermore, we note that, in the sequences 
\begin{equation*}
\f {\cal A}_1\;=\;(\cal A_i ,\ldots, \cal A_{i-1},\cal A_{i+1},\cal A_i, \cal A_{i+2},\ldots,\cal A_n),\quad \f {\cal A}_2\;=\;(\cal A_i ,\ldots, \cal A_{i-1}, \cal A_{i+2},\ldots,\cal A_n)
\end{equation*}
there are exactly $q$ inversions. Therefore, we can apply the induction hypotheses to them. 
We can now deduce \eqref{normal ordered sum}. The contribution from $\f {\cal A}_1$ corresponds to the sum over all $\cal M$ with $(i,i+1) \notin \cal M$ and the contribution from $\f {\cal A}_2$ corresponds to the sum over all $\cal M$ with $(i,i+1) \in \cal M$.

Using \eqref{CCR_b}, we can write the right-hand side of \eqref{QWT_ii} as
$\sum_{\Pi} \prod_{(i,j) \in \Pi} \big(c(k)+\f 1_{\fra I} (i,j) \big)$.
Here $\f 1_{\fra I} (i,j)=1$ if $(i,j) \in \fra I$ and it is $0$ otherwise.
Multiplying everything out, recalling the result of case (1), and using \eqref{normal ordered sum} we obtain the claim in this case.

\item[(3)] Suppose that $\cal A_1, \dots, \cal A_n$ are operators of the form $\cal A_i = b_{k_i}$ or $\cal A_i = b_{k_i}^*$ for $k_1, \dots, k_n \in \N$. 

The claim in this case follows from case (2) and from Lemma \ref{factorization property}.

\item[(4)] The general case follows from case (3) and linearity. \qedhere
\end{enumerate}
\end{proof}

\section{Estimates on the quantum Green function} \label{sec:GF_estimates}

\begin{lemma} \label{lem:conv_G_tau}
Suppose that $h,h_\tau > 0$ satisfy $h^{-1}, h_\tau^{-1} \in \fra S^2(\fra H)$ and $\lim_{\tau \to \infty} \norm{h_\tau^{-1} - h^{-1}}_{\fra S^2(\fra H)} = 0$. Then
\begin{equation} \label{conv_h_tau}
\lim_{\tau \to \infty} \, \normbb{\frac{1}{\tau (\ee^{h_\tau / \tau} - 1)} - h^{-1}_\tau}_{\fra S^2(\fra H)} \;=\; 0\,,
\end{equation}
and
\begin{equation} \label{conv_tr_htau}
\lim_{\tau \to \infty} \frac{1}{\tau} \tr \frac{1}{\tau (\ee^{h_\tau/\tau} - 1)} \;=\; 0\,.
\end{equation}
\end{lemma}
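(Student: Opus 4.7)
I will reduce both claims to statements about sums over eigenvalues of $h_\tau$, and then apply a generalized dominated convergence theorem. Since $h_\tau^{-1} \in \fra S^2(\fra H)$, the operator $h_\tau$ has purely discrete spectrum, which I enumerate with multiplicity in increasing order as $(\lambda_{\tau,k})_{k \in \N}$; analogously for $h$, with eigenvalues $(\lambda_k)_{k \in \N}$. Functional calculus for $h_\tau$ lets me rewrite the left-hand sides of \eqref{conv_h_tau} and \eqref{conv_tr_htau} as
\begin{equation*}
\normbb{\frac{1}{\tau (\ee^{h_\tau / \tau} - 1)} - h^{-1}_\tau}_{\fra S^2(\fra H)}^2 \;=\; \sum_{k \in \N} \frac{\phi(\lambda_{\tau,k}/\tau)^2}{\lambda_{\tau,k}^2}\,, \qquad \frac{1}{\tau}\tr \frac{1}{\tau(\ee^{h_\tau/\tau}-1)} \;=\; \sum_{k \in \N}  \frac{\psi(\lambda_{\tau,k}/\tau)}{\lambda_{\tau,k}^2}\,,
\end{equation*}
where
\begin{equation*}
\phi(y) \;\deq\; 1 - \frac{y}{\ee^y - 1}\,, \qquad \psi(y) \;\deq\; \frac{y^2}{\ee^y - 1}\,.
\end{equation*}
A direct calculation (Taylor expansion at $0$, subexponential decay at $\infty$) shows that $\phi$ and $\psi$ are continuous and bounded on $[0,\infty)$ by some constant $M$, and satisfy $\phi(0) = \psi(0) = 0$.

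I would next establish pointwise convergence of the summands. Since $\norm{\cdot}_{\fra S^\infty} \leq \norm{\cdot}_{\fra S^2}$, the assumption yields $\norm{h_\tau^{-1} - h^{-1}}_{\fra S^\infty} \to 0$. Weyl's inequality for compact self-adjoint operators then gives that for each fixed $k \in \N$ the $k$-th largest eigenvalue of $h_\tau^{-1}$ converges to that of $h^{-1}$, i.e.\ $\lambda_{\tau,k} \to \lambda_k < \infty$ as $\tau \to \infty$. In particular $\lambda_{\tau,k}/\tau \to 0$, so that $\phi(\lambda_{\tau,k}/\tau) \to 0$ and $\psi(\lambda_{\tau,k}/\tau) \to 0$ for each fixed $k$, and hence both summands above tend to $0$ pointwise in $k$.

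It then remains to interchange the limit with the sum. I would invoke the generalized dominated convergence theorem for series: if $|a_{\tau,k}| \leq b_{\tau,k}$ with $a_{\tau,k} \to 0$ and $b_{\tau,k} \to b_k$ pointwise in $k$, and $\sum_k b_{\tau,k} \to \sum_k b_k < \infty$, then $\sum_k a_{\tau,k} \to 0$. I take $b_{\tau,k} \deq M^2/\lambda_{\tau,k}^2$ for the first sum and $b_{\tau,k} \deq M/\lambda_{\tau,k}^2$ for the second, in both cases dominating the corresponding summand by boundedness of $\phi,\psi$. The total mass is $M^2 \norm{h_\tau^{-1}}_{\fra S^2}^2$ (resp.\ $M \norm{h_\tau^{-1}}_{\fra S^2}^2$), which by the hypothesis converges to $M^2 \norm{h^{-1}}_{\fra S^2}^2 < \infty$ (resp.\ $M \norm{h^{-1}}_{\fra S^2}^2 < \infty$), while by Weyl again $1/\lambda_{\tau,k}^2 \to 1/\lambda_k^2$ pointwise. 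The generalized DCT then yields \eqref{conv_h_tau} and \eqref{conv_tr_htau}. The only real subtlety is the eigenvalue-level convergence $\lambda_{\tau,k} \to \lambda_k$, which is ensured by passing from $\fra S^2$- to operator-norm convergence and applying the Weyl bound; the rest is bookkeeping.
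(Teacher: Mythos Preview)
Your proof is correct and, in fact, cleaner than the paper's. Both arguments begin the same way: diagonalize, reduce to sums over eigenvalues, and establish $\lambda_{\tau,k}\to\lambda_k$ for each fixed $k$ (the paper cites the infinite-dimensional Hoffman--Wielandt inequality; your use of Weyl's inequality via operator-norm convergence achieves the same thing). The divergence is in how the tail is controlled. For \eqref{conv_h_tau} the paper does an explicit $\epsilon$-truncation by hand: pick $K$ so that $\sum_{k>K}\lambda_k^{-2}\leq\epsilon$, show $\sum_{k>K}\lambda_{\tau,k}^{-2}\leq\epsilon$ in the limit, and handle $k\leq K$ directly. For \eqref{conv_tr_htau} the paper splits according to $\lambda_{\tau,k}\lessgtr\tau$, bounds the large-eigenvalue part by the same tail, and for the small-eigenvalue part introduces a probability measure $\nu_\tau=Z_\tau^{-1}\sum_k\lambda_{\tau,k}^{-2}\delta_{\lambda_{\tau,k}}$ and runs a contradiction argument. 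Your single invocation of the generalized dominated convergence theorem with varying dominants $b_{\tau,k}=C/\lambda_{\tau,k}^2$, together with $\sum_k b_{\tau,k}=C\norm{h_\tau^{-1}}_{\fra S^2}^2\to C\norm{h^{-1}}_{\fra S^2}^2$, handles both limits at once and avoids the case splits entirely. The paper's route has the minor advantage of being self-contained (no black-box appeal to Pratt--Scheff\'e), but yours is shorter and treats the two statements uniformly.
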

\begin{proof}
We begin with \eqref{conv_h_tau}.
Note first that
\begin{multline} \label{convG2_tau}
\tr (h_\tau^{-2} - h^{-2}) \;\leq\; \norm{h_\tau^{-1} (h_\tau^{-1} - h^{-1})}_{\fra S^1(\fra H)} + \norm{(h_\tau^{-1} - h^{-1}) h^{-1}}_{\fra S^1(\fra H)}
\\
\leq\; \pb{\norm{h_\tau^{-1}}_{\fra S^2(\fra H)} + \norm{h^{-1}}_{\fra S^2(\fra H)}} \norm{h_\tau^{-1} - h^{-1}}_{\fra S^2(\fra H)} \to 0\,.
\end{multline}
Let $(\lambda_k)$ and $(\lambda_{\tau,k})$ denote the eigenvalues of $h$ and $h_\tau$ respectively. We order the eigenvalues in a nondecreasing fashion. Then we have that
\begin{equation} \label{conv_lambda_k1}
\lim_{\tau \to \infty} \lambda_{\tau,k} \;=\; \lambda_k
\end{equation}
for all $k$.
Here we used the infinite-dimensional version of the Hoffman-Wielandt inequality \cite[Theorem 2]{BhatiaElsner} applied to the operators $h_\tau^{-1}$ and $h^{-1}$. (Note that, since the eigenvalues are ordered in a nondecreasing fashion, there is no permutation of the indices. Moreover, even though the spectrum of $h^{-1}$ has a limit point at zero, we always consider the convergence in \eqref{conv_lambda_k1} for a fixed, finite $k$. Therefore, the additional subtlety arising from working with extended enumerations of the eigenvalues as in \cite{BhatiaElsner} is not present.)

Next, let $\epsilon > 0$ and pick $K = K(\epsilon)$ such that $\sum_{k > K} \lambda_k^{-2} \leq \epsilon$.
From \eqref{convG2_tau} and \eqref{conv_lambda_k1} we deduce that
\begin{equation} \label{conv_lambda_k2}
\lim_{\tau \to \infty} \sum_{k > K} \frac{1}{\lambda_{\tau,k}^2} \;\leq\; \epsilon\,.
\end{equation}
Using $\tau (\ee^{\lambda/\tau} - 1) \geq \lambda$ for all $\tau,\lambda > 0$ we get the estimate
\begin{align}
\label{intermediate step}
\normbb{\frac{1}{\tau (\ee^{h_\tau / \tau} - 1)} - h_\tau^{-1}}_{\fra S^2(\fra H)}^2 &\;=\; \sum_{k \in \N} \pbb{\frac{1}{\tau (\ee^{\lambda_{\tau,k}/\tau} - 1)} - \frac{1}{\lambda_{\tau,k}}}^2
\\
&\;\leq\; \sum_{k  = 0}^K \pbb{\frac{1}{\tau (\ee^{\lambda_{\tau,k}/\tau} - 1)} - \frac{1}{\lambda_{\tau,k}}}^2 + \sum_{k > K} \frac{1}{\lambda_{\tau,k}^2}\,,
\end{align}
from which we deduce, together with \eqref{conv_lambda_k1} and \eqref{conv_lambda_k2} that
\begin{equation*}
\limsup_{\tau \to \infty} \normbb{\frac{1}{\tau (\ee^{h_\tau / \tau} - 1)} - h_\tau^{-1}}_{\fra S^2(\fra H)}^2 \;\leq\; \epsilon\,.
\end{equation*}
Since $\epsilon > 0$ was arbitrary, \eqref{conv_h_tau} follows.

Next, we prove \eqref{conv_tr_htau}. We write
\begin{multline} \label{est_tr_htau}
\frac{1}{\tau} \tr \frac{1}{\tau (\ee^{h_\tau/\tau} - 1)} \;=\; \frac{1}{\tau^2} \sum_k  \frac{\ind{\lambda_{\tau,k} \leq \tau}}{\ee^{\lambda_{\tau,k}/\tau} - 1} + \frac{1}{\tau^2} \sum_k  \frac{\ind{\lambda_{\tau,k} > \tau}}{\ee^{\lambda_{\tau,k}/\tau} - 1}
\\
\leq\; \frac{C}{\tau} \sum_k  \frac{\ind{\lambda_{\tau,k} \leq \tau}}{\lambda_{\tau,k}} + C \sum_k  \frac{\ind{\lambda_{\tau,k} > \tau}}{\lambda_{\tau,k}^2}\,.
\end{multline}
Using \eqref{conv_lambda_k1}, $\lim_{k \to \infty} \lambda_k = \infty$, and \eqref{conv_lambda_k2}, we easily find that the second term on the right-hand side of \eqref{est_tr_htau} vanishes in the limit $\tau \to \infty$.

In order to estimate the first term on the right-hand side of \eqref{est_tr_htau}, we introduce the probability measure
\begin{equation}
\label{nu_tau}
\nu_\tau \;\deq\; \frac{1}{Z_\tau} \sum_k \frac{1}{\lambda_{\tau,k}^2} \, \delta_{\lambda_{\tau,k}}\,,
\end{equation}
where $Z_\tau > 0$ is a normalization constant. Hence, the first term of \eqref{est_tr_htau} is equal to
\begin{equation*}
\frac{C Z_\tau}{\tau} \int \nu_\tau(\dd x) \, x \, \ind{x \leq \tau}\,.
\end{equation*}
By \eqref{convG2_tau}, we know that $Z_\tau$ converges to some $Z > 0$ as $\tau \to \infty$, so that it suffices to prove that
\begin{equation*}
\lim_{\tau \to \infty} \frac{1}{\tau} \int \nu_\tau(\dd x) \, x \, \ind{x \leq \tau} \;=\; 0\,.
\end{equation*}
We argue by contradiction and suppose that there exists an $\epsilon > 0$ and a sequence $\tau_i \to \infty$ such that
\begin{equation} \label{contr_nu}
\frac{1}{\tau_i} \int \nu_{\tau_i}(\dd x) \, x \, \ind{x \leq \tau_i} \;\geq\; \epsilon
\end{equation}
for all $i$. By the convergence $Z_\tau \to Z$, \eqref{conv_lambda_k1}, $\lim_{k \to \infty} \lambda_k = \infty$, and \eqref{conv_lambda_k2}, we find that there exists a $x_* > 0$ such that $\nu_{\tau_i}([x_*,\infty)) \leq \epsilon/2$ for all $i$. Suppose without loss of generality that $\tau_i > x_*$ for all $i$. Then we get
\begin{align*}
\frac{1}{\tau_i} \int \nu_{\tau_i}(\dd x) \, x \, \ind{x \leq \tau_i} &\;=\; \frac{1}{\tau_i} \int \nu_{\tau_i}(\dd x) \, x \, \ind{x \leq x_*} +
\frac{1}{\tau_i} \int \nu_{\tau_i}(\dd x) \, x \, \ind{x_* < x \leq \tau_i}
\\
&\;\leq\; \frac{x_*}{\tau_i} + \frac{\epsilon}{2} \;<\; \epsilon
\end{align*}
for large enough $i$. This is the desired contradiction to \eqref{contr_nu}. This concludes the proof of \eqref{conv_tr_htau}.
\end{proof}

\begin{lemma} \label{cor:conv_G_tau corollary}
With the notations and assumptions as in Lemma \ref{lem:conv_G_tau}, we have that uniformly in $t \in (-1,1)$ 
\begin{equation} 
\label{conv_h_zeta_tau}
\lim_{\tau \to \infty} \, (1 + t) \normbb{\frac{\ee^{-t h_\tau / \tau}}{\tau (\ee^{h_\tau / \tau} - 1)} - h^{-1}_\tau}_{\fra S^2(\fra H)} \;=\; 0\,.
\end{equation}
\end{lemma}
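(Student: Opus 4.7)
The plan is to extend the argument of Lemma \ref{lem:conv_G_tau} to incorporate the parameter $t$, using the spectral decomposition $h_\tau = \sum_k \lambda_{\tau,k} u_{\tau,k} u_{\tau,k}^*$ to rewrite
\[
\normbb{\frac{\ee^{-th_\tau/\tau}}{\tau(\ee^{h_\tau/\tau}-1)} - h_\tau^{-1}}_{\fra S^2}^2 \;=\; \frac{1}{\tau^2} \sum_k h_t(\lambda_{\tau,k}/\tau)^2,\qquad h_t(x) \;\deq\; \frac{\ee^{-tx}}{\ee^x - 1} - \frac{1}{x},
\]
and then to prove that $(1+t)^2$ times the right-hand side tends to zero as $\tau \to \infty$, uniformly in $t \in (-1,1)$. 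The approach is to split the sum at some large but fixed cutoff $K$ and treat the two pieces separately.

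The two pointwise ingredients that I would establish first are the following. A direct Taylor expansion at $x=0$ yields $h_t(x) = -(t+\tfrac12) + O(x)$ with the $O(x)$ constant uniform in $t$ on bounded sets, giving in particular
\[
\sup_{t \in (-1,1),\, x \in (0,1]} \absb{h_t(x)} \;\leq\; C_1.
\]
Second, rewriting $G_{\tau,t}(\lambda) \deq \ee^{-(1+t)\lambda/\tau}\big/\tau(1-\ee^{-\lambda/\tau})$ and using the elementary bound $\tau(1-\ee^{-\lambda/\tau}) \geq \tfrac12 \min(\lambda,\tau)$ (obtained by comparing $1-\ee^{-y}$ to $y/2$ for $y \leq \log 2$ and to $1/2$ for $y \geq \log 2$), together with $\sup_{u \geq 0} u^2 \ee^{-2u} = \ee^{-2}$ applied with $u = (1+t)\lambda/\tau$ in the region $\lambda > \tau$, one obtains the uniform pointwise bound
\[
(1+t)^2 \, G_{\tau,t}(\lambda)^2 \;\leq\; \frac{C_2}{\lambda^2} \qquad \text{for all } \tau,\lambda > 0,\ t \in (-1,1).
\]

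Given these inputs, the argument concludes as in Lemma \ref{lem:conv_G_tau}. Fix $\epsilon > 0$. Since $\|h_\tau^{-1} - h^{-1}\|_{\fra S^2} \to 0$ by assumption, the Hoffman--Wielandt inequality (used exactly as in the proof of Lemma \ref{lem:conv_G_tau}) gives $\lambda_{\tau,k} \to \lambda_k$ for each $k$ and $\sum_k \lambda_{\tau,k}^{-2} \to \sum_k \lambda_k^{-2}$; thus choosing $K$ so that $\sum_{k > K} \lambda_k^{-2} \leq \epsilon$ gives $\sum_{k > K} \lambda_{\tau,k}^{-2} \leq 2\epsilon$ for $\tau$ large. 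For the tail $k > K$, the elementary splitting $(G_{\tau,t}(\lambda)-\lambda^{-1})^2 \leq 2G_{\tau,t}(\lambda)^2 + 2\lambda^{-2}$ combined with the uniform pointwise bound above yields a contribution at most $(2C_2 + 8)\cdot 2\epsilon$, uniformly in $t$. For the finite head $k \leq K$, $\lambda_{\tau,k}/\tau \to 0$ for each fixed $k$, so the uniform bound on $h_t$ bounds the contribution by $4 C_1^2 K/\tau^2$, which vanishes uniformly in $t$. Sending first $\tau \to \infty$ and then $\epsilon \to 0$ proves the claim.

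The main technical point is the pointwise estimate $(1+t)^2 G_{\tau,t}(\lambda)^2 \leq C_2/\lambda^2$: the prefactor $(1+t)$ in the statement of the lemma is essential in the regime $\lambda > \tau$ with $t$ close to $-1$, where $G_{\tau,t}(\lambda) \sim \ee^{-(1+t)\lambda/\tau}/\tau$ is not itself bounded by a constant times $\lambda^{-1}$, and one must convert the factor $(1+t)^2$ into the required polynomial decay via the inequality $u^2 \ee^{-2u} \leq \ee^{-2}$ with $u = (1+t)\lambda/\tau$.
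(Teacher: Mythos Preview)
Your proof is correct and uses essentially the same spectral-calculus ingredients as the paper, in particular the key observation that in the regime $\lambda > \tau$ the factor $(1+t)^2$ can be traded for $\tau^2/\lambda^2$ via $u^2\ee^{-2u}\leq\ee^{-2}$ with $u=(1+t)\lambda/\tau$. The organization differs slightly: the paper first subtracts off the $t=0$ case using \eqref{conv_h_tau} and then splits the eigenvalue sum at the moving threshold $\lambda_{\tau,k}=\tau$, reducing both pieces to quantities already shown to vanish in Lemma~\ref{lem:conv_G_tau}; you instead work directly with $G_{\tau,t}-h_\tau^{-1}$, establish a clean uniform pointwise bound $(1+t)^2G_{\tau,t}(\lambda)^2\leq C_2/\lambda^2$ valid for all $\lambda,\tau$, and split at a fixed $K$ as in the proof of \eqref{conv_h_tau}. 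Your route is arguably more self-contained, while the paper's recycles the auxiliary sums $\tfrac{1}{\tau}\sum_k\ind{\lambda_{\tau,k}\leq\tau}/\lambda_{\tau,k}$ and $\sum_k\ind{\lambda_{\tau,k}>\tau}/\lambda_{\tau,k}^2$ already controlled there.
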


\begin{proof}
By \eqref{conv_h_tau}, the claim \eqref{conv_h_zeta_tau} is equivalent to
\begin{equation} 
\label{conv_h_zeta_tau B}
\lim_{\tau \to \infty} (1 + t)  \normbb{\frac{\ee^{-t h_\tau / \tau}}{\tau (\ee^{h_\tau / \tau} - 1)} - \frac{1}{\tau (\ee^{h_\tau / \tau} - 1)}}_{\fra S^2(\fra H)} \;=\; 0
\end{equation}
uniformly in $t \in (-1,1)$.
We write
\begin{equation} \label{HS_t_est}
(1 + t)^2\normbb{\frac{\ee^{-t h_\tau / \tau}}{\tau (\ee^{h_\tau / \tau} - 1)}-\frac{1}{\tau (\ee^{h_\tau / \tau} - 1)}}_{\fra S^2(\fra H)}^2 \;=\; \sum_{k} (1 + t)^2 \pbb{\frac{\ee^{-t \lambda_{\tau,k}/\tau}-1}{\tau (\ee^{\lambda_{\tau,k}/\tau} - 1)}}^2\,.
\end{equation}
and split the sum into the parts $\lambda_{\tau,k} \leq \tau$ and $\lambda_{\tau,k} > \tau$. The first piece is estimated by
\begin{equation*}
    \sum_{k} \ind{\lambda_{\tau,k} \leq \tau} (1 + t)^2 \pbb{\frac{C \abs{t} \lambda_{\tau,k} / \tau}{\lambda_{\tau,k}}}^2 \;\leq\; \frac{C}{\tau^2} \sum_{k} \ind{\lambda_{\tau,k} \leq \tau} \;\leq\; \frac{C}{\tau} \sum_{k} \frac{\ind{\lambda_{\tau,k} \leq \tau}}{\lambda_{\tau,k}}\,,
\end{equation*}
which goes to zero as $\tau \to \infty$, as shown after \eqref{est_tr_htau}.

The second piece of \eqref{HS_t_est} is estimated as
\begin{multline*}
\sum_{k} \ind{\lambda_{\tau,k} > \tau} (1 + t)^2 \pbb{\frac{1 + \ee^{- t \lambda_{\tau,k}/\tau}}{\tau \ee^{\lambda_{\tau,k}/\tau}}}^2 \;\leq\;
C \sum_{k} \frac{\ind{\lambda_{\tau,k} > \tau}}{\tau^2} (1 + t)^2 \pb{\ee^{-2\lambda_{\tau,k}/\tau} + \ee^{-2(1 + t)\lambda_{\tau,k}/\tau}}
\\
\leq\; C \sum_{k} \ind{\lambda_{\tau,k} > \tau} (1 + t)^2 \pbb{\frac{1}{\lambda_{\tau,k}^2} + \frac{1}{(1 + t)^2\lambda_{\tau,k}^2}}
\;\leq\; C \sum_{k} \frac{\ind{\lambda_{\tau,k} > \tau}}{\lambda_{\tau,k}^2}\,,
\end{multline*}
which goes to zero as $\tau \to \infty$, as shown after \eqref{est_tr_htau}. This concludes the proof.
\end{proof}
An analogous result to Lemma \ref{cor:conv_G_tau corollary} holds under slightly different assumptions.

\begin{lemma} \label{cor:conv_G_tau corollary2}
Suppose that $h_{\tau}>0$ is such that $h_{\tau}^{-1} \in \fra S^2(\fra H)$. Furthermore, suppose that there exists $\rad \in (0,1)$ and a sequence $(\lambda_k)_k$ of positive real numbers such that 
\begin{equation}
\label{lambda comparison}
(1-\rad) \lambda_k \;\leq\; \lambda_{\tau,k} \; \leq \; (1+\rad) \lambda_{k}
\end{equation}
for all $\tau,k$, where $\lambda_{\tau,k}$ denote the eigenvalues of $h_{\tau}$.
Then \eqref{conv_h_zeta_tau} holds uniformly in $t \in (-1,1)$. 
\end{lemma}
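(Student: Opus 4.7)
The plan is to follow the structure of the proof of Lemma \ref{cor:conv_G_tau corollary} but replace the input hypothesis (Hilbert--Schmidt convergence of $h_\tau^{-1}$ to $h^{-1}$) with the two-sided spectral comparison \eqref{lambda comparison}. The key observation is that the estimates in Lemmas \ref{lem:conv_G_tau}--\ref{cor:conv_G_tau corollary} ultimately only use two quantitative facts about the spectrum $(\lambda_{\tau,k})$: the tail estimate $\sum_k \ind{\lambda_{\tau,k} > \tau}/\lambda_{\tau,k}^2 \to 0$ and the low-frequency estimate $\tau^{-1}\sum_k \ind{\lambda_{\tau,k} \leq \tau}/\lambda_{\tau,k} \to 0$. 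Note that, by \eqref{lambda comparison} and $h_\tau^{-1} \in \fra S^2(\fra H)$, the comparison sequence satisfies $\sum_k \lambda_k^{-2} \leq (1-\rad)^{-2} \|h_\tau^{-1}\|_{\fra S^2(\fra H)}^2 < \infty$, and in particular $\lambda_k \to \infty$.

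First, I would reduce \eqref{conv_h_zeta_tau} to \eqref{conv_h_zeta_tau B} by establishing the analogue of \eqref{conv_h_tau} in the present setting. Writing
\[
\normbb{\frac{1}{\tau(\ee^{h_\tau/\tau}-1)} - h_\tau^{-1}}_{\fra S^2(\fra H)}^2 = \sum_k \pbb{\frac{1}{\tau(\ee^{\lambda_{\tau,k}/\tau}-1)} - \frac{1}{\lambda_{\tau,k}}}^2,
\]
we use $0 \leq \lambda^{-1} - [\tau(\ee^{\lambda/\tau}-1)]^{-1} \leq C \min\{1/\lambda, 1/\tau\}$ to bound each summand by $C\ind{\lambda_{\tau,k}\leq\tau}/\tau^2 + C\ind{\lambda_{\tau,k}>\tau}/\lambda_{\tau,k}^2$, which is handled by the two key estimates below. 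Once this is done, the triangle inequality reduces \eqref{conv_h_zeta_tau} to proving \eqref{conv_h_zeta_tau B} uniformly in $t \in (-1,1)$.

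Second, proceeding exactly as in \eqref{HS_t_est} and the lines following it, I would bound
\[
(1+t)^2\normbb{\frac{\ee^{-th_\tau/\tau}}{\tau(\ee^{h_\tau/\tau}-1)} - \frac{1}{\tau(\ee^{h_\tau/\tau}-1)}}_{\fra S^2(\fra H)}^2 \leq \frac{C}{\tau}\sum_k \frac{\ind{\lambda_{\tau,k}\leq\tau}}{\lambda_{\tau,k}} + C\sum_k \frac{\ind{\lambda_{\tau,k}>\tau}}{\lambda_{\tau,k}^2},
\]
where the bound on each summand uses $(1+t)^2 \leq \max\{(1+t)\lambda_{\tau,k}/\tau, 1\}$ in the first regime and $\ee^{-2(1+t)\lambda_{\tau,k}/\tau} \leq C(1+t)^{-2}(\lambda_{\tau,k}/\tau)^{-2}$ in the second.

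The main (but still routine) step is to use \eqref{lambda comparison} to control the two sums above. For the tail sum, $\lambda_{\tau,k} > \tau$ implies $\lambda_k > \tau/(1+\rad)$ and $\lambda_{\tau,k}^{-2} \leq (1-\rad)^{-2}\lambda_k^{-2}$, so
\[
\sum_k \frac{\ind{\lambda_{\tau,k}>\tau}}{\lambda_{\tau,k}^2} \leq \frac{1}{(1-\rad)^2}\sum_k \frac{\ind{\lambda_k > \tau/(1+\rad)}}{\lambda_k^2} \;\longrightarrow\; 0
\]
as $\tau \to \infty$, since $\sum_k \lambda_k^{-2} < \infty$. For the low-frequency sum, I would fix $\epsilon > 0$, pick $\lambda_* > 0$ with $\sum_{k:\lambda_k>\lambda_*} \lambda_k^{-2} < \epsilon$, and using $\ind{\lambda_{\tau,k}\leq \tau} \leq \ind{\lambda_k \leq \tau/(1-\rad)}$ and $\lambda_{\tau,k}^{-1} \leq (1-\rad)^{-1}\lambda_k^{-1}$, split
\[
\frac{1}{\tau}\sum_k \frac{\ind{\lambda_{\tau,k}\leq\tau}}{\lambda_{\tau,k}} \leq \frac{1}{(1-\rad)\tau}\sum_{k:\lambda_k\leq\lambda_*}\frac{1}{\lambda_k} + \frac{1}{(1-\rad)\tau}\sum_{k:\lambda_*<\lambda_k\leq\tau/(1-\rad)} \frac{\lambda_k}{\lambda_k^2}.
\]
The first sum is finite (finitely many terms since $\lambda_k\to\infty$), contributing $O(1/\tau)$; bounding $\lambda_k \leq \tau/(1-\rad)$ in the second, it is at most $(1-\rad)^{-2}\epsilon$. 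Letting $\tau \to \infty$ then $\epsilon \to 0$ gives the desired convergence. Since none of the estimates depend on $t$, the convergence is uniform in $t \in (-1,1)$, completing the proof.
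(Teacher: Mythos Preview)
Your proposal is correct and follows essentially the same route as the paper: first establish the analogue of \eqref{conv_h_tau} under the new hypothesis, then reduce to \eqref{conv_h_zeta_tau B}, and finally show that the two sums $\tau^{-1}\sum_k \ind{\lambda_{\tau,k}\leq\tau}/\lambda_{\tau,k}$ and $\sum_k \ind{\lambda_{\tau,k}>\tau}/\lambda_{\tau,k}^2$ vanish as $\tau\to\infty$ by transferring to the comparison sequence $(\lambda_k)$ via \eqref{lambda comparison}. The only cosmetic differences are that the paper obtains \eqref{conv_h_tau} by dominated convergence (using $|1/[\tau(\ee^{\lambda_{\tau,k}/\tau}-1)]-1/\lambda_{\tau,k}|\leq C/\lambda_k$) rather than your low/high-frequency split, and handles the low-frequency sum by invoking the probability-measure argument after \eqref{est_tr_htau} rather than your direct $\epsilon$-splitting; these are equivalent. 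One small remark: your parenthetical ``$(1+t)^2\leq\max\{(1+t)\lambda_{\tau,k}/\tau,1\}$'' is not a correct inequality, but since you explicitly defer to the argument following \eqref{HS_t_est} this is harmless---the actual bound there uses $(1+t)^2 t^2\leq C$ for $t\in(-1,1)$.
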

\begin{proof}
By \eqref{lambda comparison} and $h_{\tau}^{-1} \in \fra S^2(\fra H)$ we deduce that 
\begin{equation}
\label{lambda k comparison}
\sum_k \frac{1}{\lambda_k^2} \;<\;\infty\,.
\end{equation}
Moreover, since by \eqref{lambda comparison}
\begin{equation*}
\bigg|\frac{1}{\tau(\ee^{\lambda_{\tau,k}/\tau}-1)}-\frac{1}{\lambda_{\tau,k}}\bigg| \;\leq\; \frac{C}{\lambda_{\tau,k}} \;\leq\; \frac{C}{\lambda_k}\,,
\end{equation*}
the dominated convergence theorem implies \eqref{lem:conv_G_tau}. 
Therefore, we reduce the claim to proving \eqref{conv_h_zeta_tau B} uniformly in $t \in (-1,1)$. Arguing as after \eqref{HS_t_est}, we need to show that $\txt{I}_\tau \deq \frac{1}{\tau} \sum_{k} \frac{\ind{\lambda_{\tau,k} \leq \tau}}{\lambda_{\tau,k}}$ and $\txt{II}_\tau \deq \sum_{k} \frac{\ind{\lambda_{\tau,k} > \tau}}{\lambda_{\tau,k}^2}$ converge to zero as $\tau \rightarrow \infty$. By \eqref{lambda comparison}, we have
\begin{equation*}
\txt{I}_\tau \;\leq\; \frac{1}{(1-\rad)}\,\frac{1}{\tau} \,\sum_{k} \frac{\ind{\lambda_k \leq (1+\rad) \tau}}{\lambda_k}\,,
\end{equation*}
which converges to zero as $\tau \rightarrow \infty$ by considering the probability measure 
$\nu \deq \frac{1}{Z} \sum_k \frac{1}{\lambda_k^2} \delta_{\lambda_k}$ for an appropriate normalization constant $Z>0$ and using the argument following \eqref{est_tr_htau}.
Furthermore,
\begin{equation*}
\txt{II}_\tau \;\leq\; \frac{1}{(1-\rad)^2} \frac{1}{\tau} \sum_k \frac{1}{\lambda_k^2} \indbb{\lambda_k > \frac{\tau}{(1+\rad)}}\,, 
\end{equation*}
which converges to zero as $\tau \rightarrow \infty$ by \eqref{lambda k comparison}.
\end{proof}

\begin{lemma} With the notations and assumptions as in Lemma \ref{lem:conv_G_tau}, we have uniformly in $t \in (0,1)$ that
\label{Heat kernel estimate}
\begin{equation*}
\lim_{\tau \rightarrow \infty} \frac{t}{\tau} \,\big\|\ee^{-t h_\tau/\tau}\big\|_{\fra S^2(\fra H)} \;=\;0\,.
\end{equation*}
\end{lemma}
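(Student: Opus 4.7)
The plan is to reduce the claim to an elementary estimate on the spectral side, which will then be controlled uniformly in $t \in (0,1)$ by splitting the spectrum into a low- and a high-energy part, in direct analogy with the proofs of Lemmas \ref{lem:conv_G_tau} and \ref{cor:conv_G_tau corollary}. Squaring the Hilbert--Schmidt norm and diagonalizing $h_\tau$, we have
\begin{equation*}
\frac{t^2}{\tau^2}\,\normb{\ee^{-t h_\tau/\tau}}_{\fra S^2(\fra H)}^2
\;=\; \sum_{k} \frac{t^2}{\tau^2}\,\ee^{-2 t \lambda_{\tau,k}/\tau}\,,
\end{equation*}
and it suffices to prove that the right-hand side tends to zero as $\tau \to \infty$, uniformly in $t \in (0,1)$.

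The key pointwise bound is the elementary inequality $\sup_{x \geq 0} x^2 \ee^{-2x} \leq C$, which rewritten in terms of $x = t \lambda_{\tau,k}/\tau$ yields
\begin{equation*}
\frac{t^2}{\tau^2}\,\ee^{-2 t \lambda_{\tau,k}/\tau} \;\leq\; \frac{C}{\lambda_{\tau,k}^2}\,.
\end{equation*}
Given $\epsilon>0$, I would choose $K$ (depending only on $\epsilon$) and invoke the convergences \eqref{conv_lambda_k1}--\eqref{conv_lambda_k2} established in the proof of Lemma \ref{lem:conv_G_tau} to guarantee that $\sum_{k>K} \lambda_{\tau,k}^{-2} \leq \epsilon$ for all sufficiently large $\tau$. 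The tail $k>K$ of the above sum is then bounded by $C\epsilon$ uniformly in $t \in (0,1)$.

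For the low-energy part $k \leq K$, I would simply use the crude bound $\frac{t^2}{\tau^2}\ee^{-2t\lambda_{\tau,k}/\tau} \leq \frac{1}{\tau^2}$, so that this part contributes at most $K/\tau^2$, which vanishes as $\tau \to \infty$ with $K$ fixed. Combining the two estimates and letting $\epsilon \to 0$ completes the argument. The only mild subtlety — and the place where the hypothesis $\| h_\tau^{-1} - h^{-1}\|_{\fra S^2} \to 0$ is actually used — is ensuring that the tail $\sum_{k>K}\lambda_{\tau,k}^{-2}$ can be made small uniformly in $\tau$; this is precisely what is already proved in Lemma \ref{lem:conv_G_tau} via the Hoffman--Wielandt-type comparison of the spectra of $h_\tau^{-1}$ and $h^{-1}$, so no new ingredient is required.
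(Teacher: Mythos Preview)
Your proof is correct. Both your argument and the paper's begin with the same spectral identity
\[
\frac{t^2}{\tau^2}\,\normb{\ee^{-t h_\tau/\tau}}_{\fra S^2(\fra H)}^2 \;=\; \sum_{k} \frac{t^2}{\tau^2}\,\ee^{-2 t \lambda_{\tau,k}/\tau}\,,
\]
but then diverge in how they organize the sum. The paper splits according to whether $\lambda_{\tau,k} \leq \tau$ or $\lambda_{\tau,k} > \tau$, uses the inequalities $x\ee^{-2x}\leq C$ and $x^2\ee^{-2x}\leq C$ on the two pieces, and arrives at
\[
\frac{C}{\tau} \sum_k \frac{\ind{\lambda_{\tau,k} \leq \tau}}{\lambda_{\tau,k}} \;+\; C \sum_k \frac{\ind{\lambda_{\tau,k} > \tau}}{\lambda_{\tau,k}^2}\,,
\]
which are exactly the two quantities already shown to vanish after \eqref{est_tr_htau} via the probability-measure argument with $\nu_\tau$. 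Your route instead applies the single bound $x^2\ee^{-2x}\leq C$ globally to dominate every summand by $C\lambda_{\tau,k}^{-2}$, and then splits by the index $k\leq K$ versus $k>K$, invoking only the uniform tail smallness \eqref{conv_lambda_k2}. Your argument is slightly more self-contained and avoids the contradiction/measure argument, while the paper's version has the virtue of recycling estimates already established. Both are equally valid.
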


\begin{proof}
We compute
\begin{multline*}
\frac{t^2}{\tau^2}\, \big\|\ee^{-t h_\tau/\tau}\big\|_{\fra S^2(\fra H)}^2 \;=\;\frac{t^2}{\tau^2}\, \sum_{k \in \N} \ee^{-2t \lambda_{\tau,k}/\tau}
\;\leq\; \sum_{k} \frac{t^2}{\tau^2} \,\frac{C \tau}{t \,\lambda_{\tau,k}}\ind{\lambda_{\tau,k} \leq \tau} +
\sum_{k} \frac{t^2}{\tau^2} \,\frac{C \tau^2}{t^2 \,\lambda_{\tau,k}^2}\ind{\lambda_{\tau,k} > \tau} 
\\
\leq\; \frac{C}{\tau} \sum_k  \frac{\ind{\lambda_{\tau,k} \leq \tau}}{\lambda_{\tau,k}} + C \sum_k  \frac{\ind{\lambda_{\tau,k} > \tau}}{\lambda_{\tau,k}^2}\,,
\end{multline*}
which tends to zero as $\tau \to \infty$, as shown after \eqref{est_tr_htau}.
\end{proof}

\end{document}